\documentclass[11pt]{article}
\usepackage[breaklinks=true]{hyperref}
\hypersetup{colorlinks=true, linkcolor=blue!90!black, citecolor=orange!80!black, urlcolor=blue!90!black} 
\usepackage[english]{babel}
\usepackage{mathtools, amssymb}

\usepackage{tocloft} 
 
\usepackage{mathtext} 
\usepackage{bm} 
\usepackage{amsthm} 
\usepackage{dsfont} 
\usepackage{tikz, subcaption, float, pgfplots}
\usetikzlibrary{decorations.markings, arrows.meta, snakes}

\tikzset{->-/.style={line width=0.8pt,decoration={
			markings,
			mark=at position 0.56 with {\arrow{Stealth[length=1.7mm,width=1.3mm]}}},postaction={decorate}}}

\DeclareSymbolFontAlphabet{\mathbb}{AMSb}

\numberwithin{equation}{section}

\theoremstyle{plain}
\newtheorem{corollary}{Corollary}
\newtheorem{proposition}{Proposition}
\newtheorem{lemma}{Lemma}
\newtheorem{theorem}{Theorem}

\theoremstyle{definition}
\newtheorem{remark}{Remark}
\newtheorem{example}{Example}

\newcommand{\bbLambda}{\reflectbox{\raisebox{\depth}{\scalebox{1}[-1]{$\mathbb V$}}}}

\def\dd{\partial}
\newcommand{\T}{\mathbb{T}}
\newcommand{\A}{\mathbb{A}}
\newcommand{\B}{\mathbb{B}}
\newcommand{\C}{\mathbb{C}}
\newcommand{\D}{\mathbb{D}}
\newcommand{\LLambda}{\bbLambda}
\newcommand{\QQ}{\mathbb{Q}}
\newcommand{\MO}{U}
\newcommand{\MMO}{\mathbb{U}}
\newcommand{\R}{\mathsf{R}}
\newcommand{\K}{\mathsf{K}}
\renewcommand{\H}{\mathrm{H}}
\renewcommand{\L}{\mathsf{L}}

\let\Re\relax
\let\Im\relax
\DeclareMathOperator{\Re}{Re}
\DeclareMathOperator{\Im}{Im}
\DeclareMathOperator{\ch}{ch}

\renewcommand{\phi}{\varphi}
\renewcommand{\imath}{\mathrm{i}}
\renewcommand{\epsilon}{\varepsilon}

\begin{document}
	
\begin{center}

	{\bf \Large{$BC$ Toda chain I: reflection operator and eigenfunctions }}
	
	\vspace{0.4cm}
	
	{N. Belousov$^{\dagger}$, S. Derkachov$^{\diamond\dagger}$, S. Khoroshkin$^{\ast\circ\dagger}$}
	
	\vspace{0.4cm}
	
	{\small \it
		$^\dagger$Beijing Institute of Mathematical Sciences and Applications, \\
		Huairou district, Beijing, 101408, China \vspace{0.2cm}\\
		$^\diamond$Steklov Mathematical Institute, Fontanka 27, \\St.~Petersburg, 191023, Russia \vspace{0.2cm}\\
		$^\ast$Department of Mathematics, Technion,
		Haifa, Israel\vspace{0.2cm}\\
		$^\circ$National Research University Higher School of Economics\\Myasnitskaya 20, Moscow, 101000, Russia
	}
	
\end{center}

	\begin{abstract} 
		We obtain Gauss--Givental integral representation for the eigenfunctions of quantum Toda chain with boundary interaction of $BC$ type. For this we introduce reflection operator satisfying reflection equation with DST chain Lax matrices. Besides, we define Baxter operators for $BC$ Toda chain, prove their commutativity with Hamiltonians and derive the corresponding Baxter equation.
	\end{abstract}

\vspace{-0.3cm}

\tableofcontents

\section{Introduction} \label{sec:intro}

In the present paper we study quantum Toda chain with one-sided boundary interaction of $BC$ type. This is a system of $n$ particles with coordinates $x_j \in \mathbb{R}$, governed by Hamiltonian
\begin{align} \label{H-bc}
	\mathbb{H}_{BC} = - \sum_{j = 1}^n \partial_{x_j}^2 + 2 \sum_{j = 1}^{n - 1} e^{x_j - x_{j + 1}} + 2\alpha \, e^{-x_1} + \beta^2 e^{-2x_1},
\end{align}
where $\alpha, \beta$ are parameters. As shown in~\cite{Skl, I}, this model is integrable: the above Hamiltonian belongs to a \textit{commuting} family of differential operators $\mathbb{H}_s$ of the form
\begin{align} \label{H-bc-s}
	\mathbb{H}_s = \sum_{1 \leq j_1 < \ldots < j_s \leq n} \partial_{x_{j_1}}^2 \cdots \partial_{x_{j_s}}^2 + \text{ lower order terms } \qquad (s = 1, \ldots, n),
\end{align}
that is $\mathbb{H}_1 = - \mathbb{H}_{BC}$, see Section~\ref{sec:bc-ham} for details.

For $\alpha \beta = 0$ this model is related to classical Lie algebras of $GL_n$ ($\alpha = \beta = 0$), $B_n$ ($\alpha \not= 0$, $\beta =0$) and $C_n$ ($\alpha = 0$, $\beta \not= 0$) types~\cite{Kost}. In these special cases the joint eigenfunctions of the Hamiltonians can be constructed using representation theory of corresponding Lie group. In this paper we develop an alternative approach based on certain solutions of Yang--Baxter and reflection equations, which also covers the general case $\alpha \beta \not=0$.

In what follows we impose restrictions on the boundary parameters
\begin{align} \label{ab-cond}
	\frac{1}{2} + \frac{\alpha}{\beta} > 0, \qquad \beta > 0.
\end{align}
The reason is that already in the simplest case of one particle these conditions ensure that the Hamiltonian~\eqref{H-bc} has no discrete spectra\footnote{In fact, the same is true if $\alpha/\beta = -1/2$, but we exclude this case to simplify matters.}, see~\cite[Theorem 4.3]{DL}. The case of $B$ Toda chain is accessed by taking the limit $\beta \to 0$.

\begin{remark}
	If one restores the Planck constant $\hbar$ in the Hamiltonian, the first condition~\eqref{ab-cond} becomes $\alpha > - \hbar \beta/2$. 
\end{remark}

The spectral problem for one particle
\begin{align}
	\bigl( - \partial_{x}^2 + 2\alpha \, e^{-x} + \beta^2 e^{-2x} \bigr) \, \Psi_{\lambda}(x) = \lambda^2 \, \Psi_{\lambda}(x), \qquad \lambda \in \mathbb{R},
\end{align}
has a unique solution decaying as $x \to -\infty$, which is given by Whittaker function
\begin{align} \label{Psi-W}
	\Psi_{\lambda}(x) = \frac{e^{\frac{x}{2}}}{\sqrt{2\beta}} \, W_{- \frac{\alpha}{\beta}, - \imath \lambda}(2\beta e^{-x}),
\end{align}
see~\cite[\href{https://dlmf.nist.gov/13}{Chapter 13}]{DLMF}. The latter admits the following integral representation~\cite[\href{http://dlmf.nist.gov/13.16.E5}{(13.16.5)}]{DLMF}
\begin{align}
	\Psi_{\lambda}(x) = \frac{e^{\frac{x}{2}}}{\sqrt{2\beta}} \, \frac{2^{2\imath \lambda} \, (2\beta e^{-x})^{-\imath \lambda + \frac{1}{2}} }{\Gamma \bigl( \frac{1}{2} + \frac{\alpha}{\beta} - \imath \lambda \bigr)} \int_1^\infty dt \; e^{- \beta e^{-x} t}  \, (t + 1)^{-\imath \lambda - \frac{1}{2} - \frac{\alpha}{\beta}} \, (t - 1)^{-\imath \lambda - \frac{1}{2} + \frac{\alpha}{\beta}}.
\end{align}
Equivalently, denoting
\begin{align}
	g = \frac{1}{2} + \frac{\alpha}{\beta} > 0
\end{align}
and changing integration variable $t = e^{z}/\beta$, we have
\begin{align} \label{Psi-1}
	\Psi_\lambda(x) = \frac{(2\beta)^{\imath \lambda}}{\Gamma ( g - \imath \lambda )} \, \int_{\ln \beta}^\infty dz \; e^{\imath \lambda(x - 2z) - e^{z - x}} \, (1 + \beta e^{-z})^{-\imath \lambda - g} \, (1-\beta e^{-z})^{-\imath \lambda + g - 1} .
\end{align}

The main result of the paper is the generalization of the integral representation~\eqref{Psi-1} to the case of $n$ particles. Denote tuples of $n$ variables and sums of their components
\begin{align}
	\bm{x}_n = (x_1, \dots, x_n), \qquad \underline{\bm{x}}_n = x_1 + \ldots + x_n.
\end{align}
Besides, by $\theta(x)$ denote the Heaviside step function
\begin{align}
	\theta(x) = \left\{ 
	\begin{aligned}
		& 1, && x \geq 0, \\[4pt]
		& 0, && x < 0.
	\end{aligned}
	\right.
\end{align}
 
\begin{theorem}
	The joint eigenfunctions $\Psi_{\bm{\lambda}_n}(\bm{x}_n)$ of the commuting Hamiltonians~\eqref{H-bc-s}
	\begin{align}
		\mathbb{H}_{s} \, \Psi_{\bm{\lambda}_n}(\bm{x}_n) = (-1)^s  \sum_{1 \leq j_1 < \ldots < j_s \leq n} \lambda_{j_1}^2 \cdots \lambda_{j_s}^2 \;\; \Psi_{\bm{\lambda}_n}(\bm{x}_n) 
	\end{align}
	admit the recursive integral representation 
	\begin{align} \label{Psi-n}
		\begin{aligned}
			\Psi_{\bm{\lambda}_n}(\bm{x}_n) & = \frac{(2\beta)^{\imath \lambda_n}}{\Gamma ( g - \imath \lambda_n )} \, \int_{\mathbb{R}^{n - 1}}  d\bm{y}_{n - 1} \int_{\mathbb{R}^{n}} d\bm{z}_{n} \; \exp \biggl( \imath \lambda_n \bigl( \underline{\bm{x}}_n + \underline{\bm{y}}_{n - 1} - 2 \underline{\bm{z}}_{n} \bigr) \\[6pt]
			&  - \sum_{j = 1}^{n - 1} (e^{z_j - x_j} + e^{z_j - y_j} + e^{x_j - z_{j + 1}} + e^{y_j - z_{j + 1}} ) - e^{z_{n} - x_{n}}\biggr) \\[6pt]
			&  \times (1 + \beta e^{-z_1})^{- \imath \lambda_n - g} \, (1 - \beta e^{-z_1})^{- \imath \lambda_n + g - 1 } \; \theta(z_1 - \ln \beta) \; \Psi_{\bm{\lambda}_{n - 1}}(\bm{y}_{n - 1}),
		\end{aligned}
	\end{align}
	with the recursion starting from the one-particle formula~\eqref{Psi-1}.
\end{theorem}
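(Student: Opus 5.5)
The plan is to write the recursion \eqref{Psi-n} in operator form,
\[
\Psi_{\bm{\lambda}_n} = \Lambda_n(\lambda_n)\,\Psi_{\bm{\lambda}_{n-1}},
\]
where the raising operator $\Lambda_n(\lambda)$ is a composition of two pieces. The first is the reflection operator $K(\lambda)$, the one-particle kernel in $z_1$ carrying the factors $(1\pm\beta e^{-z_1})$ and $\theta(z_1-\ln\beta)$. The second is a chain of $GL$-type Gauss--Givental kernels $e^{-e^{z_j-x_j}-e^{x_j-z_{j+1}}}$ and $e^{-e^{z_j-y_j}-e^{y_j-z_{j+1}}}$. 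These are the kernels of the $GL_n$ Toda $Q$-operators, or equivalently of the DST-chain intertwiners. The whole statement then reduces to a single intertwining relation for the Hamiltonians, and the theorem follows by induction on $n$.

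\textbf{Reflection equation.} Realize the $BC$ Hamiltonians $\mathbb{H}_s$ via Sklyanin's double-row monodromy
\[
\mathbb{T}(u)=T(u)\,K(u)\,\sigma T^{t}(-u)\sigma,
\]
built from DST/Toda Lax matrices $L_j(u)$ and a boundary matrix $K(u)$ depending on $\alpha,\beta$. The generating function of the $\mathbb{H}_s$ is the appropriate entry (or trace) of $\mathbb{T}(u)$. The key local identity is that the one-particle integral operator $\mathcal{K}(\lambda)$ with kernel
\[
\frac{(2\beta)^{\imath\lambda}}{\Gamma(g-\imath\lambda)}\,e^{-2\imath\lambda z}(1+\beta e^{-z})^{-\imath\lambda-g}(1-\beta e^{-z})^{-\imath\lambda+g-1}\theta(z-\ln\beta)
\]
satisfies a reflection equation with DST Lax matrices. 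Schematically,
\[
\mathcal{K}(\lambda)\,L(u)\,K(u)\,\sigma L^{t}(-u)\sigma=\bigl(\text{shifted/dressed } K\bigr)\,\mathcal{K}(\lambda),
\]
with the same boundary data. I would verify this by applying each matrix entry, which is a first-order difference/differential operator in $z$, to the kernel. After integrating by parts, the check reduces to a first-order ODE for the kernel plus the vanishing of boundary terms at $z=\ln\beta$. The boundary terms vanish because the factor $(1-\beta e^{-z})^{g-1-\imath\lambda}$ kills them when $g>0$, which is where \eqref{ab-cond} enters. The case $n=1$ of this computation is exactly the statement that \eqref{Psi-1} solves the one-particle equation.

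\textbf{Propagation through the chain.} The GL-type kernels intertwine products of Toda Lax matrices with DST Lax matrices, via the standard local relations of the Gauss--Givental construction. Combining these with the reflection identity, the double-row monodromy of the $n$-site $BC$ chain in $\bm{x}_n$, acting on $\Lambda_n(\lambda)$, equals $\Lambda_n(\lambda)$ times the $(n-1)$-site $BC$ monodromy in $\bm{y}_{n-1}$, multiplied by the scalar $(u^2-\lambda^2)$ up to normalization. Taking the coefficients in $u$ gives
\[
\mathbb{H}_s^{(n)}\Lambda_n(\lambda)=\Lambda_n(\lambda)\bigl(\mathbb{H}_s^{(n-1)}-\lambda^2\,\mathbb{H}_{s-1}^{(n-1)}\bigr),
\]
with the sign conventions of the statement. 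Applied to $\Psi_{\bm{\lambda}_{n-1}}$, the eigenvalue generating function becomes $\prod_{j\le n-1}(u^2-\lambda_j^2)$ times $(u^2-\lambda_n^2)$, which yields the claimed elementary symmetric functions of $\lambda_j^2$.

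\textbf{Main obstacle.} The main difficulty is the analytic justification. One must check absolute convergence of the iterated integral \eqref{Psi-n} for real $\lambda$, using the double-exponential decay of the Givental factors and the polynomial bound on $\Psi_{\bm{\lambda}_{n-1}}$. One must also ensure that all integrations by parts produce no boundary contributions, especially at the cut $z_1=\ln\beta$ and at infinity. Algebraically, the hardest single step is guessing and verifying the exact form of the reflection relation for $\mathcal{K}(\lambda)$, in particular the correct parameter shifts and the precise combination of Lax matrices. I would first fix this at $n=1$ by matching it with the Whittaker ODE, and only then run the induction.
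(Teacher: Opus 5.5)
Your architecture is the paper's: a reflection operator plus Toda--DST intertwiners assembled into a raising operator, one intertwining relation for the generating function, and induction on $n$. Your final coefficient extraction is also correct. But the two load-bearing steps are not in place.

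\emph{The algebraic step.} The operator you call $\mathcal{K}(\lambda)$ has a kernel depending on $z$ alone. Once $e^{-e^{z_1-x_1}}$ is assigned to the $GL$ chain, it is effectively multiplication by a function of $z_1$. Such an operator cannot satisfy the relation actually needed,
\[
\mathcal{K}(v)\,M^t(-u-v)\,K(u)\,\sigma_2M(u-v)\sigma_2=M(u-v)\,K(u)\,\sigma_2M^t(-u-v)\sigma_2\,\mathcal{K}(v).
\]
Here the DST matrices carry spectral parameters $u\pm v$; this $v$-dependence is what later produces $\lambda^2-u^2$. One component of this relation is $\mathcal{K}(v)e^{-x}=e^{x}\partial_x\mathcal{K}(v)$. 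That component forces an $x$-dependent kernel $c(y)\,e^{-e^{y-x}}$ acting on $\phi(-y)$. You also need the reflected intertwiner $\mathcal{R}^*=r_1\mathcal{R}r_1$ for the $\sigma_2L^t(-u)\sigma_2$ half of the chain.

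Moreover, the whole double-row monodromy does not pass through $\LLambda_n(\lambda)$ with a scalar factor. At $u=\lambda$ that would give $\D_n(\lambda)\LLambda_n(\lambda)=0$, whereas in fact $\D_n(\lambda)\LLambda_n(\lambda)=-\beta(g+\imath\lambda)\LLambda_n(\lambda-\imath)$. Only the $(1,2)$ entry works, through a specific mechanism your plan does not identify:
\begin{itemize}
\item Take the global relation for $\MMO_{n-1,n}(v)$ with auxiliary site $n$ and act on functions independent of $x_n$.
\item On the left, the DST row and column collapse to the scalars $-u-v$ and $u-v$, which isolates $\B_{n-1}(u)$.
\item On the right, conjugation by $e^{\imath vx_n}$ turns the first row of $M_n(u-v)$ into the first row of $L_n(u)$.
\item It likewise turns the second column of $\sigma_2M_n^t(-u-v)\sigma_2$ into the second column of $\sigma_2L_n^t(-u)\sigma_2$, which assembles $\B_n(u)$.
\end{itemize}

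\emph{The analytic step.} Your boundary-term argument is false. For real $\lambda$ one has $|(1-\beta e^{-z})^{g-1-\imath\lambda}|=(1-\beta e^{-z})^{g-1}$. This vanishes at $z=\ln\beta$ only if $g>1$, and it blows up for $g<1$; an example is the $C$-type case $\alpha=0$, $g=\tfrac12$. Moreover, one component of the local reflection equation contains $(v-\imath\partial_x)^2$. It therefore needs two integrations by parts, so the derivative of the kernel must also vanish there, which for real $\lambda$ requires $g>2$. The hypothesis $g>0$ only gives integrability at $z_1=\ln\beta$.

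The paper handles this differently:
\begin{itemize}
\item It proves the local relations for $\Im v>2-g$, where the boundary terms genuinely vanish.
\item It then continues them analytically to $\Im v>-g$, which contains the real axis.
\item This requires a function space, namely exponentially tempered smooth functions, that all the operators preserve and on which $\partial_x^k[\mathcal{K}(v)\phi](x)$ is analytic in $v$.
\item Your convergence estimate must also be upgraded to bounds on all derivatives of $\Psi_{\bm\lambda_{n-1}}$, placing it in that space.
\end{itemize}
None of this is in your plan.
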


By tradition, this is called \textit{Gauss--Givental representation}. Changing the integration variable $z_1 = \ln (\beta \, \ch z_1')$ we obtain representation conjectured in~\cite[Conjecture 1.2]{GLO2}. As argued there, in the special cases $\beta \to 0$ or $\alpha = 0$  the latter coincides with the formulas for $B$ or $C$ type models, which were obtained in~\cite{GLO1} using representation theory of Lie groups. In particular, in the limit $\beta \to 0$ the last line of~\eqref{Psi-n} simplifies
\begin{align}
	(1 + \beta e^{-z_1})^{- \imath \lambda_n - g} \, (1 - \beta e^{-z_1})^{- \imath \lambda_n + g - 1 } \, \theta(z_1 - \ln \beta) \to e^{- 2\alpha \, e^{-z_1}}.
\end{align}

\begin{example}
	In the case of two particles 
	\begin{align}
		\begin{aligned}
			& \Psi_{\lambda_1, \lambda_2}(x_1, x_2) = \frac{(2\beta)^{\imath (\lambda_1 + \lambda_2)}}{\Gamma ( g - \imath \lambda_1) \, \Gamma(g - \imath \lambda_2)} \, \int_{\mathbb{R}} dy \int_{\ln \beta}^\infty dz_1 \int_{\mathbb{R}} dz_2 \int_{\ln \beta}^\infty dw \\[6pt]
			& \quad \times e^{ \imath \lambda_2 (x_1 + x_2 + y - 2z_1 - 2z_2) + \imath \lambda_1 (y - 2w) - e^{z_1 - x_1}  - e^{x_1 - z_2} - e^{z_2 - x_2} - e^{z_1 - y} - e^{y - z_2} - e^{w - y} } \\[6pt]
			& \quad \times (1 + \beta e^{-z_1})^{- \imath \lambda_2 - g} \, (1 - \beta e^{-z_1})^{- \imath \lambda_2 + g - 1 } \,  (1 + \beta e^{-w})^{- \imath \lambda_1 - g} \, (1 - \beta e^{-w})^{- \imath \lambda_1 + g - 1 }.
		\end{aligned}
	\end{align}
\end{example}

In the rest of Section~\ref{sec:intro} we introduce the necessary objects and outline the proof of the above theorem. In addition to the eigenfunctions, we study the so-called \textit{Baxter operators}. As we show in the subsequent paper~\cite{BDK}, these operators play an important role in deriving the “dual” Mellin--Barnes integral representation.

Most of the ideas we use for the construction of the $BC$ Toda eigenfunctions are more easily presented in the case of the $GL$ Toda chain ($\alpha = \beta = 0$). Moreover, many of the objects we introduce appear in both models. Therefore, below we first consider the $GL$ Toda chain.

\subsection{$GL$ Toda chain}

\subsubsection{Commuting Hamiltonians}

The Hamiltonian defining the $GL$ Toda chain is given by
\begin{align} \label{H-gl}
	\H_{GL} = - \sum_{j = 1}^n \partial_{x_j}^2 + 2 \sum_{j = 1}^{n - 1} e^{x_j - x_{j + 1}}.
\end{align}
Recall the construction of higher order Hamiltonians commuting with it~\cite{F, Gaud, Skl0}.
Introduce Toda Lax matrix associated with $j$-th particle 
\begin{align} \label{L-toda}
	L_j(u) = 
	\begin{pmatrix}
		u + \imath \partial_{x_j} & e^{-x_j} \\[4pt]
		-e^{x_j} & 0
	\end{pmatrix}
\end{align}
and Yang's $R$-matrix
\begin{equation} \label{R-yang}
	R(u) = 
	\begin{pmatrix}
		u + \imath & 0 & 0 & 0 \\
		0 & u & \imath & 0 \\
		0 & \imath & u & 0 \\
		0 & 0 & 0 & u + \imath
	\end{pmatrix}.
\end{equation}
Together they satisfy the Yang--Baxter equation
\begin{align}\label{RLL}
	R(u - v) \, \bigl( L_j(u) \otimes \bm{1} \bigr) \, \bigl( \bm{1} \otimes L_j(v) \bigr) = \bigl( \bm{1} \otimes L_j(v) \bigr) \,  \bigl( L_j(u) \otimes \bm{1} \bigr) \, R(u - v),
\end{align}
where $\bm{1}$ is $2 \times 2$ identity matrix, the check is straightforward. Next define the monodromy matrix
\begin{align} \label{Tn}
	T_n(u) = L_n(u) \cdots L_1(u) = 
	\begin{pmatrix}
		A_n(u) & B_n(u) \\[4pt]
		C_n(u) & D_n(u)
	\end{pmatrix},
\end{align}
whose elements are differential operators acting on all coordinates $x_1, \dots, x_n$. By standard arguments~\cite[Proposition 3.1]{Sl}, from the equation~\eqref{RLL} we obtain 
\begin{align} \label{RTT}
	R(u - v) \, \bigl( T_n(u) \otimes \bm{1} \bigr) \, \bigl( \bm{1} \otimes T_n(v) \bigr) = \bigl( \bm{1} \otimes T_n(v) \bigr) \,  \bigl( T_n(u) \otimes \bm{1} \bigr) \, R(u - v).
\end{align}
The equality of $(1,1)$ elements in this matrix relation implies commutativity
\begin{align}
	\bigl[ A_n(u), A_n(v) \bigr] = 0.
\end{align}
As a result, coefficients of the operator $A_n(u)$ commute with each other
\begin{align}
	A_n(u) = u^n + \sum_{s = 1}^n u^{n - s} \H_s, \qquad 	[\H_s, \H_r] = 0.
\end{align}
The first two are
\begin{align}
	\H_1 = \sum_{j = 1}^n \imath \partial_{x_j}, \qquad \H_2 = \frac{1}{2} \bigl( \H_1^2 - \H_{GL} \bigr).
\end{align}
Hence, the operators $\H_s$ also commute with the quadratic Hamiltonian~\eqref{H-gl}.

\subsubsection{Eigenfunctions} \label{sec:int-gl-eigen}

Now let us construct the joint eigenfunctions of the Hamiltonians
\begin{align} \label{A-eigen}
	A_n(u) \, \Phi_{\bm{\lambda}_n}(\bm{x}_n) = \prod_{j = 1}^n (u - \lambda_j) \, \Phi_{\bm{\lambda}_n}(\bm{x}_n)
\end{align}
parametrized by zeroes of the $A_n(u)$ eigenvalue. For this introduce DST (dimer self-trapping) chain Lax matrix~\cite{KSS}
\begin{align} \label{DST-M}
	M_a(u) = 
	\begin{pmatrix}
		u + \imath \partial_{x_a} & e^{-x_a} \\[4pt]
		-e^{x_a} \partial_{x_a} & \imath
	\end{pmatrix},
\end{align}
which satisfies the same Yang--Baxter equation~\eqref{RLL}, as Toda Lax matrix,
\begin{align} \label{DST-YB}
	R(u - v) \, \bigl( M_a(u) \otimes \bm{1} \bigr) \, \bigl( \bm{1} \otimes M_a(v) \bigr) = \bigl( \bm{1} \otimes M_a(v) \bigr) \,  \bigl( M_a(u) \otimes \bm{1} \bigr) \, R(u - v).
\end{align}
The key object in the construction of eigenfunctions is the $\mathcal{R}$-operator intertwining Toda and DST Lax matrices
\begin{equation}\label{RLM}
	\mathcal{R}_{12}(v) \, L_1(u) \, M_2(u - v) = M_2(u - v) \, L_1(u) \, \mathcal{R}_{12}( v).
\end{equation}
It is an integral operator acting on functions of two variables $\phi(x_1, x_2)$ by the formula
\begin{align}\label{R}
	\bigl[ \mathcal{R}_{12} (v)\, \phi \bigr](x_1, x_2) = \int_{\mathbb{R}} dy \; \exp \bigl( \imath v(x_1 - y) - e^{x_1 - y} - e^{y - x_2} \bigr) \, \phi(y, x_1)
\end{align}
derived in~\cite[Section 3.4]{Skl2}. The matrix relation~\eqref{RLM} contains four differential equations for the kernel of this operator, which can be verified directly. In Section~\ref{sec:intertw-rel} we describe the space of functions on which this operator is well defined and for which the relation~\eqref{RLM} holds.

Analogously to the monodromy matrix~\eqref{Tn}, we define monodromy operator
\begin{align}\label{mon-op}
	\MO_{na}(v) = \mathcal{R}_{na}(v) \cdots \mathcal{R}_{1a}(v)
\end{align}
that acts on functions of $n + 1$ variables $(x_1,\ldots,x_n,x_a)$. Again, by standard arguments, from~\eqref{RLM} it intertwines monodromy matrix and DST Lax matrix
\begin{align}\label{UTM}
	U_{na}(v) \, T_n(u) \, M_{a}(u - v) = M_a(u - v) \, T_n(u) \, U_{na}(v),
\end{align}
for the detailed proof see Proposition~\ref{prop:UTM} in Section~\ref{sec:app-monodr}. 

To construct the eigenfunctions of $A_n(u)$ we replace $n \to n - 1$, $a \to n$ in the last relation and consider the equality of $(1,1)$ matrix elements
\begin{align}\label{1.29}
	\begin{aligned}
		& U_{n - 1, n}(v) \bigl( A_{n - 1}(u) \, (u - v + \imath \partial_{x_n}) - B_{n - 1}(u) \, e^{x_{n}} \partial_{x_n} \bigr) \\[6pt]
		& \quad = \bigl( (u - v + \imath \partial_{x_n}) \, A_{n - 1}(u) + e^{-x_n} \, C_{n - 1}(u) \bigr) \, U_{n - 1, n}(v).
	\end{aligned}
\end{align}
If we act on a function of $n - 1$ variables $\phi(\bm{x}_{n - 1})$, the first line simplifies
\begin{align} \label{UTM11-phi}
	\begin{aligned}
		&(u - v ) \, U_{n - 1, n}(v) \, A_{n - 1}(u) \, \phi(\bm{x}_{n - 1}) \\[6pt]
		& \quad = \bigl( (u - v + \imath \partial_{x_n}) \, A_{n - 1}(u) + e^{-x_n} \, C_{n - 1}(u) \bigr) \, U_{n - 1, n}(v) \, \phi(\bm{x}_{n - 1}).
	\end{aligned}
\end{align}
Besides, inductive formula~\eqref{Tn} implies the equality
\begin{multline} \label{ACe}
	\bigl( (u - v + \imath \partial_{x_n}) \, A_{n - 1}(u) + e^{-x_n} \, C_{n - 1}(u) \bigr) e^{-\imath v x_n} \\[6pt]
	= e^{-\imath v x_n} \, \bigl( (u + \imath \partial_{x_n}) \, A_{n - 1}(u) + e^{-x_n} \, C_{n - 1}(u) \bigr) = e^{-\imath v x_n} \, A_n(u),
\end{multline}
which we can use to simplify the second line of \eqref{1.29}. 

Define the \textit{raising operator} $\Lambda_{n}(\lambda)$ acting on functions of $n - 1$ variables 
\begin{align} \label{L-def}
	\Lambda_{n}(\lambda) =  e^{\imath \lambda x_{n}} \, U_{n - 1, n}(\lambda) \bigr|_{\phi(\bm{x}_{n - 1})} = e^{\imath \lambda x_{n}} \, \mathcal{R}_{n - 1 , n }(\lambda) \cdots \mathcal{R}_{1 n }(\lambda) \bigr|_{\phi(\bm{x}_{n - 1})}.
\end{align}
Due to~\eqref{UTM11-phi} and~\eqref{ACe} it intertwines Hamiltonians with $n$ and $n - 1$ variables
\begin{align}
	A_{n}(u) \, \Lambda_{n}(\lambda) = (u - \lambda) \, \Lambda_{n}(\lambda) \, A_{n - 1}(u),
\end{align}
where for $n = 1$ we denote $A_0(u) = 1$. As a consequence, the eigenfunctions satisfying~\eqref{A-eigen} can be constructed in a recursive way
\begin{align} \label{Phi-def}
	\Phi_{\bm{\lambda}_n}(\bm{x}_n) = \Lambda_n(\lambda_n) \, \Lambda_{n - 1}(\lambda_{n - 1}) \cdots \Lambda_1(\lambda_1) \cdot 1.
\end{align}
From the explicit formula for the $\mathcal{R}$-operator~\eqref{R} and definition~\eqref{L-def} we have
\begin{align}
	\bigl[ \Lambda_{n}(\lambda) \, \phi \bigr] (\bm{x}_{n}) = \int_{\mathbb{R}^{n - 1}} d\bm{y}_{n - 1} \,  \exp \biggl( \imath \lambda \bigl(\underline{\bm{x}}_{n} - \underline{\bm{y}}_{n - 1} \bigr)  - \sum_{j = 1}^{n - 1} (e^{x_j - y_j} + e^{y_j - x_{j + 1}}) \biggr)
	\, \phi(\bm{y}_{n - 1}).
\end{align}
This gives the well-known Gauss--Givental representation of $GL$ Toda eigenfunctions, obtained by different methods in the works~\cite{G, GKLO, Sil}. The above derivation appears to be new, but it is very similar to the known construction for the spin chain~\cite{ADV}.

\begin{example}
	In the case of one and two particles
	\begin{align}
		\Phi_{\lambda_1}(x_1) = e^{\imath \lambda_1 x_1}, \qquad \Phi_{\lambda_1, \lambda_2}(x_1, x_2) = \int_{\mathbb{R}} dy \; e^{ \imath \lambda_2 (x_1 + x_2 - y)  - e^{x_1 - y} - e^{y - x_2}} \, e^{\imath \lambda_1 y}.
	\end{align}
\end{example}

It can be shown that raising operators satisfy the exchange relation
\begin{align}
	\Lambda_n(\lambda) \, \Lambda_{n - 1}(\rho) = \Lambda_n(\rho) \, \Lambda_{n - 1}(\lambda),
\end{align}
so that the eigenfunctions are symmetric with respect to spectral parameters
\begin{align} \label{Phi-sym}
	\Phi_{\lambda_1, \dots, \lambda_n}(\bm{x}_n) = \Phi_{\lambda_{\sigma(1)}, \dots, \lambda_{\sigma(n)}}(\bm{x}_n), \qquad \sigma \in S_n,
\end{align}
see~\cite[Theorem 1]{Sil}. Thus, the spectra of Hamiltonians~\eqref{A-eigen} is nondegenerate.

Let us mention without proof that from the identity~\eqref{UTM} one can also derive the relations
\begin{align}
	& B_n(\lambda) \, \Lambda_n(\lambda) = \imath^{n - 1} \, \Lambda_n(\lambda + \imath), \\[6pt] \label{CL}
	& C_n(\lambda) \, \Lambda_n(\lambda) = \imath^{-n - 1} \, \Lambda_n(\lambda - \imath).
\end{align}
Using them together with symmetry in spectral parameters~\eqref{Phi-sym} one deduces the action of non-diagonal elements of monodromy matrix on eigenfunctions 
\begin{align} \label{BC-Phi}
	\begin{aligned}
		& B(\lambda_j) \, \Phi_{\bm{\lambda}_n}(\bm{x}_n) = \imath^{n - 1} \, \Phi_{\lambda_1, \dots, \lambda_j + \imath, \dots, \lambda_n}(\bm{x}_n), \\[6pt]
		& C(\lambda_j) \, \Phi_{\bm{\lambda}_n}(\bm{x}_n) = \imath^{-n - 1}  \, \Phi_{\lambda_1, \dots, \lambda_j - \imath, \dots, \lambda_n}(\bm{x}_n).
	\end{aligned}
\end{align}
The last formula coincides with the one obtained in~\cite[(3.1b)]{KL}. These formulas show how the shift operators $e^{\pm \imath \partial_{\lambda_j}}$ (acting in the space of spectral parameters) are represented in coordinate space. In Section~\ref{sec:fur-res} we give analogous statements for $BC$ Toda chain.

\subsubsection{Baxter operator} \label{sec:int-baxt}

There is one more useful reduction of monodromy operator, which is known as \textit{Baxter operator}
\begin{align} \label{Q-gl-def}
	Q_n(\lambda)  = \lim_{x_{n + 1} \to \infty} U_{n, n + 1}(\lambda) \bigr|_{\phi(\bm{x}_n)} = \lim_{x_{n + 1} \to \infty} \mathcal{R}_{n, n + 1}(\lambda) \cdots \mathcal{R}_{1, n + 1}(\lambda) \bigr|_{\phi(\bm{x}_n)}.
\end{align}
It acts on functions of $n$ coordinates $\phi(\bm{x}_n)$ by the explicit formula
\begin{align}
	\bigl[ Q_n(\lambda) \, \phi \bigr] (\bm{x}_n) = \int_{\mathbb{R}^{n}} d\bm{y}_{n} \,  \exp \biggl( \imath \lambda \bigl(\underline{\bm{x}}_n - \underline{\bm{y}}_{n} \bigr) - \sum_{j = 1}^{n - 1} (e^{x_j - y_j} + e^{y_j - x_{j + 1}}) - e^{x_n - y_n} \biggr) \,
	\phi(\bm{y}_{n}).
\end{align}
Using the identity \eqref{UTM} one can deduce that it commutes with Hamiltonians
\begin{align}\label{AQ}
	A_n(u) \, Q_n(\lambda) = Q_n(\lambda) \, A_n(u).
\end{align}
Furthermore, as shown in~\cite[Theorem 2.3]{GLO0}, it satisfies the relations
\begin{align}
	& Q_n(\lambda) \, Q_n(\rho) = Q_n(\rho) \, Q_n(\lambda), \\[6pt]
	& Q_n(\lambda) \, \Lambda_n(\rho) = \Gamma(\imath \lambda - \imath \rho) \, \Lambda_n(\rho) \, Q_{n - 1}(\lambda), \\[6pt]
	& A_n(\lambda) \, Q_n(\lambda) = (-\imath)^n \, Q_n(\lambda - \imath),
\end{align}
where the last one is called \textit{Baxter equation}. The second identity and recursive construction of eigenfunctions~\eqref{Phi-def} imply that Baxter operators are diagonalized by Hamiltonians' eigenfunctions
\begin{align} \label{Q-Phi}
	Q_n(\lambda) \, \Phi_{\bm{\lambda}_n}(\bm{x}_n) = \prod_{j = 1}^n \Gamma(\imath \lambda - \imath \lambda_j) \, \Phi_{\bm{\lambda}_n}(\bm{x}_n),
\end{align}
as it is expected from commutativity~\eqref{AQ}. 

The Baxter operator is a useful tool for studying properties of eigenfunctions, for example it allows to identify Gauss--Givental representation~\eqref{Phi-def} with the so-called Mellin--Barnes representation~\cite[Section 3]{GLO0}. For general background on Baxter operators see~\cite{KS,Skl2}. 

At last, let us explain the heuristic idea behind the definition~\eqref{Q-gl-def}, which can be rewritten in terms of raising operator~\eqref{L-def}
\begin{align} \label{Q-gl-def2}
	Q_n(\lambda)  = \lim_{x_{n + 1} \to \infty} U_{n, n + 1}(\lambda) \bigr|_{\phi(\bm{x}_n)} = \lim_{x_{n + 1} \to \infty} e^{-\imath \lambda x_{n + 1}} \, \Lambda_{n + 1}(\lambda).
\end{align}
By definition~\eqref{Tn}, Hamiltonians' generating function satisfies the recurrence
\begin{align}
	A_{n + 1}(u) = (u + \imath \partial_{x_{n + 1}} ) \, A_n(u) + e^{-x_{n + 1}} \, C_{n }(u).
\end{align}
Hence, in the limit $x_{n + 1} \to \infty$ we have
\begin{align}
	A_{n + 1}(u) \simeq (u + \imath \partial_{x_{n + 1}} ) \, A_n(u).
\end{align}
The right hand side is diagonalized by the function $e^{\imath \lambda_{n + 1} x_{n + 1}} \, \Phi_{\bm{\lambda}_n}(\bm{x}_n)$, while the left hand side is diagonalized by 
\begin{align}
	\Phi_{\bm{\lambda}_{n + 1}}(\bm{x}_{n + 1}) = \Lambda_{n + 1}(\lambda_{n + 1})\, \Phi_{\bm{\lambda}_n}(\bm{x}_n) = e^{\imath \lambda_{n + 1} x_{n + 1}} \, \bigl[ e^{-\imath \lambda_{n + 1} x_{n + 1}} \, \Lambda_{n + 1}(\lambda_{n + 1})\bigr] \, \Phi_{\bm{\lambda}_n}(\bm{x}_n),
\end{align}
and the eigenvalues corresponding to both sides coincide. Hence, we can expect that in the limit $x_{n + 1} \to \infty$ these eigenfunctions coincide up to normalization $c(\bm{\lambda}_{n + 1})$, which can appear from the action of operator in square brackets
\begin{align}
	\lim_{x_{n + 1} \to \infty} \bigl[ e^{-\imath \lambda_{n + 1} x_{n + 1}} \, \Lambda_{n + 1}(\lambda_{n + 1})\bigr] \, \Phi_{\bm{\lambda}_n}(\bm{x}_n) = c(\bm{\lambda}_{n + 1}) \, \Phi_{\bm{\lambda}_n}(\bm{x}_n)
\end{align}
According to definition~\eqref{Q-gl-def2}, the last equality is exactly the Baxter operator diagonalization property~\eqref{Q-Phi}.

\subsection{$BC$ Toda chain}

\subsubsection{Commuting Hamiltonians} \label{sec:bc-ham}

Recall the construction of Hamiltonians for $BC$ Toda chain~\eqref{H-bc-s}, as proposed in~\cite{Skl}, see also~\cite[Section 4]{IS}. Define the boundary $K$-matrix
\begin{align}\label{KToda}
	K(u) =  
	\begin{pmatrix}
		-\alpha & u - \frac{\imath}{2} \\[6pt]
		- \beta^2 \bigl(u - \frac{\imath}{2} \bigr) & -\alpha
	\end{pmatrix}.
\end{align}
Together with Yang's $R$-matrix~\eqref{R-yang} it satisfies the reflection equation
\begin{multline}\label{RKRK}
	R(u - v) \, \bigl( K(u) \otimes \bm{1} \bigr) \, R(u + v - \imath) \, \bigl( \bm{1} \otimes K(v) \bigr) \\[6pt]
	= \bigl( \bm{1} \otimes K(v) \bigr) \, R(u + v - \imath) \, \bigl( K(u) \otimes \bm{1} \bigr) \, R(u - v).
\end{multline}
Next introduce monodromy matrix for $BC$ model
\begin{align} \label{Tn-bc}
	\begin{aligned}
		\T_n(u) &= L_n(u) \cdots L_1(u) \, K(u) \, \sigma_2 \, L^t_1(-u) \cdots L^t_n(-u) \, \sigma_2 \\[6pt]
		& = T_n(u) \, K(u) \, \sigma_2 \, T_n^t(-u) \, \sigma_2 =
		\begin{pmatrix}
			\A_n(u) & \B_n(u) \\[4pt]
			\C_n(u) & \D_n(u)
		\end{pmatrix},
	\end{aligned}
\end{align}
where 
\begin{align}
	\sigma_2 = \begin{pmatrix}
		0 & -\imath \\
		\imath & 0
	\end{pmatrix}, \qquad \sigma_2^2 = \bm{1}.
\end{align}
Notice that the matrices $\sigma_2 \, L_j^t(-u) \, \sigma_2$ satisfy the same Yang--Baxter relation~\eqref{RLL}, as $L_j(u)$. This fact together with~\eqref{RKRK} imply that the monodromy matrix also satisfies reflection equation
\begin{multline}\label{RTRT}
	R(u - v) \, \bigl( \T_n(u) \otimes \bm{1} \bigr) \, R(u + v - \imath) \, \bigl( \bm{1} \otimes \T_n(v) \bigr) \\[6pt]
	= \bigl( \bm{1} \otimes \T_n(v) \bigr) \, R(u + v - \imath) \, \bigl( \T_n(u) \otimes \bm{1} \bigr) \, R(u - v),
\end{multline}
see~\cite[Proposition 2]{Skl}. The latter is equivalent to the commutation relations between elements of the monodromy matrix. In particular, equality between $(1,4)$ elements gives commutativity
\begin{align}
	\bigl[ \B_n(u), \B_n(v) \bigr] = 0.
\end{align}
Consequently, the coefficients of $\B_n(u)$ commute with each other
\begin{align}
	\B_n(u) = (-1)^n \biggl(u - \frac{\imath}{2} \biggr) \biggl( u^{2n} + \sum_{s = 1}^n u^{2(n - s)} \, \mathbb{H}_s \biggr), \qquad [\mathbb{H}_s, \mathbb{H}_r] = 0.
\end{align}
It is straightforward to check that $\mathbb{H}_1 = -\mathbb{H}_{BC}$ and $\mathbb{H}_s$ have the form~\eqref{H-bc-s}, so these are Hamiltonians for $BC$ Toda chain.

\begin{remark}
	The fact that $B(u)/(u - \imath/2)$ is an even polynomial in $u$ follows from commutation relations on the entries of monodromy matrix $T_n(u)$, e.g. see~\cite[p. 14--15]{ADV2}.
\end{remark}

\subsubsection{Eigenfunctions} \label{sec:intro-eigen}

Now we describe the main steps in the construction of the joint eigenfunctions
\begin{align}\label{B-eigen}
	\B_n(u) \, \Psi_{\bm{\lambda}_n}(\bm{x}_n) = (-1)^n \biggl(u - \frac{\imath}{2} \biggr) \prod_{j = 1}^n (u^2 - \lambda_j^2) \, \Psi_{\bm{\lambda}_n}(\bm{x}_n)
\end{align}
parametrized by nontrivial zeroes of $\B_n(u)$ eigenvalue. As in the case of the $GL$ Toda chain, this requires working with integral operators. Let us postpone the discussion of the function spaces on which these operators are well defined to Section~\ref{sec:intro-bounds}.

In a nutshell, to construct the eigenfunctions we only need two objects: the $\mathcal{R}$-operator~\eqref{R} and the \textit{reflection operator} $\mathcal{K}_a(v)$. The latter acts on functions of one variable $\phi(x_a)$ and satisfies equation with DST Lax matrices~\eqref{DST-M}
\begin{align}\label{KMKM-0}
	\mathcal{K}_a(v) \, M_a^t(-u-v) \, K(u) \, \sigma_2 M_a(u - v) \sigma_2
	= M_a(u - v) \, K(u) \, \sigma_2 M_a^t(-u -v) \sigma_2 \; \mathcal{K}_a(v).
\end{align}
In Section~\ref{sec:refl-op} we show that it is given by the formula 
\begin{multline}\label{K-0}
	\bigl[ \mathcal{K}_a (v) \, \phi \bigr](x_a) = \frac{(2\beta)^{\imath v}}{\Gamma ( g- \imath v )} \, \int_{\ln \beta}^{\infty} dy \, \exp \bigl(-2 \imath v y - e^{y -x_a} \bigr) \, \bigl(1 + \beta e^{-y}\bigr)^{- \imath v - g } \\
	\times \bigl(1 - \beta e^{-y}\bigr)^{- \imath v + g - 1} \, \phi( - y),
\end{multline}
where as before $g = 1/2 + \alpha/\beta$. Note that in terms of this operator the one-particle eigenfunction formula~\eqref{Psi-1} can be rewritten as
\begin{align}
	\Psi_{\lambda_1}(x_1) = e^{\imath \lambda_1 x_1} \, \mathcal{K}_1(\lambda_1) \cdot 1.
\end{align}

Let us also introduce the \textit{reflected} $\mathcal{R}$-operator
\begin{align}\label{Rr}
	\mathcal{R}^*_{12}(v) = r_1 \, \mathcal{R}_{12}(v) \, r_1, \qquad r_1 \colon \; \phi(x_1) \mapsto \phi(-x_1),
\end{align}
which intertwines Toda Lax matrix and \textit{transposed} DST matrix
\begin{equation}\label{RMtL}
	\mathcal{R}^*_{12}(v) \, M^t_2(-u-v) \, L_1(u) = L_1(u ) \, M^t_2(-u-v) \, \mathcal{R}^*_{12}(v),
\end{equation}
as a straightforward consequence of the relation~\eqref{RLM}.
Explicitly, it is given by 
\begin{align} \label{Rr-expl}
	\bigl[ \mathcal{R}^*_{12} (v)\, \phi \bigr](x_1, x_2) = \int_{\mathbb{R}} dy \; \exp \bigl( \imath v(y - x_1) - e^{y - x_1} - e^{-y - x_2} \bigr)\, \phi(y, -x_1).
\end{align}

Next, analogously to the monodromy matrix~\eqref{Tn-bc}, define the monodromy operator for $BC$ model
\begin{align}\label{mon-op-BC}
	\MMO_{na}(v) = \mathcal{R}_{n a}(v) \cdots \mathcal{R}_{1 a}(v) \, \mathcal{K}_a(v) \, \mathcal{R}^*_{1 a}(v) \cdots \mathcal{R}^*_{n a}(v).
\end{align}
Following the approach of Sklyanin~\cite{Skl}, in Section~\ref{sec:app-monodr} we show that the ``local'' relations \eqref{RLM},~\eqref{KMKM-0} and~\eqref{RMtL} imply the ``global'' relation for the monodromy operator
\begin{align}\label{UMTM-0}
	\begin{aligned}
		& \MMO_{na}(v) \, M_a^t(- u - v) \, \T_n(u) \, \sigma_2 M_a(u - v) \sigma_2 \\[6pt]
		& \quad\quad = M_a(u - v) \, \T_n(u) \, \sigma_2 M_a^t(-u -v) \sigma_2 \; \MMO_{na}(v),
	\end{aligned}
\end{align}
see Proposition~\ref{prop:UMTM}. Let us stress that this is one of the key formulas, which we use several times to obtain relations with elements of the monodromy matrix and different integral operators.

Finally, define the raising operator $\LLambda_n(\lambda)$ that acts on functions of $n - 1$ coordinates
\begin{align}
	\begin{aligned}
		\LLambda_n(\lambda) & = e^{\imath \lambda x_n} \, \MMO_{n - 1, n}(\lambda) \bigr|_{\phi(\bm{x}_{n - 1})} \\[6pt]
		& = e^{\imath \lambda x_n} \,\mathcal{R}_{n-1,n}(\lambda) \cdots \mathcal{R}_{1n}(\lambda) \, \mathcal{K}_n(\lambda) \, \mathcal{R}^*_{1 n}(\lambda) \cdots \mathcal{R}^*_{n-1,n}(\lambda) \bigr|_{\phi(\bm{x}_{n - 1})} .
	\end{aligned}
\end{align}
Explicitly, it is given by the integral operator 
\begin{align}
	\bigl[ \LLambda_{n}(\lambda) \, \phi \bigr] (\bm{x}_n) = \int_{\mathbb{R}^{n - 1}} d\bm{y}_{n - 1} \; \LLambda_\lambda(\bm{x}_n | \bm{y}_{n - 1}) \, \phi(\bm{y}_{n - 1})
\end{align}
with the kernel
\begin{align} \label{L-ker-0}
	\begin{aligned}
			\LLambda_\lambda(\bm{x}_n | \bm{y}_{n - 1}) = \frac{(2\beta)^{\imath \lambda}}{\Gamma(g - \imath \lambda)} \, \int_{\mathbb{R}^{n}} & d\bm{z}_{n} \; \exp \biggl( \imath \lambda \bigl( \underline{\bm{x}}_n + \underline{\bm{y}}_{n - 1} - 2 \underline{\bm{z}}_{n} \bigr) \\[6pt]
			&  - \sum_{j = 1}^{n - 1} (e^{z_j - x_j} + e^{z_j - y_j} + e^{x_j - z_{j + 1}} + e^{y_j - z_{j + 1}} ) - e^{z_{n} - x_n}\biggr) \\[6pt]
			&  \times (1 + \beta e^{-z_1})^{- \imath \lambda - g} \, (1 - \beta e^{-z_1})^{- \imath \lambda + g - 1 } \; \theta(z_1 - \ln \beta).
		\end{aligned}
\end{align}
In Section~\ref{sec:constr-eigen} we deduce from the identity~\eqref{UMTM-0} that it intertwines Hamiltonians with $n$ and $n - 1$ variables
\begin{align} \label{BL-rel-0}
	\B_n(u) \, \LLambda_n(\lambda) = (\lambda^2 - u^2) \, \LLambda_n(\lambda) \, \B_{n - 1}(u).
\end{align}
Note that by definition~\eqref{Tn-bc} we have $\B_0 = (u - \imath/2)$ . Hence, the eigenfunctions satisfying~\eqref{B-eigen} are constructed in a recursive way
\begin{align}\label{Psi-GG0}
	\Psi_{\bm{\lambda}_n}(\bm{x}_n) = \LLambda_n(\lambda_n) \cdots \LLambda_1(\lambda_1) \cdot 1 = \LLambda_n(\lambda_n) \, \Psi_{\bm{\lambda}_{n - 1}}(\bm{x}_{n - 1}),
\end{align}
which is equivalent to the announced formula~\eqref{Psi-n}.

\begin{example}
	In the simplest case $\MMO_{0a}(v) = \mathcal{K}_a(v)$ and
	\begin{align}
		\Psi_{\lambda_1}(x_1) = \LLambda_1(\lambda_1) \cdot 1 = e^{\imath \lambda_1 x_1} \mathcal{K}_1(\lambda_1) \cdot 1.
	\end{align}
\end{example}

Besides, in Section~\ref{sec:DL} from the relation~\eqref{UMTM-0} we derive the equality
\begin{align} \label{DL}
	\D_n( \lambda)\,\LLambda_n(\lambda)   =  -\beta (g + \imath \lambda) \,\LLambda_n(\lambda - \imath).
\end{align}
This is similar to the formula~\eqref{CL} for $GL$ model.

\begin{remark}
	The matrices $M_a^t(-u)$ and $\sigma_2 M_a^t(-u) \sigma_2$ satisfy the same Yang--Baxter equation~\eqref{DST-YB}, as $M_a(u)$. Consequently, the products
	\begin{align}
		M_a^t(-u-v) \, K(u) \, \sigma_2 M_a(u - v) \sigma_2, \qquad M_a(u - v) \, K(u) \, \sigma_2 M_a^t(-u -v) \sigma_2 
	\end{align}
	satisfy the same reflection equation, as monodromy matrix~\eqref{RTRT}. In other words, coefficients of these matrices give two representations of the same algebra, defined by reflection equation. Hence, by the formula~\eqref{KMKM-0} the reflection operator $\mathcal{K}_a(v)$ intertwines these representations.
\end{remark}

\subsubsection{Baxter operator} \label{sec:intro-baxt}

As in the case of $GL$ model, monodromy operator has one more useful reduction. Define Baxter operator for $BC$ Toda chain
\begin{align} \label{Q-bc-def}
	\QQ_n(\lambda) = \lim_{x_{n + 1} \to \infty} \MMO_{n, n + 1}(\lambda) \bigr|_{\phi(\bm{x}_n)} = \lim_{x_{n + 1} \to \infty} e^{-\imath \lambda x_{n + 1}} \LLambda_{n + 1}(\lambda) .
\end{align}
Explicitly, it is given by the integral operator
\begin{align}
	\bigl[ \QQ_n(\lambda) \, \phi \bigr] (\bm{x}_n) = \int_{\mathbb{R}^{n}} d\bm{y}_{n} \; \QQ_\lambda(\bm{x}_n | \bm{y}_{n}) \, \phi(\bm{y}_{n})
\end{align}
with the kernel
\begin{align}
	\begin{aligned}
		\QQ_\lambda(\bm{x}_n | \bm{y}_{n}) =  \frac{(2\beta)^{\imath \lambda}}{\Gamma(g - \imath \lambda)} \, \int_{\mathbb{R}^{n + 1}} & d\bm{z}_{n + 1} \; \exp \biggl( \imath \lambda \bigl( \underline{\bm{x}}_n + \underline{\bm{y}}_{n} - 2 \underline{\bm{z}}_{n + 1} \bigr) \\[6pt]
		&  - \sum_{j = 1}^{n} (e^{z_j - x_j} + e^{z_j - y_j} + e^{x_j - z_{j + 1}} + e^{y_j - z_{j + 1}} ) \biggr) \\[6pt]
		&  \times \bigl(1 + \beta e^{-z_1}\bigr)^{- \imath \lambda - g } \, \bigl(1 - \beta e^{-z_1}\bigr)^{- \imath \lambda + g - 1 } \; \theta(z_1 - \ln \beta),
	\end{aligned}
\end{align}
which is convergent under assumption $\Im \lambda \in (-g, 0)$. The lower bound of this interval is needed in the integral over $z_1$, while the upper bound~--- in the integral over $z_{n + 1}$. The explicit expression for the kernel follows from the formula~\eqref{L-ker-0}. The interchange of limit and integration can be justified for a suitable space of functions $\phi(\bm{x}_n)$, see Corollary~\ref{cor:QU-lim} in Section~\ref{sec:Qspace}.

The heuristic idea behind the definition~\eqref{Q-bc-def} is the same, as for $GL$ model. From~\eqref{Tn-bc} the Hamiltonians' generating function satisfies the recurrence
\begin{align}
	\begin{aligned}
		\B_{n + 1}(u) & = \bigl( (u + \imath \partial_{x_{n + 1}}) \B_n(u) + e^{-x_{n + 1}} \D_n(u) \bigr) (-u + \imath \partial_{x_{n + 1}}) \\[6pt]
		& -  \bigl( (u + \imath \partial_{x_{n + 1}}) \A_n(u) + e^{-x_{n + 1}} \C_n(u) \bigr) e^{-x_{n + 1}},
	\end{aligned}
\end{align}
which in the limit $x_{n + 1} \to \infty$ reduces to
\begin{align}
	\B_{n + 1}(u) \simeq (u + \imath \partial_{x_{n + 1}}) (-u + \imath \partial_{x_{n + 1}}) \, \B_n(u).
\end{align}
As for $GL$ model, comparison between eigenfunctions of both sides suggests that the Baxter operator~\eqref{Q-bc-def} is diagonalized by $\Psi_{\bm{\lambda}_n}(\bm{x}_n)$.

In Section~\ref{sec:QB-comm} we provide further evidence for this: using the relation~\eqref{UMTM-0} we prove that Baxter operator commutes with Hamiltonians
\begin{align}
	\QQ_n(\lambda) \, \B_n(u) = \B_n(u) \, \QQ_n(\lambda).
\end{align}
Moreover, using the relation close to~\eqref{UMTM-0} in Section~\ref{sec:bax-eq} we derive Baxter equation
\begin{align} \label{baxt-eq}
	\QQ_n(\lambda)\,\B_n(\lambda)  = - \frac{\beta (g + \imath \lambda)}{2\lambda} \,\QQ_n(\lambda-\imath),
\end{align}
which implies the difference equation on the eigenvalue of Baxter operator. 

In Appendix~\ref{sec:YB-refl-op} we briefly explain how to prove that Baxter operators form a \textit{commuting} family. The argument is based on the Yang--Baxter and reflection equations with operators (rather than matrices). In the subsequent work~\cite{BDK} we give a fully detailed proof of the commutativity using a different approach.

\subsubsection{Bounds and function spaces} \label{sec:intro-bounds}

It is known that the $GL$ Toda eigenfunctions $\Phi_{\bm{\lambda}_n}(\bm{x}_n)$ with real spectral parameters decay rapidly in the classically forbidden regions $x_{j + 1} \ll x_{j}$, see~\cite[Proposition 4.1.3]{BK}. In Section~\ref{sec:bounds-eigen} we prove that the $BC$ Toda eigenfunctions $\Psi_{\bm{\lambda}_n}(\bm{x}_n)$ defined by the integral~\eqref{Psi-GG0} are smooth in $\bm{x}_n \in \mathbb{R}^n$ and admit the similar bound (for $\bm{\lambda}_n \in \mathbb{R}^n$)
\begin{align} 
	\begin{aligned}
		\bigl| \Psi_{\bm{\lambda}_n}(\bm{x}_n) \bigr| & \leq \prod_{j = 1}^n \frac{1}{| \Gamma(g - \imath \lambda_j)| } \,  P(|x_1|, \dots, |x_n|) \\[6pt]
		& \times \exp\Biggl( - \beta e^{-x_1} \, \theta(\ln \beta - x_1) - \sum_{j = 1}^{n - 1}  e^{ \frac{x_j - x_{j + 1} }{2} }  \, \theta(x_j - x_{j + 1} ) \Biggr),
	\end{aligned}
\end{align}
where $P$ is some polynomial. This is analogous to the known asymptotics of Whittaker function~\eqref{Psi-W} in one-particle case.

Besides, in Sections~\ref{sec:intertw-rel}--\ref{sec:Qspace} we investigate spaces of functions on which all operators discussed previously are well defined and relations with them hold true. 

Denote by $\mathcal{E}(\mathbb{R}^n)$ the space of smooth functions $\phi(\bm{x}_n) \in C^\infty(\mathbb{R}^n)$ exponentially bounded with all their derivatives, i.e. for every $\bm{k}_n \in \mathbb{N}_0^n$ there exist constants $a, b \geq 0$ such that
\begin{align} 
	\bigl| \partial_{x_1}^{k_1} \cdots \partial_{x_n}^{k_n}  \phi (\bm{x}_n) \bigr| \leq a \, e^{b (|x_1| + \ldots + |x_n|)}.
\end{align}
Also denote by $\mathcal{E}_n(\mathbb{R}^n)$ the subspace of $\mathcal{E}(\mathbb{R}^n)$ with functions having at most polynomial growth as $x_n \to \infty$ (with all their derivatives), i.e. for every $\bm{k}_n \in \mathbb{N}_0^n$ there exist constants $a, b \geq 0$ and polynomial $P(x_n)$ satisfying
\begin{align}
	\bigl| \partial_{x_1}^{k_1} \cdots \partial_{x_n}^{k_n} \phi(\bm{x}_n) \bigr| \leq a \, e^{b(|x_1| + \ldots + |x_{n - 1}|)} \times
	\left\{ \begin{aligned}
		& \, P(x_n), && \; x_n \geq 0, \\[6pt]
		& \, e^{b |x_n|}, && \; x_n < 0.
	\end{aligned}  \right.
\end{align}
Clearly, the matrix entries of Lax and monodromy matrices act invariantly on $\mathcal{E}(\mathbb{R}^n)$. It is also easy to see that the operator $\mathbb{B}_n(u)$ acts invariantly on $\mathcal{E}_n(\mathbb{R}^n)$ (see Corollary~\ref{cor:BE'space}). In Sections~\ref{sec:intertw-rel}--\ref{sec:Qspace} we establish the analogous properties for the integral operators. The following theorem follows from combination of Proposition~\ref{prop:Espace}, Corollary~\ref{cor:QU-lim} and Remark~\ref{rmk: embed}.

\begin{theorem} \label{thm:spaces}
	Let $u\in \mathbb{C}$, $\Im v > -g$ and $j, k \in \{1, \ldots, n + 1\}$. Then the operators $\mathcal{R}_{jk}(u)$, $\mathcal{K}_j(v)$, $\MMO_{n, n+1}(v)$, $\LLambda_n(v)$ act invariantly on the space $\mathcal{E}(\mathbb{R}^{n + 1})$. Furthermore, for $\Im v \in (-g, 0)$ the operator $\QQ_n(v)$ maps $\mathcal{E}_n(\mathbb{R}^n)$ to $\mathcal{E}(\mathbb{R}^n)$.
\end{theorem}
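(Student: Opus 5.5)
The plan is to prove everything by direct estimates on the explicit integral kernels \eqref{R}, \eqref{K-0}, \eqref{L-ker-0}, treating each operator as a composition of simple one-dimensional integral transforms. The key device is that, after a shift of the integration variable, each kernel becomes a fixed function of the new variable. Derivatives in the external variables can then be moved onto $\phi$ or onto elementary exponential factors.

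\textbf{The operator $\mathcal{R}_{12}(u)$.} Substitute $y = x_1 + t$ in \eqref{R} to get
\begin{align*}
	\bigl[\mathcal{R}_{12}(u)\phi\bigr](x_1,x_2) = \int_{\mathbb{R}} dt \; e^{-\imath u t - e^{-t} - e^{x_1 + t - x_2}} \, \phi(x_1 + t, x_1).
\end{align*}
Differentiating in $x_1, x_2$ produces $\partial$-derivatives of $\phi$ together with polynomials in $e^{x_1+t-x_2}$. These polynomials are harmless: $w^k e^{-w}$ is bounded for $w>0$. The factor $e^{-e^{-t}}$ kills $t\to-\infty$. For $t\to+\infty$ we use the bound $|\phi|\le a e^{b(|x_1+t|+|x_1|)}$ against $e^{-e^{x_1+t-x_2}}$. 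This gives a bound of the form $a' e^{b'(|x_1|+|x_2|)}$, uniformly for any fixed $u\in\mathbb{C}$, because the double-exponential decay beats any $e^{(b+|\Im u|)t}$ once $t > x_2 - x_1 + O(\log)$. Dominated convergence justifies differentiation under the integral. Acting in the other variables is trivial, so $\mathcal{R}_{jk}(u)$ preserves $\mathcal{E}(\mathbb{R}^{n+1})$.

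\textbf{The operator $\mathcal{K}_j(v)$.} Put $y = x_j + s$ in \eqref{K-0}; the constraint becomes $s \ge \ln\beta - x_j$. The integrand contains:
\begin{itemize}
	\item the factors $e^{-2\imath v(x_j+s) - e^{s}}$;
	\item the factor $(1-\beta e^{-y})^{-\imath v+g-1}$, integrable near $y=\ln\beta$ exactly when $\Im v > -g$;
	\item the factor $\phi(-x_j-s)$.
\end{itemize}
Differentiating in $x_j$ would hit the moving endpoint and the singular factor. To avoid this I will first integrate by parts, or equivalently differentiate in the unshifted form, where $x_j$ appears only in $e^{-e^{y-x_j}}$, which is harmless. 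The derivatives $\partial_{x_j}^k$ then produce polynomials in $e^{y-x_j}$ times $e^{-e^{y-x_j}}$, all bounded. The growth estimate $a e^{b|x_j|}$ follows by splitting $y$ into the near-endpoint region and the far region. In the near region we use $\Im v>-g$. In the far region the term $e^{-e^{y-x_j}}$ cuts off at $y \approx x_j$, so $|\phi(-y)|\le a e^{b|y|}$ contributes at most $e^{b|x_j|}$ times a polynomial.

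\textbf{Composite operators and the limit.} $\mathcal{R}^*$ is conjugation of $\mathcal{R}$ by $r_1$, which preserves $\mathcal{E}$. Hence $\MMO_{n,n+1}(v)$, a finite composition, preserves $\mathcal{E}(\mathbb{R}^{n+1})$. Multiplication by $e^{\imath v x_{n+1}}$ also preserves $\mathcal{E}$. Functions of $n$ variables embed in $\mathcal{E}(\mathbb{R}^{n+1})$ by constant extension (Remark~\ref{rmk: embed}), which gives the claim for $\LLambda_n(v)$.

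For $\QQ_n(v)$ the main obstacle is the limit $x_{n+1}\to\infty$: we need the convergence and the resulting bounds to be uniform. I would work with the explicit kernel and integrate over $z_{n+1}$ first. The only $z_{n+1}$-dependence is
\begin{align*}
	e^{-2\imath v z_{n+1} - e^{x_n - z_{n+1}} - e^{y_n - z_{n+1}}},
\end{align*}
which for $\Im v<0$ is integrable at $+\infty$ and gives
\begin{align*}
	\Gamma(2\imath v)\,\bigl(e^{x_n}+e^{y_n}\bigr)^{-2\imath v}.
\end{align*}
This factor grows like $e^{2|\Im v|\max(x_n,y_n)}$. The $y_n$-integral against $\phi\in\mathcal{E}_n$ then converges only because $\phi$ grows polynomially in $y_n\to+\infty$, while $e^{-e^{z_n-y_n}}$ prevents $y_n\to-\infty$ problems. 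The derivative bounds again come from the shifted-variable argument.

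Identifying this with $\lim_{x_{n+1}\to\infty}\MMO_{n,n+1}$ (Corollary~\ref{cor:QU-lim}) is done by dominated convergence. The dominating function has to be chosen uniformly in large $x_{n+1}$, and this is exactly where the restriction $\Im v\in(-g,0)$ and the polynomial growth assumption on $\phi$ are used.
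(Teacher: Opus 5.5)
For $\mathcal{R}_{jk}$, $\mathcal{K}_j$, $\MMO_{n,n+1}$ and $\LLambda_n$ your argument is essentially the paper's. It uses one-dimensional kernel estimates, differentiation under the integral in the unshifted variable for $\mathcal{K}$, and then composition and constant extension. One correction there: differentiating in the unshifted form is not ``equivalent'' to integrating by parts. The boundary term at $y=\ln\beta$ vanishes only for $\Im v>1-g$, so only the unshifted differentiation covers all of $\Im v>-g$.

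The real problem is the $\QQ_n$ part. Integrating out $z_{n+1}$ is correct, but the resulting factor has modulus $|\Gamma(2\imath v)|\,(e^{x_n}+e^{y_n})^{2\Im v}$. For $\Im v<0$ this \emph{decays} like $e^{-2|\Im v|\max(x_n,y_n)}$; it does not grow. That sign is the whole mechanism: combined with $|e^{\imath v y_n}|=e^{|\Im v| y_n}$ it leaves a net decay $e^{-|\Im v| y_n}$ as $y_n\to+\infty$. This is exactly why polynomial, but not exponential, growth of $\phi$ in $y_n$ is admissible. With the sign you wrote, the $y_n$-integrand would grow like $e^{3|\Im v|y_n}$. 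Once $z_{n+1}$ is gone nothing else controls $y_n\to+\infty$, since $e^{-e^{z_n-y_n}}\to 1$, so no growth hypothesis on $\phi$ could give convergence.

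Even with the sign fixed, the claim $\QQ_n(v)\phi\in\mathcal{E}(\mathbb{R}^n)$ is asserted rather than proved. You need exponential bounds on all $\bm{x}_n$-derivatives of the remaining $2n$-fold integral against an arbitrary $\phi\in\mathcal{E}_n(\mathbb{R}^n)$. That means controlling $z_1,\dots,z_n,y_1,\dots,y_{n-1}$ through the interlacing double exponentials, handling the singularity at $z_1=\ln\beta$, and absorbing the factors $e^{\ell(z_j-x_j)}$, $e^{m(x_j-z_{j+1})}$ produced by differentiation. ``The shifted-variable argument'', which you only ran for the one-dimensional $\mathcal{R}$-kernel, does none of this.

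The paper avoids any multidimensional estimate by factorizing
\begin{align*}
\QQ_n(v)\,\phi=\mathcal{R}'_{na}(v)\,\mathcal{R}_{n-1,a}(v)\cdots\mathcal{R}_{1a}(v)\,\mathcal{K}_a(v)\,\mathcal{R}^*_{1a}(v)\cdots\mathcal{R}^*_{na}(v)\,\phi,
\end{align*}
where $\mathcal{R}'$ is $\mathcal{R}$ with the factor $e^{-e^{y-x_2}}$ dropped. It then proves one-variable statements for $\Im v<0$ (Proposition~\ref{prop:E'space}):
\begin{itemize}
\item $\mathcal{R}^*_{12}(v)$ maps $\mathcal{E}_1(\mathbb{R})$ into $\mathcal{E}_1(\mathbb{R}^2)$, so polynomial growth in $x_n$ survives the first step;
\item the middle operators do not touch $x_n$;
\item $\mathcal{R}'_{12}(v)$ maps $\mathcal{E}_1(\mathbb{R}^2)$ into $\mathcal{E}(\mathbb{R})$, using exactly the decay $|e^{\imath v(x_1-y)}|=e^{|\Im v|(x_1-y)}$ against polynomial growth.
\end{itemize}
The limit $x_{n+1}\to\infty$ then only removes the single factor $e^{-e^{y-x_{n+1}}}$ in the outermost $\mathcal{R}_{n,n+1}$, which is a one-line dominated convergence. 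Your closed-form $z_{n+1}$ integration is a nice way to see where $\Im v<0$ enters. To turn it into a proof, you need either this iterative organization or a complete estimate of the remaining multiple integral.
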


With some additional arguments this justifies all manipulations with operators, see Propositions~\ref{prop:KMKM}--\ref{prop:UMTM} and the concluding remarks in Sections~\ref{sec:constr-eigen}--\ref{sec:DL}. Combined together they imply the following statement.

\begin{corollary}
	All relations with operators from Sections~\ref{sec:intro-eigen} and~\ref{sec:intro-baxt} hold on the spaces $\mathcal{E}(\mathbb{R}^{n+1})$ and $\mathcal{E}_n(\mathbb{R}^n)$ respectively (assuming suitable restrictions on the parameters, as in Theorem~\ref{thm:spaces}).
\end{corollary}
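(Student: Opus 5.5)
The corollary is essentially bookkeeping. The formal manipulations of Sections~\ref{sec:intro-eigen} and~\ref{sec:intro-baxt} are products of operators: the differential operators from the entries of $\T_n(u)$ and $M_a$, and the integral operators $\mathcal{R}$, $\mathcal{R}^*$, $\mathcal{K}$, $\MMO$, $\LLambda$, $\QQ$. The plan is to show that every factor in every identity is well defined on the stated space and maps it into a space where the next factor is defined. Once that holds, the local kernel identities proved pointwise become genuine operator identities.

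\textbf{Step 1: the $\mathcal{E}(\mathbb{R}^{n+1})$ relations.} By Theorem~\ref{thm:spaces}, the operators $\mathcal{R}_{jk}(u)$, $\mathcal{K}_j(v)$ and $\MMO_{n,n+1}(v)$ (for $\Im v>-g$) preserve $\mathcal{E}(\mathbb{R}^{n+1})$. Since $r_1$ evidently preserves $\mathcal{E}$, so does $\mathcal{R}^*_{jk}$. Multiplication by $e^{\pm x_j}$ and differentiation also preserve $\mathcal{E}$, so all entries of $L_j$, $M_a$, $K$, $T_n$, $\T_n$ act on it.

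The local relations~\eqref{RLM}, \eqref{RMtL}, \eqref{KMKM-0} hold on $\mathcal{E}$ by Propositions~\ref{prop:KMKM} and the corresponding intertwining statement. Their proofs rest on integrating by parts in the kernel, and the boundary terms vanish thanks to the double-exponential decay of the kernels against the exponential growth allowed in $\mathcal{E}$.

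Proposition~\ref{prop:UMTM} then yields~\eqref{UMTM-0} on $\mathcal{E}(\mathbb{R}^{n+1})$. The argument applies the local relations one factor at a time, and each intermediate function stays in $\mathcal{E}$ by invariance.

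From~\eqref{UMTM-0} we extract matrix elements and restrict to functions independent of $x_n$, which still lie in $\mathcal{E}$. Conjugating by $e^{\imath\lambda x_n}$ preserves $\mathcal{E}$. This gives~\eqref{BL-rel-0} and~\eqref{DL} on $\mathcal{E}(\mathbb{R}^{n-1})\subset\mathcal{E}(\mathbb{R}^{n})$, which are the concluding remarks of Sections~\ref{sec:constr-eigen}--\ref{sec:DL}. Since the constant function $1$ lies in $\mathcal{E}$ and $\LLambda_n$ preserves $\mathcal{E}$, the recursion~\eqref{Psi-GG0} is legitimate.

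\textbf{Step 2: the $\mathcal{E}_n(\mathbb{R}^n)$ relations.} For the Baxter operator we use the second half of Theorem~\ref{thm:spaces}: $\QQ_n(v)$ maps $\mathcal{E}_n$ into $\mathcal{E}$ for $\Im v\in(-g,0)$. We also use Corollary~\ref{cor:BE'space}, by which $\B_n(u)$ preserves $\mathcal{E}_n$. Hence both sides of $\QQ_n\B_n=\B_n\QQ_n$ are defined on $\mathcal{E}_n$.

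To prove commutation we write $\QQ_n$ as the limit~\eqref{Q-bc-def}. We apply~\eqref{UMTM-0} with $a=n+1$ to $\phi\in\mathcal{E}_n$ viewed as a function of $n+1$ variables, and pass to $x_{n+1}\to\infty$. Corollary~\ref{cor:QU-lim} justifies interchanging the limit with the integration, and it also shows that the terms carrying $e^{-x_{n+1}}$ vanish in the limit.

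The Baxter equation~\eqref{baxt-eq} is handled the same way. Its shifted parameter $\lambda-\imath$ must also satisfy the constraint, so the restriction is imposed on $\Im\lambda$ for both $\lambda$ and $\lambda-\imath$; this is what ``suitable restrictions'' means here.

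\textbf{Main obstacle.} The delicate point is this limit interchange together with the polynomial growth in $x_n$. Functions in $\mathcal{E}_n$ must have only polynomial growth in $x_n$ so that the $z_{n+1}$-integral converges when $\Im v<0$, and this dominated-convergence estimate is exactly where the upper bound $\Im v<0$ is used. I would rely on the uniform bounds from Corollary~\ref{cor:QU-lim} and check that they are preserved after applying $\B_n(u)$, using Corollary~\ref{cor:BE'space}.
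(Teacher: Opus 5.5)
Your skeleton matches the paper's. The corollary there is just the combination of Theorem~\ref{thm:spaces}, Propositions~\ref{prop:KMKM}--\ref{prop:UMTM}, Corollaries~\ref{cor:QU-lim} and~\ref{cor:BE'space}, and the closing remarks of Sections~\ref{sec:constr-eigen}--\ref{sec:DL}. Your invariance argument for $\Psi_{\bm{\lambda}_n}\in\mathcal{E}(\mathbb{R}^n)$ is a legitimate shortcut. The paper's Proposition~\ref{prop:bc-bound} gives more: membership in $\mathcal{E}_n(\mathbb{R}^n)$ and absolute convergence of \eqref{Psi-n}. However, two of your justifications would fail as written.

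\textbf{Boundary terms for $\mathcal{K}(v)$.} These do not vanish by double-exponential decay. The $y$-integral runs over $(\ln\beta,\infty)$, and $e^{-e^{y-x}}$ only controls $y\to\infty$. At the finite endpoint the kernel behaves like $(y-\ln\beta)^{-\imath v+g-1}$. Integrating by parts in \eqref{Krel-22} and \eqref{Krel-333} therefore kills the boundary terms only for $\Im v>1-g$ and $\Im v>2-g$ respectively; for $\Im v<1-g$ the boundary term even blows up. For $g\le 2$ this excludes real $v$, which are exactly the spectral parameters needed for the eigenfunctions. The missing idea is analytic continuation. The paper proves the relations for $\Im v>2-g$ and then extends them to $\Im v>-g$ using item~(3) of Proposition~\ref{prop:Espace}, i.e.\ analyticity of $\partial_x^k[\mathcal{K}(v)\phi](x)$ in $v$. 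Citing Proposition~\ref{prop:KMKM} rescues your chain, but the mechanism you describe is correct only for $\mathcal{R}$ and $\mathcal{R}^*$, whose kernels decay doubly exponentially at both ends.

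\textbf{Derivation of \eqref{DL} and \eqref{baxt-eq}.} These are not matrix elements of \eqref{UMTM-0} or of its $x_{n+1}\to\infty$ limit, so ``handled the same way'' does not cover them. Both come from the unfolded relation \eqref{UMT-mod} at $v=u$, combined with the additional local identities \eqref{Bax1}, \eqref{Bax2}, \eqref{BaxK}. These identities involve $\mathcal{R}(u\pm\imath)$, $\mathcal{R}^*(u\pm\imath)$, $\mathcal{K}(u-\imath)$ and the explicit $\mathcal{K}$-kernel, so they must themselves be verified on $\mathcal{E}$. For \eqref{BaxK} the single integration by parts is legitimate for $\Im u>1-g$, which is also where $\mathcal{K}(u-\imath)$ is defined. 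Consequently \eqref{DL} is only meaningful on $\mathcal{E}$ for $\Im\lambda>1-g$, because its right-hand side contains $\LLambda_n(\lambda-\imath)$. You imposed this shift restriction for the Baxter equation but not for \eqref{DL}.

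\textbf{A smaller point on commutativity.} Besides the limit interchange, what Corollary~\ref{cor:QU-lim} provides is $\lim_{x_{n+1}\to\infty}\partial_{x_{n+1}}\MMO_{n,n+1}(v)\phi=0$. This is what lets you drop the $\partial_{x_{n+1}}$ terms in the $(1,2)$ element, and these terms, not only the $e^{-x_{n+1}}$ ones, must disappear in the limit.
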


\begin{remark}
	The space of exponentially tempered functions $\mathcal{E}(\mathbb{R}^n)$ is natural when considering the action of the monodromy matrix entries (which contain $e^{\pm x_j}$). It is larger than the Whittaker Schwartz space considered by Wallach~\cite{W}, which is suited to the action of the Toda Hamiltonians rather than the monodromy matrix.
\end{remark}

\subsubsection{Relation to XXX spin chain}

All of the above statements parallel those obtained recently for the open spin chain~\cite{ADV2}. This can be explained by the fact that both the DST and Toda chains arise from the XXX spin chain in a certain limit. In Appendix~\ref{sec:YB-ref} we show in detail how this limit works for different objects and equations. 

First, we recall the reduction of Lax matrix for XXX spin chain (of spin~$s$)
\begin{align}
	\L(u) = \begin{pmatrix}
		u + S & S_- \\[3pt]
		S_+ & u - S
	\end{pmatrix} = 
	\begin{pmatrix}
		u + z \partial_z + s & -\partial_z \\[3pt]
		z^2 \partial_z + 2s z & u - z \partial_z - s
	\end{pmatrix}
\end{align}
to the Lax matrices for DST and Toda chains, as explained in~\cite{Skl2}. 

Second, we consider the reduction of the spin chain $\R$-operator 
\begin{multline}
	\bigl[\R_{12}(u_1|v_1,v_2) \, \Phi\bigr](z_1,z_2) = \frac{1}{\Gamma(v_1-u_1)}\,\int_{0}^1 d\alpha\; \alpha^{v_1-u_1-1} \, (1-\alpha)^{u_1-v_2} \\[0pt]
	\times \Phi(z_1,(1-\alpha)z_2+\alpha z_1)
\end{multline}
and of the Yang--Baxter equation it satisfies to the Toda chain operator $\mathcal{R}_{12}(v)$ and the corresponding equation~\eqref{RLM}, as well as to DST chain $\mathcal{R}$-operators appearing in Appendix~\ref{sec:YB-refl-op}.

At last, we demonstrate that the spin chain $\K$-operator
\begin{multline}
	\bigl[\K(v,s)\,\Phi\bigr](z) = \frac{(2\imath\beta)^{-2v}}{\Gamma(-2v)}\,(z+\imath\beta)^{g+v-s}\\[3pt]
	\times \int_0^1 d t\; (1-t)^{-2v-1}\,t^{g+v+s-1}\,
	\bigl(t(z-\imath\beta)+2\imath\beta\bigr)^{v+s-g}\,\Phi(t(z-\imath\beta)+\imath\beta) 
\end{multline}
and the reflection equation it satisfies can be reduced to the considered previously operator $\mathcal{K}(v)$ and the corresponding equation~\eqref{KMKM-0}. 

These limiting procedures provide one more way to check the formulas discussed before.

\subsubsection{Further results} \label{sec:fur-res}

In the second paper in this series~\cite{BDK}, we establish several properties of the raising and Baxter operators introduced here, analogous to those for the $GL$ model. First, we establish reflection symmetry and exchange relation
\begin{align}
	\LLambda_n(\lambda) = \LLambda_n(-\lambda), \qquad \LLambda_n(\lambda) \, \LLambda_{n - 1}(\rho) = \LLambda_n(\rho) \, \LLambda_{n - 1}(\lambda).
\end{align}
Since eigenfunctions are defined by the formula
\begin{align} \label{Psi-def2}
	\Psi_{\bm{\lambda}_n}(\bm{x}_n) = \LLambda_n(\lambda_n) \cdots \LLambda_{1}(\lambda_1) \cdot 1,
\end{align}
the above properties imply the symmetry in spectral parameters under signed permutations
\begin{align}
	\Psi_{\lambda_1, \dots, \lambda_n}(\bm{x}_n) =  \Psi_{\epsilon_1 \lambda_{\sigma(1)}, \dots, \epsilon_n\lambda_{\sigma(n)} }(\bm{x}_n), \qquad \epsilon_j \in \{1, -1\}, \quad \sigma \in S_n.
\end{align}
As a result, the spectra of Hamiltonians~\eqref{B-eigen} is nondegenerate. Furthermore, this symmetry together with~\eqref{DL} lead to the formula 
\begin{align}
	D(\pm \lambda_j) \, \Psi_{\bm{\lambda}_n}(\bm{x}_n) = -\beta (g \pm \imath \lambda_j) \, \Psi_{\lambda_1, \dots, \lambda_j \mp \imath, \dots, \lambda_n}(\bm{x}_n).
\end{align}
These formulas are analogues of~\eqref{BC-Phi} for the $GL$ model.

Besides, we prove the commutativity of Baxter operators and the local relation between Baxter and raising operators
\begin{align}
	& \QQ_n(\lambda) \, \QQ_n(\rho) = \QQ_n(\rho) \, \QQ_n(\lambda), \\[6pt]
	& \QQ_n(\lambda) \, \LLambda_n(\rho) = \Gamma(\imath \lambda - \imath \rho) \, \Gamma(\imath \lambda + \imath \rho) \, \LLambda_n(\rho) \, \QQ_{n - 1}(\lambda),
\end{align}
where for $n = 1$ we denote
\begin{align}
	\QQ_0 (\lambda) = \frac{(2\beta)^{-\imath \lambda} \, \Gamma(2\imath \lambda)}{\Gamma(g + \imath \lambda)} \, \mathrm{Id}.
\end{align}
From the second relation and definition~\eqref{Psi-def2} we deduce that Hamiltonians' eigenfunctions also diagonalize Baxter operators
\begin{align}
	\QQ_n(\lambda) \, \Psi_{\bm{\lambda}_n}(\bm{x}_n) = \frac{(2\beta)^{-\imath \lambda} \, \Gamma(2\imath \lambda)}{\Gamma(g + \imath \lambda)} \prod_{j = 1}^n \Gamma(\imath \lambda - \imath \lambda_j) \, \Gamma(\imath \lambda + \imath \lambda_j) \, \Psi_{\bm{\lambda}_n}(\bm{x}_n).
\end{align}
This formula is in accordance with Baxter equation~\eqref{baxt-eq}.

The Baxter operators' diagonalization property can be then used to derive the so-called \textit{Mellin--Barnes representation} for $BC$ Toda eigenfunctions, which generalizes well-known one particle formula~\cite[\href{http://dlmf.nist.gov/13.16.E12}{(13.16.12)}]{DLMF}
\begin{align}
	\Psi_{\lambda}(x) = \frac{e^{\frac{x}{2}}}{\sqrt{2\beta}}  \, W_{\frac{1}{2} - g, - \imath \lambda}(2\beta e^{-x}) = e^{\beta e^{-x}} \int_{\mathbb{R} - \imath 0} \frac{d\gamma}{2\pi} \; \frac{(2\beta)^{-\imath \gamma} \, \Gamma(\imath \gamma - \imath \lambda)  \, \Gamma(\imath \gamma + \imath \lambda)}{\Gamma(g + \imath \gamma)} \; e^{\imath \gamma x}.
\end{align}
For $GL$ Toda chain Gauss--Givental and Mellin--Barnes representations combined allow one to prove orthogonality and completeness of the eigenfunctions~\cite{Sil, Kozl, DKM}, and in the subsequent paper~\cite{BDK} we show that the same is true for $BC$ model.

Finally, let us remark that the Whittaker function appearing in one particle case~\eqref{Psi-W} also admits series representation~\cite[\href{http://dlmf.nist.gov/13.14.E33}{(13.14.33)}, \href{http://dlmf.nist.gov/13.14.E6}{(13.14.6)}]{DLMF}. Its generalization to the case of arbitrary number of particles is studied in the paper~\cite{DE}. In \cite{BDK} we present a precise description of such a series.

\section{Reflection operator} \label{sec:refl-op}

In this section we show how to derive the following formula for the reflection operator
\begin{multline}\label{K}
	\bigl[ \mathcal{K} (v) \, \phi \bigr](x) = \frac{(2\beta)^{\imath v}}{\Gamma \bigl( \frac{1}{2} + \frac{\alpha}{\beta} - \imath v \bigr)} \, \int_{\ln \beta}^{\infty} dy \, \exp \bigl(-2 \imath v y - e^{y -x} \bigr) \, \bigl(1 + \beta e^{-y}\bigr)^{- \imath v - \frac{1}{2} - \frac{\alpha}{\beta} } \\
	\times \bigl(1 - \beta e^{-y}\bigr)^{- \imath v - \frac{1}{2} + \frac{\alpha}{\beta}} \, \phi( - y)
\end{multline}
from the reflection equation with DST Lax matrices
\begin{align}\label{KMKM}
\mathcal{K}(v) \, M^t(-u-v) \, K(u) \, \sigma_2 M(u - v) \sigma_2
= M(u - v) \, K(u) \, \sigma_2 M^t(-u -v) \sigma_2 \; \mathcal{K}(v).
\end{align}
Explicitly this equation reads
\begin{align} \label{KMKM-expl}
	\begin{aligned}
		& \mathcal{K}(v) \, \left( 
		\begin{array}{cc}
			-u-v+\imath \partial & -e^{x}\partial \\[4pt]
			e^{-x} & \imath \end{array} \right)
		\left
		(\begin{array}{cc}
			-\alpha & u-\frac{\imath}{2} \\[4pt]
			-\beta^2 \bigl(u-\frac{\imath}{2} \bigr) & -\alpha \end{array} \right )
		\left(\begin{array}{cc}
			\imath & e^{x}\partial \\[4pt]
			-e^{-x} & u-v+\imath \partial \end{array} \right ) \\[6pt]
		& = \left(\begin{array}{cc}
			u-v+\imath \partial & e^{-x} \\[2pt]
			-e^{x}\partial & \imath \end{array} \right )
		\left
		(\begin{array}{cc}
			-\alpha & u-\frac{\imath}{2} \\[4pt]
			-\beta^2 \bigl(u-\frac{\imath}{2} \bigr) & -\alpha \end{array} \right )
		\left(\begin{array}{cc}
			\imath & -e^{-x} \\[4pt]
			e^{x}\partial & -u-v+\imath \partial\end{array} \right )\,\mathcal{K}(v),
	\end{aligned}
\end{align}
where for brevity we denote $\partial \equiv \partial_x$. The derivation consist of two steps: first, we reduce the matrix equation above to three relations, two of which are enough to determine the reflection operator
\begin{align} \label{Krel-1}
	&\mathcal{K}(v)\,e^{-x} = e^{x}\partial\,\mathcal{K}(v), \\[6pt]  \label{Krel-2}
	&\mathcal{K}(v) \, \bigl[ (\imath v + \partial)\,e^{-x} - \beta^2\,e^{x}\partial \bigr] =
	\bigl[ (-\imath v - \partial)\,e^{x}\partial + 2\alpha + \beta^2\,e^{-x} \bigr] \, \mathcal{K}(v).
\end{align}
The second step is to realize $\mathcal{K}(v)$ as an integral operator, rewrite these relations as differential equations for its kernel and solve them. After derivation we check that the solution also satisfies the additional third relation given by~\eqref{Krel-33}.

\begin{remark}
	The particular constant behind the integral in~\eqref{K} is chosen to simplify various identities with  reflection operator. Besides, the gamma function in denominator suits analytic continuation of reflection operator, see the end of this section.
\end{remark}

\paragraph{Reducing matrix equation.} The product of matrices from the left hand side of equation~\eqref{KMKM-expl} can be 
represented in the following form 
\begin{align*}
& M^t(-u-v) \, K(u) \, \sigma_2 M(u - v) \sigma_2 = \imath \alpha\,(v - u)
\left(\begin{array}{cc}
1 & 0 \\
0 & 1 \end{array} \right ) + \alpha\,(2u- \imath)
\left(\begin{array}{cc}
\imath & e^{x}\partial \\
0 & 0 \end{array} \right ) \\[6pt] 
& + \left(u-\frac{\imath}{2}\right)
\left(\begin{array}{cc}
(u+v-\imath\partial) \, e^{-x} & -u^2+(v-\imath\partial)^2 \\[4pt]
-e^{-2x} & e^{-x}(u-v+\imath\partial) \end{array} \right )
 + \beta^2\left(u-\frac{\imath}{2}\right)
\left(\begin{array}{cc}
\imath e^{x}\partial & (e^x\partial)^2 \\[4pt]
1 & - \imath e^{x}\partial \end{array} \right ).
\end{align*}
Similarly for the product of matrices from the right hand side
\begin{align*}
& M(u - v) \, K(u) \, \sigma_2 M^t(-u -v) \sigma_2 
= \imath \alpha\,(v - u)
\left(\begin{array}{cc}
1 & 0 \\
0 & 1 \end{array} \right ) + \alpha\,(2u-\imath)
\left(\begin{array}{cc}
0 & e^{-x} \\
0 & \imath \end{array} \right ) \\[6pt] 
& + \left(u-\frac{\imath}{2}\right)
\left(\begin{array}{cc}
(u-v+\imath\partial) \, e^{x}\partial & -u^2+(v-\imath\partial)^2 \\[4pt]
-(e^{x}\partial)^2 & e^{x}\partial\,(u+v-\imath\partial) \end{array} \right )
+ \beta^2\left(u-\frac{\imath}{2}\right)
\left(\begin{array}{cc}
- \imath e^{-x} & e^{-2x} \\[4pt]
1 & \imath e^{-x} \end{array} \right ).
\end{align*}
As one can see, the first terms with factors $(v - u)$ cancel, while all the remaining terms contain factors
$(u-\imath/2)$, which we remove from both sides. After these transformations we arrive at
\begin{align} \label{Kmatrrel}
	\begin{aligned}
		& \mathcal{K}(v)\,\Biggl[
		2\alpha 
		\begin{pmatrix}
			\imath & e^{x}\partial \\[4pt]
			0 & 0 \end{pmatrix} 
		+
		\begin{pmatrix}
			(u+v-\imath\partial) \, e^{-x} & -u^2+(v-\imath\partial)^2 \\[4pt]
			-e^{-2x} & e^{-x}(u-v+\imath\partial) \end{pmatrix} 
		+ \beta^2 \begin{pmatrix}
			\imath e^{x}\partial & (e^x\partial)^2 \\[4pt]
			1 & - \imath e^{x}\partial \end{pmatrix} \Biggr] \\[6pt]
		& = \Biggl[
		2\alpha
		\begin{pmatrix}
			0 & e^{-x} \\[4pt]
			0 & \imath \end{pmatrix}
		+ \begin{pmatrix}
			(u-v+\imath\partial) \, e^{x}\partial & -u^2+(v-\imath\partial)^2 \\[4pt]
			-(e^{x}\partial)^2 & e^{x}\partial\,(u+v-\imath\partial) \end{pmatrix}
		+ \beta^2
		\begin{pmatrix}
			- \imath e^{-x} & e^{-2x} \\[4pt]
			1 & \imath e^{-x} \end{pmatrix} \Biggr]
		\,\mathcal{K}(v).
	\end{aligned}
\end{align}
The operator $\mathcal{K}(v)$ does not depend on $u$, hence, coefficients corresponding to different powers of~$u$ from both sides should coincide. The equality of coefficients behind $u^2$ is trivial, while for~$u^1$ one obtains the relation~\eqref{Krel-1}
\begin{align}\label{u}
\mathcal{K}(v)\,e^{-x} = e^{x} \partial\,\mathcal{K}(v).
\end{align}
It remains to consider the matrix relation~\eqref{Kmatrrel} at $u=0$
\begin{align} \label{Kmatrrel0}
	\begin{aligned}
		& \mathcal{K}(v)\,\Biggl[
		2\alpha 
		\begin{pmatrix}
			\imath & e^{x}\partial \\[4pt]
			0 & 0 \end{pmatrix} 
		+
		\begin{pmatrix}
			(v-\imath\partial) \, e^{-x} & (v-\imath\partial)^2 \\[4pt]
			-e^{-2x} & e^{-x}(-v+\imath\partial) \end{pmatrix} 
		+ \beta^2 \begin{pmatrix}
			\imath e^{x}\partial & (e^x\partial)^2 \\[4pt]
			1 & - \imath e^{x}\partial \end{pmatrix} \Biggr] \\[6pt]
		& = \Biggl[
		2\alpha
		\begin{pmatrix}
			0 & e^{-x} \\[4pt]
			0 & \imath \end{pmatrix}
		+ \begin{pmatrix}
			(-v+\imath\partial) \, e^{x}\partial & (v-\imath\partial)^2 \\[4pt]
			-(e^{x}\partial)^2 & e^{x}\partial\,(v-\imath\partial) \end{pmatrix}
		+ \beta^2
		\begin{pmatrix}
			- \imath e^{-x} & e^{-2x} \\[4pt]
			1 & \imath e^{-x} \end{pmatrix} \Biggr]
		\,\mathcal{K}(v).
	\end{aligned}
\end{align}
The equality between $(2,1)$ matrix elements 
\begin{align*}
\mathcal{K}(v)\,e^{-2x} = (e^{x}\partial)^2 \, \mathcal{K}(v)
\end{align*}
is evident consequence of \eqref{u}. The remaining relations corresponding to 
$(1,1)$, $(2,2)$ and $(1,2)$ matrix elements are as follows
\begin{align} \nonumber
&\mathcal{K}(v)\, \bigl[(v-\imath\partial)\,e^{-x}+2\imath\alpha +
\imath\beta^2\,e^{x}\partial\bigr] =
\bigl[(-v+\imath\partial)\,e^{x}\partial-\imath\beta^2\,e^{-x}\bigr]\,\mathcal{K}(v), \\[6pt] \nonumber &\mathcal{K}(v)\,\bigl[e^{-x}\,(-v+\imath\partial)-\imath\beta^2\,e^{x}\partial\bigr] =
\bigl[e^{x}\partial\,(v-\imath\partial)+2\imath\alpha+\imath\beta^2\,e^{-x}\bigr]\,\mathcal{K}(v), \\[6pt] \label{Krel-33}
&\mathcal{K}(v)\,\bigl[(v-\imath\partial)^2+2\alpha\,e^{x}\partial+
\beta^2\,(e^{x}\partial)^2\bigr] =
\bigl[(v-\imath\partial)^2+2\alpha\,e^{-x}+\beta^2\,e^{-2x}\bigr]\,\mathcal{K}(v).
\end{align}
Notice that the first equation coincides with the announced relation~\eqref{Krel-2}. The second equation is equivalent to the first one due to~\eqref{u}: one just needs to rewrite the first terms from both sides
\begin{align}
	e^{-x}\,(-v+\imath\partial) = (-v+\imath\partial) \, e^{-x} + \imath e^{-x}, \qquad e^{x}\partial\,(v-\imath\partial) = (v-\imath\partial) \, e^{x}\partial + \imath e^{x}\partial.
\end{align}
Thus, we reduced the initial matrix equation to three relations. Two of them are enough to determine the explicit form of reflection operator, while the third one~\eqref{Krel-33} can be checked at the end. 

\paragraph{Solving key relations.} To solve two relations
\begin{align} \label{Krel-key}
	\begin{aligned}
		&\mathcal{K}(v)\,e^{-x} = e^{x}\partial\,\mathcal{K}(v), \\[6pt]  
		&\mathcal{K}(v) \, \bigl[ (\imath v + \partial)\,e^{-x} - \beta^2\,e^{x}\partial \bigr] = \bigl[ (-\imath v - \partial)\,e^{x}\partial + 2\alpha + \beta^2\,e^{-x} \bigr] \, \mathcal{K}(v)
	\end{aligned}
\end{align}
we realize $\mathcal{K}(v)$ as an integral operator acting on function $\phi(x)$ by the formula
\begin{align} \label{K-int}
\bigl[\mathcal{K}(v)\, \phi\bigr](x) = \int_{\mathcal{C}} d y \, K(x,y) \,\phi(-y).
\end{align}
Moreover, we assume that the contour $\mathcal{C}$ and function $\phi(x)$ are such that boundary terms from integration by parts vanish
\begin{align}\label{int-by-parts}
	\bigl[\mathcal{K}(v)\, \partial \, \phi\bigr](x) = - \int_{\mathcal{C}} d y \, K(x,y) \, \partial_y \phi(-y) = \int_{\mathcal{C}} d y \, \bigl( \partial_y K(x,y) \bigr) \, \phi(-y).
\end{align}
Then the relations~\eqref{Krel-key} can be rewritten as equations for the kernel 
\begin{align}\label{x}
& \partial_x K(x,y) = e^{y - x}\,K(x,y), \\[6pt]
\label{y}
& \partial_y K(x,y) = \biggl(1 - e^{y-x}+\frac{2\alpha\,e^{-y} - 1 - 2\imath v}{1 - \beta^2\,e^{-2y}}\biggr) K(x,y).
\end{align}
The solution of these differential equations has the form
\begin{align} \label{Kker}
	K(x, y) = A \exp \bigl(-2 \imath v y - e^{y -x} \bigr) \, \bigl(1 + \beta e^{-y}\bigr)^{- \imath v - \frac{1}{2} - \frac{\alpha}{\beta} } \, \bigl(1 - \beta e^{-y}\bigr)^{- \imath v - \frac{1}{2} + \frac{\alpha}{\beta}} 
\end{align}
where $A$ is arbitrary constant. This coincides with the stated expression~\eqref{K}.

In the same way, integrating by parts one rewrites the third relation~\eqref{Krel-33} as the equation on the kernel
\begin{align}
	\Bigl( (v - \imath \partial_y)^2 - (v - \imath \partial_x)^2 + 2\alpha \bigl[ \partial_y e^{-y} - e^{-x} \bigr] + \beta^2 \bigl[ (\partial_y e^{-y})^2 - e^{-2x} \bigr] \Bigr) K(x, y) = 0.
\end{align}
It is straightforward to check that the solution~\eqref{Kker} satisfies this equation as well.

Now let us comment on the choice of integration contour and space of functions in~\eqref{K-int}. The kernel $K(x, y)$ is fastly decreasing as $y \to \infty$ for any $v \in \mathbb{C}$. Besides, it is integrable near $y = \ln \beta$ under assumption
\begin{align}
	\Re \biggl( -\imath v - \frac{1}{2} + \frac{\alpha}{\beta} \biggr) > -1 \qquad \Leftrightarrow \qquad \Im v > - \frac{1}{2} - \frac{\alpha}{\beta} .
\end{align}
Thus, under this condition the action of reflection operator~\eqref{K-int} with integration along the real line $\mathcal{C} = (\ln\beta, \infty)$ is well defined on continuous functions~$\phi(x)$ that don't grow too rapidly as $x \to -\infty$. 

However, to have vanishing boundary terms after integrating by parts~\eqref{int-by-parts} one needs additional constraints on $\phi(x)$ and the stronger assumption
\begin{align}
	\Re \biggl( -\imath v - \frac{1}{2} + \frac{\alpha}{\beta} \biggr) > 0 \qquad \Leftrightarrow \qquad \Im v > \frac{1}{2} - \frac{\alpha}{\beta},
\end{align}
which guarantees that
\begin{align}
	\lim_{y \to \ln\beta^+} \, (1 - \beta e^{-y})^{-\imath v - \frac{1}{2} + \frac{\alpha}{\beta}} = 0.
\end{align}
Furthermore, for the third relation~\eqref{Krel-33} we need to integrate by parts two times, which makes the necessary assumption on $v$ even stronger: $\Im v > 3/2 - \alpha/\beta$. 

The crux is that in analysis of Toda eigenfunctions we need reflection equation with the weakest of all assumptions $\Im v > -1/2 -\alpha/\beta$. This leads to the following strategy:

\begin{itemize}
	\item[(1)] find nice enough space of functions $\phi(x)$, such that $[\mathcal{K}(v) \, \phi](x)$ and its derivatives with respect to $x$ are analytic in $v$ under assumption $\Im v > -1/2 -\alpha/\beta$;
	\item[(2)] under stronger assumption $\Im v > 3/2 -\alpha/\beta$ prove relations~\eqref{Krel-33},~\eqref{Krel-key} (and consequently, reflection equation~\eqref{KMKM}) using integration by parts;
	\item[(3)] analytically continue these relations to the domain $\Im v > -1/2 -\alpha/\beta$ at the end.
\end{itemize}

The details are given in Section~\ref{sec:intertw-rel}, and here we only state the result. The suitable space, which is denoted by $\mathcal{E}$, consists of \textit{exponentially tempered} smooth functions $\phi(x) \in C^\infty(\mathbb{R})$, that is for any $k \in \mathbb{N}_0$ there exist $a, b \geq 0$ such that
\begin{align}
	\bigl| \phi^{(k)}(x) \bigr| \leq a \, e^{b |x|}.
\end{align}
On such space the reflection operator $\mathcal{K}(v)$ is well defined and the reflection equation~\eqref{KMKM} holds under assumption $\Im v > -1/2 -\alpha/\beta$. This concludes the proof of the formula~\eqref{K}.

\paragraph{Analytic continuation.}

Let us mention that the action of reflection operator $\mathcal{K}(v)$ can be analytically continued to the whole complex plane $v \in \mathbb{C}$, if we consider functions $\phi(x)$ that are analytic and don't grow too rapidly in some strip near real line.

\begin{figure}[h] \centering
	\begin{tikzpicture}[line cap = round, thick]
			\def\x{6.5}
			\def\y{1.6}
			\def\b{1}
			\def\r{1pt}
			\def\t{0.26}
			\def\d{0.3}
			\def\c{0.52}
			\pgfmathsetmacro{\e}{sqrt(\c*\c - \t*\t)}
			\def\z{0.4}
			
			\draw[gray!50] (\x - \z, \y) -- (\x - \z, \y - \z) -- node[text = gray!80, yshift = 0.2cm] {\small $y$} (\x, \y - \z);
			
			\draw[gray!50] (-1, 0) -- (\b, 0);
			\draw[gray!50, ->] (\x - \d, 0) -- (\x, 0);
			\draw[gray!50, ->] (0, -\y) -- (0, \y);
			\draw [orange!90!black, decorate,decoration={zigzag,segment length=6,amplitude=1,post=lineto,post length=0}] (\b, 0) -- (\x - \d, 0);
			\draw[ fill = black ] (\b, 0) circle (\r) node[below] {\footnotesize $\ln \beta$};
			
			\draw[blue!50, ->-, line width = 0.3mm] (\x - \d, \t) -- (\b + \e, \t);
			\draw[blue!50, line width = 0.3mm] (\b + \e, \t) arc (30:330:\c) node[xshift = 0.25cm, yshift = 1cm] {\small $\mathcal{H}$};
			\draw[blue!50, ->-, line width = 0.3mm] (\b + \e, -\t)  -- (\x - \d, -\t);
		\end{tikzpicture} 
	\caption{Hankel contour and branch cut} \label{fig:hankel}
\end{figure}
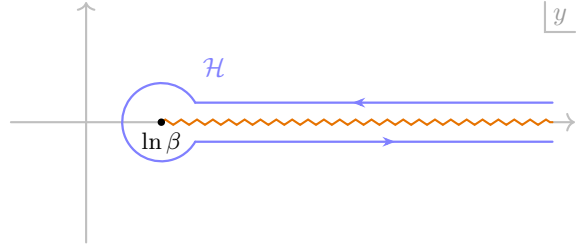

The kernel $K(x,y)$~\eqref{Kker} has the branch cut along the interval $[\ln\beta, \infty)$. Hence, in the realization of reflection operator~\eqref{K-int} we can alternatively choose Hankel contour $\mathcal{C} = \mathcal{H}$, see Figure~\ref{fig:hankel}. If one also chooses the different constant behind the integral, namely
\begin{multline} 
	\bigl[ \mathcal{K}_{\mathrm{an}}(v) \, \phi \bigr] (x) = - \frac{(2\beta)^{\imath v} \, \Gamma \bigl( \frac{1}{2} - \frac{\alpha}{\beta} + \imath v\bigr)}{2\pi \imath} \,  \int_{\mathcal{H}} dy \; \exp \bigl(-2 \imath v y - e^{y -x} \bigr) \, \bigl(1 + \beta e^{-y}\bigr)^{- \imath v - \frac{1}{2} - \frac{\alpha}{\beta} } \\
	\times \bigl(1 - \beta e^{-y}\bigr)^{- \imath v - \frac{1}{2} + \frac{\alpha}{\beta}} \, \phi( - y),
\end{multline}
then the corresponding expression represents analytic continuation of the first formula for the reflection operator~\eqref{K}. Indeed, assuming $\Im v > -1/2 - \alpha/\beta$ we can standardly pass from the Hankel contour to the branch cut
\begin{align}
	\begin{aligned}
		\int_{\mathcal{H}} dy \, K(x,y) \, \phi(-y) & = \lim_{\epsilon \to 0^+} \int_{\ln \beta}^{\infty} dy \, \bigl( K(x, y - \imath \epsilon) - K(x, y + \imath \epsilon) \bigr) \, \phi(-y) \\[6pt]
		& = 2\imath \sin \biggl[ \pi \biggl( -\imath v - \frac{1}{2} + \frac{\alpha}{\beta} \biggr)\biggr] \int_{\ln \beta}^{\infty} dy \, K(x,y) \, \phi(-y).
	\end{aligned}
\end{align}
From this using the well known formula
\begin{align}
	\sin \biggl[ \pi \biggl( -\imath v - \frac{1}{2} + \frac{\alpha}{\beta} \biggr)\biggr] = - \frac{\pi}{\Gamma \bigl( \frac{1}{2} + \frac{\alpha}{\beta} - \imath v \bigr) \, \Gamma \bigl( \frac{1}{2} - \frac{\alpha}{\beta} + \imath v \bigr)}
\end{align}
we deduce that $\mathcal{K}_{\mathrm{an}}(v) = \mathcal{K}(v)$.

\section{Construction of eigenfunctions} \label{sec:constr-eigen}

By Proposition~\ref{prop:UMTM} proven in Section~\ref{sec:app-monodr}, the monodromy operator
\begin{align} \label{bc-mon-op-2}
	\MMO_{n 0}(v) = \mathcal{R}_{n 0}(v) \cdots \mathcal{R}_{1 0}(v) \, \mathcal{K}_0(v) \, \mathcal{R}^*_{1 0}(v) \cdots \mathcal{R}^*_{n 0}(v) 
\end{align}
satisfies reflection equation
\begin{align}\label{UMTM}
	\MMO_{n0}(v) \, M_0^t(- u - v) \, \T_n(u) \, \sigma_2 M_0(u - v) \sigma_2 = M_0(u - v) \, \T_n(u) \, \sigma_2 M_0^t(-u -v) \sigma_2 \; \MMO_{n0}(v).
\end{align}
In the simplest case $n = 0$ we have
\begin{align}
	\MMO_{00}(v) = \mathcal{K}_0(v), \qquad \T_0(u) = K(u). 
\end{align}
Explicitly, acting on function $\phi(\bm{x}_{n}, x_0)$ the monodromy operator is given by the formula 
\begin{align}
	\bigl[ \MMO_{n0}(v) \, \phi \bigr] (\bm{x}_n) = \int_{\mathbb{R}^{n}} d\bm{y}_{n} \; \MMO_v(\bm{x}_n | \bm{y}_{n}) \, \phi(\bm{y}_{n},-x_n)
\end{align}
with the kernel
\begin{align}\label{Un0}
	\begin{aligned}
		\MMO_v(\bm{x}_n | \bm{y}_{n}) = \frac{(2\beta)^{\imath v}}{\Gamma(g - \imath v)} \, \int_{\mathbb{R}^{n + 1}} & d\bm{z}_{n + 1} \; \exp \biggl( \imath v \bigl( \underline{\bm{x}}_n + \underline{\bm{y}}_{n} - 2 \underline{\bm{z}}_{n + 1} \bigr) \\[6pt]
		&  - \sum_{j = 1}^{n} (e^{z_j - x_j} + e^{z_j - y_j} + e^{x_j - z_{j + 1}} + e^{y_j - z_{j + 1}} ) - e^{z_{n + 1} - x_0}\biggr) \\[6pt]
		&  \times \bigl(1 + \beta e^{-z_1}\bigr)^{- \imath v - g} \, \bigl(1 - \beta e^{-z_1}\bigr)^{- \imath v + g - 1 } \; \theta(z_1 - \ln \beta).
	\end{aligned}
\end{align}
The raising operator $\LLambda_n(\lambda)$ maps functions of $n - 1$ coordinates $\phi(\bm{x}_{n - 1})$ to functions of $n$~coordinates. It is defined as a restriction of monodromy operator
\begin{align}
	\LLambda_n(\lambda) & = e^{\imath \lambda x_n} \, \MMO_{n - 1, n}(\lambda) \bigr|_{\phi(\bm{x}_{n - 1})}.
\end{align}
In this section we demonstrate that eigenfunctions of $BC$ Toda chain
\begin{align} \label{BPsi}
	\B_n(u) \, \Psi_{\bm{\lambda}_n}(\bm{x}_n) = \biggl( u - \frac{\imath}{2} \biggr) \prod_{j = 1}^n (\lambda_j^2 - u^2) \, \Psi_{\bm{\lambda}_n}(\bm{x}_n)
\end{align}
can be constructed using raising operators
\begin{align}\label{Psi-GG2}
	\Psi_{\bm{\lambda}_n}(\bm{x}_n) = \LLambda_n(\lambda_n) \cdots \LLambda_1(\lambda_1) \cdot 1.
\end{align}
For this we show that raising operator satisfies the following relation with $\B$-operator
\begin{align} \label{BL-rel}
	\B_n(u) \, \LLambda_n(\lambda) = (\lambda^2 - u^2) \, \LLambda_n(\lambda) \, \B_{n - 1}(u).
\end{align}
Note that by definition $\B_0 = (u - \imath/2)$. Thus, the latter relation applied to the function~\eqref{Psi-GG2} leads to the formula~\eqref{BPsi}. 

To prove~\eqref{BL-rel} we write DST matrices in the relation~\eqref{UMTM} explicitly
\begin{align} \label{UMTM-expl}
	\begin{aligned}
		& \MMO_{n-1,n}(v)\,
		\left(\begin{array}{cc}
			-u-v+\imath\partial_n & -e^{x_n}\partial_n \\[3pt]
			e^{-x_n} & \imath \end{array} \right )\,\T_{n-1}(u) \,
		\left(\begin{array}{cc}
			\imath & e^{x_n}\partial_n \\[3pt]
			-e^{-x_n} & u-v+\imath\partial_n \end{array} \right ) \\[6pt]
		& = \left
		(\begin{array}{cc}
			u-v+\imath\partial_n & e^{-x_n} \\[3pt]
			-e^{x_n}\partial_n & \imath \end{array} \right )\,
		\T_{n-1}(u) \,
		\left(\begin{array}{cc}
			\imath & -e^{-x_n} \\[3pt]
			e^{x_n}\partial_n & -u-v+\imath\partial_n\end{array} \right )\,
		\MMO_{n-1,n}(v).
	\end{aligned}
\end{align}
First, let us extract the equality of $(1,2)$ matrix elements. That is consider the first row
from the most left matrix and the second column from the most
right matrix 
\begin{align} \label{UT-rel}
	\begin{aligned}
		& \MMO_{n-1,n}(v)\,
		\begin{pmatrix} -u-v+\imath\partial_n & -e^{x_n}\partial_n  \end{pmatrix}\,
		\T_{n-1}(u)\,
		\begin{pmatrix}
			e^{x_n}\partial_n \\[3pt]
			u-v+\imath\partial_n \end{pmatrix} \\[6pt]
		& = \begin{pmatrix} u-v+\imath\partial_n & e^{-x_n} \end{pmatrix}\,
		\T_{n-1}(u)\,
		\begin{pmatrix}
			-e^{-x_n} \\[3pt]
			-u-v+\imath\partial_n \end{pmatrix} \,
		\MMO_{n-1,n}(v).
	\end{aligned}
\end{align}
Next transform the row and column in the right hand side
\begin{align}
& \begin{pmatrix} u-v+\imath\partial_n & e^{-x_n} \end{pmatrix} =
e^{-\imath v x_n}\,\begin{pmatrix} u+\imath\partial_n & e^{-x_n} \end{pmatrix}\,e^{\imath v x_n} , \\[6pt] 
& \begin{pmatrix}
-e^{-x_n} \\[3pt]
-u-v+\imath\partial_n \end{pmatrix}= e^{-\imath v x_n}\, \begin{pmatrix}
-e^{-x_n} \\[3pt]
-u+\imath\partial_n \end{pmatrix} \,e^{\imath v x_n}.
\end{align}
Note that the transformed row coincides with the first row of the matrix $L_n(u)$, while transformed column coincides with the second column of the matrix $\sigma_2\,L^t_n(-u)\,\sigma_2$
\begin{align}
	L_n(u) = \begin{pmatrix}
		u + \imath \partial_n & e^{-x_n} \\[3pt]
		-e^{x_n} & 0
	\end{pmatrix}, \qquad 
	\sigma_2 \, L^t_n(-u) \, \sigma_2 = \begin{pmatrix}
		0 & -e^{-x_n} \\[3pt]
		e^{x_n} & -u + \imath \partial_n
	\end{pmatrix}.
\end{align}
Besides, recall that by definition 
\begin{align}
	\T_n(u) = L_n(u) \, \T_{n - 1}(u) \, \sigma_2 \, L^t_n(-u) \, \sigma_2 = 
	\begin{pmatrix}
		\A_n(u) & \B_n(u) \\[3pt]
		\C_n(u) & \D_n(u)
	\end{pmatrix}.
\end{align}
Hence, after transformations in the right hand side of the formula~\eqref{UT-rel} we obtain the operator~$\B_n(u)$
\begin{multline}
	\MMO_{n-1,n}(v)\,
	\begin{pmatrix} -u-v+\imath\partial_n & -e^{x_n}\partial_n  \end{pmatrix} \,
	\T_{n-1}(u)\,
	\begin{pmatrix}
		e^{x_n}\partial_n \\[3pt]
		u-v+\imath\partial_n \end{pmatrix} \\[6pt]
	= e^{- \imath v x_n} \, \B_n(u) \, e^{\imath v x_n} \, \MMO_{n - 1, n}(v).
\end{multline}
If we apply this operator identity to the function which does not depend on $x_n$, then derivatives~$\partial_n$ from the left disappear 
\begin{multline}
\MMO_{n-1,n}(v)\,
\begin{pmatrix} -u-v & 0 \end{pmatrix} \,
\left(\begin{array}{cc}
\A_{n-1}(u) & \B_{n-1}(u) \\[3pt]
\C_{n-1}(u) & \D_{n-1}(u) \end{array} \right )
\begin{pmatrix} 
0\\[3pt]
u-v \end{pmatrix} \, \phi(\bm{x}_{n-1}) \\[6pt]
= e^{-\imath v x_n}\,\B_n(u)\,e^{\imath v x_n}\,
\MMO_{n-1,n}(v)\,\phi(\bm{x}_{n-1}).
\end{multline}
Simplifying the left hand side we arrive at the formula
\begin{align}
	(v^2 - u^2) \,e^{\imath v x_n}\,\MMO_{n-1,n}(v)\,
	\B_{n-1}(u)\,\phi(\bm{x}_{n-1}) = \B_n(u)\,e^{\imath v x_n}\,\MMO_{n-1,n}(v)\, \phi(\bm{x}_{n-1}),
\end{align}
which coincides with the stated relation~\eqref{BL-rel}.

Finally, let us remark that the initial matrix relation~\eqref{UMTM-expl} holds on the space of exponentially tempered smooth functions $\mathcal{E}(\mathbb{R}^n)$, see Proposition~\ref{prop:UMTM}. Hence, the same is true for the intertwining relation~\eqref{BL-rel}. In Section~\ref{sec:bounds-eigen} we prove that the integral representation~\eqref{Psi-GG2} is absolutely convergent and derive estimate for it (Proposition~\ref{prop:bc-bound}). This estimate, in particular, implies that the eigenfunctions belong to the space $\mathcal{E}(\mathbb{R}^n)$ (Corollary~\ref{cor:Psi-E}), which justifies the construction above. 

\section{Commutativity of Baxter operators and Hamiltonians} \label{sec:QB-comm}

Baxter operator is related to the monodromy operator~\eqref{bc-mon-op-2} through the restriction and limit 
\begin{align} \label{Q-def}
	\QQ_n(\lambda) = \lim_{x_{0} \to \infty} \MMO_{n 0}(\lambda) \bigr|_{\phi(\bm{x}_n)}.
\end{align}
Explicitly, it is given by the integral operator
\begin{align}
	\bigl[ \QQ_n(\lambda) \, \phi \bigr] (\bm{x}_n) = \int_{\mathbb{R}^{n}} d\bm{y}_{n} \; \QQ_\lambda(\bm{x}_n | \bm{y}_{n}) \, \phi(\bm{y}_{n})
\end{align}
with the kernel
\begin{align}
	\begin{aligned}
		\QQ_\lambda(\bm{x}_n | \bm{y}_{n}) =  \frac{(2\beta)^{\imath \lambda}}{\Gamma(g - \imath \lambda)} \, \int_{\mathbb{R}^{n + 1}} & d\bm{z}_{n + 1} \; \exp \biggl( \imath \lambda \bigl( \underline{\bm{x}}_n + \underline{\bm{y}}_{n} - 2 \underline{\bm{z}}_{n + 1} \bigr) \\[6pt]
		&  - \sum_{j = 1}^{n} (e^{z_j - x_j} + e^{z_j - y_j} + e^{x_j - z_{j + 1}} + e^{y_j - z_{j + 1}} ) \biggr) \\[6pt]
		&  \times \bigl(1 + \beta e^{-z_1}\bigr)^{- \imath \lambda - g } \, \bigl(1 - \beta e^{-z_1}\bigr)^{- \imath \lambda + g - 1 } \; \theta(z_1 - \ln \beta),
	\end{aligned}
\end{align}
which is convergent under assumption $\Im \lambda \in (-g, 0)$. The explicit expression for the kernel follows from the formula~\eqref{Un0}. The interchange of limit and integration is justified for a suitable space of functions $\phi(\bm{x}_n)$, see Corollary~\ref{cor:QU-lim} in Section~\ref{sec:Qspace}.
 
In this section we prove that Baxter operators commute with Hamiltonians
\begin{align}
	\B_n(u) \, \QQ_n(\lambda) = \QQ_n(\lambda) \, \B_n(u).
\end{align}
First, recall the intertwining relation for the monodromy operator~\eqref{UMTM}
\begin{align}
	\begin{aligned}
		& \MMO_{n 0}(v)\,
		\left(\begin{array}{cc}
			-u-v+\imath \partial_0 & -e^{x_0}\partial_0 \\[3pt]
			e^{-x_0} & \imath \end{array} \right )\,\T_{n}(u) \,
		\left(\begin{array}{cc}
			\imath & e^{x_0}\partial_0 \\[3pt]
			-e^{-x_0} & u-v+\imath \partial_0 \end{array} \right ) \\[6pt]
		& = \left
		(\begin{array}{cc}
			u-v+\imath \partial_0 & e^{-x_0} \\[3pt]
			-e^{x_0}\partial_0 & \imath \end{array} \right )\,
		\T_{n}(u) \,
		\left(\begin{array}{cc}
			\imath & -e^{-x_0} \\[3pt]
			e^{x_0}\partial_0 & -u-v+\imath \partial_0\end{array} \right )\,
		\MMO_{n 0}(v).
	\end{aligned}
\end{align}
Consider the equality of $(1,2)$ elements and act from both sides on function $\phi(\bm{x}_n)$, so that the derivatives $\partial_0$ from the left disappear
\begin{multline} \label{UB}
(-u-v)(u-v) \, \MMO_{n 0}(v) \, \B_n(u) \, 
\phi(\bm{x}_n) \\[6pt]
= \begin{pmatrix}
u-v+\imath \partial_0 & e^{-x_0} \end{pmatrix} \, 
\begin{pmatrix}
			\A_n(u) & \B_n(u) \\[3pt]
			\C_n(u) & \D_n(u)
		\end{pmatrix} \,
\begin{pmatrix}
-e^{-x_0} \\[3pt]
-u-v+\imath \partial_0 \end{pmatrix} \, 
\MMO_{n 0}(v)\,\phi(\bm{x}_n).
\end{multline}
Next we take the limit $x_0\to \infty$. The exponents $e^{-x_0}$ from the right vanish. Besides, the explicit formula for the kernel of monodromy operator~\eqref{Un0} suggests the equality 
\begin{align}
	\lim_{x_0 \to \infty}  \partial_0 \, \MMO_{n 0}(v) \, \phi(\bm{x}_n) = 0.
\end{align}
It is proven in Section~\ref{sec:Qspace} for a suitable space of functions $\phi(\bm{x}_n)$, see Corollary~\ref{cor:QU-lim}. Thus, in the limit $x_0\to \infty$ it is possible to remove $e^{-x_0}$ and $\partial_0$ from the right hand side of~\eqref{UB}, so that this relation is reduced to the needed form
\begin{align}
\lim_{x_0 \to \infty} \MMO_{n 0}(v) \, \B_n(u) \, 
\phi(\bm{x}_n)  = \B_n(u)\, \lim_{x_0 \to \infty} \MMO_{n 0}(v)\,\phi(\bm{x}_n).
\end{align}

The technical details about the spaces of functions, on which all above relations hold are given in Section~\ref{sec:Qspace}. Namely, by Corollary~\ref{cor:QU-lim} Baxter operators are well defined on the space of exponentially tempered smooth functions with at most polynomial growth as $x_n \to \infty$, which we denote by $\mathcal{E}_n(\mathbb{R}^n)$. Moreover, Baxter operator maps this space to $\mathcal{E}(\mathbb{R}^n)$, while the operator~$\B_n(u)$ acts invariantly on it, see Corollary~\ref{cor:BE'space}.

\section{Baxter equation} \label{sec:bax-eq}
In this section we derive the Baxter equation
\begin{align} \label{Bax-eq}
	\QQ_n(u)\,\B_n(u)  = - \frac{\beta (g + \imath u)}{2u} \,\QQ_n(u-\imath).
\end{align}
The derivation consists of two parts. First, we rewrite some known local relations with $\mathcal{R}$- and $\mathcal{K}$-operators in a suitable form. Second, we use them to perform reduction of a certain global relation with monodromy operator and monodromy matrix. 

\paragraph{Rewriting local relations.} 
To begin, we rewrite intertwining relation \eqref{RLM}
\begin{equation}
	\mathcal{R}_{12}(v) \, L_1(u) \, M_2(u - v) = M_2(u - v) \, L_1(u) \, \mathcal{R}_{12}( v)
\end{equation}
in an equivalent form 
\begin{align} \label{URLU}
U^{-1}_2\,\mathcal{R}_{12}(v) \, L_1(u) \,U_2 = V_2(u - v) \, L_1(u) \, 
\mathcal{R}_{12}( v)\,V^{-1}_2(u - v)
\end{align}
using factorized expression for the matrix $M(\lambda)$
\begin{align} \label{M-fact}
M(\lambda) = \begin{pmatrix}
		\lambda + \imath \partial_{x} & e^{-x} \\[4pt]
		-e^{x} \partial_{x} & \imath
	\end{pmatrix} = U\,V(\lambda),
\end{align}
where
\begin{align}
U = \begin{pmatrix}
		1 & -\imath e^{-x} \\[4pt]
		0 & 1
	\end{pmatrix}, \qquad
	V(\lambda) = \begin{pmatrix}
		\lambda & 0 \\[4pt]
		-e^{x} \partial_{x} & \imath
	\end{pmatrix}.
\end{align}
Next with the help of explicit expression for the operator $\mathcal{R}_{12}(v)$~\eqref{R} we rewrite the relation~\eqref{URLU} as follows
\begin{align}
U^{-1}_2\,\mathcal{R}_{12}(v) \, L_1(u) \,U_2 = 
\begin{pmatrix}
(u-v)\mathcal{R}_{12}(v)-\imath\mathcal{R}_{12}(v-\imath) & 
-\imath (u-v)\,e^{-x_1}\,\mathcal{R}_{12}(v) \\[6pt]
-e^{x_1}\,\mathcal{R}_{12}(v+\imath) & 
\imath e^{x_{21}}\partial_2 \mathcal{R}_{12}(v)
	\end{pmatrix},
\end{align}
so that at the point $v=u$ one obtains 
\begin{align}\label{Bax1}
U^{-1}_2\,\mathcal{R}_{12}(u) \, L_1(u) \,U_2 = 
\begin{pmatrix}
-\imath\mathcal{R}_{12}(u-\imath) & 0 \\[6pt]
-e^{x_1}\,\mathcal{R}_{12}(u+\imath) & 
\imath e^{x_{21}}\partial_2 \mathcal{R}_{12}(u)
	\end{pmatrix}.
\end{align}
Here for brevity we denote $x_{21} \equiv x_2 - x_1$.
Similarly for the operator $\mathcal{R}^*_{12}(u)$~\eqref{Rr-expl} we have
\begin{align}\label{Bax2}
\widetilde{U}^{-1}_2 \,\mathcal{R}^*_{12}(u) \, 
\widetilde{L}_1(u) \, \widetilde{U}_2 = 
\begin{pmatrix}
-\imath e^{x_{2}+x_{1}}\partial_2 \mathcal{R}^*_{12}(u) & -e^{-x_1}\,\mathcal{R}^*_{12}(u+\imath) \\[6pt]
0 & \imath \mathcal{R}^*_{12}(u-\imath)
	\end{pmatrix}
\end{align}
where 
\begin{align}
\widetilde{U} = \sigma_2 U \sigma_2, \qquad \widetilde{L}(u) = \sigma_2 L^t(-u) \sigma_2.
\end{align}
The same factorization~\eqref{M-fact} can be used to transform the 
reflection equation \eqref{KMKM}
\begin{align}
\mathcal{K}(v) \, M^t(-u-v) \, K(u) \, \sigma_2 M(u - v) \sigma_2
= M(u - v) \, K(u) \, \sigma_2 M^t(-u -v) \sigma_2 \; \mathcal{K}(v)
\end{align}
to the form
\begin{align}
U^{-1}\,\mathcal{K}(v) \, M^t(-u-v) \, K(u) \,\widetilde{U} = V(u - v) \, K(u) \, \sigma_2 M^t(-u -v) \sigma_2 \, \mathcal{K}(v) \, \sigma_2 V^{-1}(u-v) \sigma_2.
\end{align}
Notice that in the right hand side we have singularity at $v = u$
\begin{align}
\sigma_2 V^{-1}(u - v) \sigma_2 =  
\begin{pmatrix}
-\imath & \frac{\imath}{u - v} e^x\partial_x \\[6pt]
0 & \frac{1}{u - v}
\end{pmatrix} .
\end{align}
The last relation can be rewritten as
\begin{multline}
U^{-1}\,\mathcal{K}(v) \, M^t(-u-v) \, K(u) \,\widetilde{U} \\[6pt]
= \begin{pmatrix}
1 & 0 \\[6pt]
-\frac{e^{x}\partial_x}{u-v} & \frac{\imath}{u-v}
\end{pmatrix} \, K(u) \, 
\begin{pmatrix}
(u-v)\mathcal{K}(v) & -\mathcal{K}(v)\,e^{x}\partial_x - e^{-x}\,\mathcal{K}(v) \\[6pt]
-\imath (u-v)e^{x}\partial_x \mathcal{K}(v) & 
\imath \bigl(\partial_x\mathcal{K}(v)+\mathcal{K}(v)\partial_x \bigr) - (u+v)\mathcal{K}(v)
\end{pmatrix}.
\end{multline}
Consider only the first row of this matrix equality and put $v = u$
\begin{multline}
	\begin{pmatrix}
		1 & 0 
	\end{pmatrix}\,
	U^{-1}\,\mathcal{K}(u) \, M^t(-2u) \, K(u) \,\widetilde{U} \\
	= \begin{pmatrix}
		1 & 0 \\
	\end{pmatrix}\, K(u) \, 
	\begin{pmatrix}
		0 & -\mathcal{K}(u)\,e^{x}\partial_x-e^{-x}\,\mathcal{K}(u) \\[6pt]
		0 & \imath \bigl(\partial_x\mathcal{K}(u)+\mathcal{K}(u)\partial_x\bigr) - 
		2u\,\mathcal{K}(u)
	\end{pmatrix}.
\end{multline}
Then using explicit formula for the reflection operator~\eqref{K} we can rewrite the right hand side in a simpler form
\begin{align}\label{BaxK}
	\begin{pmatrix}
	1 & 0 
	\end{pmatrix}\,
	U^{-1}\,\mathcal{K}(u) \, M^t(-2u) \, K(u) \,\widetilde{U}  = \beta (g + \imath u) \, \mathcal{K}(u-\imath) \, 
	\begin{pmatrix}
		0 & 1
	\end{pmatrix}.
\end{align}
The derivation of the Baxter equation is based on the obtained local relations \eqref{Bax1}, \eqref{Bax2} and \eqref{BaxK}. 

\paragraph{Reducing global relation.}
Recall definitions of the monodromy operator and matrix
\begin{align} \label{Mon-op-matr}
	\begin{aligned}
		& \MMO_{n0}(v) = \mathcal{R}_{n0}(v) \cdots \mathcal{R}_{10}(v) \, \mathcal{K}_0(v) \, \mathcal{R}^*_{10}(v) \cdots \mathcal{R}^*_{n0}(v), \\[6pt]
		& \T_n(u) = L_n(u) \cdots L_1(u) \, K(u) \, \widetilde{L}_1(u) \cdots \widetilde{L}_n(u).
	\end{aligned}
\end{align}
Using them together with the commutation relation~\eqref{RMtL}
\begin{equation} \label{RMtL2}
	\mathcal{R}^*_{12}(v) \, M^t_2(-u-v) \, L_1(u) = L_1(u ) \, M^t_2(-u-v) \, \mathcal{R}^*_{12}(v)
\end{equation}
we arrive at the following global relation
\begin{align} \label{UMT-mod}
	\begin{aligned}
		\MMO_{n0}(v) \, M_0^t(- u - v) \, \T_n(u) &=  
		\mathcal{R}_{n0}(v) \, L_n(u) \cdots \mathcal{R}_{10}(v) \, L_1(u)\, \\[6pt]   
		& \times \mathcal{K}_0(v)\, M_0^t(- u - v) \, K(u) \,
		\mathcal{R}^*_{10}(v)\, 
		\widetilde{L}_1(u)\,\cdots\, \mathcal{R}^*_{n0}(v)\,\widetilde{L}_n(u).
	\end{aligned}
\end{align}
Now consider $(1,2)$ element of the matrix from the left, put $v = u$ and perform corresponding reductions to obtain Baxter operator~\eqref{Q-def}
\begin{align}
\lim_{x_{0} \to \infty} \MMO_{n 0}(u) 
\begin{pmatrix}
		-2u + \imath \partial_{x_0} & -e^{x_0} \partial_{x_0}
\end{pmatrix} 
\begin{pmatrix}
	\B_n(u) \\[4pt]
	\D_n(u)
\end{pmatrix}\,\phi(\bm{x}_n) = -2u \, \QQ_n(u)\, \B_n(u) \, \phi(\bm{x}_n).
\end{align}
The last product is as needed in the Baxter equation~\eqref{Bax-eq}

Next step is to extract the same matrix element from the right hand side~\eqref{UMT-mod}, but before let us divide it into three parts
\begin{multline} \label{RHS-3pt}
	\Bigl[ \mathcal{R}_{n0}(v) \, L_n(u) \cdots \mathcal{R}_{10}(v) \, L_1(u) \, U_0 \Bigr] \\[6pt]   
	\times \Bigl[ U_0^{-1} \mathcal{K}_0(v)\, M_0^t(- u - v) \, K(u) \, \widetilde{U}_0 \Bigr] \, \Bigl[ \widetilde{U}_0^{-1} \, \mathcal{R}^*_{10}(v)\, 
	\widetilde{L}_1(u)\,\cdots\, \mathcal{R}^*_{n0}(v)\,\widetilde{L}_n(u)	\Bigr],
\end{multline}
and transform these parts separately using known local relations.

First, using local relation~\eqref{Bax1} we rewrite the first product in square brackets 
\begin{multline}
U_0 \, U^{-1}_0\,\mathcal{R}_{n0}(u) \, L_n(u) \cdots 
\mathcal{R}_{10}(u) \, L_1(u) \,U_0 = \begin{pmatrix}
	1 & -\imath e^{-x_0} \\[4pt]
	0 & 1
\end{pmatrix} \\[6pt] 
\times
\begin{pmatrix}
-\imath\mathcal{R}_{n0}(u-\imath) & 0 \\[4pt]
-e^{x_n}\,\mathcal{R}_{n0}(u+\imath) & 
\imath e^{x_{0n}}\partial_0 \mathcal{R}_{n0}(u)
	\end{pmatrix}
\cdots
\begin{pmatrix}
-\imath\mathcal{R}_{10}(u-\imath) & 0 \\[4pt]
-e^{x_1}\,\mathcal{R}_{10}(u+\imath) & 
\imath e^{x_{01}}\partial_0 \mathcal{R}_{10}(u)
	\end{pmatrix}.
\end{multline}
In the limit $x_0 \to \infty$ the exponent $e^{-x_0}$ vanishes, so that one obtains the following first row  
\begin{multline}
\lim_{x_0 \to \infty} \begin{pmatrix}
1 & 0 
\end{pmatrix} 
\mathcal{R}_{n0}(u) \, L_n(u) \cdots 
\mathcal{R}_{10}(u) \, L_1(u) \,U_0 \\
= \lim_{x_0 \to \infty} \, (-\imath)^n\,\mathcal{R}_{n0}(u-\imath)\cdots\mathcal{R}_{10}(u-\imath)\,
\begin{pmatrix}
1 & 0 
\end{pmatrix}.
\end{multline}
Since the result is proportional to the vector $(1 \;\; 0)$, in the second product in square brackets~\eqref{RHS-3pt} we also need to extract only the first row. This allows us to use local relation~\eqref{BaxK} 
\begin{align}
	\begin{pmatrix}
		1 & 0 
	\end{pmatrix}\,
	U^{-1}_0\,\mathcal{K}_0(u) \, M^t_0(-2u) \, K(u) \,\widetilde{U}_0= \beta (g + \imath u) \, \mathcal{K}_0(u-\imath) \, 
	\begin{pmatrix}
		0 & 1
	\end{pmatrix}.
\end{align}
Again, since the result is proportional to the vector $(0 \; \; 1)$, in the last product in square brackets~\eqref{RHS-3pt} we only need $(2,2)$ entry. Using the local relation~\eqref{Bax2} we obtain
\begin{align}
	\begin{pmatrix}
	0 & 1
	\end{pmatrix} 
\widetilde{U}^{-1}_0 \,\mathcal{R}^*_{10}(u) \, 
\widetilde{L}_1(u)  \cdots 
\mathcal{R}^*_{n0}(u) \, 
\widetilde{L}_n(u) \, \widetilde{U}_0 \,\widetilde{U}^{-1}_0\,
\begin{pmatrix}
 0 \\
1
\end{pmatrix}
=  \imath^n\,\mathcal{R}^*_{10}(u-\imath)\cdots\mathcal{R}^*_{n0}(u-\imath).
\end{align}
Collecting everything together we arrive at the following relation 
\begin{multline}
-2u\,\QQ_n(u)\,\B_n(u) \, \phi(\bm{x}_n) \\[6pt]
= \beta (g + \imath u) \lim_{x_{0} \to \infty} \mathcal{R}_{n0}(u-\imath)\cdots\mathcal{R}_{10}(u-\imath) \, \mathcal{K}_0(u-\imath) \,\mathcal{R}^*_{10}(u-\imath)\cdots\mathcal{R}^*_{n0}(u-\imath) \, \phi(\bm{x}_n)  ,
\end{multline}
which is equivalent to the announced Baxter equation~\eqref{Bax-eq}.

In full analogy with the previous section, all relations used here hold on the space $\mathcal{E}_n(\mathbb{R}^n)$. The only subtle point is that the spectral parameter of the Baxter operator $\QQ_n(u)$ should satisfy $\Im u \in (-g, 0)$, see Corollary~\ref{cor:QU-lim}. Hence, the appearance of $\QQ_n(u - \imath)$ on the right requires a stronger condition $\Im u \in (1 - g, 0)$, which is non-empty only if $g > 1$.

\section{Action of $\D(u)$ on raising operator} \label{sec:DL}

With a minor modification of the derivation from the previous section, we can compute the action of $\D_n(u)$ on the raising operator
\begin{align} \label{DL-eq}
	\D_n(u)\,\LLambda_n(u)   = 
	-\beta (g + \imath u) \,\LLambda_n(u-\imath).
\end{align} 
We start with the equality~\eqref{UMTM}
\begin{multline}
	M_n(u - v) \, \T_{n - 1}(u) \, \sigma_2 M_n^t(-u -v) \sigma_2 \; \MMO_{n - 1, n}(v) \\[6pt]
	= \MMO_{n - 1, n}(v) \, M_n^t(- u - v) \, \T_{n - 1}(u) \, \sigma_2 M_n(u - v) \sigma_2.
\end{multline}
Next we insert definitions of monodromy operator and monodromy matrix~\eqref{Mon-op-matr} from the right and use intertwining relation~\eqref{RMtL2} to obtain
\begin{multline}
	M_n(u-v)\,\T_{n-1}(u)\,\sigma_2 M_n^t(-u-v) \sigma_2\,\MMO_{n-1, n}(v) \\[6pt]
	= \mathcal{R}_{n-1, n}(v) \, L_{n-1}(u) \cdots \mathcal{R}_{1 n}(v) \, L_1(u)\, \mathcal{K}_n(v)\,M^t_n(-u-v)\,K(u) \\[6pt] 
	\times \mathcal{R}^*_{1 n}(v) \, \widetilde{L}_{1}(u) \cdots \mathcal{R}^*_{n-1, n}(v) \, \widetilde{L}_{n-1}(u)\, \sigma_2 M_n(u-v) \sigma_2  .
\end{multline}
Multiplying both sides by the vector 
\begin{align}
	e^{\imath v x_n} 
	\begin{pmatrix}
		-e^{x_n} & 0 
	\end{pmatrix} 
	M_n^{-1}(u - v) = \frac{e^{\imath v x_n}}{\imath (u - v)} \,
	\begin{pmatrix}
		-\imath e^{x_n} & 1
	\end{pmatrix}
\end{align}
we arrive at the equality
\begin{multline}
\begin{pmatrix}
-e^{x_n} & 0 \\
	\end{pmatrix}\,\T_{n-1}(u)\,e^{\imath v x_n}\,\sigma_2 M_n^t(-u-v) \sigma_2\,
e^{-\imath v x_n}\, e^{\imath v x_n}\MMO_{n-1, n}(v) \\[6pt]
= e^{\imath v x_n}
\begin{pmatrix}
-e^{x_n} & 0 \\
	\end{pmatrix} M^{-1}_n(u-v)\,\mathcal{R}_{n-1 , n}(v) \, L_{n-1}(u) \cdots 
\mathcal{R}_{1 n}(v) \, L_1(u)\, 
\mathcal{K}_n(v)\,M^t_n(-u-v)\,K(u) \\[6pt] 
\times \mathcal{R}^*_{1 n}(v) \, \widetilde{L}_{1}(u) \cdots 
\mathcal{R}^*_{n-1 , n}(v) \, \widetilde{L}_{n-1}(u)\, \sigma_2 M_n(u-v) \sigma_2 ,
\end{multline}
or writing all matrices explicitly
\begin{multline}
\begin{pmatrix}
-e^{x_n} & 0 \\
	\end{pmatrix}\,\T_{n-1}(u)\,
\begin{pmatrix}
\imath & -e^{-x_n} \\[4pt]
e^{x_n}(\partial_n -\imath v) & -u + \imath \partial_n 
\end{pmatrix} 
\, e^{\imath v x_n}\MMO_{n-1, n}(v) \\[6pt]
= \frac{e^{\imath v x_n}}{\imath (u-v)}\,
\begin{pmatrix}
-\imath e^{x_n} & 1 \\
	\end{pmatrix}\,\mathcal{R}_{n-1, n}(v) \, L_{n-1}(u) \cdots 
\mathcal{R}_{1 n}(v) \, L_1(u)\, 
\mathcal{K}_n(v)\,M^t_n(-u-v)\,K(u) \\[3pt] 
\times \mathcal{R}^*_{1 n}(v) \, \widetilde{L}_{1}(u) \cdots 
\mathcal{R}^*_{n-1, n}(v) \, \widetilde{L}_{n-1}(u)\, 
\begin{pmatrix}
\imath  & e^{x_n}\partial_n \\[4pt]
-e^{x_n} & u-v+\imath \partial_n 
\end{pmatrix} .
\end{multline}
Now we multiply both sides from the right by the vector $(0 \;\; 1)^t$ and act on function $\phi(\bm{x}_{n - 1})$, so that the derivatives $\partial_n$ from the right disappear
\begin{multline} \label{DL-RLK}
\begin{pmatrix}
-e^{x_n} & 0 \\
	\end{pmatrix}\,\T_{n-1}(u)\,
\begin{pmatrix}
-e^{-x_n} \\[4pt] 
-u + \imath \partial_n 
\end{pmatrix} 
\, e^{\imath v x_n}\MMO_{n-1, n}(v)\,\phi(\bm{x}_{n-1}) \\[6pt]
= - \imath e^{\imath v x_n}
\begin{pmatrix}
-\imath e^{x_n} & 1 
	\end{pmatrix}
\mathcal{R}_{n-1,n}(v) \, L_{n-1}(u) \cdots 
\mathcal{R}_{1 n}(v) \, L_1(u)\, 
\mathcal{K}_n(v)\,M^t_n(-u-v)\,K(u) \\[6pt] 
\times \mathcal{R}^*_{1 n}(v) \, \widetilde{L}_{1}(u) \cdots 
\mathcal{R}^*_{n-1,n}(v) \, \widetilde{L}_{n-1}(u)\, 
\begin{pmatrix}
0 \\
1
\end{pmatrix}\,\phi(\bm{x}_{n-1}).
\end{multline}
By definitions, 
\begin{align}
	\D_n(u) = \begin{pmatrix}
		-e^{x_n} & 0 \\
	\end{pmatrix}\,\T_{n-1}(u)\,
	\begin{pmatrix}
		-e^{-x_n} \\[4pt] 
		-u + \imath \partial_n 
	\end{pmatrix} , \qquad \LLambda_n(v) = e^{\imath v x_n} \MMO_{n - 1, n}(v) \bigr|_{\phi(\bm{x}_{n - 1})},
\end{align}
so the left hand side of the last expression at the point $v = u$ is as desired~\eqref{DL-eq}.
Besides, in the right hand side the same mechanism applies as in the derivation of the Baxter equation. First, putting $v = u $ due to~\eqref{Bax1} we have 
\begin{multline}
\begin{pmatrix}
-\imath e^{x_n} & 1 \\
	\end{pmatrix}\,\mathcal{R}_{n-1, n}(u) \, L_{n-1}(u) \cdots 
\mathcal{R}_{1 n}(u) \, L_1(u) \,U_n = \begin{pmatrix}
-\imath e^{x_n} & 1 \\
	\end{pmatrix}\,\begin{pmatrix}
		1 & -\imath e^{-x_n} \\[4pt]
		0 & 1
	\end{pmatrix} \\[2pt] 
\times \begin{pmatrix}
-\imath\mathcal{R}_{n-1, n}(u-\imath) & 0 \\[4pt]
-e^{x_{n-1}}\,\mathcal{R}_{n-1, n}(u+\imath) & 
\imath e^{x_{n-1 \, n}}\partial_0 \mathcal{R}_{n-1, n}(u)
	\end{pmatrix}
\cdots
\begin{pmatrix}
-\imath\mathcal{R}_{1 n}(u-\imath) & 0 \\[4pt]
-e^{x_1}\,\mathcal{R}_{1 n}(u+\imath) & 
\imath e^{x_{n 1}}\partial_n \mathcal{R}_{1 n}(u)
	\end{pmatrix} \\[8pt] 
= (-\imath)^{n}\,e^{x_n}\,\mathcal{R}_{n-1, n}(u-\imath)\cdots\mathcal{R}_{1 n}(u-\imath)\,
\begin{pmatrix}
1 & 0 \\
	\end{pmatrix}.
\end{multline}
This leads to the selection of the first row in the next product containing 
reflection operator, which allows to put $v=u$ and use relation~\eqref{BaxK}
\begin{align}
\begin{pmatrix}
1 & 0 \\
	\end{pmatrix}\,
U^{-1}_n\,\mathcal{K}_n(u) \, M^t_n(-2u) \, K(u) \,\widetilde{U}_n = \beta (g + \imath u) \, \mathcal{K}_n(u-\imath) \, 
\begin{pmatrix}
	0 & 1
\end{pmatrix}.
\end{align}
Consequently, from the product with $\mathcal{R}^*$-operators in the last line~\eqref{DL-RLK} we only need $(2,2)$ element. Due to~\eqref{Bax2} it equals
\begin{multline}
\begin{pmatrix}
	0 & 1
\end{pmatrix}
\widetilde{U}^{-1}_n \,\mathcal{R}^*_{1n}(u) \, 
\widetilde{L}_1(u)  \cdots \mathcal{R}^*_{n-1, n}(u) \, 
\widetilde{L}_{n-1}(u) \, \widetilde{U}_{n-1} \,\widetilde{U}^{-1}_{n-1}\,
\begin{pmatrix}
	0 \\
	1
\end{pmatrix} \\[6pt]  
= \imath^{n-1}\,\mathcal{R}^*_{1 n}(u-\imath)\cdots\mathcal{R}^*_{n-1, n}(u-\imath).
\end{multline}
Collecting everything together we obtain the following relation 
\begin{multline}
\D_n(u)\,\LLambda_n(u) \, \phi(\bm{x}_{n-1})
= - \beta (g + \imath u) \, e^{\imath (u - \imath) x_n}\, \mathcal{R}_{n-1, n}(u-\imath)
\cdots \mathcal{R}_{1 n}(u-\imath)\,
\mathcal{K}_n(u - \imath)  \\[6pt]
\times \mathcal{R}^*_{1 n}(u-\imath)\cdots\mathcal{R}^*_{n-1, n}(u-\imath)\,
\phi(\bm{x}_{n-1}) ,
\end{multline}
which coincides with the desired formula~\eqref{DL-eq}. 

At last, let us remark that the initial equality and all used intertwining relations hold on the space $\mathcal{E}(\mathbb{R}^n)$ assuming $\Im u > 1 - g$, see Sections~\ref{sec:intertw-rel}--\ref{sec:app-monodr}.

\section{Bounds} \label{sec:bounds}

In this section we derive bounds on eigenfunctions and study spaces of functions, on which the operators considered in the paper are well defined and relations between them hold true.

\subsection{Lemmas}

\begin{figure}[h] \centering \vspace{0.5cm}
	\begin{tikzpicture}[line cap = round]
		\begin{axis}[
			legend style={
				draw=none,
				font=\footnotesize
			},
			legend pos=north east,
			tick style={gray!50, line width = 0.6pt},
			x tick label style={
				font=\footnotesize
			},
			axis lines = center,
			axis line style={-to, line width = 0.8pt, gray!50},
			y tick label style={
				anchor=west,
				xshift=0.2cm,
				font={\footnotesize}
			},
			width=0.65\textwidth,
			height=0.35\textwidth,
			xmin = -4.2, xmax = 4.2,
			ymin = 0, ymax = 1.2,
			xtick = {-4,-2,0,2,4}, ytick = {0, 0.5, 1},
			clip = false,
			]
			
			\addplot[thick, dashed, domain = -4:0, restrict y to domain = 0:1, samples = 500, color = gray!50, forget plot]{1};
			
			\addplot[very thick, domain = -4:4, restrict y to domain = 0:2, samples = 500, color = red!80!black]{exp(-exp(x))}; 
			
			\node [xshift = 0.2cm, yshift = -0.2cm] at (current axis.right of origin) { \small $y$};
		\end{axis}
	\end{tikzpicture}
	\caption{Graph of $e^{-e^y}$} \label{fig:dexp}
\end{figure}
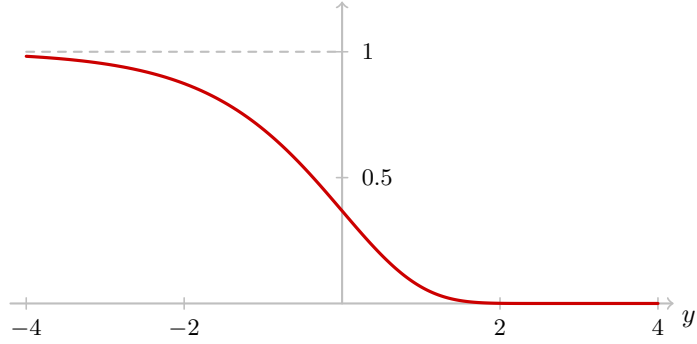

The idea behind the following two lemmas (which we borrow from~\cite[Section 4.1.1]{BK}) is that double exponent $e^{-e^y}$ is a smooth version of step function~$\theta(-y)$, see Figure~\ref{fig:dexp}.

\begin{lemma} \label{lemma}
	Let $s>0$, $\kappa \in \mathbb{R}$ and $x \geq 0$. Then
	\begin{align}
		\int_0^\infty dy \; y^{s - 1} \, e^{\kappa y -e^{y + x}} \leq C(s, \kappa) \, e^{-e^x},
	\end{align}
	and convergence of this integral is uniform in $x \geq 0$.
\end{lemma}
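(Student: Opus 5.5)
The plan is to factor out $e^{-e^x}$ from the double exponent using convexity of $e^y$. For $y \geq 0$ we have $e^{y} - 1 \geq y$, so
\begin{align}
	e^{y + x} = e^x + e^x (e^y - 1) \geq e^x + e^x y \geq e^x + y,
\end{align}
where in the last step we use $x \geq 0$, hence $e^x \geq 1$. This gives the pointwise bound $e^{-e^{y+x}} \leq e^{-e^x} \, e^{-e^x(e^y - 1)} \leq e^{-e^x} \, e^{-(e^y - 1)}$. Keeping the factor $e^{-(e^y-1)}$ is better than keeping only $e^{-y}$, since it dominates $e^{\kappa y}$ for any real $\kappa$, including large positive $\kappa$.

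Inserting this bound into the integral yields
\begin{align}
	\int_0^\infty dy \; y^{s - 1} \, e^{\kappa y - e^{y + x}} \leq e^{-e^x} \int_0^\infty dy \; y^{s-1} \, e^{\kappa y - e^y + 1},
\end{align}
so we can set $C(s,\kappa) = e \int_0^\infty y^{s-1} e^{\kappa y - e^y}\, dy$. This constant is finite: near $y=0$ the integrand behaves like $y^{s-1}$, which is integrable because $s>0$, and as $y \to \infty$ the term $e^{-e^y}$ beats $e^{\kappa y}$ for every $\kappa$.

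Uniform convergence in $x \geq 0$ comes from the same domination. The integrand is bounded in absolute value by $y^{s-1} e^{\kappa y - e^y}$, which does not depend on $x$ and is integrable on $(0,\infty)$. Hence the tails $\int_0^\delta$ and $\int_N^\infty$ tend to zero uniformly in $x$ (Weierstrass M-test for integrals).

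There is no real obstacle here. The only point needing care is to keep the factor $e^{-(e^y-1)}$ rather than a linear lower bound, so that arbitrary $\kappa \in \mathbb{R}$ is handled. The hypothesis $x \geq 0$ enters only through $e^x \geq 1$.
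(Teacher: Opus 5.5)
Your proposal is correct and is essentially the paper's argument: factor out $e^{-e^x}$, use $e^x \geq 1$ to drop the multiplier in $e^{x}(e^y-1)$, and get uniform convergence from the $x$-independent majorant coming from $e^{-e^{y+x}} \leq e^{-e^y}$. The only cosmetic difference is that the paper further bounds $e^y - 1 \geq y^2/2$ to get a Gaussian majorant $y^{s-1}e^{\kappa y - y^2/2}$, whereas you keep $e^{-(e^y-1)}$ directly; both dominate $e^{\kappa y}$ for every real $\kappa$.
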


\begin{proof}
	The bound can be equivalently written as
	\begin{align}\label{st}
		\int_0^\infty dy \; y^{s - 1} \, e^{\kappa y -e^{x}(e^{y} - 1)} \leq C(s, \kappa).
	\end{align}
	Since $e^x \geq 1$ and $e^y - 1 \geq y^2/2$ for $y \geq 0$, we have
	\begin{align}
		\int_0^\infty dy \; y^{s - 1} \, e^{\kappa y - e^{x}(e^{y} - 1)} \leq \int_0^\infty dy \; y^{s - 1} \, e^{\kappa y - \frac{1}{2} y^2}.
	\end{align}
	The last integral is convergent due to assumption $s > 0$. Besides, convergence of initial integral is uniform in $x$ due to inequality $e^{-e^{y + x}} \leq e^{-e^y}$.
\end{proof}

\begin{lemma} \label{lemma2}
	Let $m \in \mathbb{N}_0$, $\kappa \geq 0$ and $x_1, x_2 \in \mathbb{R}$. Then
	\begin{multline}
		\int_{\mathbb{R}} dy\; |y|^m \, \exp\bigl(\kappa (x_1 - y) - e^{x_1 - y} - e^{y - x_2} \bigr) \\
		\leq P(|x_1|, |x_2|) \; \exp \biggl( \biggl[ \kappa \, \frac{x_1 - x_2}{2} -e^{ \frac{x_1 - x_2}{2} } \biggr] \, \theta(x_1 - x_2) \biggr),
	\end{multline}
	where $P(|x_1|, |x_2|) \equiv P(|x_1|, |x_2|; m, \kappa)$ is polynomial in $|x_j|$. Convergence of this integral is uniform in $x_1, x_2$ from compact subsets of $\mathbb{R}$.
\end{lemma}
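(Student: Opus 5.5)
We reduce to a one-dimensional estimate by shifting the integration variable. Put $y = x_1 - t$, so the exponent becomes $\kappa t - e^{t} - e^{x_1 - x_2 - t}$ and $|y|^m \leq 2^m(|x_1|^m + |t|^m)$. Denote $d = x_1 - x_2$. It then suffices to prove, for each $k \leq m$,
\begin{align*}
	I_k(d) = \int_{\mathbb{R}} dt\; |t|^k \, e^{\kappa t - e^{t} - e^{d - t}} \leq P_k(|d|) \, \exp\Bigl( \Bigl[ \kappa \tfrac{d}{2} - e^{d/2} \Bigr] \theta(d) \Bigr),
\end{align*}
because $|d| \leq |x_1| + |x_2|$ and the polynomial factors multiply into a polynomial in $|x_1|, |x_2|$. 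Uniform convergence on compacts follows from the same dominating bounds, since $e^{-e^{t}}$ and $e^{-e^{d-t}}$ control both tails uniformly in $d$ from a compact set.

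\emph{Case $d < 0$.} Drop the factor $e^{-e^{d-t}} \leq 1$. What remains is $\int |t|^k e^{\kappa t - e^{t}}\,dt$, which is a finite constant: as $t \to +\infty$ the double exponent kills the integrand, and as $t \to -\infty$ the factor $e^{\kappa t}$ with $\kappa \geq 0$ is bounded. I am slightly worried about the edge case $\kappa = 0$, since then the left tail $\int_{-\infty}^0 |t|^k\,dt$ diverges. So for $d<0$ I will not drop the factor but keep it. For $t < d$ we have $e^{d-t} \geq 1 + (d - t)$, which gives exponential decay in $t - d$. The integral over $t<d$ is therefore at most $C\,(1+|d|)^k$. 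The integral over $t \geq d$ is at most $\int_d^\infty |t|^k e^{\kappa t - e^t}\,dt \leq C(1+|d|)^{k+1}$. Both are polynomial in $|d|$, as required.

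\emph{Case $d \geq 0$.} This is the main step. Split the line at the saddle point $t = d/2$. On $t \geq d/2$, write $t = d/2 + s$ with $s \geq 0$ and bound $e^{-e^{d-t}} \leq 1$. The contribution is then
\begin{align*}
	e^{\kappa d/2} \int_0^\infty |d/2 + s|^k \, e^{\kappa s - e^{s + d/2}}\, ds .
\end{align*}
Expand $|d/2+s|^k$ binomially and apply Lemma~\ref{lemma} with $x = d/2 \geq 0$. The integral near $s=0$ is fine, and for the $s^0$ term we use $s^{j}$ with $j \geq 0$; a bounded piece near $0$ can be handled by taking the lemma with $s$-exponent $j+1$ after noting $s^j \leq 1 + s^{j+1}$, or by a direct estimate. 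This gives $P(d)\, e^{\kappa d/2 - e^{d/2}}$. On $t \leq d/2$, the substitution $t = d/2 - s$ gives the integrand $e^{\kappa d/2 - \kappa s - e^{d/2 - s} - e^{d/2 + s}}$. Drop $e^{-e^{d/2-s}}$ and bound $e^{-\kappa s} \leq 1$. Lemma~\ref{lemma} then applies again, now with $\kappa$ replaced by $0$, and yields the same bound.

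The hardest part is the case $d \geq 0$, where one must extract the sharp double-exponential factor $e^{-e^{d/2}}$. This is done by the saddle-point splitting together with Lemma~\ref{lemma}. The rest consists of tracking that all prefactors remain polynomial in $|x_1|, |x_2|$.
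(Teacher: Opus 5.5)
Your argument is correct and is essentially the paper's proof written in the shifted variable $t = x_1 - y$. The paper also splits on the sign of $x_1 - x_2$. When $x_1 \ge x_2$ it cuts at the midpoint $y=(x_1+x_2)/2$ (your $t=d/2$) and applies Lemma~\ref{lemma} on each half. When $x_1 \le x_2$ it uses the same three pieces $y\le x_1$, $x_1\le y\le x_2$, $y\ge x_2$, which are your $t\ge 0$, $d\le t\le 0$, $t\le d$.

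For the write-up, delete the retracted first attempt for $d<0$ (it does fail at $\kappa=0$). Also drop the detour via $s^j\le 1+s^{j+1}$: Lemma~\ref{lemma} with parameter $j+1>0$ already handles $s^j$ for every $j\ge 0$, including $j=0$.
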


\begin{proof}
	First, consider the case $x_1 \leq x_2$. We need to prove that the integral is bounded by polynomial
	\begin{align}\label{Ineq1}
		\int_{\mathbb{R}} dy\;  |y|^m \, e^{\kappa (x_1 - y) - e^{x_1 - y} - e^{y - x_2}} \leq P(|x_1|, |x_2|).
	\end{align}
	Let us divide it into three parts 
	\begin{align}\label{Int3}
		\int_{\mathbb{R}} dy = \int_{-\infty}^{x_1} dy + \int_{x_1}^{x_2} dy + \int_{x_2}^{\infty} dy,
	\end{align}
	which we estimate separately. In the first term use inequality $e^{-e^{y - x_2}} \leq 1$ and change integration variable to $z = x_1 - y$
	\begin{multline}
		\int_{-\infty}^{x_1} dy \; |y|^m \, e^{\kappa (x_1 - y) - e^{x_1 - y} - e^{y - x_2}} \leq \int_{-\infty}^{x_1} dy \; |y|^m \, e^{\kappa (x_1 - y) - e^{x_1 - y}} \\[6pt]
		= \int_{0}^{\infty} dz \; |x_1 - z|^m \, e^{\kappa z- e^{z}} \leq \int_{0}^{\infty} dz \; (|x_1| + z)^m \, e^{\kappa z - e^{z}}.
	\end{multline}
	The last integral is polynomial in $|x_1|$, since expanding brackets we obtain integrals of the type
	\begin{align}
		\int_0^\infty dz \; z^\ell \, e^{ \kappa z - e^{z}} < \infty.
	\end{align}
	Besides, its convergence is clearly uniform in $x_1, x_2$ from compact subsets~of~$\mathbb{R}$.
	
	The third integral from~\eqref{Int3} is bounded by polynomial in $|x_2|$ in similar way (notice that $e^{\kappa (x_1 - y)} \leq e^{\kappa(x_1 - x_2)} \leq 1$ for $y \geq x_2$) and it also converges uniformly in $x_1, x_2$.
	At last, the second integral from~\eqref{Int3} can be estimated in the following way
	\begin{align}
		\int_{x_1}^{x_2} dy \; |y|^m \, e^{\kappa (x_1 - y) - e^{x_1 - y} - e^{y - x_2}} & \leq \int_{x_1}^{x_2} dy \; |y|^m \leq (x_2 - x_1) |x_2|^{m},
	\end{align} 
	which concludes the proof of the inequality~\eqref{Ineq1}. 
	
	Next, consider $x_1 \geq x_2$. In this case we need to show that
	\begin{align}\label{Ineq2}
		\int_{\mathbb{R}} dy\; |y|^m \, e^{ \kappa (x_1 - y) - e^{x_1 - y} - e^{y - x_2}} \leq P(|x_1|, |x_2|) \; e^{ \kappa \, \frac{x_1 - x_2}{2} - e^{ \frac{x_1 - x_2}{2} }  }.
	\end{align}
	For this split the integral into two parts
	\begin{align}\label{Int2}
		\int_{\mathbb{R}} dy = \int_{-\infty}^{ \frac{x_1 + x_2}{2} } dy + \int_{ \frac{x_1 + x_2}{2} }^{\infty}dy. 
	\end{align}
	In the first one use inequality $e^{-e^{y - x_2}} \leq 1$ and change integration variable to $z = (x_1 + x_2)/2 - y$
	\begin{multline}
		\int_{-\infty}^{ \frac{x_1 + x_2}{2} }  dy \; |y| ^m \, e^{\kappa (x_1 - y) - e^{x_1 - y} - e^{y - x_2}} \leq \int_{-\infty}^{ \frac{x_1 + x_2}{2} } dy \; |y|^m \, e^{\kappa (x_1 - y) - e^{x_1 - y}} \\[6pt] \label{int}
		= e^{ \kappa \, \frac{x_1 - x_2 }{2} } \, \int_{0}^{\infty} dz \; \biggl| \frac{x_1 + x_2}{2} - z \biggr|^m \, e^{\kappa z- e^{z + \frac{x_1 - x_2}{2}}}. 
	\end{multline}
	Since $x_1 \geq x_2$, we can use Lemma~\ref{lemma} to bound integrals of the type
	\begin{align}
		\int_{0}^{\infty} dz \; z^\ell \, e^{\kappa z - e^{z + \frac{x_1 - x_2}{2}}} \leq C(\ell, \kappa) \, e^{- e^{\frac{x_1 - x_2}{2}}},
	\end{align}
	which appear after expanding brackets in~\eqref{int}. Thus, we estimated the first term from~\eqref{Int2} in the desired way, and same arguments can be applied to the second one. Finally, for $x_1, x_2$ from compact subsets of $\mathbb{R}$ we have $|x_1 \pm x_2| \leq C_\pm$, which makes convergence of the above integrals uniform in~$x_1, x_2$.
\end{proof}

\begin{corollary} \label{corlem2}
	Let $m \in \mathbb{N}_0$, $\kappa_1, \kappa_2 \geq 0$ and $x_1, x_2 \in \mathbb{R}$. Then
	\begin{multline}
		\int_{\mathbb{R}} dy\; |y|^m \, \exp\bigl(\kappa_1 (x_1 - y) + \kappa_2 (y - x_2) - e^{x_1 - y} - e^{y - x_2} \bigr) \\
		\leq P(|x_1|, |x_2|) \; \exp \biggl( \biggl[ (\kappa_1 + \kappa_2) \, \frac{x_1 - x_2}{2} -e^{ \frac{x_1 - x_2}{2} } \biggr] \, \theta(x_1 - x_2) \biggr),
	\end{multline}
	where $P(|x_1|, |x_2|) \equiv P(|x_1|, |x_2|; m, \kappa_1, \kappa_2)$ is polynomial in $|x_j|$, and convergence of this integral is uniform in $x_1, x_2$ from compact subsets of $\mathbb{R}$.
\end{corollary}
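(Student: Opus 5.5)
We reduce Corollary~\ref{corlem2} to Lemma~\ref{lemma2} by rewriting the exponent. Since
\begin{align*}
	\kappa_1 (x_1 - y) + \kappa_2 (y - x_2) = (\kappa_1 + \kappa_2)(x_1 - y) + \kappa_2 (x_2 - x_1)...
\end{align*}
this form is not convenient when $x_1 < x_2$, so we instead split the integral according to the sign of $x_1 - x_2$.

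\emph{Case $x_1 \geq x_2$.} We write $\kappa_2(y - x_2) = \kappa_2(x_1 - x_2) - \kappa_2(x_1 - y)$. The exponent then becomes
\begin{align*}
	\kappa_2 (x_1 - x_2) + (\kappa_1 - \kappa_2)(x_1 - y) - e^{x_1 - y} - e^{y - x_2}.
\end{align*}
The coefficient $\kappa_1 - \kappa_2$ may be negative, so Lemma~\ref{lemma2} does not apply verbatim. Its proof, however, only used $\kappa \geq 0$ in the third integral of the case $x_1 \leq x_2$. In the case $x_1 \geq x_2$ it works for any real $\kappa$: there we split at $(x_1 + x_2)/2$ and invoke Lemma~\ref{lemma}, which allows arbitrary $\kappa \in \mathbb{R}$.

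A cleaner route avoids this. We split the integral at $y_0 = (x_1 + x_2)/2$. On $y \leq y_0$ we bound $e^{-e^{y - x_2}} \leq 1$, keep $e^{-e^{x_1 - y}}$, and substitute $z = y_0 - y$. The prefactor becomes
\begin{align*}
	e^{(\kappa_1 + \kappa_2)\frac{x_1 - x_2}{2}} \, e^{(\kappa_1 - \kappa_2) z}.
\end{align*}
Lemma~\ref{lemma} with $\kappa = \kappa_1 - \kappa_2$, applied after expanding $|y_0 - z|^m$, then gives
\begin{align*}
	P \, e^{(\kappa_1 + \kappa_2)\frac{x_1 - x_2}{2} - e^{\frac{x_1 - x_2}{2}}}.
\end{align*}
On $y \geq y_0$ the argument is symmetric. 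We substitute $z = y - y_0$ and keep $e^{-e^{y - x_2}} = e^{-e^{z + (x_1 - x_2)/2}}$. The same prefactor appears, now with $e^{(\kappa_2 - \kappa_1) z}$.

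\emph{Case $x_1 \leq x_2$.} We need a polynomial bound. We split at $x_1$ and $x_2$ as in the proof of Lemma~\ref{lemma2}.
\begin{itemize}
	\item On $(-\infty, x_1]$ we have $y - x_2 \leq x_1 - y$, hence $\kappa_2(y - x_2) \leq 0$. The integral is then controlled by $e^{\kappa_1 z - e^z}$ with $z = x_1 - y$.
	\item On $[x_2, \infty)$ we similarly have $\kappa_1(x_1 - y) \leq 0$.
	\item On $[x_1, x_2]$ both linear terms are bounded by $0$: $x_1 - y \leq 0$ and $y - x_2 \leq 0$. The integral is therefore at most $(x_2 - x_1)\max(|x_1|, |x_2|)^m$.
\end{itemize}

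Uniformity on compact sets follows exactly as in Lemma~\ref{lemma2}. The only point requiring care is that Lemma~\ref{lemma} is used with a possibly negative $\kappa$, which it permits.
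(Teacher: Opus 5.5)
Your proof is correct, but it is organized differently from the paper's. The paper does not reopen the integral. For $\kappa_1 \geq \kappa_2$ it uses the pointwise bound $e^{\kappa_1(x_1-y)+\kappa_2(y-x_2)} = e^{\kappa_2(x_1-x_2)+(\kappa_1-\kappa_2)(x_1-y)} \leq e^{\kappa_2(x_1-x_2)\theta(x_1-x_2)+(\kappa_1-\kappa_2)(x_1-y)}$, which is where $\kappa_2 \geq 0$ enters. It then applies Lemma~\ref{lemma2} as a black box with $\kappa = \kappa_1 - \kappa_2 \geq 0$ and uses $\kappa_2 + \tfrac12(\kappa_1-\kappa_2) = \tfrac12(\kappa_1+\kappa_2)$. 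The case $\kappa_1 \leq \kappa_2$ is reduced to this one by the reflection $y = x_1 + x_2 - z$, which swaps the roles of $\kappa_1$ and $\kappa_2$. So the paper splits on the sign of $\kappa_1 - \kappa_2$. You instead split on the sign of $x_1 - x_2$ and rerun the argument of Lemma~\ref{lemma2} with two linear weights. For $x_1 \geq x_2$ you cut at $(x_1+x_2)/2$ and apply Lemma~\ref{lemma} with $\kappa = \pm(\kappa_1-\kappa_2)$, which is legitimate because that lemma allows any real $\kappa$. For $x_1 \leq x_2$ you use the three-interval split, where $\kappa_1, \kappa_2 \geq 0$ make the discarded linear terms nonpositive. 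The paper's route is shorter because it recycles Lemma~\ref{lemma2}. Yours is self-contained and shows that the sign conditions on $\kappa_j$ are needed only in the region $x_1 \leq x_2$.

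A few slips should be cleaned up, though none affects your final argument:
\begin{itemize}
\item Your first displayed identity is false. The two sides differ by $2\kappa_2(y-x_2)$; the correct rewriting is $(\kappa_1-\kappa_2)(x_1-y)+\kappa_2(x_1-x_2)$, which you use later.
\item The proof of Lemma~\ref{lemma2} needs $\kappa \geq 0$ on the middle interval $[x_1,x_2]$ as well, not only on $[x_2,\infty)$.
\item In your first bullet, the reason for $\kappa_2(y-x_2) \leq 0$ is simply $y \leq x_1 \leq x_2$, not the inequality $y - x_2 \leq x_1 - y$.
\end{itemize}
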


\begin{proof}
	First, consider the case $\kappa_1 \geq \kappa_2$. Then
	\begin{align}
		\begin{aligned}
			e^{\kappa_1(x_1 - y) + \kappa_2(y - x_2)} & = e^{\kappa_2(x_1 - x_2) + (\kappa_1 - \kappa_2) (x_1 - y)} \\[6pt]
			& \leq e^{\kappa_2(x_1 - x_2) \, \theta(x_1 - x_2) + (\kappa_1 - \kappa_2) (x_1 - y)},
		\end{aligned}
	\end{align}
	where the last inequality is due to assumption $\kappa_2 \geq 0$. Applying it for the integral in question we arrive at
	\begin{multline}
		\int_{\mathbb{R}} dy\; |y|^m \, e^{\kappa_1 (x_1 - y) + \kappa_2 (y - x_2) - e^{x_1 - y} - e^{y - x_2}} \\
		\leq e^{\kappa_2(x_1 - x_2) \, \theta(x_1 - x_2)} \; \int_{\mathbb{R}} dy\; |y|^m \, e^{(\kappa_1 - \kappa_2) (x_1 - y)- e^{x_1 - y} - e^{y - x_2}}.
	\end{multline}
	The last integral can be bounded using Lemma~\ref{lemma2}, which gives the stated estimate.
	
	The case $\kappa_1 \leq \kappa_2$ reduces to the previous one after the change of variable $y = x_1 + x_2 - z$ and brackets expansion in
	\begin{align}
		|y|^m = |x_1 + x_2 - z|^m \leq (|x_1| + |x_2| + |z|)^m.
	\end{align}
	This concludes the proof of corollary.
\end{proof}

Recall $GL$ Toda raising operator 
\begin{align} \label{L-def0}
	\bigl[ \Lambda_n(\lambda) \, \phi \bigr] (\bm{x}_n) = \int_{\mathbb{R}^{n - 1}} d\bm{y}_{n - 1} \; \exp \biggl( \imath \lambda \bigl( \underline{\bm{x}}_n - \underline{\bm{y}}_{n - 1} \bigr) - \sum_{j = 1}^{n - 1} ( e^{x_j - y_j} + e^{y_j - x_{j + 1} } ) \biggr) \, \phi(\bm{y}_{n - 1}).
\end{align}
Note that its kernel consists of functions we encountered in previous statements. Let us introduce the space of continuous polynomially bounded functions
\begin{align}
	\mathcal{P}_n= \bigl\{ \phi(\bm{x}_n) \in C(\mathbb{R}^n) \colon \quad  | \phi(\bm{x}_n) | \leq P(|x_1|, \dots, |x_n|), \quad \text{$P$ --- polynomial} \bigr\}.
\end{align}
The following corollary says that the raising operator acts ``invariantly'' on this space modulo increasing the number of variables. This statement is a weaker version of~\cite[Proposition 4.1.3]{BK}.

\begin{corollary} \label{cor:LP-inv}
	Let $\lambda \in \mathbb{R}$ and $\phi \in \mathcal{P}_{n - 1}$. Then $\bigl[ \Lambda_n(\lambda) \, \phi \bigr] (\bm{x}_n) \in \mathcal{P}_{n}$.
\end{corollary}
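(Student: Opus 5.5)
Since $\lambda$ is real, the oscillating factor $e^{\imath\lambda(\underline{\bm{x}}_n - \underline{\bm{y}}_{n-1})}$ has modulus one. We therefore bound the action of $\Lambda_n(\lambda)$ by the absolute integral
\begin{align*}
	\bigl| [\Lambda_n(\lambda)\phi](\bm{x}_n) \bigr| \leq \int_{\mathbb{R}^{n-1}} d\bm{y}_{n-1} \; \prod_{j=1}^{n-1} e^{-e^{x_j - y_j} - e^{y_j - x_{j+1}}} \; P(|y_1|, \dots, |y_{n-1}|),
\end{align*}
where $P$ is the polynomial bounding $\phi$. Expanding $P$ into monomials $|y_1|^{m_1}\cdots|y_{n-1}|^{m_{n-1}}$, the integrand factorizes. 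Each monomial contribution becomes a product of one-dimensional integrals $\int_{\mathbb{R}} dy_j \, |y_j|^{m_j} e^{-e^{x_j-y_j} - e^{y_j - x_{j+1}}}$.

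Each one-dimensional factor is exactly Lemma~\ref{lemma2} with $\kappa = 0$, applied to the pair $(x_j, x_{j+1})$. That lemma bounds the factor by $P_j(|x_j|, |x_{j+1}|)\exp\bigl(-e^{(x_j-x_{j+1})/2}\,\theta(x_j-x_{j+1})\bigr)$, and the exponential here is at most one. Hence every monomial contributes at most a product of polynomials in $|x_1|,\dots,|x_n|$. Summing the finitely many monomials gives the required polynomial bound $|[\Lambda_n(\lambda)\phi](\bm{x}_n)| \leq \widetilde P(|x_1|,\dots,|x_n|)$. As a byproduct, this also shows that the defining integral converges absolutely.

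It remains to prove continuity in $\bm{x}_n$, which is the only step needing some care. Lemma~\ref{lemma2} states that each one-dimensional integral converges uniformly for $x_j, x_{j+1}$ in compact sets. So for $\bm{x}_n$ in a compact set $K$, the absolute integrand is dominated by an integrable function of $\bm{y}_{n-1}$; equivalently, the tails of the integral are uniformly small on $K$. The integrand is continuous in $\bm{x}_n$, since $\phi$ is continuous and the kernel is smooth. Dominated convergence, or uniform convergence of integrals of continuous functions, then gives continuity of $\Lambda_n(\lambda)\phi$ on $K$, hence on all of $\mathbb{R}^n$. I expect no real obstacle. The only point to watch is that the uniform-convergence statement of Lemma~\ref{lemma2} is per factor, and it transfers to the product integral because the integrand factorizes after taking absolute values.
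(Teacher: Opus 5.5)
Your proposal is correct and follows the paper's proof essentially verbatim. The steps match: bound by the absolute integral (using $\lambda \in \mathbb{R}$), expand the polynomial into monomials so the integral factorizes, apply Lemma~\ref{lemma2} with $\kappa = 0$ to each one-dimensional factor, and get continuity from the uniform convergence on compacts given by the same lemma.
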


\begin{proof}
	From definition~\eqref{L-def0} and bound on $\phi \in \mathcal{P}_{n - 1}$ we have
	\begin{align}
		\Bigl| 	\bigl[ \Lambda_n(\lambda) \, \phi \bigr] (\bm{x}_n) \Bigr| \leq \int_{\mathbb{R}^{n - 1}} d\bm{y}_{n - 1} \; \exp \biggl(  - \sum_{j = 1}^{n - 1} ( e^{x_j - y_j} + e^{y_j - x_{j + 1} } ) \biggr) \, P(|y_1|, \dots, |y_{n - 1}|).
	\end{align}
	Writing polynomial in terms of monomials $|y_1|^{m_1} \cdots |y_{n - 1}|^{m_{n - 1}}$ we obtain sum of factorised integrals
	\begin{align}
		\prod_{j = 1}^{n - 1} \int_{\mathbb{R}} dy_j \; |y_j|^{m_j} \, \exp \bigl( - e^{x_j - y_j} - e^{y_j - x_{j + 1}} \bigr).
	\end{align}
	By Lemma~\ref{lemma2} they are polynomially bounded, which gives the desired estimate
	\begin{align}
		\Bigl| 	\bigl[ \Lambda_n(\lambda) \, \phi \bigr] (\bm{x}_n) \Bigr| \leq \widetilde{P}(|x_1|, \dots, |x_n|).
	\end{align}
	Moreover, by the same lemma the above integrals converge uniformly in $x_j$ from compact subsets of $\mathbb{R}$, which makes the function $\bigl[ \Lambda_n(\lambda) \, \phi \bigr] (\bm{x}_n)$ continuous in $\bm{x}_n$.
\end{proof}

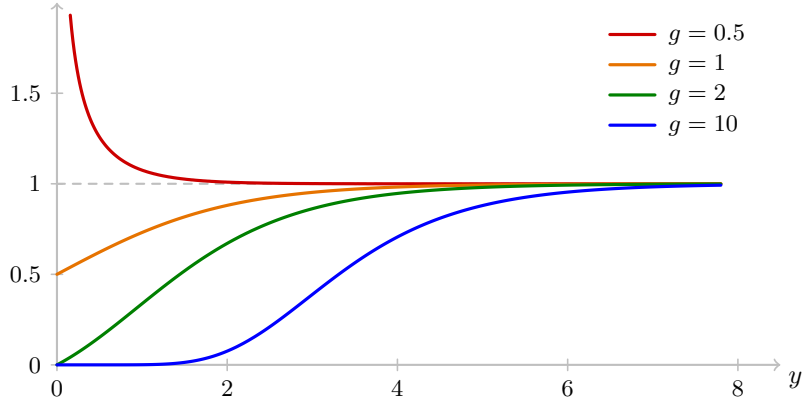
\begin{figure}[h]
	\centering
	
	\begin{tikzpicture}[line cap = round]
		\begin{axis}[
			legend cell align={left},
			legend style={
				draw=none,
				font=\footnotesize
			},
			legend pos=north east,
			tick style={line width = 0.6pt, gray!50},
			tick label style={
				font=\footnotesize
			},
			width=0.7\textwidth,
			height=0.4\textwidth,
			xmin = 0, xmax = 8.5,
			ymin = 0, ymax = 2,
			axis lines = left,
			axis line style={-to, line width = 0.8pt, gray!50},
			xtick = {0,2,4,6,8}, ytick = {0, 0.5, 1, 1.5},
			clip = false,
			]
			
			\addplot[thick, dashed, domain = 0:7.8, restrict y to domain = 0:1, samples = 500, color = gray!50, forget plot]{1};
			
			\addplot[very thick, domain = 0:7.8, restrict y to domain = 0:2, samples = 500, color = red!80!black]{1/((1-exp(-x))^(0.5)*(1 + exp(-x))^(0.5))}; \addlegendentry{$\;g = 0.5$}
			
			\addplot[very thick, domain = 0:7.8, restrict y to domain = 0:1, samples = 500, color = orange!90!black]{1/(1 + exp(-x))};
			\addlegendentry{$\;g = 1$}
			
			\addplot[very thick, domain = 0:7.8, restrict y to domain = 0:1, samples = 500, color = green!50!black]{(1-exp(-x))/(1 + exp(-x))^2};
			\addlegendentry{$\;g = 2$}
			
			\addplot[very thick, domain = 0:7.8, restrict y to domain = 0:1, samples = 500, color = blue]{(1-exp(-x))^9/(1 + exp(-x))^10};
			\addlegendentry{$\;g = 10$}
			
			\node [xshift = 0.2cm, yshift = -0.2cm] at (current axis.right of origin) { \small $y$};
		\end{axis}
	\end{tikzpicture}
	\caption{Graphs of $(1 + e^{-y})^{- g} \, (1 - e^{-y})^{g - 1}$} \label{fig:BCfunc}
\end{figure}

In $BC$ Toda chain we also encounter function \mbox{$(1 + e^{-y})^{- g} \, (1 - e^{-y})^{g - 1}$} with $y, g>0$, which is pictured in Figure~\ref{fig:BCfunc}. If $g \geq 1$, then it also represents smooth version of step function, whereas in the case $0 < g < 1$ we should take into account that it grows as $1/y^{1-g}$ for small $y$. 

\begin{lemma} \label{lemma3}
	Let $m \in \mathbb{N}_0$, $g > 0$, $\kappa \geq 0$ and $x \in \mathbb{R}$. Then
	\begin{multline}\label{lem3}
		\int_{0}^\infty dy \; y^m  \, (1 + e^{-y})^{- g} \, (1 - e^{-y})^{g - 1} \, \exp\bigl(\kappa (y + x) - e^{y + x}\bigr) \\
		\leq P(|x|) \, \exp \bigl( \bigl[ \kappa x- e^{x} \bigr] \, \theta(x) \bigr),
	\end{multline}
	where  $P(|x|) \equiv P(|x|; m, g, \kappa)$ is polynomial in $|x|$. Convergence of this integral is uniform in~$x$ from compact subsets of $\mathbb{R}$.
\end{lemma}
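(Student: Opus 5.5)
The plan is to isolate the only new feature compared with Lemma~\ref{lemma}: the factor $f_g(y) = (1 + e^{-y})^{-g}(1 - e^{-y})^{g-1}$. This factor is bounded for large $y$ and has an integrable singularity $\sim y^{g-1}$ at $y = 0$ when $0<g<1$. Accordingly, I split the integral into $\int_0^1 + \int_1^\infty$.

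On $[1,\infty)$ we have $0 < f_g(y) \leq C(g)$, since $(1+e^{-y})^{-g}\le 1$ and $1-e^{-y}\in[1-e^{-1},1)$. On $(0,1]$, using $1 - e^{-y} \geq y/2$ for $y\in[0,1]$, we have $f_g(y) \leq C(g)\,(1 + y^{g-1})$. This covers $g<1$ via $y^{g-1}$ and $g \geq 1$ via boundedness. Hence globally
\begin{align*}
	f_g(y) \leq C(g)\,\bigl(1 + y^{g-1}\bigr), \qquad y > 0.
\end{align*}
The left-hand side of~\eqref{lem3} is therefore bounded by $C(g)$ times the sum of two integrals of the form
\begin{align*}
	\int_0^\infty dy\; y^{s-1}\, e^{\kappa(y+x) - e^{y+x}}, \qquad s \in \{m+1,\; m+g\},
\end{align*}
with $s>0$ in both cases. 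It then suffices to bound such integrals by $P(|x|)\exp([\kappa x - e^x]\theta(x))$.

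For $x \geq 0$ this is Lemma~\ref{lemma} directly. Writing $e^{\kappa(y+x)} = e^{\kappa x} e^{\kappa y}$ gives the bound $C(s,\kappa)\, e^{\kappa x - e^{x}}$, with a constant polynomial.

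For $x < 0$ the target is only polynomial in $|x|$, and the main (mild) obstacle is that the exponential decay now starts only at $y \approx -x = |x|$. I would substitute $t = y + x$ and split at $t=0$, i.e.\ at $y = |x|$.
\begin{itemize}
	\item On $y \in [0,|x|]$ we have $t\le 0$, so $e^{\kappa t}\le 1$ and $e^{-e^t}\le 1$. The integral is at most $\int_0^{|x|} y^{s-1}dy = |x|^s/s$. This is not polynomial when $s$ is non-integer, but it is bounded by $1+|x|^{\lceil s\rceil}$, which suffices.
	\item On $y\ge |x|$, with $t\ge0$, we have $y^{s-1} = (t+|x|)^{s-1}$. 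If $s \geq 1$, bound this by $(t + |x| + 1)^{\lceil s\rceil}$, expand, and estimate each term with $\int_0^\infty t^\ell e^{\kappa t - e^t}dt<\infty$. If $0<s<1$, use $y^{s-1}\le 1$ on $y\ge \max(|x|,1)$, and on $[|x|,1]$ (if nonempty) use $\int_0^1 y^{s-1}dy<\infty$ together with $e^{\kappa t-e^t}\le e^{\kappa}$.
\end{itemize}
This yields the polynomial bound, since $\theta(x) = 0$ for $x<0$.

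For uniform convergence with $x$ in a compact set $[-A,A]$, the integrand is dominated by $C(g)\,(1+y^{g-1})\,y^m e^{\kappa(y+A) - e^{y-A}}$, which is integrable near $0$ (because $g>0$) and at infinity (double-exponential decay). It is also independent of $x$, which gives uniformity.
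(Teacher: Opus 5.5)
Your proof is correct and follows essentially the same route as the paper. It uses the same elementary bounds on $(1+e^{-y})^{-g}(1-e^{-y})^{g-1}$ (split at $y=1$, with $1-e^{-y}\ge y/2$), Lemma~\ref{lemma} for $x\ge 0$, and for $x<0$ a split at $y=-x$ followed by binomial expansion. The difference is only bookkeeping: your single bound $C(g)(1+y^{g-1})$ merges the cases $g\ge 1$ and $0<g<1$, which forces the non-integer power $|x|^{s}$ and the extra $0<s<1$ subcase, whereas the paper, for $x\le 0$ and $g<1$, simply bounds the $[0,1]$ piece by $e^{\kappa}/(m+g)$.
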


\begin{proof}
	The case $g \geq 1$ is simple, since we can use inequality
	\begin{align}
		(1 + e^{-y})^{- g} \, (1 - e^{-y})^{g - 1} \leq 1
	\end{align}
	to write the bound
	\begin{align}
		\int_{0}^\infty dy \; y^m  \, (1 + e^{-y})^{- g} \, (1 - e^{-y})^{g - 1} \, e^{\kappa (y + x) - e^{y + x}} \leq \int_{0}^\infty dy \; y^m  \,  e^{\kappa (y + x) - e^{y + x}}.
	\end{align}
	Next, if $x \geq 0$, then the claim~\eqref{lem3} follows from Lemma~\ref{lemma}. If $x \leq 0$, then we split the above integral into two parts
	\begin{align}
		\int_{0}^\infty dy = \int_{0}^{-x} dy \; +\, \int_{-x}^\infty dy.
	\end{align}
	These parts are bounded by polynomials in $|x|$
	\begin{align} \label{ineq3}
		\begin{aligned}
			& \int_{0}^{-x} dy \; y^m  \,  e^{\kappa (y + x) - e^{y + x}} \leq \int_{0}^{-x} dy \; y^m = \frac{ | x |^{m + 1}}{m + 1}, \\[6pt]
			& \begin{aligned}
				& \int_{-x}^\infty dy \; y^m  \, e^{\kappa (y + x) - e^{y + x}}  = \int_{0}^\infty dz \; (z - x)^m  \, e^{ \kappa z - e^{z}} \leq P(|x|),
			\end{aligned}
		\end{aligned}
	\end{align}
	where the last inequality follows from expanding brackets. 
	
	It remains to analyse the case $0 < g <1$. We still have $(1 + e^{-y})^{- g} \leq 1$, so that
	\begin{align} \label{Intg}
		\int_{0}^\infty dy \; y^m  \, (1 + e^{-y})^{- g} \, (1 - e^{-y})^{g - 1} \, e^{\kappa (y + x) - e^{y + x}} \leq \int_{0}^\infty dy \; y^m  \,(1 - e^{-y})^{g - 1} \, e^{\kappa (y + x) - e^{y + x}},
	\end{align}
	but one needs to take care of the possible singularity at $y = 0$. Since
	\begin{align} \label{ey1}
		& y \in [0, 1]\colon && \hspace{-2cm} 1 - e^{-y} \geq y - \frac{y^2}{2} \geq \frac{y}{2}, \\[6pt] \label{ey2}
		& y \in [1, \infty) \colon &&  \hspace{-2cm} 1 - e^{-y} \geq 1 - e^{-1},
	\end{align}
	we can divide the integral from the right~\eqref{Intg} into two parts and estimate them separately
	\begin{align}
		& \int_{0}^1 dy \; y^m \, (1 - e^{-y})^{g - 1} \, e^{\kappa(y + x)- e^{y + x}} \leq 2^{1 - g} \, \int_0^1 dy \; y^{m + g - 1} \, e^{\kappa(y + x)- e^{y + x}}, \\[10pt]
		& \int_{1}^\infty dy \; y^m  \, (1 - e^{-y})^{g - 1} \, e^{\kappa(y + x)- e^{y + x}} \leq (1 - e^{-1})^{g - 1} \, \int_{1}^\infty dy \; y^m \, e^{\kappa(y + x)- e^{y + x}}.
	\end{align}
	First, suppose $x \geq 0$. Then for both of the above integrals we can use Lemma~\ref{lemma}. Indeed, consider the first integral
	\begin{align}
		\int_0^1 dy \; y^{m + g - 1} \, e^{\kappa(y + x)- e^{y + x}} \leq \int_0^\infty dy \; y^{m + g - 1} \, e^{\kappa(y + x)- e^{y + x}} \leq C \, e^{\kappa x -e^x}.
	\end{align}
	Similarly for the second one. This gives us the desired bound~\eqref{lem3} in the case $x \geq 0$.
	
	In the remaining case $x \leq 0$ for the first integral we just write
	\begin{align}
		\int_0^1 dy \; y^{m + g - 1} \, e^{\kappa(y + x)- e^{y + x}} \leq e^{\kappa} \int_0^1 dy \; y^{m + g - 1} = \frac{e^{\kappa} }{m + g},
	\end{align}
	whereas for the second one we again use inequalities~\eqref{ineq3} to bound it by polynomial in $|x|$
	\begin{align}
		\int_{1}^\infty dy \; y^m \, e^{\kappa(y + x)- e^{y + x}} \leq  \int_{0}^\infty dy \; y^m \, e^{\kappa (y + x)- e^{y + x}} \leq P(|x|).
	\end{align}
	At last, it easy to check that above estimates imply convergence uniform in~$x$ from compact subsets of $\mathbb{R}$.
	This concludes the proof of lemma.
\end{proof}

\subsection{Bounds on eigenfunctions} \label{sec:bounds-eigen}

\subsubsection{One particle}

The eigenfunction in the case $n = 1$ is given by the formula
\begin{align}\label{Psi}
	\Psi_\lambda(x) = \frac{(2\beta)^{\imath \lambda}}{\Gamma(g - \imath \lambda)} \, \int_{\ln \beta}^\infty dy \;\, e^{\imath \lambda (x - 2y) - e^{y - x}} \, (1 + \beta e^{-y})^{- \imath \lambda - g} \, (1 - \beta e^{-y})^{-\imath \lambda + g - 1},
\end{align}
where $\beta > 0$ and $g = 1/2 + \alpha/\beta > 0$. For simplicity, in what follows we assume that $\lambda \in \mathbb{R}$. Clearly, the integral in this case is absolutely convergent, but we need some bound on it to deal with the cases $n \geq 2$.

Recall that eigenfunction solves the equation
\begin{align}
	\bigl( -\partial_x^2 + 2\alpha \, e^{-x} + \beta^2 \, e^{-2x} \bigr) \Psi_\lambda(x) = \lambda^2 \, \Psi_\lambda(x).
\end{align}
Hence, we expect that it decays as \mbox{$x \to -\infty$}. The following bound suits this expectation.

\begin{lemma}\label{lemma4}
	Let $k \in \mathbb{N}_0$ and $\lambda, x\in \mathbb{R}$. Then $\Psi_\lambda(x)$ is smooth in $x$ and admits the bound
	\begin{align}
		\bigl| \partial_x^k \, \Psi_\lambda(x) \bigr| \leq \frac{P(|x|, |\lambda|)}{| \Gamma(g - \imath \lambda) |} \, e^{ [k (\ln \beta - x) - \beta e^{-x} ] \, \theta(\ln\beta-x)},
	\end{align}
	where $P$ is polynomial, whose coefficients depend on $k, \beta, g$. In particular, for $k = 0$ it doesn't depend on $\lambda$
	\begin{align}\label{Psi0-b}
		\bigl| \Psi_\lambda(x) \bigr| \leq \frac{P(|x|)}{| \Gamma(g - \imath \lambda) |} \, e^{ -  \beta e^{-x} \, \theta(\ln\beta-x)}.
	\end{align}
\end{lemma}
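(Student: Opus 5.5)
The plan is to substitute $y = \ln\beta + t$, $t \geq 0$, in the integral~\eqref{Psi}. Then
\begin{align*}
	\beta e^{-y} = e^{-t}, \qquad e^{y - x} = e^{t + \ln\beta - x}.
\end{align*}
Up to the factor $(2\beta)^{\imath\lambda}\beta^{-2\imath\lambda}/\Gamma(g - \imath\lambda)$, whose modulus is $1/|\Gamma(g - \imath\lambda)|$, we get
\begin{align*}
	\Psi_\lambda(x) \propto e^{\imath\lambda x} \int_0^\infty dt \; e^{-2\imath\lambda t} \, (1 + e^{-t})^{-\imath\lambda - g} \, (1 - e^{-t})^{-\imath\lambda + g - 1} \, e^{-e^{t + \ln\beta - x}}.
\end{align*}
For real $\lambda$ the oscillating factors $e^{-2\imath\lambda t}$ and $(1\pm e^{-t})^{-\imath\lambda}$ have modulus one. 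With $\xi = \ln\beta - x$ the absolute value is therefore bounded by
\begin{align*}
	\frac{1}{|\Gamma(g - \imath\lambda)|} \int_0^\infty dt \,(1 + e^{-t})^{-g}(1 - e^{-t})^{g-1} e^{-e^{t+\xi}}.
\end{align*}
This is exactly the integral of Lemma~\ref{lemma3} with $m = 0$, $\kappa = 0$. It gives $P(|\xi|)\, e^{-e^{\xi}\theta(\xi)}$, and since $e^{\xi} = \beta e^{-x}$ and $|\xi| \leq |x| + |\ln\beta|$, the bound~\eqref{Psi0-b} follows.

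For derivatives, I differentiate under the integral sign. The factor $e^{\imath\lambda x}$ contributes powers of $\imath\lambda$. Each $\partial_x$ acting on $e^{-e^{t+\xi}}$ produces $e^{t+\xi}$, and repeated derivatives produce polynomials in $e^{t+\xi}$ of degree at most $k$. By Leibniz the integrand of $\partial_x^k\Psi_\lambda$ is thus bounded by $P(|\lambda|)$ times a sum of terms $e^{\kappa(t+\xi)} e^{-e^{t+\xi}}$ with $0 \leq \kappa \leq k$, multiplied by the $g$-weight. Lemma~\ref{lemma3} with $m = 0$ and these $\kappa$ bounds each term by $P(|\xi|)\exp([\kappa\xi - e^{\xi}]\theta(\xi))$. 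The largest exponential factor is the $\kappa = k$ term when $\xi \geq 0$, and all terms are polynomial when $\xi < 0$. This gives the stated bound
\begin{align*}
	\frac{P(|x|,|\lambda|)}{|\Gamma(g - \imath\lambda)|}\, e^{[k(\ln\beta - x) - \beta e^{-x}]\theta(\ln\beta - x)}.
\end{align*}

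The step needing care is justifying the differentiation under the integral and the smoothness. Lemma~\ref{lemma3} states uniform convergence in $x$ on compact sets for each of these majorant integrals. The differentiated integrands are dominated by these majorants locally uniformly in $x$, so the standard theorem on differentiating parameter integrals gives $\Psi_\lambda \in C^\infty$. The only potential obstacle is the integrable singularity $(1-e^{-t})^{g-1}$ at $t = 0$ when $0 < g < 1$. It is untouched by $\partial_x$, and Lemma~\ref{lemma3} already handles it.
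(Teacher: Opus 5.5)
Your proposal is correct and follows the paper's proof essentially step for step. In both arguments you shift $y=\ln\beta+t$ and expand $\partial_x^k$ of the integrand into terms $p(\lambda)\,e^{\ell(y-x)}$ times the integrand, with $0\le \ell\le k$. Each term is bounded by Lemma~\ref{lemma3}, you use $\ell\le k$ for $\ln\beta-x\ge 0$, and the uniform convergence from that lemma justifies differentiating under the integral. The only difference is cosmetic: you read off the $\lambda$-independence for $k=0$ directly from the modulus of the integrand, which is the same observation as the paper's $|\Gamma(g-\imath\lambda)\Psi_\lambda(x)|\le\Gamma(g)\Psi_0(x)$.
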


\begin{proof}
	Denote the integrand from~\eqref{Psi} as
	\begin{align}
		F(x, y) = e^{\imath \lambda (x - 2y) - e^{y - x}} \, (1 + \beta e^{-y})^{- \imath \lambda - g} \, (1 - \beta e^{-y})^{-\imath \lambda + g - 1}.
	\end{align}
	Its $k$-th derivative is the sum
	\begin{align}
		\partial_x^k \, F(x, y) = \sum_i p_i(\lambda) \, e^{ \ell_i (y - x) } \,  F(x, y)
	\end{align}
	with polynomials $p_i$ and integers $0 \leq \ell_i \leq k$. Since $\lambda \in \mathbb{R}$,
	\begin{align}
		\begin{aligned}
			\int_{\ln \beta}^\infty dy \; \bigl| \partial_x^k \, F(x, y) \bigr| & \leq \sum_i |p_i| \; \int_{\ln \beta}^\infty dy \; e^{\ell_i(y - x) - e^{y - x}} \, (1 + \beta e^{-y})^{- g} \, (1 - \beta e^{-y})^{ g - 1} \\[6pt]
			& = \sum_i |p_i| \; \int_{0}^\infty dz \; e^{\ell_i (z + \ln \beta - x) - e^{z + \ln\beta - x}} \, (1 + e^{-z})^{- g} \, (1 - e^{-z})^{g - 1},
		\end{aligned}
	\end{align}
	where passing to the last line we changed integration variable \mbox{$y = z + \ln \beta$}. By Lemma~\ref{lemma3} integrals in the last sum converge uniformly in $x$ and are bounded as
	\begin{align}
		\int_{0}^\infty dz \; e^{\ell_i (z + \ln \beta - x) - e^{z + \ln\beta - x}} \, (1 + e^{-z})^{- g} \, (1 - e^{-z})^{g - 1} \leq P(|x|) \, e^{ [ \ell_i (\ln \beta - x) - \beta e^{- x} ] \, \theta(\ln \beta - x) },
	\end{align}
	where $P$ is polynomial. Furthermore, since $0 \leq \ell_i \leq k$
	\begin{align}
		e^{ \ell_i (\ln \beta - x) \, \theta(\ln \beta - x) } \leq e^{ k (\ln \beta - x) \, \theta(\ln \beta - x) }.
	\end{align}
	Thus, we arrive at the estimate
	\begin{align}\label{F1-b}
		\int_{\ln \beta}^\infty dy \; \bigl| \partial_x^k \, F(x, y) \bigr|  \leq  P(|x|, |\lambda|) \, e^{ [k (\ln \beta - x) - \beta e^{-x} ] \, \theta(\ln\beta-x)}.
	\end{align}
	Since for any $k \in \mathbb{N}_0$ the function $\partial_x^k \, F(x, y)$ is continuous in $x,y$ and convergence of the above integral is uniform in $x$ from compact subsets of~$\mathbb{R}$, we can interchange derivative and integral
	\begin{align}
		\partial_x^k \, \Psi_\lambda(x) = \frac{(2\beta)^{\imath \lambda}}{\Gamma(g - \imath \lambda)} \, \partial_x^k \, \int_{\ln \beta}^\infty dy \; F(x, y) = \frac{(2\beta)^{\imath \lambda}}{\Gamma(g - \imath \lambda)} \, \int_{\ln \beta}^\infty dy \; \partial_x^k \, F(x, y).
	\end{align}
	Hence, the estimate~\eqref{F1-b} also holds for $\bigl| \partial_x^k \, \Psi_\lambda(x)  \bigr|$. 
	
	Finally, note that in the case $k = 0$ we simply have
	\begin{align}
		\bigl| \Gamma(g - \imath \lambda) \, \Psi_\lambda(x) \bigr| \leq \Gamma(g) \, \Psi_0(x).
	\end{align}
	This leads to the bound~\eqref{Psi0-b}, where polynomial doesn't depend on $\lambda$.
\end{proof}

\subsubsection{Many particles}

The eigenfunction for arbitrary $n$ admits representation
\begin{multline}\label{Psin}
	\Psi_{\bm{\lambda}_n}(\bm{x}_n) = \frac{(2\beta)^{\imath \lambda_n}}{\Gamma(g - \imath \lambda_n)} \, \int_{\mathbb{R}^n} d\bm{y}_n \int_{\mathbb{R}^{n - 1}} d\bm{z}_{n - 1} \; \exp \Biggl( \imath \lambda_n \bigl( \underline{\bm{x}}_n + \underline{\bm{z}}_{n - 1} - 2 \underline{\bm{y}}_n \bigr)  \\[4pt]
	- \sum_{j = 1}^{n - 1} ( e^{y_j - x_j} + e^{x_j - y_{j + 1}}  + e^{y_j - z_j} + e^{z_j - y_{j + 1}} ) - e^{y_n - x_n} \Biggr) \\[4pt]
	\times (1 + \beta e^{-y_1})^{- \imath \lambda_n - g} \, (1 - \beta e^{-y_1})^{- \imath \lambda_n + g - 1} \, \theta(y_1 - \ln \beta) \; \Psi_{\bm{\lambda}_{n - 1}}(\bm{z}_{n - 1}).
\end{multline}
Since it solves the equation
\begin{align}
	\Biggl( - \sum_{j = 1}^n \partial_{x_j}^2 + 2\sum_{j = 1}^{n - 1} e^{x_j - x_{j + 1}} + 2\alpha e^{-x_1} + \beta^2 e^{-2x_1} \Biggr) \, \Psi_{\bm{\lambda}_n}(\bm{x}_n) = \Biggl( \sum_{j = 1}^n \lambda_j^2 \Biggr) \, \Psi_{\bm{\lambda}_n}(\bm{x}_n) ,
\end{align}
we expect it to decay if $x_1 \to - \infty$ or $x_{j} - x_{j + 1} \to \infty$ for some $j$. Proposition~\ref{prop:bc-bound} suits this expectation. In what follows for $\bm{k}_n \in \mathbb{N}_0^n$ we use the standard notation
\begin{align} \label{dkn}
	\partial_{\bm{x}_n}^{\bm{k}_n} = \partial_{x_1}^{k_1} \cdots \partial_{x_n}^{k_n}.
\end{align}
Also, for brevity, we denote
\begin{align}
	C(\bm{\lambda}_n) = \prod_{j = 1}^n \frac{1}{| \Gamma(g - \imath \lambda_j)| }.
\end{align}

\begin{proposition} \label{prop:bc-bound}
	Let $\bm{k}_n \in \mathbb{N}_0^n$ and $\bm{x}_n, \bm{\lambda}_n \in \mathbb{R}^n$. Then $\Psi_{\bm{\lambda}_n}(\bm{x}_n)$ is smooth in $\bm{x}_n$ and admits the bound
	\begin{multline} \label{dPsi-b}
			\Bigl| \partial_{\bm{x}_n}^{\bm{k}_n} \, \Psi_{\bm{\lambda}_n} (\bm{x}_n) \Bigr| \leq C(\bm{\lambda}_n) \, P(|x_1|, \dots, |x_n|, |\lambda_n|) \, \exp\Biggl( \bigl[ k_1 (\ln \beta - x_1) - \beta e^{-x_1} \bigr] \, \theta(\ln \beta - x_1) \\[2pt]
			+ \sum_{j = 1}^{n - 1} \biggl[ (k_j + k_{j + 1}) \, \frac{x_j - x_{j + 1}}{2} -  e^{ \frac{x_j - x_{j + 1} }{2} } \biggr] \, \theta(x_j - x_{j + 1} ) \Biggr),
	\end{multline}
	where $P$ is polynomial, whose coefficients depend on $\bm{k}_n, \beta, g$. In particular, for $\bm{k}_n = (0, \dots, 0)$ it doesn't depend on $\lambda_n$
	\begin{align} \label{Psin-b}
		\begin{aligned}
			\bigl| \Psi_{\bm{\lambda}_n}(\bm{x}_n) \bigr| & \leq C(\bm{\lambda}_n) \,  P(|x_1|, \dots, |x_n|) \\[6pt]
			& \times \exp\Biggl( - \beta e^{-x_1} \, \theta(\ln \beta - x_1) - \sum_{j = 1}^{n - 1}  e^{ \frac{x_j - x_{j + 1} }{2} }  \, \theta(x_j - x_{j + 1} ) \Biggr).
		\end{aligned}
	\end{align}
\end{proposition}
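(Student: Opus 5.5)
We argue by induction on $n$, with Lemma~\ref{lemma4} as the base case $n=1$. For the step we use the representation~\eqref{Psin}. We bound $|\Psi_{\bm{\lambda}_{n-1}}(\bm{z}_{n-1})|$ by the inductive hypothesis~\eqref{Psin-b}; this does not depend on $\lambda_1,\dots,\lambda_{n-1}$ apart from the factor $C(\bm{\lambda}_{n-1})$. We simply drop the decaying exponential factor and keep only $C(\bm{\lambda}_{n-1})\,P(|z_1|,\dots,|z_{n-1}|)$. The $\lambda_n$-dependence is handled as in Lemma~\ref{lemma4}. For $\lambda_n\in\mathbb{R}$ the phases have modulus one, and $|(1\pm\beta e^{-y_1})^{-\imath\lambda_n}|=1$. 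Differentiating in $\bm{x}_n$ under the integral gives a finite sum of terms. Each term is the integrand times a polynomial in $\lambda_n$ and a product of factors $e^{\ell_j(y_j-x_j)}$ and $e^{m_j(x_j-y_{j+1})}$ with $0\le \ell_j+m_{j-1}\le k_j$. Here $\partial_{x_j}$ acts only on $e^{y_j-x_j}$, on $e^{x_j-y_{j+1}}$ and on the linear phase.

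Next we split monomials. After expanding $P(|\bm z|)$ into monomials, the integrals over $\bm z_{n-1}$ factorize. Each $z_j$ integral has the form $\int dz_j\,|z_j|^{m}e^{-e^{y_j-z_j}-e^{z_j-y_{j+1}}}$. By Lemma~\ref{lemma2} with $\kappa=0$ it is bounded by a polynomial in $|y_j|,|y_{j+1}|$, and we discard the remaining exponential decay in $y_j-y_{j+1}$.

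We are then left with an integral over $\bm{y}_n$ of a polynomial in $|y_i|$ times
\begin{align*}
\prod_{j=1}^{n}e^{\ell_j(y_j-x_j)-e^{y_j-x_j}}\prod_{j=1}^{n-1}e^{m_j(x_j-y_{j+1})-e^{x_j-y_{j+1}}}\,(1+\beta e^{-y_1})^{-g}(1-\beta e^{-y_1})^{g-1}\theta(y_1-\ln\beta).
\end{align*}
This is a chain $x_1\to y_2\to\dots$. We integrate variables successively, starting from $y_n$. The variable $y_n$ appears in $e^{y_n-x_n}$ and in $e^{x_{n-1}-y_n}$. Corollary~\ref{corlem2}, applied in the variables $(x_{n-1},x_n)$, gives the factor $\exp\bigl([(m_{n-1}+\ell_n)\tfrac{x_{n-1}-x_n}{2}-e^{(x_{n-1}-x_n)/2}]\theta(x_{n-1}-x_n)\bigr)$ times a polynomial in $|x_{n-1}|,|x_n|$. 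Polynomial weights in the remaining $y$'s appear only as further monomials, which those lemmas allow. We proceed the same way for $y_{n-1},\dots,y_2$. Each $y_{j+1}$ appears only in $e^{y_{j+1}-x_{j+1}}$ and $e^{x_j-y_{j+1}}$, so the integrals are genuinely independent. Indeed, after the $z$ integration the kernel factorizes completely over $y_1,\dots,y_n$. This step produces exactly the pair terms in~\eqref{dPsi-b}, because $m_j+\ell_{j+1}\le k_j+k_{j+1}$ and the prefactor is monotone on $\{x_j\ge x_{j+1}\}$. Finally, the $y_1$ integral is handled by shifting $y_1=w+\ln\beta$ and applying Lemma~\ref{lemma3} with $x=\ln\beta-x_1$. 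This gives $\exp\bigl([\ell_1(\ln\beta-x_1)-\beta e^{-x_1}]\theta(\ln\beta-x_1)\bigr)$ with $\ell_1\le k_1$.

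All of these integrals converge uniformly for $\bm x_n$ in compact sets by the same lemmas. This justifies differentiating under the integral and gives smoothness. The prefactor $|(2\beta)^{\imath\lambda_n}/\Gamma(g-\imath\lambda_n)|$ combines with $C(\bm\lambda_{n-1})$ into $C(\bm\lambda_n)$. For $\bm k_n=0$ no polynomial in $\lambda_n$ arises, which yields~\eqref{Psin-b}.

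The main obstacle is the bookkeeping that shows each $x$-difference pair receives a derivative weight of at most $k_j+k_{j+1}$, together with the uniform-convergence justification for Fubini and differentiation. The delicate analytic point is the singular factor $(1-\beta e^{-y_1})^{g-1}$ when $0<g<1$, which is exactly what Lemma~\ref{lemma3} covers. A second subtlety is that polynomial weights in $|y_j|$ coming from the $z$-integration must be tolerated in the subsequent $y$-integrals. Corollary~\ref{corlem2} allows polynomial weights only in the integration variable. We therefore bound $|y_{j}|^m$ for the not-yet-integrated variables only after their own turn, integrating in an order where each $y$ carries its monomial when it is integrated.
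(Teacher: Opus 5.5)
Your proposal is correct and follows the paper's proof essentially step for step. You run induction from Lemma~\ref{lemma4}, control the $\bm{z}$-integration polynomially by Lemma~\ref{lemma2} with $\kappa=0$ (the paper packages this as Corollary~\ref{cor:LP-inv} applied to $\Lambda_n(-\lambda_n)\Psi_{\bm{\lambda}_{n-1}}$), bound the factorized $\bm{y}$-integrals by Lemma~\ref{lemma3} and Corollary~\ref{corlem2}, and read off the $\lambda$-independence for $\bm{k}_n=0$ directly where the paper compares with $\Psi_{0,\dots,0}$.

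There is one bookkeeping slip. Since $\partial_{x_j}$ produces both $e^{\ell_j(y_j-x_j)}$ and $e^{m_j(x_j-y_{j+1})}$, the constraint should be $\ell_j+m_j\le k_j$, not $\ell_j+m_{j-1}\le k_j$; the latter fails, e.g., for $\bm{k}_2=(1,0)$. The correct constraint is what yields your inequalities $m_j+\ell_{j+1}\le k_j+k_{j+1}$ and $\ell_1\le k_1$.
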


\begin{proof}
	The case $n = 1$ coincides with Lemma~\ref{lemma4}. Let us proceed by induction assuming that we proved proposition for $n - 1$ particles. Then in recursive formula~\eqref{Psin} we, in particular, have
	\begin{align} \label{ind-as}
		\bigl| \Psi_{\bm{\lambda}_{n - 1}}(\bm{z}_{n - 1}) \bigr| \leq C(\bm{\lambda}_{n - 1}) \, P(|z_1|, \dots, |z_{n - 1}|).
	\end{align}
	Now consider the result of integration over $\bm{z}_{n - 1}$ in~\eqref{Psin} 
	\begin{align}
		G(\bm{y}_n) = \int_{\mathbb{R}^{n - 1}} d\bm{z}_{n - 1} \; \exp \Biggl( \imath \lambda_n \bigl( \underline{\bm{z}}_{n - 1} - \underline{\bm{y}}_n \bigr) - \sum_{j = 1}^{n - 1} (e^{y_j - z_j} + e^{z_j - y_{j + 1}} ) \Biggr) \, \Psi_{\bm{\lambda}_{n - 1}}(\bm{z}_{n - 1}).
	\end{align}
	This integral coincides with the action of $GL$ Toda raising operator~\eqref{L-def0}
	\begin{align}
		G(\bm{y}_n) = \bigl[ \Lambda_n(-\lambda_n) \, \Psi_{\bm{\lambda}_{n - 1}} \bigr] (\bm{y}_n).
	\end{align}
	By induction assumption the function $\Psi_{\bm{\lambda}_{n - 1}} (\bm{x}_{n - 1})$ is continuous and polynomially bounded~\eqref{ind-as}, hence, due to Corollary~\ref{cor:LP-inv} the same is true for $G(\bm{y}_n)$, that is
	\begin{align} \label{G}
		\bigl| G(\bm{y}_{n }) \bigr| \leq C(\bm{\lambda}_{n - 1}) \, P(|y_1|, \dots, |y_{n}|).
	\end{align}
	Next we analyse full $BC$ eigenfunction~\eqref{Psin}. The integrand in the formula~\eqref{Psin} 
	\begin{multline} \label{F2}
		F(\bm{x}_n, \bm{y}_n) = \frac{(2\beta)^{\imath \lambda_n}}{\Gamma(g - \imath \lambda_n)} \,\exp \Biggl( \imath \lambda_n \bigl( \underline{\bm{x}}_n - \underline{\bm{y}}_n \bigr)	- \sum_{j = 1}^{n - 1} ( e^{y_j - x_j} + e^{x_j - y_{j + 1}} ) - e^{y_n - x_n} \Biggr) \\[6pt]
		\times (1 + \beta e^{-y_1})^{- \imath \lambda_n - g} \, (1 - \beta e^{-y_1})^{- \imath \lambda_n + g - 1} \, \theta(y_1 - \ln \beta) \; G(\bm{y}_{n})
	\end{multline}
	is smooth in $\bm{x}_n$ and (by above arguments) continuous in $\bm{y}_n$ (in the domain $y_1 > \ln \beta$). Due to~\eqref{G} we also have
	\begin{multline}\label{F2-b}
		\bigl| F(\bm{x}_n, \bm{y}_n) \bigr| \leq C(\bm{\lambda}_n) \, \exp \Biggl( - \sum_{j = 1}^{n - 1} ( e^{y_j - x_j} + e^{x_j - y_{j + 1}} ) - e^{y_n - x_n} \Biggr) \\[6pt]
		\times (1 + \beta e^{-y_1})^{- g} \, (1 - \beta e^{-y_1})^{g - 1} \, \theta(y_1 - \ln \beta) \, P(|y_1|, \dots, |y_n|).
	\end{multline}
	Using this estimate let us prove that for any $\bm{k}_n \in \mathbb{N}_0^n$
	\begin{multline} \label{dF-b2}
			\int_{\mathbb{R}^n} d\bm{y}_n \; \Bigl| \partial_{\bm{x}_n}^{\bm{k}_n} \, F (\bm{x}_n, \bm{y}_n) \Bigr| \leq C(\bm{\lambda}_n) \, P(|x_1|, \dots, |x_n|, |\lambda_n|) \, \exp\Biggl(  \bigl[ k_1 (\ln \beta - x_1) - \beta e^{-x_1} \bigr] \, \theta(\ln \beta - x_1) \\[6pt]
			+ \sum_{j = 1}^{n - 1} \biggl[ (k_j + k_{j + 1}) \, \frac{x_j - x_{j + 1}}{2} -  e^{ \frac{x_j - x_{j + 1} }{2} } \biggr] \, \theta(x_j - x_{j + 1} ) \Biggr),
	\end{multline}
	and that this integral converges uniformly in $\bm{x}_n$ from compact subsets of~$\mathbb{R}^n$. Then uniform convergence implies that we can interchange derivatives and integration
	\begin{align}
			\partial_{\bm{x}_n}^{\bm{k}_n} \, \Psi_{\bm{\lambda}_n}(\bm{x}_n) = \partial_{\bm{x}_n}^{\bm{k}_n} \, \int_{\mathbb{R}^{n}} d\bm{y}_{n} \;  F(\bm{x}_n, \bm{y}_{n}) = \int_{\mathbb{R}^{n}} d\bm{y}_{n} \;  \partial_{\bm{x}_n}^{\bm{k}_n} \, F(\bm{x}_n, \bm{y}_{n}),
	\end{align}
	which gives the stated bound~\eqref{dPsi-b}, and from uniform convergence we also infer smoothness of~$\Psi_{\bm{\lambda}_n}(\bm{x}_n)$.
	
	To prove the bound~\eqref{dF-b2} calculate derivatives of $F$~\eqref{F2}. The first order ones are
	\begin{align}
		\partial_{x_j} F(\bm{x}_n, \bm{y}_{n}) = (\imath \lambda_n + e^{y_{j} - x_j} - e^{x_j - y_{j + 1}} ) F(\bm{x}_n, \bm{y}_{n}),
	\end{align}
	where for $j = n$ we put $y_{n + 1} = \infty$. In general, for $k\in \mathbb{N}$
	\begin{align}
		\partial_{x_j}^{k} F(\bm{x}_n, \bm{y}_{n }) = \sum_{i} p_i(\lambda_n) \, e^{\ell_i (y_{j} - x_j) + m_i (x_j - y_{j + 1})} \, F(\bm{x}_n, \bm{y}_{n})
	\end{align}
	with polynomials $p_i$ and integers $0 \leq \ell_i, \, m_i \leq k$ (in fact, $\ell_i + m_i \leq k$). Since~$F$ has factorised form~\eqref{F2}, we can write
	\begin{align} \label{dF-b}
		\Bigl| \partial_{\bm{x}_n}^{\bm{k}_n} F(\bm{x}_n, \bm{y}_{n}) \Bigr| \leq \sum_{i} \tilde{p}_i(|\lambda_n|) \, e^{\ell_{i,1} (y_1 - x_1)} \, \prod_{j = 2}^{n} e^{\ell_{i,j} (y_{j} - x_j) + m_{i,j-1} (x_{j - 1} - y_{j}) } \, \bigl|  F(\bm{x}_n, \bm{y}_{n}) \bigr|,
	\end{align}
	where again $\tilde{p}_i$ are polynomials and $0 \leq \ell_{i,j}, \, m_{i,j} \leq k_j$. Combining the last inequality with~\eqref{F2-b} and rearranging some factors we arrive at
	\begin{multline} \label{dF2-b}
		\Bigl| \partial_{\bm{x}_n}^{\bm{k}_n} F(\bm{x}_n, \bm{y}_{n}) \Bigr| \leq C(\bm{\lambda}_{n}) \, \sum_{i} \tilde{p}_i(|\lambda_n|) \, P(|y_1|, \dots, |y_n|) \\[2pt]
		\times e^{\ell_{i,1} (y_1 - x_1) - e^{y_1 - x_1}} \, (1 + \beta e^{-y_1})^{- g} \, (1 - \beta e^{-y_1})^{g - 1} \, \theta(y_1 - \ln \beta) \\[6pt]
		\times \prod_{j = 2}^{n} e^{\ell_{i,j} (y_{j} - x_j) + m_{i,j-1} (x_{j - 1} - y_{j}) - e^{y_j - x_j} - e^{x_{j - 1} - y_{j}} }.
	\end{multline}
	Expanding polynomial $P$ in terms of monomials $|y_1|^{s_1} \cdots |y_n|^{s_n}$ we estimate the integral in question
	\begin{align}
		\int_{\mathbb{R}^n} d\bm{y}_n \; \Bigl| \partial_{\bm{x}_n}^{\bm{k}_n} \, F (\bm{x}_n, \bm{y}_n) \Bigr| 
	\end{align}
	by the sum of integrals of the type
	\begin{multline}\label{intF-b}
		\int_{\ln \beta}^\infty dy_1 \; |y_1|^{s_1} \, e^{\ell_{1} (y_1 - x_1) - e^{y_1 - x_1}} \, (1 + \beta e^{-y_1})^{- g} \, (1 - \beta e^{-y_1})^{g - 1} \\[3pt]
		\times \prod_{j = 2}^{n} \int_{\mathbb{R}} dy_j \; |y_j|^{s_j} \,  e^{\ell_{j} (y_{j} - x_j) + m_{j-1} (x_{j - 1} - y_{j}) - e^{y_j - x_j} - e^{x_{j - 1} - y_{j}} },
	\end{multline}
	where $s_j, \ell_j, m_j \in \mathbb{N}_0$ and $\ell_j, m_j \leq k_j$. The last multiple integral is factorised, so it is left to bound each of its factors. 
	
	Consider the first one-dimensional integral from~\eqref{intF-b}. After the shift of integration variable $y_1 = \tilde{y}_1 + \ln \beta$ it can be estimated using Lemma~\ref{lemma3}
	\begin{multline}
		\int_{\ln \beta}^\infty dy_1 \; |y_1|^{s_1} \, e^{\ell_{1} (y_1 - x_1) - e^{y_1 - x_1}} \, (1 + \beta e^{-y_1})^{- g} \, (1 - \beta e^{-y_1})^{g - 1} \\[3pt]
		\leq P(|x_1|) \; e^{[\ell_1 (\ln \beta - x_1) - \beta e^{-x_1}] \, \theta(\ln \beta - x_1)},
	\end{multline}
	where $P$ is polynomial. Furthermore, since $0 \leq \ell_1 \leq k_1$
	\begin{align}
		e^{ \ell_1 (\ln \beta - x_1) \, \theta(\ln \beta - x_1) } & \leq e^{ k_1 (\ln \beta - x_1) \, \theta(\ln \beta - x_1) }.
	\end{align}
	Next, by Corollary~\ref{corlem2} the integrals from the second line of~\eqref{intF-b} are bounded as
	\begin{multline}
		\int_{\mathbb{R}} dy_j \; |y_j|^{s_j} \,  e^{\ell_{j} (y_{j} - x_j) + m_{j-1} (x_{j - 1} - y_{j}) - e^{y_j - x_j} - e^{x_{j - 1} - y_{j}} } \\
		\leq P(|x_{j - 1}|, |x_j|) \; e^{ \bigl[ (m_{j - 1} + \ell_j) \, \frac{x_{j - 1} - x_j}{2} - e^{ \frac{x_{j - 1} - x_j }{2} } \bigr] \, \theta(x_{j - 1} - x_j)},
	\end{multline}
	where again $P$ is polynomial. Since $0 \leq \ell_j, m_j \leq k_j$, we also have
	\begin{align}
		(m_{j - 1} + \ell_j) \, \frac{x_{j - 1} - x_{j}}{2} \, \theta(x_{j - 1} - x_{j}) \leq (k_{j - 1} + k_{j}) \, \frac{x_{j - 1} - x_{j}}{2} \, \theta(x_{j - 1} - x_{j}).
	\end{align}
	The above inequalities lead to the bound for the integral~\eqref{intF-b} 
	\begin{align}
		\begin{aligned}
			& \int_{\ln \beta}^\infty dy_1 \; |y_1|^{s_1} \, e^{\ell_{1} (y_1 - x_1) - e^{y_1 - x_1}} \, (1 + \beta e^{-y_1})^{- g} \, (1 - \beta e^{-y_1})^{g - 1} \\[6pt]
			& \hspace{1.5cm} \times \prod_{j = 2}^{n} \int_{\mathbb{R}} dy_j \; |y_j|^{s_j} \,  e^{\ell_{j} (y_{j} - x_j) + m_{j-1} (x_{j - 1} - y_{j}) - e^{y_j - x_j} - e^{x_{j - 1} - y_{j}} } \\
			& \hspace{3cm} \leq P(|x_1|, \dots, |x_n|) \, \exp \biggl( \bigl[k_1 (\ln \beta - x_1) - \beta e^{-x_1} \bigr]\, \theta(\ln \beta - x_1) \\[6pt]
			& \hspace{4.5cm} + \sum_{j = 2}^{n} \biggl[ (k_{j - 1} + k_j) \, \frac{x_{j - 1} - x_j}{2} - e^{ \frac{x_{j - 1} - x_j }{2} } \biggr] \, \theta(x_{j - 1} - x_j) \biggr).
		\end{aligned}
	\end{align}
	Together with~\eqref{dF2-b} this gives the desired estimate~\eqref{dF-b2}. Also notice that both Lemma~\ref{lemma3} and Corollary~\ref{corlem2} guarantee that all integrals converge uniformly in $x_j$ from compact subsets of~$\mathbb{R}$.
	
	Finally, in the case $\bm{k}_n = (0, \dots, 0)$ from recursive formula~\eqref{Psin} we have
	\begin{align}
		\bigl| C^{-1}(\bm{\lambda}_n) \, \Psi_{\bm{\lambda}_n}(\bm{x}_n) \bigr| \leq C^{-1}(0, \dots, 0) \, \Psi_{0, \dots, 0}(\bm{x}_n).
	\end{align}
	This proves that the polynomial in the bound~\eqref{Psin-b} doesn't depend on $\lambda_n$.
\end{proof}

\subsection{Intertwining relations} \label{sec:intertw-rel}

Recall the definitions of Toda and DST Lax matrices
\begin{align}\label{LM}
	L(u) = 
	\begin{pmatrix}
		u + \imath \partial_{x} & e^{-x} \\[4pt]
		-e^{x} & 0
	\end{pmatrix}, \qquad 
	M(u) = 
	\begin{pmatrix}
		u + \imath \partial_{x} & e^{-x} \\[4pt]
		-e^{x} \partial_{x} & \imath
	\end{pmatrix},
\end{align}
as well as integral $\mathcal{K}$- and $\mathcal{R}$-operators 
\begin{align} \label{Kop}
	& \begin{aligned}
		\bigl[ \mathcal{K} (v) \, \phi \bigr](x) = \frac{(2\beta)^{\imath v}}{\Gamma ( g- \imath v )} \, \int_{\ln \beta}^{\infty} dy \, \exp \bigl(-2 \imath v y - e^{y -x} \bigr) & \, \bigl(1 + \beta e^{-y}\bigr)^{- \imath v - g } \\ 
		\times & \, \bigl(1 - \beta e^{-y}\bigr)^{- \imath v + g - 1} \, \phi( - y),
	\end{aligned}  \\[6pt]   \label{Rop}
	& \bigl[ \mathcal{R}_{12} (v)\, \phi \bigr](x_1, x_2) 
		= \int_{\mathbb{R}} dy \; \exp \bigl( \imath v(x_1 - y) - e^{x_1 - y} - e^{y - x_2} \bigr) \, \phi(y, x_1).
\end{align}
In this section we investigate a natural space of functions, on which all these objects act invariantly and such that the intertwining relations between them hold.

Denote by $\mathcal{E}(\mathbb{R}^n)$ the space of smooth functions $\phi(\bm{x}_n) \in C^\infty(\mathbb{R}^n)$ exponentially bounded with all their derivatives, i.e. for every $\bm{k}_n \in \mathbb{N}_0^n$ there exist constants $a, b \geq 0$ such that
\begin{align} \label{Edef-bound}
	\bigl| \partial_{x_1}^{k_1} \cdots \partial_{x_n}^{k_n}  \phi (\bm{x}_n) \bigr| \leq a \, e^{b (|x_1| + \ldots + |x_n|)}.
\end{align}
Also, for brevity, denote $\mathcal{E} \equiv \mathcal{E}(\mathbb{R})$. 

\begin{proposition} \label{prop:Espace}
	The space $\mathcal{E}$ has the following properties:
	\begin{enumerate}
		\item[(1)] it is a linear space;
		\item[(2)] it is invariant under multiplication by exponents and differentiation
		\begin{align}\label{EDphi}
			\phi \in \mathcal{E}, \quad v \in \mathbb{C}, \quad k \in \mathbb{N} \colon \qquad e^{v x} \, \phi \in \mathcal{E}, \qquad \phi^{(k)} \in \mathcal{E},
		\end{align}
		and, particularly, the action of Lax matrices elements~\eqref{LM}
		\begin{align}\label{LMphi}
			\phi \in \mathcal{E}, \quad u \in \mathbb{C} \colon \qquad \bigl(L(u)\bigr)_{rs} \, \phi \in \mathcal{E}, \qquad \bigl(M(u)\bigr)_{rs} \, \phi \in \mathcal{E};
		\end{align}
		\item[(3)] it is invariant under action of $\mathcal{K}$-operator~\eqref{Kop}
		\begin{align}
			\phi \in \mathcal{E}, \quad \Im v > -g \colon \qquad \mathcal{K}(v) \, \phi \in \mathcal{E},
		\end{align}
		and the corresponding integral $[\mathcal{K} (v)\, \phi] (x)$ is uniformly absolutely convergent in $x$ from compact subsets of $\mathbb{R}$; furthermore, for any $k \in \mathbb{N}_0$ the function $\partial_x^k \, [\mathcal{K} (v)\, \phi] (x)$ is analytic in $v$ in the domain $\Im v > -g$.
	\end{enumerate}
	Besides, the space $\mathcal{E}(\mathbb{R}^2)$ is invariant under action of $\mathcal{R}$-operator~\eqref{Rop}
		\begin{align}\label{Rphi}
			\phi \in \mathcal{E}(\mathbb{R}^2), \quad v\in \mathbb{C}\colon \qquad \mathcal{R}_{12} (v)\, \phi  \in \mathcal{E}(\mathbb{R}^2),
		\end{align}
	and the corresponding integral $[\mathcal{R}_{12} (v)\, \phi] (x_1, x_2)$ is uniformly absolutely convergent in $x_1, x_2$ from compact subsets of $\mathbb{R}$.
\end{proposition}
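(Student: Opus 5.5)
Items (1) and (2) are immediate. The bound \eqref{Edef-bound} is preserved under sums: take the larger $b$ and add the constants $a$. For $e^{vx}\phi$, the Leibniz rule gives $\partial^k(e^{vx}\phi)=\sum\binom{k}{j}v^{k-j}e^{vx}\phi^{(j)}$, and $|e^{vx}|\le e^{|\Re v||x|}$, so each term is exponentially bounded with $b$ increased by $|\Re v|$. Derivatives $\phi^{(k)}$ inherit the bounds on all their derivatives by definition. The entries of $L(u)$ and $M(u)$ are built from constants, $\partial$, and multiplication by $e^{\pm x}$, so \eqref{LMphi} follows.

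For (3), I write $[\mathcal{K}(v)\phi](x)=c(v)\int_{\ln\beta}^\infty dy\,F(x,y;v)\,\phi(-y)$, with $c(v)=(2\beta)^{\imath v}/\Gamma(g-\imath v)$ entire in $v$. Exactly as in Lemma~\ref{lemma4}, the $k$-th derivative in $x$ is
$$\partial_x^kF=\sum_i p_i\,e^{\ell_i(y-x)}F,\qquad 0\le\ell_i\le k,$$
where the $p_i$ are constants here, since $F$ has no $e^{\imath v x}$ prefactor. Put $v=\sigma+\imath\tau$. Then
$$|F|\le e^{2\tau y-e^{y-x}}(1+\beta e^{-y})^{\tau-g}(1-\beta e^{-y})^{\tau+g-1},$$
and $|\phi(-y)|\le a e^{b|y|}$. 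After the shift $y=z+\ln\beta$, the factor $(1\pm e^{-z})^{\tau}$ is absorbed: it is bounded by $2^{|\tau|}$ for the $+$ sign, and for the $-$ sign it combines into $(1-e^{-z})^{\tau+g-1}$. The integrable singularity at $z=0$ needs exponent $>-1$, i.e. $\tau>-g$, which is the assumption. Since $y\ge\ln\beta$, we have $e^{(2\tau+b)|y|}\le C e^{\kappa z}$ with $\kappa=\max(2\tau+b,0)$. Lemma~\ref{lemma3}, applied with $g'=\tau+g>0$ and $x'=\ln\beta-x$ and with the polynomial weight replaced by exponential growth, gives the bound
$$P(|x|)\,e^{[\kappa' (\ln\beta-x)-\beta e^{-x}]\theta(\ln\beta-x)}\le a' e^{b'|x|}.$$
Lemma~\ref{lemma3} is stated only for $(1+e^{-y})^{-g}(1-e^{-y})^{g-1}$ with a single $g$. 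Its proof, however, only uses $(1+e^{-y})^{\cdot}\le$ const and the behaviour of $(1-e^{-y})^{g'-1}$ near $0$, so it applies to mixed exponents. The same lemma gives uniform convergence on compacta in $x$, which justifies differentiating under the integral. Hence $\mathcal{K}(v)\phi\in\mathcal{E}$.

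For analyticity in $v$, I use the fact that the integrand is entire in $v$. The bound above is locally uniform in $v$ on compact subsets of $\{\Im v>-g\}$: the relevant exponents $\tau+g-1$ are bounded below by some $\epsilon-1$ with $\epsilon>0$, and $\kappa$ is bounded. Then Morera's theorem (or differentiation under the integral by dominated convergence) gives analyticity of $\partial_x^k[\mathcal{K}(v)\phi](x)$. I expect this uniform domination near $y=\ln\beta$ to be the only delicate point.

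For $\mathcal{R}_{12}(v)$, the integrand is
$$e^{\imath v(x_1-y)-e^{x_1-y}-e^{y-x_2}}\phi(y,x_1).$$
Derivatives in $x_2$ produce factors $e^{m(y-x_2)}$. Derivatives in $x_1$ produce factors $\imath v$ and $e^{\ell(x_1-y)}$, plus derivatives of $\phi$ in its second argument, which are still bounded by $a e^{b(|y|+|x_1|)}$. Using $e^{b|y|}\le e^{b|y-x_1|}e^{b|x_1|}$ and absorbing $|\Re(\imath v)|$, each term is bounded by $e^{b''|x_1|}$ times
$$\int dy\,e^{\kappa_1(x_1-y)+\kappa_2(y-x_2)+\kappa_3|x_1-y|-e^{x_1-y}-e^{y-x_2}}.$$
The term $|x_1-y|$ is handled by splitting into $y\lessgtr x_1$ and rewriting it as $\pm(x_1-y)$ with nonnegative coefficients: for $y>x_1$, write $y-x_1=(y-x_2)+(x_2-x_1)$. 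Corollary~\ref{corlem2} (with $m=0$) then bounds the integral by
$$P\,e^{(\kappa_1+\kappa_2)(x_1-x_2)\theta(x_1-x_2)/2}\le a e^{b(|x_1|+|x_2|)},$$
uniformly on compacta. Differentiating under the integral is justified by that uniformity, which yields \eqref{Rphi}.
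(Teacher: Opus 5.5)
Your proposal is correct and follows essentially the same route as the paper. You dominate the $x$-derivatives of the $\mathcal{K}$-integrand locally uniformly in $v$, using that $\Im v$ stays bounded away from $-g$ near $y=\ln\beta$, apply Lemma~\ref{lemma3} for the exponential bound and uniform convergence, then differentiate under the integral and get analyticity; for $\mathcal{R}_{12}(v)$ you use Corollary~\ref{corlem2}, and you spell that case out in more detail than the paper's ``analogous''. One small slip: absorbing the weight $e^{\kappa z}$ into Lemma~\ref{lemma3} via $e^{\kappa z}=e^{\kappa(z+x')}e^{-\kappa x'}$ produces an extra factor $e^{\kappa(x-\ln\beta)}$ for $x>\ln\beta$, so your intermediate display (which suggests only polynomial growth as $x\to+\infty$) is wrong as written, but your final bound $a'e^{b'|x|}$ is correct.
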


\begin{proof}
	The first two properties of $\mathcal{E}$ are simple to check, while the last one follows from Lemma~\ref{lemma3}. Namely, for $\phi \in \mathcal{E}$ the integrand of $\bigl[\mathcal{K}(v) \,\phi \bigr](x)$, see~\eqref{Kop}, is smooth in $x,y$ and analytic in $v$. Furthermore, for any $k \in \mathbb{N}_0$ its derivatives are estimated as
	\begin{multline} \label{Kint-d-bound}
		\Bigl| \partial_x^k \; e^{-2 \imath v y - e^{y -x} } \, \bigl(1 + \beta e^{-y}\bigr)^{- \imath v - g } \, \bigl(1 - \beta e^{-y}\bigr)^{- \imath v + g - 1} \, \phi( - y) \Bigr| \\[6pt]
		\leq C \, e^{a (y - \ln \beta) + b | x | + 2 \Im v \, y - e^{y-x}} \, \bigl(1 + \beta e^{-y}\bigr)^{\Im v - g } \, \bigl(1 - \beta e^{-y}\bigr)^{\Im v + g - 1},
	\end{multline}
	where $C, a, b \geq 0$ don't depend on $v$. Consider $v$ from any compact subset of complex plane such that
	\begin{align}
		-g < V_1 \leq \Im v \leq V_2.
	\end{align}
	Then for $y \in (\ln\beta, \infty)$ we have inequalities
	\begin{align}
		&  \bigl(1 - \beta e^{-y}\bigr)^{\Im v + g - 1} \leq \bigl(1 - \beta e^{-y}\bigr)^{V_1 + g - 1}, \\[6pt]
		& \bigl(1 + \beta e^{-y}\bigr)^{\Im v - g } \leq \bigl(1 + \beta e^{-y}\bigr)^{\Im v - V_1 } \, \bigl(1 + \beta e^{-y}\bigr)^{V_1 - g } \leq 2^{V_2 - V_1} \, \bigl(1 + \beta e^{-y}\bigr)^{V_1 - g }.
	\end{align}
	Using them we bound the right hand side of~\eqref{Kint-d-bound} uniformly in $v$ by the function
	\begin{align}\label{Kint-d-bound2}
		C'\, e^{a' (y - \ln \beta) + b|x| - e^{y-x}} \, \bigl(1 + \beta e^{-y}\bigr)^{-V_1 - g } \, \bigl(1 - \beta e^{-y}\bigr)^{V_1 + g - 1}.
	\end{align}
	Hence, due to Lemma~\ref{lemma3} the integral of the left hand side~\eqref{Kint-d-bound} is exponentially bounded
	\begin{align} \label{Kint-d}
		\int_{\ln\beta}^\infty dy \, \Bigl| \partial_x^k \,e^{-2 \imath v y - e^{y -x} } \, \bigl(1 + \beta e^{-y}\bigr)^{- \imath v - g } \, \bigl(1 - \beta e^{-y}\bigr)^{- \imath v + g - 1} \, \phi( - y) \Bigr| \leq C'' \, e^{ b' |x| }.
	\end{align}
	Moreover, Lemma~\ref{lemma3} says that the integral from the left converges uniformly in $x$. Thus, we can interchange derivatives and integration
	\begin{align}
		\partial_x^k \, \bigl[ \mathcal{K}(v) \, \phi](x) = \int_{\ln\beta}^\infty dy \; \partial_x^k \, e^{-2 \imath v y - e^{y -x} } \, \bigl(1 + \beta e^{-y}\bigr)^{- \imath v - g } \, \bigl(1 - \beta e^{-y}\bigr)^{- \imath v + g - 1} \, \phi( - y).
	\end{align}
	Together with the bound~\eqref{Kint-d} this proves that $\bigl[\mathcal{K}(v)\,\phi \bigr](x)$ belongs to the space $\mathcal{E}$ and converges uniformly in $x$. Moreover, since the bound~\eqref{Kint-d-bound2} is uniform in $v$, the function $\partial_x^k \, \bigl[ \mathcal{K}(v) \, \phi](x)$ is analytic in $v$ (in the domain $\Im v > -g$). 
	
	The proof of the statement about $\mathcal{R}$-operator is analogous, one just uses Corollary~\ref{corlem2} instead of Lemma~\ref{lemma3}.
\end{proof}

The above properties and explicit formulas for $\mathcal{K}$- and $\mathcal{R}$-operators allow us to prove the following intertwining relations.

\begin{proposition} \label{prop:KMKM}
	The matrix relation
	\begin{align}\label{KMKM2}
		\mathcal{K}(v) \, M^t(-u-v) \, K(u) \, \sigma_2 M(u - v) \sigma_2
		= M(u - v) \, K(u) \, \sigma_2 M^t(-u -v) \sigma_2 \; \mathcal{K}(v).
	\end{align}
	holds on $\mathcal{E}$ for all $u, v \in \mathbb{C}$ with restriction $\Im v > -g$.
\end{proposition}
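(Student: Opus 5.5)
The plan follows the three-step strategy announced at the end of Section~\ref{sec:refl-op}. First, reduce the matrix identity~\eqref{KMKM2} to scalar operator identities. As computed in Section~\ref{sec:refl-op}, the terms proportional to $(v-u)$ cancel identically, and the remaining terms carry the common factor $(u-\imath/2)$. Comparing coefficients of $u^2,u^1,u^0$ shows that~\eqref{KMKM2} on $\mathcal{E}$ is equivalent to three relations: \eqref{u}, the first relation in the list ending with~\eqref{Krel-33} (equivalently~\eqref{Krel-2}), and~\eqref{Krel-33} itself. The $(2,1)$ and $(2,2)$ entries follow from these by the commutation identities given there. Each of these manipulations uses only that Lax matrix entries, $e^{\pm x}$ and $\partial$ preserve $\mathcal{E}$, and that $\mathcal{K}(v)$ maps $\mathcal{E}$ to $\mathcal{E}$ (Proposition~\ref{prop:Espace}). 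Hence both sides of every relation are well-defined functions in $\mathcal{E}$ whenever $\Im v>-g$.

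Second, prove the three scalar relations for $\Im v>2-g$, say, where the integrations by parts are legitimate. Fix $\phi\in\mathcal{E}$ and write each side as $\int_{\ln\beta}^\infty dy\,(\cdots)\phi(-y)$. On the side where $\mathcal{K}(v)$ acts last, move the operators through the kernel by differentiating under the integral; this is justified by the uniform convergence in Proposition~\ref{prop:Espace}. On the side where $\mathcal{K}(v)$ acts first, $\partial$ and $e^{x}$ act on $\phi(-y)$, and the derivatives are transferred onto the kernel by integration by parts, at most twice for~\eqref{Krel-33}. Boundary terms vanish for two reasons. At $y\to\infty$ the double exponent $e^{-e^{y-x}}$ beats the exponential growth of $\phi$ and its derivatives. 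At $y=\ln\beta$ the kernel, and its first derivative multiplied by $e^{\pm y}$, contain the factor $(1-\beta e^{-y})^{-\imath v+g-1}$ and its $y$-derivative; these vanish at $y=\ln\beta$ when $\Re(-\imath v+g-1)>1$, i.e.\ $\Im v>2-g$. After the transfer, the relations become exactly the kernel equations~\eqref{x} and~\eqref{y} and the second-order equation stated after them. Equations~\eqref{x} and~\eqref{y} hold by construction of~\eqref{Kker}, and the second-order equation is a direct check from~\eqref{x} and~\eqref{y}: substitute $\partial_yK$ and $\partial_xK$ and simplify.

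Third, use analytic continuation in $v$. By Proposition~\ref{prop:Espace}(3), each term $\partial_x^k[\mathcal{K}(v)\psi](x)$ with $\psi\in\mathcal{E}$ is analytic in $v$ on the connected half-plane $\Im v>-g$. Every term in the scalar relations has the form (polynomial in $v$ times $e^{mx}\partial_x^k$) applied to $\mathcal{K}(v)\psi$, with $\psi$ a fixed element of $\mathcal{E}$ obtained from $\phi$ by differential operators; the coefficients are also polynomial in $u$. Therefore for each fixed $x$ both sides are analytic in $v$ on this half-plane. They agree for $\Im v>2-g$, so they agree on all of $\Im v>-g$. This gives~\eqref{KMKM2} for all $u\in\mathbb{C}$.

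The main obstacle is the boundary-term analysis near $y=\ln\beta$ in the double integration by parts for~\eqref{Krel-33}. One must track precisely which powers of $(1-\beta e^{-y})$ survive after one differentiation of the kernel, and choose the auxiliary region of $v$ large enough that all of them vanish. Everything else reduces to bookkeeping and to the invariance and analyticity statements already established.
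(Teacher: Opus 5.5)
Your proposal is correct and follows the paper's proof essentially step for step. You reduce to the three scalar relations, justify differentiation under the integral by the uniform convergence in Proposition~\ref{prop:Espace}, integrate by parts under the auxiliary assumption $\Im v>2-g$ so that the boundary terms at $y=\ln\beta$ vanish, and then continue analytically to $\Im v>-g$ using Proposition~\ref{prop:Espace}(3). One wording slip: you have labelled the two sides the wrong way round --- integration by parts is needed on the side $\mathcal{K}(v)\,[\,\cdots\,]$, where the differential operator acts on $\phi$ before $\mathcal{K}(v)$, and differentiation under the integral is needed on the side $[\,\cdots\,]\,\mathcal{K}(v)$.
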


\begin{proposition} \label{prop:RLM}
	The matrix relation
	\begin{align} \label{RLM2}
		\mathcal{R}_{12}(v) \, L_1(u) \, M_2(u - v) = M_2(u - v) \, L_1(u) \, \mathcal{R}_{12}( v)
	\end{align}
	holds on $\mathcal{E}(\mathbb{R}^2)$ for all $u, v \in \mathbb{C}$.
\end{proposition}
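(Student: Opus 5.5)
The relation \eqref{RLM2} amounts to four scalar operator identities, one for each matrix entry. I would write both sides out entrywise, act on an arbitrary $\phi \in \mathcal{E}(\mathbb{R}^2)$, and reduce each identity to a first-order differential identity for the kernel
\[
R(x_1,x_2,y) = \exp\bigl(\imath v(x_1-y) - e^{x_1-y} - e^{y-x_2}\bigr),
\]
with $[\mathcal{R}_{12}(v)\phi](x_1,x_2) = \int_{\mathbb{R}} dy\, R\,\phi(y,x_1)$.

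\textbf{The operator products make sense on $\mathcal{E}(\mathbb{R}^2)$.} By Proposition~\ref{prop:Espace}, the entries of $L_1(u)$ and $M_2(u-v)$ preserve $\mathcal{E}(\mathbb{R}^2)$, since these are multiplication by exponentials and differentiations; the two-variable version is proved exactly as the one-variable one. The same proposition says $\mathcal{R}_{12}(v)$ preserves $\mathcal{E}(\mathbb{R}^2)$ for every $v\in\mathbb{C}$. So both sides are well-defined operators on this space.

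\textbf{Moving operators through the integral.} On the right-hand side, $\mathcal{R}_{12}(v)$ acts first and then the differential operators in $x_1,x_2$ act on the integral. The uniform absolute convergence on compacts (Corollary~\ref{corlem2} with polynomial/exponential weights, as in Proposition~\ref{prop:Espace}) lets us differentiate under the integral sign, so these operators act directly on the kernel $R$ and on the factor $\phi(y,x_1)$.

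On the left-hand side, the operators act on $\phi$ before integration, so the integrand contains $\phi$ and its derivatives evaluated at the point $(y,x_1)$. Derivatives of $\phi$ in its first argument become $\partial_y\phi(y,x_1)$, and we integrate these by parts in $y$. The boundary terms vanish for every $v\in\mathbb{C}$: the factor $e^{-e^{x_1-y}-e^{y-x_2}}$ decays doubly exponentially as $y\to\pm\infty$, while $\phi$ and its derivatives grow at most exponentially, and there is no endpoint singularity. This is simpler than the $\mathcal{K}$ case, because no analytic continuation in $v$ is needed.

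\textbf{Comparing the entries.} After these steps each matrix entry reduces to the statement that a certain first-order differential operator in $(x_1,x_2,y)$ annihilates $R$. Derivatives of $\phi$ in its second slot, $\partial\phi(y,x_1)$, appear on both sides, and their coefficients should match directly. The remaining terms can then be checked using the three identities
\[
\partial_{x_1}R = \bigl(\imath v - e^{x_1-y}\bigr)R,\qquad
\partial_{x_2}R = e^{y-x_2}R,\qquad
\partial_y R = \bigl(-\imath v + e^{x_1-y} - e^{y-x_2}\bigr)R .
\]
This is the routine verification mentioned after \eqref{R}.

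\textbf{Main obstacle.} The main obstacle is bookkeeping: tracking the chain rule correctly through $\phi(y,x_1)$, since a derivative $\partial_{x_1}$ hits both $R$ and the second argument of $\phi$. The analytic justification should be routine given Proposition~\ref{prop:Espace} and Corollary~\ref{corlem2}.
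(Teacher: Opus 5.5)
Your proposal is correct and is essentially the paper's argument: the paper also verifies the relation formally on the kernel of $\mathcal{R}_{12}(v)$ and justifies differentiation under the integral and integration by parts via Proposition~\ref{prop:Espace}. The only difference is packaging: the paper reduces the check to the elementary commutation relations $e^{\pm x_1}\mathcal{R}_{12}(v)=\mathcal{R}_{12}(v)\,e^{\pm x_2}$, \eqref{d1R}, \eqref{d2R}, \eqref{Rd1}, whereas you compare the four matrix entries one by one. You are right that no analytic continuation in $v$ is needed here. One small correction: in the $(1,1)$ entry the coefficients of $(\partial_{2}\phi)(y,x_1)$ do not match directly, but only after you integrate by parts the mixed term $(\partial_1\partial_2\phi)(y,x_1)=\partial_y\bigl[(\partial_2\phi)(y,x_1)\bigr]$.
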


\begin{remark}
	By ``matrix relation'' we mean four identities corresponding two four matrix entries. 
\end{remark}

\begin{proof}[Proof of Proposition~\ref{prop:KMKM}]
	In Section~\ref{sec:refl-op} we show that the reflection equation~\eqref{KMKM2} is equivalent to three relations
	\begin{align} \label{Krel-11}
		&\mathcal{K}(v)\,e^{-x} = e^{x}\partial_x\,\mathcal{K}(v), \\[6pt]  \label{Krel-22}
		&\mathcal{K}(v) \, \bigl[ (\imath v + \partial_x)\,e^{-x} - \beta^2\,e^{x}\partial_x \bigr] =
		\bigl[ (-\imath v - \partial_x)\,e^{x}\partial_x + 2\alpha + \beta^2\,e^{-x} \bigr] \, \mathcal{K}(v), \\[6pt] \label{Krel-333}
		& \mathcal{K}(v)\,\bigl[(v-\imath\partial_x)^2+2\alpha\,e^{x}\partial_x+
		\beta^2\,(e^{x}\partial_x)^2\bigr] =
		\bigl[(v-\imath\partial_x)^2+2\alpha\,e^{-x}+\beta^2\,e^{-2x}\bigr]\,\mathcal{K}(v),
	\end{align}
	which formally hold due to explicit formula for the kernel of reflection operator~\eqref{Kop}
	\begin{align} \label{Kker2}
		K(x,y) = \frac{(2\beta)^{\imath v}}{\Gamma ( g- \imath v )} \,  e^{-2 \imath v y - e^{y -x} }  \, \bigl(1 + \beta e^{-y}\bigr)^{- \imath v - g } \, \bigl(1 - \beta e^{-y}\bigr)^{- \imath v + g - 1},
	\end{align}
	where $g = 1/2 + \alpha/\beta$.
	It remains to justify several steps:
	\begin{itemize}
		\item[(1)] all integrals on the way converge;
		\item[(2)] derivatives $\partial_{x}, \partial_x^2$ can be interchanged with integral over $y$;
		\item[(3)] boundary terms from integrating by parts vanish.
	\end{itemize}
	First, notice that by Proposition~\ref{prop:Espace} all above relations are well defined on $\mathcal{E}$. To prove the first one~\eqref{Krel-11} we need to interchange derivative $\partial_x$ with integral over $y$ from the right 
	\begin{align}
		\partial_x \int_{\ln \beta}^\infty dy \; K(x,y) \, \phi(-y) = \int_{\ln \beta}^\infty dy \; \partial_x  \, K(x,y) \, \phi(-y) = e^{-x} \, \int_{\ln \beta}^\infty dy \; K(x,y) \, e^{y} \, \phi(-y),
	\end{align}
	where $\phi \in \mathcal{E}$. By Proposition~\ref{prop:Espace} the function $\tilde{\phi}(y) = e^{-y} \phi(y)$ belongs to $\mathcal{E}$. Consequently, by the same proposition the integral $\bigl[ \mathcal{K}(v) \, \tilde{\phi} \bigr](x)$ converges uniformly in $x$, which allows us to switch order of derivative $\partial_x$ and integration. The same argument can be applied to the right hand sides of the remaining two relations~\eqref{Krel-22},~\eqref{Krel-333}.
	
	Finally, to have vanishing boundary terms after integrating by parts (up to two times) in the left hand sides we assume $\Im v > 2 - g$, so that 
	\begin{align}
		\lim_{y \to \ln \beta^+} K(x,y) = \lim_{y \to \ln\beta^+} \partial_y K(x,y) = 0,
	\end{align}
	see~\eqref{Kker2}. This justifies the relations~\eqref{Krel-22},~\eqref{Krel-333}. To weaken the assumption on $v$ recall that by Proposition~\ref{prop:Espace} for any $k \in \mathbb{N}_0$ the function $\partial_x^k \, \bigl[ \mathcal{K}(v) \, \phi \bigr](x)$ is analytic in $v$ under restriction $\Im v > - g$. Hence, once the above relations are proved assuming $\Im v > 2 - g$, they can be analytically continued.
\end{proof}

\begin{proof}[Proof of Proposition~\ref{prop:RLM}]
	The matrix relation in question is equivalent to the relations
	\begin{align} 
		& e^{\pm x_1}\, \mathcal{R}_{12}(v) = \mathcal{R}_{12}(v) \, e^{\pm x_2}, \\[6pt] \label{d1R}
		& \partial_{x_1} \, \mathcal{R}_{12}(v) = \mathcal{R}_{12}(v) \, (\imath v - e^{x_2 - x_1} + \partial_{x_2}), \\[6pt] \label{d2R}
		& \partial_{x_2} \, \mathcal{R}_{12}(v) = e^{-x_2} \, \mathcal{R}_{12}(v) \, e^{x_1}, \\[6pt] \label{Rd1}
		& \mathcal{R}_{12}(v) \, \partial_{x_1} = \mathcal{R}_{12}(v) \, ( \imath v - e^{x_2 - x_1}) +  e^{-x_2} \, \mathcal{R}_{12}(v) \, e^{x_1},
	\end{align}
	which again hold formally due to explicit formula for the kernel of $\mathcal{R}$-operator~\eqref{Rop}. The rest of the proof uses Proposition~\ref{prop:Espace} and is absolutely analogous to the previous one. 
\end{proof}

In $BC$ Toda chain we also encounter modified Lax matrices
\begin{align}
	\widetilde{L}(u) = \sigma_2 \, L^t(-u) \, \sigma_2, \qquad 	\widetilde{M}(u) = \sigma_2 \, M^t(-u) \, \sigma_2
\end{align}
and $\mathcal{R}$-operator with reflected coordinate
\begin{align}
	\mathcal{R}^*_{12}(v) = r_1 \, \mathcal{R}_{12}(v) \, r_1, \qquad [ r_1 \phi ](x_1, x_2) = \phi(-x_1, x_2).
\end{align}
Using transpositions, reflections and multiplying by $\sigma_2$ one can easily obtain the following relations from~\eqref{RLM2}.

\begin{corollary} \label{cor:RLM}
	The matrix relations
	\begin{align} \label{RtLtM}
		& \mathcal{R}_{12}(v) \, \widetilde{M}_2(u + v) \, \widetilde{L}_1(u) = \widetilde{L}_1(u) \, \widetilde{M}_2(u + v) \, \mathcal{R}_{12}(v) , \\[6pt] \label{RrLM-1}
		& \mathcal{R}^*_{12}(v) \, M_2^t(-u - v) \, L_1(u) = L_1(u) \, M_2^t(-u-v) \, \mathcal{R}^*_{12}(v), \\[6pt] \label{RrLM-2}
		& \mathcal{R}^*_{12}(v) \, \widetilde{L}_1(u) \, \widetilde{M}_2^t(v - u)  = \widetilde{M}_2^t(v - u) \, \widetilde{L}_1(u) \, \mathcal{R}^*_{12}(v)
	\end{align}
	hold on $\mathcal{E}(\mathbb{R}^2)$ for all $u, v\in \mathbb{C}$.
\end{corollary}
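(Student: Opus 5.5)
The plan is to derive each of the three relations from Proposition~\ref{prop:RLM} by elementary algebraic manipulations. These are: matrix transposition combined with conjugation by $\sigma_2$, and conjugation by the reflection $r_1$. Because $r_1$ and multiplication by constant matrices preserve $\mathcal{E}(\mathbb{R}^2)$, and Proposition~\ref{prop:Espace} makes all operators involved well defined there, the relations will hold on the same space for all $u,v\in\mathbb{C}$.

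\textbf{Relation~\eqref{RtLtM}.} We transpose the four scalar identities contained in~\eqref{RLM2} in the auxiliary $2\times2$ space. This is not a transposition of operators, and the operator order in each entry is kept. Concretely, the $(r,s)$ entry of~\eqref{RLM2} is
\begin{align*}
\sum_{p} \mathcal{R}_{12}(v)\, L_1(u)_{rp} M_2(u-v)_{ps} = \sum_{p} M_2(u-v)_{rp} L_1(u)_{ps}\, \mathcal{R}_{12}(v).
\end{align*}
Since $L_1$ and $M_2$ act on different variables, their entries commute, so $L_{rp}M_{ps}=M_{ps}L_{rp}$. Rewriting each side this way reverses the matrix order, and the transposed identity reads $\mathcal{R}_{12}\, M_2^t(u-v) L_1^t(u) = L_1^t(u) M_2^t(u-v)\,\mathcal{R}_{12}$. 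We then conjugate by $\sigma_2$ and replace $u\to -u$. This turns $M_2^t(-u-v)$ into $\widetilde M_2(u+v)$ and $L_1^t(-u)$ into $\widetilde L_1(u)$, which gives~\eqref{RtLtM}.

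\textbf{Relation~\eqref{RrLM-1}.} We conjugate~\eqref{RLM2} by $r_1$. Since $r_1$ acts only on $x_1$, it leaves $M_2$ untouched, while $r_1 L_1(u) r_1$ becomes $\begin{pmatrix} u-\imath\partial_1 & e^{x_1}\\ -e^{-x_1} & 0\end{pmatrix}$. The step I expect to need care is identifying this matrix with $L_1(u)$ up to a transposition and a sign change of the spectral parameter. Indeed, $r_1L_1(u)r_1 = -\,(L_1^t(-u))^{\,\mathrm{swap}}$, where ``swap'' means conjugation by $\begin{pmatrix}0&1\\1&0\end{pmatrix}$ or a similar constant matrix. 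The first attempt will be to combine the $r_1$-conjugated identity with the transposition trick from the previous step. Here the reordering of the entries is what changes the relative order of $L_1$ and $M_2$, which is why~\eqref{RrLM-1} has $M_2^t$ to the left of $L_1$. After that, constant matrix conjugations and overall scalar factors are removed. The parameter shift $u\to -u$ converts $M_2(-u-v)$ into $M_2^t(-u-v)$ after transposition. This matches the statement in~\eqref{RMtL}, where it was already called a straightforward consequence. The remaining work is to track which constant conjugation (by $\sigma_2$, $\sigma_1$ or $\sigma_3$) makes both sides match exactly. Since $M$ has no such symmetry, I expect this to require checking that the same constant conjugation acts consistently on both factors.

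\textbf{Relation~\eqref{RrLM-2}.} This follows by applying the $\sigma_2$-transposition procedure of the first step to~\eqref{RrLM-1}. Transposing the entries once more restores the original order of the two factors. Conjugation by $\sigma_2$ with $u\to-u$ turns $L_1^t$ into $\widetilde L_1(u)$ and $M_2(u-v)$ into $\widetilde M_2^t(v-u)$. Note that $\sigma_2 M^t\sigma_2$ transposed is $\sigma_2M\sigma_2$, and the precise bookkeeping here must be verified entrywise.

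Throughout, the analytic justification requires only that $r_1$ preserves $\mathcal{E}(\mathbb{R}^2)$ and that the entries of $L$, $\widetilde L$, $M$, $\widetilde M$ act on $\mathcal{E}(\mathbb{R}^2)$, which holds by Proposition~\ref{prop:Espace}. So the main obstacle is purely combinatorial: finding the right constant matrix conjugation that realizes $r_1$ on the Toda Lax matrix.
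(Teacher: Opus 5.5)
\textbf{Verdict: genuine gap in the second step (fixable); your first and third steps are correct.} Your first step is correct. Transposing \eqref{RLM2} in the auxiliary space (using that the entries of $L_1$ and $M_2$ commute on $\mathcal{E}(\mathbb{R}^2)$), then conjugating by $\sigma_2$ and setting $u\to-u$, gives \eqref{RtLtM}. Your third step, which turns \eqref{RrLM-1} into \eqref{RrLM-2}, is also correct.

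The gap is in the second step, on which the third depends. The identity you propose, $r_1L_1(u)r_1=-\bigl(L_1^t(-u)\bigr)^{\mathrm{swap}}$, is false. You also leave \eqref{RrLM-1} unproved, deferring to a search for ``the right constant conjugation''. In fact no constant conjugation is needed, since a direct computation gives
\[
r_1L_1(u)\,r_1=\begin{pmatrix} u-\imath\partial_{x_1} & e^{x_1}\\ -e^{-x_1} & 0\end{pmatrix}=-L_1^t(-u)
\]
exactly. Conjugating $-L_1^t(-u)$ by $\sigma_1$, $\sigma_2$ or $\sigma_3$ gives a different matrix. This matters because a nontrivial constant conjugation would have to be moved onto $M_2$ as well, replacing it by $C^{-1}M_2C\neq M_2$. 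That is precisely the inconsistency you anticipate, so your plan as written cannot close.

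With the correct identity the derivation finishes immediately. Conjugate \eqref{RLM2} by $r_1$; this commutes with $M_2$ and preserves $\mathcal{E}(\mathbb{R}^2)$. Cancelling the common sign gives
\[
\mathcal{R}^*_{12}(v)\,L_1^t(-u)\,M_2(u-v)=M_2(u-v)\,L_1^t(-u)\,\mathcal{R}^*_{12}(v).
\]
Transposing this in the auxiliary space, exactly as in your first step, gives
\[
\mathcal{R}^*_{12}(v)\,M_2^t(u-v)\,L_1(-u)=L_1(-u)\,M_2^t(u-v)\,\mathcal{R}^*_{12}(v),
\]
and $u\to-u$ is \eqref{RrLM-1}. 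If instead you conjugate the first displayed relation by $\sigma_2$, you obtain \eqref{RrLM-2} directly, because $\sigma_2L^t(-u)\sigma_2=\widetilde L(u)$ and $\sigma_2M(u-v)\sigma_2=\widetilde M^t(v-u)$. With this correction your argument coincides with the paper's: transpositions, the reflection $r_1$, and $\sigma_2$-conjugation applied to \eqref{RLM2}.
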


\begin{remark} \label{rmk: embed}
	 Since the bound~\eqref{Edef-bound} is factorised in $x_j$, it is straightforward to show that in all statements above one can replace $\mathcal{E} \equiv \mathcal{E}(\mathbb{R})$ or $\mathcal{E}(\mathbb{R}^2)$ with $\mathcal{E}(\mathbb{R}^{k})$, $k \geq 2$, assuming that $\mathcal{K}$- and $\mathcal{R}$-operators act on extra variables as identity operators. For example, for every $j \in \{1, ..., k\}$ the operator $\mathcal{K}_j(v)$ acts invariantly on the space of functions $\phi(\bm{x}_k) \in \mathcal{E}(\mathbb{R}^k)$ (where the index of reflection operator indicates on which variable it acts nontrivially). 
\end{remark}

	Proposition~\ref{prop:Espace} and the last remark imply commutativity of operators with different indices.

\begin{lemma} \label{lem:comm}
	The operators $L_1(u_1)$, $M_2(u_2)$, $\mathcal{K}_3(u_3)$ and $\mathcal{R}_{45}(u_4)$ mutually commute on $\mathcal{E}(\mathbb{R}^5)$
	for all $u_j \in \mathbb{C}$ with restriction $\Im u_3 > -g$.
\end{lemma}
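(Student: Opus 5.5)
The plan is to reduce everything to pairwise commutation of operators acting on disjoint sets of variables, and then verify each pair using Proposition~\ref{prop:Espace} and Remark~\ref{rmk: embed}. First, by Proposition~\ref{prop:Espace} and Remark~\ref{rmk: embed}, each of the four operators maps $\mathcal{E}(\mathbb{R}^5)$ into itself. Here $L_1(u_1)$ and $M_2(u_2)$ are understood entrywise: their matrix entries are multiplications by $e^{\pm x_j}$, derivatives, and constants. It follows that every composition appearing in the statement is well defined on $\mathcal{E}(\mathbb{R}^5)$, so it suffices to check commutation of each pair applied to an arbitrary $\phi \in \mathcal{E}(\mathbb{R}^5)$.

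The pairs split into three types. \emph{Differential/multiplication with differential/multiplication}, i.e.\ entries of $L_1$ against entries of $M_2$. These act on the distinct variables $x_1$ and $x_2$, and they commute because mixed partial derivatives of a smooth function commute and multiplication by a function of $x_1$ commutes with $\partial_{x_2}$. \emph{Differential/multiplication with integral}, e.g.\ entries of $L_1$ or $M_2$ against $\mathcal{K}_3$ or $\mathcal{R}_{45}$. Multiplication by $e^{\pm x_1}$ clearly commutes with an integral over a variable substituted into $x_3$ or $x_4$. For derivatives in $x_1$ or $x_2$ we must interchange $\partial_{x_j}$ with the integral. This is justified by the uniform absolute convergence in Proposition~\ref{prop:Espace}, now applied with $x_j$ treated as a parameter. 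The point to check is that the $\mathcal{E}$ bound for $\partial_{x_j}\phi$ is factorised, so the dominating function is locally uniform in $x_j$ as well. The interchange then yields the same integral with integrand $\partial_{x_j}\phi$, which is exactly the other order of the two operators. \emph{Integral with integral}, namely $\mathcal{K}_3(u_3)$ against $\mathcal{R}_{45}(u_4)$. Both orders give the double integral
\[
\int_{\ln\beta}^{\infty} dy \int_{\mathbb{R}} dw \; K(x_3,y)\, R(x_4,x_5,w)\, \phi(x_1,x_2,-y,w,x_4),
\]
and they agree by Fubini once absolute convergence of the double integral is established.

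I expect the main obstacle to be this last absolute convergence estimate, since the two integrals must be controlled jointly. To handle it, bound $|\phi|$ by $a\,e^{b(|y|+|w|)}$ times factors independent of $y$ and $w$, which works because the $\mathcal{E}$ bound is factorised. The double integral of absolute values then factorises into a product of two one-dimensional integrals. The $y$-integral is finite by Lemma~\ref{lemma3}, and this is where the restriction $\Im u_3 > -g$ enters, exactly as in the proof of Proposition~\ref{prop:Espace}. The $w$-integral is finite by Corollary~\ref{corlem2}. With the product finite, Tonelli gives absolute convergence, and then Fubini gives equality of the two orders.
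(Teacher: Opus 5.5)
Your proof is correct and follows essentially the same route as the paper. The paper also treats the $L_1$/$M_2$ pairs by smoothness, the Lax-entry versus integral-operator pairs by commuting $\partial_{x_1},\partial_{x_2}$ through the integrals (using invariance of $\mathcal{E}(\mathbb{R}^5)$ under differentiation), and the pair $\mathcal{K}_3(u_3),\mathcal{R}_{45}(u_4)$ by absolute convergence of the integrals, i.e.\ Fubini; you just spell out the dominating functions and the factorised double-integral bound that the paper leaves implicit.
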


\begin{proof}
	Entries of $L_1$ and $M_2$ consist of exponents $e^{\pm x_1}, e^{\pm x_2}$ and derivatives $\partial_{x_1}, \partial_{x_2}$. They commute with each other since functions from $\mathcal{E}(\mathbb{R}^5)$ are smooth.
	
	Commutativity of $\mathcal{K}$- and $\mathcal{R}$-operators with Lax matrices follows from their commutativity with derivatives $\partial_{x_1}, \partial_{x_2}$, which is guaranteed by the invariance of the space $\mathcal{E}(\mathbb{R}^5)$ under differentiation. Besides, the actions of operators $\mathcal{K}_3$ and $\mathcal{R}_{45}$ commute thanks to absolute convergence of corresponding integrals.
\end{proof}

\subsection{Monodromy matrices and operators} \label{sec:app-monodr}

Monodromy matrix and monodromy operator for $GL$ system are defined as
\begin{align} \label{mon-op2}
	T_n(u) = L_n(u) \cdots L_1(u), \qquad U_{na}(v) = \mathcal{R}_{na}(v) \cdots \mathcal{R}_{1a}(v).
\end{align}
Due to Proposition~\ref{prop:Espace} and Remark~\ref{rmk: embed} the space of functions $\phi(\bm{x}_n, x_a) \in \mathcal{E}(\mathbb{R}^{n + 1})$ is invariant under the action of monodromy matrix and monodromy operator. Inductive usage of Proposition~\ref{prop:RLM} and Lemma~\ref{lem:comm} gives the following statement.

\begin{proposition} \label{prop:UTM}
	The matrix relation
	\begin{align} \label{UTM2}
		U_{na}(v) \, T_n(u) \, M_a(u - v) = M_a(u - v) \, T_n(u) \, U_{na}(v)
	\end{align}
	holds on $\mathcal{E}(\mathbb{R}^{n + 1})$ for all $u,v \in \mathbb{C}$. 
\end{proposition}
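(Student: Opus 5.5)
We argue by induction on $n$. The base case $n = 0$ is trivial: $U_{0a}(v)$ is the identity and $T_0(u) = \bm{1}$, so the relation reduces to $M_a(u-v) = M_a(u-v)$. For the inductive step we write
\begin{align*}
	U_{na}(v) = \mathcal{R}_{na}(v) \, U_{n-1,a}(v), \qquad T_n(u) = L_n(u) \, T_{n-1}(u).
\end{align*}
We then move operators past each other in the left hand side
\begin{align*}
	\mathcal{R}_{na}(v) \, U_{n-1,a}(v) \, L_n(u) \, T_{n-1}(u) \, M_a(u-v).
\end{align*}

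First, $U_{n-1,a}(v)$ acts only on the variables $x_1, \dots, x_{n-1}, x_a$. Hence by Lemma~\ref{lem:comm} (extended to $\mathcal{E}(\mathbb{R}^{n+1})$ via Remark~\ref{rmk: embed}) it commutes with the entries of $L_n(u)$. Similarly, $L_n(u)$ commutes with the entries of $T_{n-1}(u)$ and of $M_a(u-v)$. This gives
\begin{align*}
	\mathcal{R}_{na}(v) \, L_n(u) \, \bigl[ U_{n-1,a}(v) \, T_{n-1}(u) \, M_a(u-v) \bigr].
\end{align*}
By the induction hypothesis the bracket equals $M_a(u-v) \, T_{n-1}(u) \, U_{n-1,a}(v)$. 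Next we commute $T_{n-1}(u)$ to the left past $L_n(u)$. This is legitimate because the matrix spaces coincide but the entries act on different variables; one has to keep track of the auxiliary matrix order, using $L_n T_{n-1} M_a$ as products of matrices with commuting entries. The result is
\begin{align*}
	\bigl[\mathcal{R}_{na}(v) \, L_n(u) \, M_a(u-v)\bigr] \, T_{n-1}(u) \, U_{n-1,a}(v).
\end{align*}
Applying Proposition~\ref{prop:RLM} with indices $(1,2) \to (n,a)$ turns the bracket into $M_a(u-v) \, L_n(u) \, \mathcal{R}_{na}(v)$. Finally, $\mathcal{R}_{na}(v)$ commutes with the entries of $T_{n-1}(u)$ by Lemma~\ref{lem:comm}, and we obtain the right hand side $M_a(u-v) \, T_n(u) \, U_{na}(v)$.

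The main point to check is the justification of each step on the function space. By Proposition~\ref{prop:Espace} and Remark~\ref{rmk: embed}, every intermediate expression consists of operators ($\mathcal{R}$-operators and the entries of $L$ and $M$) acting on $\mathcal{E}(\mathbb{R}^{n+1})$. Since this space is invariant under all of them, each equality applied to $\phi \in \mathcal{E}(\mathbb{R}^{n+1})$ is an identity between well-defined functions. The only care needed is the bookkeeping of the matrix products: each entry of the matrix identity is a finite sum of products of scalar operators, and we apply the commutation and intertwining relations termwise. Since no restriction on $u, v$ enters (unlike for $\mathcal{K}$), the result holds for all $u, v \in \mathbb{C}$.
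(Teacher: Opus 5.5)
Your proof is correct and follows the paper's argument essentially step for step. The structure is the same: induction on $n$, commuting $U_{n-1,a}(v)$ past $L_n(u)$ via Lemma~\ref{lem:comm}, applying the induction hypothesis, then Proposition~\ref{prop:RLM} for the pair $(n,a)$, and finally commuting $\mathcal{R}_{na}(v)$ past $T_{n-1}(u)$; starting from the trivial case $n=0$ instead of the paper's $n=1$ changes nothing.

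One small remark: the step where you ``commute $T_{n-1}(u)$ to the left past $L_n(u)$'' is vacuous. After the induction hypothesis the expression is already $\mathcal{R}_{na}(v)\,L_n(u)\,M_a(u-v)\,T_{n-1}(u)\,U_{n-1,a}(v)$. The justification you give for that step would also be wrong if it were actually used, because $L_n(u)$ and $T_{n-1}(u)$ do not commute as $2\times 2$ matrices in the auxiliary space, even though their entries commute.
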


\begin{proof}
	The proof goes by induction over $n$. The case $n = 1$ coincides with Proposition~\ref{prop:RLM}. Consider the induction step $n - 1 \to n$ and assume we proved the identity
	\begin{align}\label{UTM-ind}
		U_{n-1 , a}(v) \, T_{n - 1}(u) \, M_a(u - v) = M_a(u - v) \, T_{n - 1}(u) \, U_{n-1 , a}(v).
	\end{align}
	By definition,
	\begin{align}
		U_{na}(v) = \mathcal{R}_{na}(v) \, U_{n - 1 , a}(v), \qquad T_n(u) = L_n(u) \, T_{n - 1}(u),
	\end{align}
	so let us multiply~\eqref{UTM-ind} by $\mathcal{R}_{na}(v) \, L_n(u)$. The result of multiplication is well defined on $\mathcal{E}(\mathbb{R}^{n + 1})$, since this space is invariant under action of all operators. 
	
	Due to Lemma~\ref{lem:comm} the operators $L_n$ and $U_{n - 1 , a}$ commute on~$\mathcal{E}(\mathbb{R}^{n + 1})$, so from the left we obtain
	\begin{align}
		\begin{aligned}
			& \mathcal{R}_{na}(v) \, L_n(u) \, U_{n-1 , a}(v) \, T_{n - 1}(u) \, M_a(u - v) \\[6pt]
			&  = \mathcal{R}_{na}(v) \, U_{n-1 , a}(v) \, L_n(u)  \, T_{n - 1}(u) \, M_a(u - v) = U_{na}(v) \, T_n(u) \, M_a(u - v),
		\end{aligned}
	\end{align}
	as desired~\eqref{UTM2}. From the right we can use Proposition~\ref{prop:RLM}
	\begin{align}
		\begin{aligned}
			& \mathcal{R}_{na}(v) \, L_n(u) \, M_a(u - v) \, T_{n - 1}(u) \, U_{n-1 , a}(v) \\[6pt]
			&  =  M_a(u - v) \, L_n(u) \, \mathcal{R}_{na}(v) \, T_{n - 1}(u) \, U_{n-1 , a}(v),
		\end{aligned}
	\end{align}
	since for any $\phi \in \mathcal{E}(\mathbb{R}^{n + 1})$ the functions $\bigl( T_{n - 1}(u) \bigr)_{rs} \, U_{n-1 , a}(v) \, \phi$ are also from $\mathcal{E}(\mathbb{R}^{n + 1})$. It is left to again invoke Lemma~\ref{lem:comm} to interchange operators $\mathcal{R}_{na}$ and $T_{n - 1}$
	\begin{align}
		\begin{aligned}
			& M_a(u - v) \, L_n(u) \, \mathcal{R}_{na}(v) \, T_{n - 1}(u) \, U_{n-1 , a}(v) \\[6pt]
			&  = M_a(u - v) \, L_n(u) \, T_{n - 1}(u) \, \mathcal{R}_{na}(v) \, U_{n-1 , a}(v) = M_a(u - v) \, T_n(u) \, U_{na}(v),
		\end{aligned}
	\end{align}
	where the last expression coincides with the right hand side of the statement~\eqref{UTM}.
\end{proof}

Now recall definitions of monodromy matrix and monodromy operator for $BC$ system
\begin{align}
	& \T_n(u) = L_n(u) \cdots L_1(u) \, K(u) \, \widetilde{L}_1(u) \cdots \widetilde{L}_n(u) , \\[6pt] \label{mon-op-BC2}
	& \MMO_{na}(v) = \mathcal{R}_{na}(v) \cdots \mathcal{R}_{1a}(v) \, \mathcal{K}_a(v) \, \mathcal{R}^*_{1a}(v) \cdots \mathcal{R}^*_{na}(v).
\end{align}
By Proposition~\ref{prop:Espace} these operators act invariantly on the space of functions $\phi(\bm{x}_n, x_a) \in \mathcal{E}(\mathbb{R}^{n + 1})$. Besides, Propositions~\ref{prop:KMKM},~\ref{prop:RLM} together with Corollary~\ref{cor:RLM} and Lemma~\ref{lem:comm} imply the following relation.

\begin{proposition} \label{prop:UMTM}
	The matrix relation
	\begin{align}\label{UMTM2}
		\MMO_{na}(v) \, M_a^t(- u - v) \, \T_n(u) \, \widetilde{M}^t_a(v - u) = M_a(u - v) \, \T_n(u) \, \widetilde{M}_a(u + v) \; \MMO_{na}(v)
	\end{align}
	holds on $\mathcal{E}(\mathbb{R}^{n + 1})$ for all $u,v \in \mathbb{C}$ under restriction $\Im v > -g$.
\end{proposition}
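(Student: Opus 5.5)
We argue by induction on $n$, following Sklyanin's derivation of the reflection equation for the monodromy matrix and copying the pattern of Proposition~\ref{prop:UTM}. For $n = 0$ we have $\MMO_{0a}(v) = \mathcal{K}_a(v)$ and $\T_0(u) = K(u)$. Since $\widetilde{M}^t_a(v-u) = \sigma_2 M_a(u-v)\sigma_2$ and $\widetilde{M}_a(u+v) = \sigma_2 M_a^t(-u-v)\sigma_2$, the claim in this case is exactly Proposition~\ref{prop:KMKM}, valid on $\mathcal{E}$ (and on $\mathcal{E}(\mathbb{R}^{1})$ with dummy variables by Remark~\ref{rmk: embed}) for $\Im v > -g$.

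For the step $n-1 \to n$ we use the recursions
\begin{align*}
\MMO_{na}(v) = \mathcal{R}_{na}(v)\, \MMO_{n-1,a}(v)\, \mathcal{R}^*_{na}(v), \qquad \T_n(u) = L_n(u)\, \T_{n-1}(u)\, \widetilde{L}_n(u).
\end{align*}
We start from the left-hand side $\mathcal{R}_{na}\MMO_{n-1,a}\mathcal{R}^*_{na}\, M_a^t(-u-v)\, L_n\, \T_{n-1}\, \widetilde{L}_n\, \widetilde{M}^t_a(v-u)$ and proceed in four moves.

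First, relation \eqref{RrLM-1} of Corollary~\ref{cor:RLM} moves $\mathcal{R}^*_{na}$ through $M_a^t(-u-v)L_n$, giving $L_n M_a^t(-u-v)\mathcal{R}^*_{na}$. Lemma~\ref{lem:comm} then commutes $\mathcal{R}^*_{na}$ past $\T_{n-1}$. The operator $\T_{n-1}$ involves only $x_1,\dots,x_{n-1}$, while $\mathcal{R}^*_{na}$ involves $x_n,x_a$; the lemma is stated for $\mathcal{R}$, but the reflection $r_n$ clearly preserves $\mathcal{E}$.

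Second, \eqref{RrLM-2} turns $\mathcal{R}^*_{na}\widetilde{L}_n\widetilde{M}^t_a(v-u)$ into $\widetilde{M}^t_a(v-u)\widetilde{L}_n\mathcal{R}^*_{na}$.

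Third, $L_n$ and $\widetilde{L}_n$ commute with $\MMO_{n-1,a}$, and $M_a^t$ commutes with $\T_{n-1}$, again by Lemma~\ref{lem:comm}. The middle block $\MMO_{n-1,a}M_a^t(-u-v)\T_{n-1}\widetilde{M}^t_a(v-u)$ is now isolated, and the induction hypothesis replaces it by $M_a(u-v)\T_{n-1}\widetilde{M}_a(u+v)\MMO_{n-1,a}$.

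Fourth, on the outside $\mathcal{R}_{na}$ is brought through $L_n M_a(u-v)$ by Proposition~\ref{prop:RLM}, and through $\widetilde{M}_a(u+v)\widetilde{L}_n$ by \eqref{RtLtM}. The ordering of the $\widetilde{L}_n$ and $\widetilde{M}_a$ factors has to be checked against \eqref{RtLtM}; it is arranged by commuting $\widetilde{L}_n$ past $\MMO_{n-1,a}$. The final commutations of $\mathcal{R}_{na}$ and $\mathcal{R}^*_{na}$ past $\T_{n-1}$ use Lemma~\ref{lem:comm}, and together these moves yield the right-hand side.

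The formal algebra is routine. The main obstacle is justifying each move on $\mathcal{E}(\mathbb{R}^{n+1})$: every intermediate expression must be applied to functions in this space, and each local relation must be used only on such inputs. This works because the entries of the Lax matrices, the operators $\mathcal{R}$, $\mathcal{R}^*$ and $\mathcal{K}$ (for $\Im v>-g$), and hence $\MMO_{n-1,a}$, all preserve $\mathcal{E}(\mathbb{R}^{n+1})$, by Proposition~\ref{prop:Espace} and Remark~\ref{rmk: embed}. So at each step we apply a local identity to a function of the form (matrix entries)$\,\cdot\,$(operators)$\,\phi$ with $\phi\in\mathcal{E}(\mathbb{R}^{n+1})$, exactly as in the proof of Proposition~\ref{prop:UTM}. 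The restriction $\Im v>-g$ enters only through $\mathcal{K}_a$, both in the base case and in the invariance of the space.
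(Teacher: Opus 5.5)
Your proposal is correct and is essentially the paper's proof. Both argue by induction with base case Proposition~\ref{prop:KMKM}, then apply \eqref{RrLM-1}, Lemma~\ref{lem:comm} and \eqref{RrLM-2}, move $\MMO_{n-1,a}$ past $L_n$ to use the induction hypothesis, and finish with \eqref{RLM2} and \eqref{RtLtM} for $\mathcal{R}_{na}$, in the same order. The commutation of $M_a^t$ with $\T_{n-1}$ in your third move is unnecessary, and it holds only entrywise, not as a matrix product: once $\MMO_{n-1,a}$ is moved past $L_n$, the block $\MMO_{n-1,a}M_a^t(-u-v)\T_{n-1}(u)\widetilde{M}^t_a(v-u)$ already stands in the required order.
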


\begin{proof}
	The proof goes by induction over $n$. The base case $n = 0$ coincides with Proposition~\ref{prop:KMKM}. Consider the induction step $n - 1 \to n$. By definition,
	\begin{align}
		\MMO_{na}(v) = \mathcal{R}_{na}(v) \, \MMO_{n - 1 , a}(v) \, \mathcal{R}^*_{na}(v), \qquad \T_n(u) = L_n(u) \, \T_{n - 1}(u) \, \widetilde{L}_n(u).
	\end{align}
	Inserting this into the left hand side of~\eqref{UMTM2} we obtain
	\begin{align}
		\begin{aligned}
			& \MMO_{na}(v) \, M_a^t(- u - v) \, \T_n(u) \, \widetilde{M}^t_a(v - u) \\[6pt]
			& = \mathcal{R}_{na}(v) \, \MMO_{n - 1 , a}(v) \, \Bigl[ \mathcal{R}^*_{na}(v) \, M_a^t(- u - v) \, L_n(u) \Bigr] \, \T_{n - 1}(u) \, \widetilde{L}_n(u) \, \widetilde{M}^t_a(v - u).
		\end{aligned}
	\end{align}
	The square brackets emphasize the part, where one can use the relation with $\mathcal{R}^*$-operator~\eqref{RrLM-1}. That is,
	\begin{align}
		\begin{aligned}
			& \mathcal{R}_{na}(v) \, \MMO_{n - 1 , a}(v) \, \Bigl[ \mathcal{R}^*_{na}(v) \, M_a^t(- u - v) \, L_n(u) \Bigr] \, \T_{n - 1}(u) \, \widetilde{L}_n(u) \, \widetilde{M}^t_a(v - u) \\[6pt]
			& = \mathcal{R}_{na}(v) \, \MMO_{n - 1 , a}(v) \, \Bigl[ L_n(u) \, M_a^t(- u - v) \, \mathcal{R}^*_{na}(v) \Bigr] \, \T_{n - 1}(u) \, \widetilde{L}_n(u) \, \widetilde{M}^t_a(v - u)  \\[6pt]
			& = \mathcal{R}_{na}(v) \, \MMO_{n - 1 , a}(v) \, L_n(u) \, M_a^t(- u - v) \, \T_{n - 1}(u) \, \Bigl[ \mathcal{R}^*_{na}(v) \, \widetilde{L}_n(u) \, \widetilde{M}^t_a(v - u) \Bigr],
		\end{aligned}
	\end{align}
	where passing to the last line we use the fact that $\mathcal{R}^*$-operator commutes with all matrices inside $\T_{n - 1}(u)$ (Lemma~\ref{lem:comm}). Again, in square brackets in the last line one can use relation with $\mathcal{R}^*$-operator~\eqref{RrLM-2}. Besides, note that the operator $\MMO_{n - 1 , a}(v)$ comutes with $L_n(u)$. Hence, 
	\begin{align}
		\begin{aligned}
			& \mathcal{R}_{na}(v) \, \Bigl[ \MMO_{n - 1 , a}(v) \, L_n(u) \Bigr] \, M_a^t(- u - v) \, \T_{n - 1}(u) \, \Bigl[ \mathcal{R}^*_{na}(v) \, \widetilde{L}_n(u) \, \widetilde{M}^t_a(v - u) \Bigr] \\[6pt]
			& = \mathcal{R}_{na}(v) \,  L_n(u) \, \Bigl[ \MMO_{n - 1 , a}(v) \, M_a^t(- u - v) \, \T_{n - 1}(u) \, \widetilde{M}^t_a(v - u) \Bigr]  \, \widetilde{L}_n(u) \,  \mathcal{R}^*_{na}(v).
		\end{aligned}
	\end{align}
	In the last expression the product in square brackets resembles the left hand side of the claimed relation~\eqref{UMTM2} on the previous step of induction. Using induction assumption we obtain
	\begin{align}
		\begin{aligned}
			& \mathcal{R}_{na}(v) \,  L_n(u) \, \Bigl[ \MMO_{n - 1 , a}(v) \, M_a^t(- u - v) \, \T_{n - 1}(u) \, \widetilde{M}^t_a(v - u) \Bigr]  \, \widetilde{L}_n(u) \,  \mathcal{R}^*_{na}(v) \\[6pt]
			& = \mathcal{R}_{na}(v) \,  L_n(u) \, \Bigl[ M_a(u - v) \, \T_{n - 1}(u) \, \widetilde{M}_a(u + v) \, \MMO_{n - 1 , a}(v)  \Bigr]  \, \widetilde{L}_n(u) \,  \mathcal{R}^*_{na}(v).
		\end{aligned}
	\end{align}
	Now it is left to move the first $\mathcal{R}$-operator through all the matrices with the help of relations~\eqref{RLM2},~\eqref{RtLtM} and Lemma~\ref{lem:comm}. Namely,
	\begin{align}
		\begin{aligned}
			& \Bigl[ \mathcal{R}_{na}(v) \,  L_n(u) \, M_a(u - v)  \Bigr] \, \T_{n - 1}(u) \, \widetilde{M}_a(u + v) \, \Bigr[ \MMO_{n - 1 , a}(v)   \, \widetilde{L}_n(u)  \Bigr] \,  \mathcal{R}^*_{na}(v) \\[6pt]
			& = M_a(u - v)  \,  L_n(u) \, \Bigl[ \mathcal{R}_{na}(v) \, \T_{n - 1}(u) \Bigr] \, \widetilde{M}_a(u + v)  \, \widetilde{L}_n(u) \,  \MMO_{n - 1 , a}(v)   \,  \mathcal{R}^*_{na}(v) \\[6pt]
			& = M_a(u - v)  \,  L_n(u) \,  \T_{n - 1}(u) \, \Bigl[ \mathcal{R}_{na}(v) \, \widetilde{M}_a(u + v)  \, \widetilde{L}_n(u) \Bigr] \,  \MMO_{n - 1 , a}(v)   \,  \mathcal{R}^*_{na}(v) \\[6pt]
			& = M_a(u - v)  \,  L_n(u) \,  \T_{n - 1}(u) \, \widetilde{L}_n(u) \,  \widetilde{M}_a(u + v)  \, \mathcal{R}_{na}(v) \, \MMO_{n - 1 , a}(v)   \,  \mathcal{R}^*_{na}(v).
		\end{aligned}
	\end{align}
	The last expression coincides with the right hand side of the claimed relation~\eqref{UMTM2}.
\end{proof}

\subsection{Baxter operators} \label{sec:Qspace}

Baxter operators for $GL$ and $BC$ Toda systems are defined as
\begin{align} \label{Bax-op}
	\begin{aligned}
		& Q_n(v) = \lim_{x_a \to \infty} \mathcal{R}_{na}(v) \cdots \mathcal{R}_{1a}(v) \bigr|_{\phi(\bm{x}_n)}, \\[6pt]
		& \QQ_n(v) = \lim_{x_a \to \infty}  \mathcal{R}_{na}(v) \cdots \mathcal{R}_{1a}(v) \, \mathcal{K}_a(v) \, \mathcal{R}^*_{1a}(v) \cdots \mathcal{R}^*_{na}(v) \bigr|_{\phi(\bm{x}_n)},
	\end{aligned}
\end{align}
where the action of operator products is restricted to the functions of the variables $\bm{x}_n$. These operators are not well defined on the whole space of exponentially tempered functions $\mathcal{E}(\mathbb{R}^{n})$,  which we studied in previous sections. However, one can pick up a suitable subspace.

Fix $j \in \{1, \dots, n\}$. Denote by $\mathcal{E}_j(\mathbb{R}^n)$ the space of smooth functions $\phi(\bm{x}_n) \in C^{\infty}(\mathbb{R}^n)$ such that for every $\bm{k}_n \in \mathbb{N}_0^n$ there exist constants $a, b \geq 0$ and polynomial $P(x_j)$ satisfying
\begin{align}
	\bigl| \partial_{x_1}^{k_1} \cdots \partial_{x_n}^{k_n} \phi(\bm{x}_n) \bigr| \leq a e^{b(|x_1| + \ldots +|x_{j - 1}| + |x_{j + 1}| + \ldots + |x_n|)} \times
	\left\{ \begin{aligned}
		& \, P(x_j), && \; x_j \geq 0, \\[6pt]
		& \, e^{b |x_j|}, && \; x_j \leq 0.
	\end{aligned}  \right.
\end{align}
Clearly, it is a subspace of exponentially tempered functions $\mathcal{E}_j(\mathbb{R}^n) \subset \mathcal{E}(\mathbb{R}^n)$. 

Due to definition of Baxter operators~\eqref{Bax-op} it is natural to introduce the operator $\mathcal{R}'_{12}(v)$ that acts on functions $\phi(x_1, x_2)$ by the formula
\begin{align} \label{R'phi}
	\bigl[ \mathcal{R}'_{12}(v) \, \phi \bigr] (x_1) = \int_{\mathbb{R}} dy \; \exp \bigl( \imath v(x_1 - y) - e^{x_1 - y} \bigr) \, \phi(y, x_1).
\end{align}
Surely, it represents the limit of previously studied $\mathcal{R}$-operator
\begin{align}
	\bigl[ \mathcal{R}_{12}(v) \, \phi \bigr] (x_1, x_2) = \int_{\mathbb{R}} dy \; \exp \bigl( \imath v(x_1 - y) - e^{x_1 - y} - e^{y - x_2} \bigr) \, \phi(y, x_1)
\end{align}
as $x_2 \to \infty$, but as always one needs to make sure that the limit and integration can be interchanged. Further, it is not hard to see that \mbox{$\mathcal{R}'$-operator} is well defined on $\mathcal{E}_1(\mathbb{R}^2)$ under assumption $\Im v < 0$. As opposed to $\mathcal{R}$-operator, it is not well defined on the larger space $\mathcal{E}(\mathbb{R}^2)$.

\begin{proposition} \label{prop:E'space} \- 
	
	\begin{itemize}
		
		\item[(1)] The space $\mathcal{E}_1(\mathbb{R})$ is linear, invariant under differentiation and multiplication by exponents
		\begin{equation}
			\phi \in \mathcal{E}_1(\mathbb{R}), \quad \Re v \leq 0 \colon \qquad e^{v x} \phi \in \mathcal{E}_1(\mathbb{R}).
		\end{equation}
		
		\item[(2)] Let $\Im v < 0$ and $\phi \in \mathcal{E}_1(\mathbb{R}^2)$. Then $\bigl[ \mathcal{R}'_{12}(v) \, \phi \bigr] (x_1) \in \mathcal{E}(\mathbb{R})$, the corresponding integral is uniformly absolutely convergent in $x_1$ from compact subsets of $\mathbb{R}$, and the following limits hold true
		\begin{align}  \label{R-lim}
			& \lim_{x_2 \to \infty} \, \bigl[ \mathcal{R}_{12}(v) \, \phi \bigr] (x_1, x_2) = \bigl[ \mathcal{R}'_{12}(v) \, \phi \bigr] (x_1), \\[6pt]  \label{dR-lim}
			& \lim_{x_2 \to \infty} \, \partial_{x_2}  \bigl[ \mathcal{R}_{12}(v) \, \phi \bigr] (x_1, x_2) = 0.
		\end{align}
		
		\item[(3)] Let $\Im v < 0$ and $\phi(x_1) \in \mathcal{E}_1(\mathbb{R})$. Then $\bigl[ \mathcal{R}^*_{12}(v) \, \phi  \bigr] (x_1, x_2) \in \mathcal{E}_1(\mathbb{R}^2)$, and the corresponding integral is uniformly absolutely convergent in $x_1, x_2$ from compact subsets of $\mathbb{R}$.
		
	\end{itemize}
	
\end{proposition}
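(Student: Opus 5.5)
Part (1) I would check directly from the definition. Linearity is clear, and invariance under $\partial_x$ is built into the definition, since bounds are required for every derivative. For $e^{vx}\phi$ with $\Re v\le 0$, Leibniz gives $\partial^k(e^{vx}\phi)=\sum\binom{k}{m}v^{k-m}e^{vx}\phi^{(m)}$. For $x\ge0$ we have $|e^{vx}|\le 1$, so polynomial growth is preserved. For $x\le 0$ we have $|e^{vx}|\le e^{|\Re v||x|}$, which only enlarges the constant $b$.

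For part (2) the plan is to dominate the integrand uniformly in $x_2$ and apply dominated convergence. Write $\Im v=-\kappa$ with $\kappa>0$. Then
\[
\bigl|e^{\imath v(x_1-y)}\bigr|=e^{\kappa(x_1-y)}.
\]
Let $\phi\in\mathcal{E}_1(\mathbb{R}^2)$. Since $\phi(y,x_1)$ is polynomially bounded in $y$ for $y\ge0$ and bounded by $e^{b|y|}$ for $y\le0$, and since $e^{-e^{y-x_2}}\le1$, the $\mathcal{R}_{12}$ integrand is dominated by
\[
e^{\kappa(x_1-y)-e^{x_1-y}}\,a\,e^{b|x_1|}\times\begin{cases}P(y), & y\ge0,\\ e^{b|y|}, & y<0.\end{cases}
\]
This majorant is independent of $x_2$ and integrable. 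As $y\to+\infty$ the factor $e^{-\kappa y}$ beats the polynomial $P(y)$; this is exactly where $\Im v<0$ and the polynomial (rather than exponential) growth in the first variable are needed. As $y\to-\infty$ the double exponent $e^{-e^{x_1-y}}$ kills any exponential. The majorant is also locally uniform in $x_1$.

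Dominated convergence then gives \eqref{R-lim}, because $e^{-e^{y-x_2}}\to1$ pointwise. For membership of $\mathcal{R}'_{12}(v)\phi$ in $\mathcal{E}(\mathbb{R})$, I would differentiate under the integral. The $x_1$-derivatives produce factors $(\imath v-e^{x_1-y})^{\ell}$ times derivatives of $\phi$ in its second argument, and these derivatives obey the same type of bounds. The resulting integrals I would split at $y=0$. On $y\ge0$, expand $P(y)$ into monomials and compare with Lemma~\ref{lemma}-type integrals after the substitution $z=y-x_1$. On $y<0$ the bound is directly exponential. Both pieces give an exponential bound in $|x_1|$, and this also justifies the interchange.

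For \eqref{dR-lim} I would use the formal relation \eqref{d2R}, or simply differentiate the kernel: $\partial_{x_2}$ produces a factor $e^{y-x_2}e^{-e^{y-x_2}}$, which is bounded by $e^{-1}$ and tends to $0$ pointwise as $x_2\to\infty$. Dominated convergence with the same majorant then gives the limit $0$.

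For part (3) take $\phi(x_1)\in\mathcal{E}_1(\mathbb{R})$ and use \eqref{Rr-expl}. The integrand is
\[
e^{\imath v(y-x_1)-e^{y-x_1}-e^{-y-x_2}}\phi(y),
\]
and its modulus is bounded by $e^{-\kappa(y-x_1)}e^{-e^{y-x_1}}e^{-e^{-y-x_2}}|\phi(y)|$.
\begin{itemize}
\item As $y\to+\infty$, the double exponent $e^{-e^{y-x_1}}$ controls everything.
\item As $y\to-\infty$, the double exponent $e^{-e^{-y-x_2}}$ controls the $e^{b|y|}$ growth.
\end{itemize}
So the integral converges locally uniformly, and derivatives in $x_1,x_2$ give similar integrands, by the same Corollary~\ref{corlem2}-type argument used in Proposition~\ref{prop:Espace}. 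It remains to show the output has only polynomial growth as $x_1\to+\infty$ (the variable index $1$ in $\mathcal{E}_1(\mathbb{R}^2)$). The factor $e^{\kappa x_1}$ grows, so the $y$-integral must compensate. I would substitute $y=x_1+t$, which gives $e^{-\kappa t-e^t}e^{-e^{-x_1-t-x_2}}\phi(x_1+t)$. Since $x_1\ge0$, $\phi(x_1+t)$ is polynomial in $x_1+t$ for $t\ge -x_1$. The remaining region $t<-x_1$ is where I expect the main difficulty: there $e^{-\kappa t}$ is large and only $e^{-e^{t}}\approx1$ helps. The saving factor should be the double exponent $e^{-e^{-y-x_2}}$, which is tiny for $y\ll -x_2$. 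I would first try to split at $y=0$ and at $y=-|x_2|$, bounding each piece by a polynomial in $x_1$ times $e^{b|x_2|}$. The relevant bound on $y<0$ is
\[
e^{\kappa(x_1-y)}e^{-e^{x_1-y}}\le C e^{-\tfrac12e^{x_1-y}},
\]
since $\kappa z-e^z\le C-\tfrac12 e^z$. This shows the contribution there is actually decaying in $x_1$, which should close the argument.
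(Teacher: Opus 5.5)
Parts (1) and (2) are correct and essentially follow the paper. Your proof of \eqref{dR-lim} is in fact shorter than the paper's splitting of the $y$-integral: you bound the extra factor by $w e^{-w}\le e^{-1}$ and apply dominated convergence with the $\mathcal{R}'$-majorant.

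\textbf{Part (3) does not work as written.} The oscillating factor in the $\mathcal{R}^*_{12}(v)$ kernel is $e^{\imath v(y-x_1)}$. Its modulus is $e^{-\Im v\,(y-x_1)}=e^{\kappa(y-x_1)}$, not $e^{-\kappa(y-x_1)}$; you have carried over the modulus $e^{\kappa(x_1-y)}$ of the $\mathcal{R}_{12}$ kernel. The same mix-up appears in your closing estimate, which relies on a factor $e^{-e^{x_1-y}}$. That is the double exponent of $\mathcal{R}_{12}$. The kernel of $\mathcal{R}^*_{12}$ contains $e^{-e^{y-x_1}}$ instead, which is close to $1$ for $y<0\le x_1$ and gives no decay in $x_1$.

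With your sign the majorant genuinely grows. After $y=x_1+t$, already the piece $-x_1\le t\le 0$, which you treated as harmless, has a majorant of order $e^{\kappa x_1}$. On $t<-x_1$, the factor you invoke to rescue the estimate is not present. So the ``main difficulty'' you identify is an artifact of the sign error, and the step you use to resolve it is invalid.

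\textbf{The fix.} With the correct sign there is no difficulty.
\begin{itemize}
\item For $x_1\ge 0$ and $y\ge 0$: the substitution $y=x_1+t$ bounds the integral by $a\int_{\mathbb{R}}dt\,e^{\kappa t-e^{t}}\,|P(x_1+t)|$. This is a polynomial in $x_1$, since $\kappa>0$ controls $t\to-\infty$.
\item For $x_1\ge 0$ and $y<0$: simply $e^{\kappa(y-x_1)}\le 1$. The remaining integral satisfies $a\int_{-\infty}^{0}dy\,e^{b|y|-e^{-y-x_2}}\le C e^{b|x_2|}$ because of the double exponent $e^{-e^{-y-x_2}}$, and an $e^{b|x_2|}$ bound is permitted in $\mathcal{E}_1(\mathbb{R}^2)$.
\item For $x_1\le 0$: use $e^{\kappa(y-x_1)}\le e^{\kappa|x_1|}e^{\kappa y}$ and $e^{-e^{y-x_1}}\le e^{-e^{y}}$.
\end{itemize}
Since $\phi(y)$ does not depend on $x_1,x_2$, the derivatives $\partial_{x_1}^{k_1}\partial_{x_2}^{k_2}$ act only on the kernel. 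They produce factors $e^{j(y-x_1)}$ and $e^{i(-y-x_2)}$ with $j\le k_1$, $i\le k_2$, times polynomials in $v$. The first kind merely replaces $\kappa$ by $\kappa+j$. The second is absorbed via $e^{iw-e^{w}}\le C_i\,e^{-\frac12 e^{w}}$. Hence the same estimates give the $\mathcal{E}_1(\mathbb{R}^2)$ bounds for all derivatives.
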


\begin{proof}
	The first item is simple to check, so consider the second one. For $\phi \in \mathcal{E}_1(\mathbb{R}^2)$ derivatives of the integrand~\eqref{R'phi} are bounded as follows
	\begin{align}\label{dR'-int-b}
		\Bigl| \partial_{x_1}^k \, e^{  \imath v(x_1 - y) - e^{x_1 - y} } \, \phi(y, x_1) \Bigr| 
		\leq \sum_j a_j \, e^{ b_j |x_1| + c_j (x_1 - y) - e^{x_1 - y} } \times
		\left\{ 
		\begin{aligned}
			& \, P(y), && \; y \geq 0, \\[6pt]
			& \, e^{d | y|}, && \; y \leq 0,
		\end{aligned}
		\right.
	\end{align}
	where $P$ is polynomial, $a_j, b_j, d \geq 0$ and $c_j > 0$ (since $\Im v < 0$). From this let us prove that
	\begin{align} \label{dR'-b}
		\int_{\mathbb{R}} dy \; \Bigl| \partial_{x_1}^k \, e^{  \imath v(x_1 - y) - e^{x_1 - y} } \, \phi(y, x_1) \Bigr| \leq A \, e^{B |x_1| }
	\end{align}
	for some $A, B \geq 0$ and that convergence of the above integral is uniform in $x_1$ from compact subsets of $\mathbb{R}$. Uniformity allows to interchange derivatives and integration in the expression
	\begin{align}
		\partial_{x_1}^k \, \bigl[ \mathcal{R}'_{12}(v) \, \phi \bigr] (x_1) = \partial_{x_1}^k \, \int_{\mathbb{R}} dy \; e^{  \imath v(x_1 - y) - e^{x_1 - y} } \, \phi(y, x_1) ,
	\end{align}
	which in turn implies smoothness of $\bigl[ \mathcal{R}'_{12}(v) \, \phi \bigr] (x_1) $, while the specific bound~\eqref{dR'-b} shows that $ \mathcal{R}'_{12}(v) \, \phi \in \mathcal{E}_1(\mathbb{R})$.
	
	Divide the integral in~\eqref{dR'-b} into two parts
	\begin{align} \label{int-div}
		\int_{\mathbb{R}} dy = \int_{-\infty}^0 dy + \int_0^\infty dy
	\end{align} 
	and consider the first one. By~\eqref{dR'-int-b} it is sufficient to prove the estimate
	\begin{align}
		\int_{-\infty}^0 dy \; e^{- (c + d) y - e^{x_1 - y}} \leq A \, e^{B |x_1|}.
	\end{align}
	Using the fact that $e^{x_1 - y} \geq e^{-|x_1| - y}$ and changing integration variable $z = - |x_1| - y$ we have
	\begin{align}
		\begin{aligned}
			\int_{-\infty}^0 dy \; e^{- (c + d) y - e^{x_1 - y}} & \leq e^{(c + d) |x_1|} \int_{-|x_1|}^\infty dz \; e^{(c + d) z - e^{z}} \\[6pt]
			& \leq e^{(c + d)|x_1|} \int_{-|x_1|}^0 dz \; e^{(c + d) z} + e^{(c+ d)|x_1|} \int_0^\infty e^{(c+d)z - e^z}.
		\end{aligned}
	\end{align}
	The last sum of integrals is clearly bounded in the needed way.
	
	Now consider the second integral in~\eqref{int-div}. Due to~\eqref{dR'-int-b} it is sufficient to prove the estimate
	\begin{align}
		\int_0^\infty dy \; P(y) \, e^{- cy - e^{x_1 - y}} \leq A
	\end{align}
	where $c > 0$. The latter is obvious since $e^{-e^{x_1 - y}} \leq 1$. This concludes the proof of inequality~\eqref{dR'-b}. All above estimates of integrands are clearly uniform in $x_1$ from compact sets.
	
	Next let us prove the equality~\eqref{R-lim}, which states that
	\begin{align}
		\lim_{x_2 \to \infty} \, \int_{\mathbb{R}} dy \; e^{ \imath v(x_1 - y) - e^{x_1 - y} - e^{y - x_2} } \, \phi(y, x_1) = \int_{\mathbb{R}} dy \; e^{ \imath v(x_1 - y) - e^{x_1 - y}  } \, \phi(y, x_1) 
	\end{align}
	assuming $\phi \in \mathcal{E}_1(\mathbb{R}^2)$ and $\Im v <0$.
	To interchange limit and integration we use dominated convergence theorem. Namely, notice that
	\begin{align}
		\Bigl| e^{ \imath v(x_1 - y) - e^{x_1 - y} - e^{y - x_2} } \, \phi(y, x_1) \Bigr| \leq \Bigl| e^{ \imath v(x_1 - y) - e^{x_1 - y}  } \, \phi(y, x_1) \Bigr| .
	\end{align}
	Function from the right is integrable, since it represents the integrand of $\bigl[ \mathcal{R}'_{12}(v) \, \phi \bigr] (x_1)$. 
	
	Finally, let us prove the equality~\eqref{dR-lim}. Explicitly, it reads
	\begin{align} \label{dR-lim2}
		\lim_{x_2 \to \infty} \, \partial_{x_2}  \int_{\mathbb{R}} dy \; e^{ \imath v(x_1 - y) - e^{x_1 - y} - e^{y - x_2} } \, \phi(y, x_1) = 0.
	\end{align}
	Again, divide this integral into two parts with integration over $y \leq 0$ and $y \geq 0$. For $y \leq 0$ derivative of integrand is bounded as
	\begin{align} \label{d2-int-b}
			\Bigl| \partial_{x_2} \, e^{ \imath v(x_1 - y) - e^{x_1 - y} - e^{y - x_2} } \, \phi(y, x_1)  \Bigr| \leq a \, e^{- x_2 + b |x_1| + c|y| - e^{x_1 - y}} 
	\end{align}
	with some constants $a, b, c$. The right hand side is bounded uniformly in $x_2 \geq 0$, hence we are allowed to interchange the derivative and integration
	\begin{align}
		\partial_{x_2}  \int_{-\infty}^0 dy \; e^{ \imath v(x_1 - y) - e^{x_1 - y} - e^{y - x_2} } \, \phi(y, x_1) = \int_{-\infty}^0 dy \; \partial_{x_2}  \, e^{ \imath v(x_1 - y) - e^{x_1 - y} - e^{y - x_2} } \, \phi(y, x_1) .
	\end{align}
	Moreover, due to the bound~\eqref{d2-int-b}
	\begin{align}
		\biggl| \partial_{x_2}  \int_{-\infty}^0 dy \; e^{ \imath v(x_1 - y) - e^{x_1 - y} - e^{y - x_2} } \, \phi(y, x_1) \biggr| \leq a \, e^{-x_2 + b |x_1|} \, \int_{-\infty}^0 dy \; e^{  c|y| - e^{x_1 - y}} \; \to \; 0
	\end{align}
	as $x_2 \to \infty$ (since the last integral converges). This proves $y \leq 0$ part of the equality~\eqref{dR-lim2}. 
	
	In the remaining $y \geq 0$ part we have the bound
	\begin{align}\label{d2-int-b2}
		\Bigl| \partial_{x_2} \, e^{ \imath v(x_1 - y) - e^{x_1 - y} - e^{y - x_2} } \, \phi(y, x_1)  \Bigr| \leq P(y) \, e^{b|x_1| + (\Im v + 1) y - x_2 - e^{y - x_2}},
	\end{align}
	where $P$ is polynomial. The right hand side is bounded uniformly in $x_2 \geq 0$, so that again
	\begin{align}
		\partial_{x_2}  \int_{0}^\infty dy \; e^{ \imath v(x_1 - y) - e^{x_1 - y} - e^{y - x_2} } \, \phi(y, x_1) = \int_{0}^\infty dy \; \partial_{x_2}  \, e^{ \imath v(x_1 - y) - e^{x_1 - y} - e^{y - x_2} } \, \phi(y, x_1) .
	\end{align}
	Next split the last integral into two parts
	\begin{align}
		\int_0^\infty dy = \int_0^{x_2} dy + \int_{x_2}^\infty dy,
	\end{align}
	where we may assume $x_2 > 0$ since we are interested in the limit $x_2 \to \infty$. It is easy to prove that the first part tends to zero using~\eqref{d2-int-b2}. Namely,
	\begin{multline}
		\biggl| \int_{0}^{x_2} dy \; \partial_{x_2}  \, e^{ \imath v(x_1 - y) - e^{x_1 - y} - e^{y - x_2} } \, \phi(y, x_1) \biggr| \\[3pt]
		\leq e^{b|x_1| - x_2} \, \int_0^{x_2} dy \; P(y) \, e^{(\Im v + 1) y} \leq e^{b|x_1| - x_2} \, P(x_2) \, \frac{e^{(\Im v + 1)x_2} - 1}{\Im v + 1} \; \to \; 0
	\end{multline}
	as $x_2 \to \infty$ since $\Im v < 0$. In the second part we again use~\eqref{d2-int-b2} and change the integration variable $z = y - x_2$, so that
	\begin{align}
		\biggl| \int_{x_2}^{\infty} dy \; \partial_{x_2}  \, e^{ \imath v(x_1 - y) - e^{x_1 - y} - e^{y - x_2} } \, \phi(y, x_1) \biggr|  \leq e^{b | x_1| + \Im v \, x_2} \, \int_0^\infty dz \; P(z + x_2) \, e^{(\Im v + 1)z - e^{z}}.
	\end{align}
	Writing polynomial $P$ in terms of monomials $z^{\ell_j} x_2^{m_j}$ we see again that the last expression tends to zero as $x_2 \to \infty$. This concludes the proof of equality~\eqref{dR-lim2} and of the second item of this proposition.
	
	The proof of the third item of this proposition is absolutely analogous.
\end{proof}

By Propositions~\ref{prop:Espace},~\ref{prop:E'space} and Remark~\ref{rmk: embed} Baxter operators~\eqref{Bax-op} are well defined on the space $\mathcal{E}_n(\mathbb{R}^n)$. The following corollary is used in Section~\ref{sec:QB-comm}.

\begin{corollary} \label{cor:QU-lim}
	Let $\Im v \in (-g, 0)$ and $\phi(\bm{x}_n) \in \mathcal{E}_n(\mathbb{R}^n)$. Then 
	\begin{align}
		\QQ_n(v) \, \phi = \mathcal{R}'_{na}(v) \, \mathcal{R}_{n - 1 , a}(v) \cdots \mathcal{R}_{1a}(v) \, \mathcal{K}_a(v) \, \mathcal{R}^*_{1a}(v) \cdots \mathcal{R}^*_{na}(v)  \, \phi
	\end{align}
	and $\QQ_n(v) \, \phi \in \mathcal{E}(\mathbb{R}^n)$.
	Besides, we have
	\begin{align}
		\lim_{x_a \to \infty} \, \partial_{x_a}  \bigl[ \MMO_{n a}(v) \, \phi \bigr] (\bm{x}_n, x_a) = 0.
	\end{align}
\end{corollary}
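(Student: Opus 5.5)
The plan is to push $\phi\in\mathcal{E}_n(\mathbb{R}^n)$ through the factors of $\MMO_{na}(v)$ from right to left, checking at each stage which space the intermediate function lies in, and only at the very end take the limit $x_a\to\infty$.

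\emph{Step 1: the reflected $\mathcal{R}^*$-operators.} Regard $\phi$ as a function of $(\bm{x}_n,x_a)$ that does not depend on $x_a$.
\begin{itemize}
\item The first factor $\mathcal{R}^*_{na}(v)$ acts nontrivially on $(x_n,x_a)$. In the variable $x_n$ the function has polynomial growth as $x_n\to+\infty$. By item (3) of Proposition~\ref{prop:E'space}, together with Remark~\ref{rmk: embed} for the spectator variables $x_1,\dots,x_{n-1}$, the result lies in $\mathcal{E}_a(\mathbb{R}^{n+1})$. Here we need $\Im v<0$. The integral converges uniformly on compacts.
\item Each remaining factor $\mathcal{R}^*_{ja}(v)$, $j<n$, then acts on a function with polynomial growth in $x_a$.
\end{itemize}
I would check, by the same estimates as in Proposition~\ref{prop:E'space}, that polynomial growth in $x_a$ is preserved. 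From \eqref{Rr-expl}, the new value at $x_a$ is obtained by integrating against $e^{-e^{-y-x_a}}$, which is bounded by $1$. The $x_j$-dependence stays exponentially tempered, as in Proposition~\ref{prop:Espace}.

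\emph{Step 2: the reflection operator and the $\mathcal{R}_{ja}$ with $j<n$.}
\begin{itemize}
\item Apply $\mathcal{K}_a(v)$ with $\Im v>-g$. By Proposition~\ref{prop:Espace}(3), the image is in $\mathcal{E}$ in $x_a$.
\item Polynomial growth as $x_a\to+\infty$ is not obviously preserved by $\mathcal{K}_a(v)$. However, it is also not needed: the next operators $\mathcal{R}_{ja}$, $j=1,\dots,n-1$, only require $\mathcal{E}(\mathbb{R}^{n+1})$ (Proposition~\ref{prop:Espace}).
\item The point is where the variable $x_a$ goes. By \eqref{R}, $\mathcal{R}_{ja}$ feeds $x_j$ into the slot of $x_a$. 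So after $\mathcal{R}_{1a},\dots,\mathcal{R}_{n-1,a}$, the final function has its polynomially bounded dependence carried by the variable $x_{n}$-slot; the dependence on the last variable $x_a$ enters only through the factor $e^{-e^{y-x_a}}$ in the kernel of $\mathcal{R}_{na}$.
\end{itemize}
The crucial observation is that before $\mathcal{R}_{na}$ acts, the function must lie in $\mathcal{E}_n$ in the appropriate variable, namely the first argument $y$ of \eqref{R}. That first argument is $x_n$, and the $x_n$-dependence came only from $\mathcal{R}^*_{na}$ and the original $\phi$. Hence I need to track that polynomial growth in $x_n$ survives $\mathcal{K}_a$ and $\mathcal{R}_{ja}$, $j<n$. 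Since these operators do not touch $x_n$, this follows from the factorised bounds (Remark~\ref{rmk: embed}).

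\emph{Step 3: the limit.} Apply Proposition~\ref{prop:E'space}(2) to $\mathcal{R}_{na}(v)$ with $\Im v<0$. This gives two things:
\begin{itemize}
\item the limit $x_a\to\infty$ exists and equals $\mathcal{R}'_{na}(v)$ applied to the intermediate function, which lies in $\mathcal{E}(\mathbb{R}^n)$;
\item $\partial_{x_a}\bigl[\MMO_{na}(v)\phi\bigr]\to0$, by \eqref{dR-lim}.
\end{itemize}
This identifies $\QQ_n(v)\phi$ with the stated product of operators, and Proposition~\ref{prop:E'space}(2) also yields $\QQ_n(v)\phi\in\mathcal{E}(\mathbb{R}^n)$.

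The main obstacle is the bookkeeping in Step 1: showing that $\mathcal{R}^*$-operators preserve the one-sided polynomial growth in the prescribed variable, uniformly in the spectator variables. The second subtle point is Step 2, confirming that the $x_n$-slot fed into $\mathcal{R}_{na}$ still satisfies the polynomial growth hypothesis of Proposition~\ref{prop:E'space}(2). I would handle both by the splitting $y\le0$, $y\ge0$ used in that proof.
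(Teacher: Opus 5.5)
Your overall route is the paper's. Its proof of this corollary is just the chain: Proposition~\ref{prop:E'space}(3) for $\mathcal{R}^*_{na}(v)$, Proposition~\ref{prop:Espace} with Remark~\ref{rmk: embed} for the middle factors, and Proposition~\ref{prop:E'space}(2) for $\mathcal{R}_{na}(v)$. However, Step~1 misreads item (3), asserts something false, and never establishes the fact that Step~2 relies on.

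In $\mathcal{R}^*_{12}(v)$ it is the \emph{first} slot of the input that is integrated over. Item (3) puts the output in $\mathcal{E}_1(\mathbb{R}^2)$, i.e.\ polynomially bounded in the first variable. For $\mathcal{R}^*_{na}(v)$ that variable is $x_n$, so $\mathcal{R}^*_{na}(v)\phi\in\mathcal{E}_n(\mathbb{R}^{n+1})$, not $\mathcal{E}_a(\mathbb{R}^{n+1})$.

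Your claims of polynomial growth in $x_a$, and of its preservation by the $\mathcal{R}^*_{ja}(v)$ via $e^{-e^{-y-x_a}}\le 1$, are false in general. After that bound you are left with $\int dy\, e^{-\Im v\,(y-x_j)-e^{y-x_j}}\,|\psi(y,-x_j)|$. This diverges at $y\to-\infty$ as soon as the input grows faster than $e^{|\Im v|\,|y|}$ there, and $\mathcal{E}_n$ permits any exponential growth as $x_n\to-\infty$. For instance, take $n=1$, $\phi(x_1)=1+e^{-bx_1}$ and $v=-\imath c$ with $0<c<\min(g,b)$. The $e^{-by}$-part of $[\mathcal{R}^*_{1a}(v)\phi](x_1,x_a)$ is at least $e^{-cx_1-e^{-x_1}-1}\int_{-x_a}^{0}e^{(b-c)|y|}\,dy$, which grows like $e^{(b-c)x_a}$.

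Fortunately, growth in $x_a$ is never needed: even $\mathcal{K}_a(v)$ evaluates its input only at $x_a=-y\le-\ln\beta$. So the ``main obstacle'' you identify is a non-issue. What you do need is that the function entering $\mathcal{R}_{na}(v)$ lies in $\mathcal{E}_1(\mathbb{R}^2)$ with respect to $(x_n,x_a)$, with spectators $x_1,\dots,x_{n-1}$. This is exactly item (3) read correctly, combined with your own observation that $\mathcal{R}^*_{ja}(v)$, $\mathcal{K}_a(v)$ and $\mathcal{R}_{ja}(v)$ for $j<n$ do not act on $x_n$. Hence the factorised polynomial bound in $x_n$, and the $\mathcal{E}$-bounds in the other variables, survive these factors (Proposition~\ref{prop:Espace}, Remark~\ref{rmk: embed}).

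Replace Step~1 by this and drop the $x_a$-tracking. Step~3 then goes through as you wrote it. Item (2) gives the identification with $\mathcal{R}'_{na}(v)$, membership in $\mathcal{E}(\mathbb{R}^n)$, and $\partial_{x_a}[\MMO_{na}(v)\phi]\to 0$.
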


\begin{remark}
	Baxter operators for $GL$ Toda system can be rewritten in a similar way 
	\begin{align}
		Q_n(v)  \, \phi = \mathcal{R}'_{na}(v) \, \mathcal{R}_{n - 1 , a}(v) \cdots \mathcal{R}_{1a}(v) \, \phi \in \mathcal{E}(\mathbb{R}^n)
	\end{align}
	acting on $ \phi(\bm{x}_n) \in \mathcal{E}_n(\mathbb{R}^n)$ and assuming $\Im v < 0$.
\end{remark}

By definition the generating function of $BC$ Toda Hamiltonians equals
\begin{align}
	\B_n(u) = 
	\begin{pmatrix}
		u + \imath \partial_{x_n} & e^{-x_n}
	\end{pmatrix}
	\, \T_{n - 1}(u) \, 
	\begin{pmatrix}
		- e^{-x_n} \\[3pt]
		- u + \imath \partial_{x_n}
	\end{pmatrix},
\end{align}
which means that it doesn't contain exponents $e^{x_n}$ (as opposed to other elements of $\T_n(u)$). Then due to Propositions~\ref{prop:Espace},~\ref{prop:E'space} we have the following statement, which is also used in Section~\ref{sec:QB-comm}.

\begin{corollary} \label{cor:BE'space}
	The operator $\B_n(u)$ acts invariantly on the space $\mathcal{E}_n(\mathbb{R}^n)$.
\end{corollary}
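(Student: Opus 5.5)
The statement to prove is Corollary~\ref{cor:BE'space}: $\B_n(u)$ acts invariantly on $\mathcal{E}_n(\mathbb{R}^n)$. I would prove it by expanding $\B_n(u)$ through the displayed formula
\begin{align*}
	\B_n(u) = \begin{pmatrix} u + \imath \partial_{x_n} & e^{-x_n} \end{pmatrix} \T_{n-1}(u) \begin{pmatrix} -e^{-x_n} \\ -u + \imath\partial_{x_n} \end{pmatrix}.
\end{align*}
The entries of $\T_{n-1}(u)$ are differential operators in $x_1,\dots,x_{n-1}$ only. By~\eqref{Tn-bc} they are built from $L_j(u)$, $K(u)$ and $\widetilde{L}_j(u)$ with $j \le n-1$, so they are polynomial in $u$, in $\partial_{x_j}$ and in $e^{\pm x_j}$ for $j\le n-1$. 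Consequently $\B_n(u)$ is a finite sum of terms
\begin{align*}
	c \; e^{m x_n}\, \partial_{x_n}^{k}\; e^{\bm{a}\cdot \bm{x}_{n-1}} \,\partial_{\bm{x}_{n-1}}^{\bm{l}},
\end{align*}
with constant $c$, vector $\bm{a}\in\mathbb{Z}^{n-1}$, $k\in\{0,1,2\}$, and $m\in\{0,-1,-2\}$. The sign of $m$ is the key point, and it follows from the row and column: only $e^{-x_n}$ and $\partial_{x_n}$ appear in them, never $e^{x_n}$. Commuting $\partial_{x_n}$ past $e^{-x_n}$ only produces further factors $e^{-x_n}$, so every exponent in $x_n$ is non-positive. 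I would check this concretely by multiplying out the four products (row entry)$\times(\T_{n-1})_{rs}\times$(column entry).

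It then suffices to show that each elementary operation preserves $\mathcal{E}_n(\mathbb{R}^n)$.

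\textbf{Derivatives.} Invariance under $\partial_{x_j}$ for any $j$ is immediate, because the defining bounds are imposed on all derivatives.

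\textbf{Exponentials in $x_j$, $j\le n-1$.} Multiplication by $e^{\pm x_j}$ keeps a bound of the form $a\,e^{b(|x_1|+\dots+|x_{n-1}|)}\times(\cdots)$, with $b$ increased by one. By the Leibniz rule the same holds for all derivatives of the product.

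\textbf{Exponentials in $x_n$ with $m\le0$.} This is the analogue of item (1) of Proposition~\ref{prop:E'space}, now in the variable $x_n$. For $x_n\ge0$ we have $e^{m x_n}\le1$, so the polynomial bound $P(x_n)$ survives. For $x_n\le0$ we have $e^{m x_n}=e^{|m||x_n|}$, which is absorbed by enlarging $b$. Derivatives are handled by Leibniz, since $\partial_{x_n}^r e^{mx_n}=m^r e^{mx_n}$.

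Composing these finitely many steps and using linearity of the space gives the claim. The only real obstacle is the bookkeeping that confirms no positive power $e^{x_n}$ appears. That follows directly from the explicit factorized form above, so I expect no genuine difficulty; the rest is routine.
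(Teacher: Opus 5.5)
Your proof is correct and takes essentially the same route as the paper. The paper also reads off from the displayed factorization that $\B_n(u)$ contains no factor $e^{x_n}$, and then invokes invariance of the space under differentiation, under multiplication by exponentials in $x_1,\dots,x_{n-1}$, and under multiplication by $e^{v x_n}$ with $\Re v \le 0$ (Propositions~\ref{prop:Espace} and~\ref{prop:E'space}); you simply write out the expansion into terms $e^{m x_n}\partial_{x_n}^k(\cdots)$ with $m\in\{0,-1,-2\}$ and the Leibniz-rule bookkeeping that the paper leaves implicit.
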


At last, let us remark that the actions of monodromy and Baxter operators on $BC$ Toda eigenfunctions $\Psi_{\bm{\lambda}_n}(\bm{x}_n)$ are well defined due to Proposition~\ref{prop:bc-bound} and the above statements. Namely, since the function
\begin{align}
	\exp \bigl([k x - e^{x}] \, \theta(x) \bigr) \leq C(k)
\end{align}
is bounded uniformly in $x$, from Proposition~\ref{prop:bc-bound} we deduce the following corollary.
\begin{corollary} \label{cor:Psi-E}
	Let $\bm{\lambda}_n \in \mathbb{R}^n$. Then 
	\begin{align}
		\Psi_{\bm{\lambda}_n}(\bm{x}_n) \; \in \; \mathcal{E}_n(\mathbb{R}^n)  \; \subset \; \mathcal{E}(\mathbb{R}^n).
	\end{align}
\end{corollary}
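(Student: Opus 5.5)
We need to show that $\Psi_{\bm{\lambda}_n}\in\mathcal{E}_n(\mathbb{R}^n)$. Smoothness is already given by Proposition~\ref{prop:bc-bound}, so the task reduces to bounding every derivative $\partial^{\bm{k}_n}_{\bm{x}_n}\Psi_{\bm{\lambda}_n}$. The bound must be exponential in $|x_1|,\dots,|x_{n-1}|$ and, in the variable $x_n$, exponential for $x_n\le 0$ and polynomial for $x_n\ge 0$. We start from the estimate~\eqref{dPsi-b}. For fixed real $\bm{\lambda}_n$ the factor $C(\bm{\lambda}_n)$ and the dependence on $|\lambda_n|$ are constants, so the prefactor is a polynomial $P(|x_1|,\dots,|x_n|)$. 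It remains to control the exponential factor.

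The boundary term is
\[
\exp\bigl([k_1(\ln\beta-x_1)-\beta e^{-x_1}]\,\theta(\ln\beta-x_1)\bigr).
\]
Put $t=-x_1$ and $s=e^{t}\beta$, so that $s\ge 1$ on the support of $\theta$. The exponent becomes $k_1\ln s-s$, and $s^{k_1}e^{-s}$ is bounded uniformly. Hence this term is at most $C(k_1)$. This is the same observation the paper makes: $\exp([kx-e^x]\theta(x))\le C(k)$, applied with $x$ replaced by $\ln(\beta^{-1})-x_1$ after rescaling.

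For each $j=1,\dots,n-1$ the neighbour term is
\[
\exp\Bigl(\Bigl[(k_j+k_{j+1})\tfrac{x_j-x_{j+1}}{2}-e^{\frac{x_j-x_{j+1}}{2}}\Bigr]\theta(x_j-x_{j+1})\Bigr).
\]
Applying the same bound with $x=(x_j-x_{j+1})/2$ and $k=k_j+k_{j+1}$ shows it is at most $C(k_j+k_{j+1})$. The whole exponential factor is therefore bounded by a constant depending only on $\bm{k}_n$.

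This gives
\[
\bigl|\partial^{\bm{k}_n}_{\bm{x}_n}\Psi_{\bm{\lambda}_n}(\bm{x}_n)\bigr|\le C'\,P(|x_1|,\dots,|x_n|).
\]
A polynomial in $|x_1|,\dots,|x_{n-1}|$ is bounded by $a\,e^{b(|x_1|+\dots+|x_{n-1}|)}$. Expanding $P$ into monomials, the $x_n$ factor is a polynomial in $x_n$ for $x_n\ge0$ and is bounded by $e^{b|x_n|}$ for $x_n\le0$. This is exactly the defining estimate of $\mathcal{E}_n(\mathbb{R}^n)$, with the chosen coordinate being the last one. The inclusion $\mathcal{E}_n\subset\mathcal{E}$ is immediate from the definitions.

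There is no real obstacle. The only care needed is that the constants depend on $\bm{k}_n$ and $\bm{\lambda}_n$ but not on $\bm{x}_n$, which the uniform boundedness of $x\mapsto \exp([kx-e^x]\theta(x))$ guarantees.
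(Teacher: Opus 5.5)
Your proposal is correct and follows the paper's own argument: using the uniform bound $\exp\bigl([kx-e^{x}]\,\theta(x)\bigr)\le C(k)$, each exponential factor in~\eqref{dPsi-b} is at most a constant, which leaves a polynomial bound that satisfies the definition of $\mathcal{E}_n(\mathbb{R}^n)$. One harmless slip: the substitution in the boundary term is $x=\ln\beta-x_1$, not $\ln(\beta^{-1})-x_1$, as your own computation with $s=\beta e^{-x_1}$ already shows.
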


\begin{remark}
	Using Corollary~\ref{corlem2} one can prove the same statement for $GL$ Toda eigenfunctions, that is $\Phi_{\bm{\lambda}_n}(\bm{x}_n) \in \mathcal{E}_n(\mathbb{R}^n)$ for $\bm{\lambda}_n \in \mathbb{R}^n$.
\end{remark}

\section*{Acknowledgments}
We are grateful to the organizers and participants of \href{https://mathphysschool.github.io/2020/}{Spring mathematics and physics school} (2020), which sparked our interest in the Toda chain, as well as to the participants of the seminars on the Toda chain at PDMI, especially M.~Minin and I.~Burenev, where many of the ideas of this and the subsequent paper~\cite{BDK} first emerged. We also thank S.~Kharchev, P.~Antonenko, and P.~Valinevich for helpful discussions. S.~Derkachov and S.~Khoroshkin thank BIMSA for its hospitality. A big part of this work was done during their visit to BIMSA. The work of S. Derkachov (Sections 2, 3) was supported by RNF grant 23-11-00311.
The work of S. Khoroshkin (Sections 5, 6) has been partially funded within the framework of the HSE University Basic Research Program. 

\appendix

\section{Commutativity of Baxter operators} \label{sec:YB-refl-op}

In the second part of this work~\cite{BDK} we prove commutativity of Baxter operators
\begin{align}
	\QQ_n(\lambda) \, \QQ_n(\rho) = \QQ_n(\rho) \, \QQ_n(\lambda)
\end{align}
using diagram technique. However, there is another approach based on Yang--Baxter and reflection equations, which has been used in the recent works on spin chains~\cite{ADV, ADV2}. Below we briefly describe (omitting details) how it works for both $GL$ and $BC$ Toda chains.

\subsection{$GL$ Toda chain}

By definition~\eqref{Q-gl-def}, the Baxter operator is the degeneration of the monodromy operator
\begin{align} \label{Q-gl-def-2}
	Q_n(\lambda)  = \lim_{x_{a} \to \infty} U_{na}(\lambda) \bigr|_{\phi(\bm{x}_n)}, \qquad  U_{na}(\lambda) = \mathcal{R}_{na}(\lambda) \cdots \mathcal{R}_{1a}(\lambda).
\end{align}
The latter is constructed from integral operators $\mathcal{R}_{12}(v)$ intertwining Toda and DST Lax matrices
\begin{align} \label{RLM-3}
	\mathcal{R}_{12}(v) \, L_1(u) \, M_2(u - v) = M_2(u - v) \, L_1(u) \, \mathcal{R}_{12}(v).
\end{align}
As shown in \cite{KSS}, there also exists operator that intertwines two DST Lax matrices
\begin{align}\label{RMM}
	\widetilde{\mathcal{R}}_{12}(v) \, M_1(u) \, M_2(u - v) = M_2(u - v) \, M_1(u) \, \widetilde{\mathcal{R}}_{12}(v).
\end{align}
Explicitly, its action on the function $\phi(x_1, x_2)$ is given by
\begin{align}\label{Rh}
	\bigl[ \widetilde{\mathcal{R}}_{12} (v)\, \phi \bigr](x_1, x_2) = \int_{-\infty}^{x_2} dy \;\exp \bigl( \imath v(x_1 - y) - e^{x_1 - y} + e^{x_1 - x_2} \bigr) \, \bigl(1 - e^{y- x_2} \bigr)^{\imath v - 1} \,  \phi(y, x_1)
\end{align}
where singularity at $y = x_2$ is integrable if $\Re (\imath v) > 0$. 

Now consider the permutation of Lax matrices
\begin{align}
	L_1(w) \, M_2(w - v) \, M_3(w - u ) \quad \longrightarrow \quad M_3(w - u) \, M_2(w - v) \, L_1(w).
\end{align}
It can be realized in two distinct ways using the above $\mathcal{R}$-operators, which suggests the following Yang--Baxter relation between them
\begin{align}\label{RhRR}
	\widetilde{\mathcal{R}}_{23}(u - v) \, \mathcal{R}_{13}(u) \, \mathcal{R}_{12}(v) = \mathcal{R}_{12}(v) \, \mathcal{R}_{13}(u) \, \widetilde{\mathcal{R}}_{23}(u - v).
\end{align}
Its proof is straightforward and boils down to the \textit{star-triangle identity} from our subsequent paper~\cite{BDK}. By standard arguments, iterating the last formula many times we obtain the relation with monodromy operators
\begin{align} \label{RUU}
	\widetilde{\mathcal{R}}_{ab}(u - v) \, \MO_{nb}(u) \, \MO_{na}(v) = \MO_{na}(v) \, \MO_{nb}(u) \, \widetilde{\mathcal{R}}_{a b}(u - v).
\end{align}
Finally, acting from both sides on the function independent of $x_a, x_b$ and taking the limits $x_a, x_b \to \infty$ we obtain the commutativity of $GL$ Baxter operators
\begin{align}
	Q_n(u) \, Q_n(v) = Q_n(v) \, Q_n(u).
\end{align}

\subsection{$BC$ Toda chain}

By definition~\eqref{Q-def}, the Baxter operator is the degeneration of the monodromy operator
\begin{align} \label{Q-def-2}
	\QQ_n(\lambda) = \lim_{x_{a} \to \infty} \MMO_{n a}(\lambda) \bigr|_{\phi(\bm{x}_n)}, \qquad \MMO_{n a}(\lambda) = \mathcal{R}_{n a}(\lambda) \cdots \mathcal{R}_{1 a}(\lambda) \, \mathcal{K}_a(\lambda) \, \mathcal{R}^*_{1 a}(\lambda) \cdots \mathcal{R}^*_{n a}(\lambda).
\end{align}
The latter consists of integral operators $\mathcal{R}_{12}(v)$ intertwining Toda and DST Lax matrices~\eqref{RLM-3}, their reflected versions $\mathcal{R}^*_{12}(v)$ intertwining Toda and transposed DST Lax matrices
\begin{align}
	\mathcal{R}^*_{12}(v) \, M^t_2(-u-v) \, L_1(u) = L_1(u ) \, M^t_2(-u-v) \, \mathcal{R}^*_{12}(v),
\end{align}
and the operator $\mathcal{K}_a(v)$ satisfying reflection equation
\begin{align}\label{KMKM3}
	\mathcal{K}_a(v) \, M_a^t(-u-v) \, K(u) \, \sigma_2 M_a(u - v) \sigma_2 = M_a(u - v) \, K(u) \, \sigma_2 M_a^t(-u -v) \sigma_2 \; \mathcal{K}_a(v).
\end{align}
To prove commutativity of Baxter operators we need some relation with two monodromy operators, similar to~\eqref{RUU}.

For this we consider the operator that intertwines DST and transposed DST Lax matrices 
\begin{align}\label{RMtM}
	\widehat{\mathcal{R}}_{12}(v) \, M_2^t(-u) \, M_1(u - v) = M_1(u - v) \, M_2^t(-u) \, \widehat{\mathcal{R}}_{12}(v).
\end{align}
This equation can be solved analogously to the one for $\mathcal{K}$-operator (see Section~\ref{sec:refl-op}), so that at the end one finds
\begin{multline} \label{Rhat}
	\bigl[ \widehat{\mathcal{R}}_{12}(v) \, \phi \bigr] (x_1, x_2) = \int_\mathbb{R} dy_2 \, \int_{y_2}^\infty dy_1 \, \exp \bigl( \imath v (y_2 - y_1) - e^{-x_1 - y_2} - e^{y_1- x_2 } \bigr) \\[2pt]
	\times (1 - e^{y_2 - y_1})^{-\imath v - 1} \, \phi(-y_1, y_2).
\end{multline}
Now consider the permutation of matrices
\begin{multline}
	M_2^t(-w-u)\, M_1^t(-w - v) \, K(w) \, \sigma_2 M_1(w - v) \, M_2(w - u) \sigma_2 \\[6pt]
	\longrightarrow \quad  M_2( w - u) \, M_1(w - v) \, K(w) \, \sigma_2 M_1^t(- w - v) \, M_2^t(-w - u) \sigma_2.
\end{multline}
It can be accomplished in two different ways using two DST intertwiners~\eqref{RMM},~\eqref{RMtM} and reflection operator~\eqref{KMKM3}, which suggests the following reflection equation between them
\begin{align}\label{RKRK-2}
	\widetilde{\mathcal{R}}_{12}(u - v) \, \mathcal{K}_2(u) \, \widehat{\mathcal{R}}_{12} (u + v) \, \mathcal{K}_1(v) = \mathcal{K}_1(v)  \, \widehat{\mathcal{R}}_{12} (u + v)  \, \mathcal{K}_2(u) \, \widetilde{\mathcal{R}}_{12}(u - v).
\end{align}
Its proof is straightforward and boils down to the integral identity which we call \textit{flip relation} in our subsequent paper~\cite{BDK}. 

Besides, using explicit formulas for $\mathcal{R}$-operators one can prove another Yang--Baxter-type equation
\begin{align}\label{RRrR}
	\mathcal{R}^*_{13}(u) \, \widehat{\mathcal{R}}_{23}(u + v) \, \mathcal{R}_{12} (v) = \mathcal{R}_{12} (v) \, \widehat{\mathcal{R}}_{23}(u + v) \, \mathcal{R}^*_{13}(u).
\end{align}
This identity can be also guessed from permutation of Toda and DST Lax matrices.

Hence, in a standard way~\cite{Skl} one can combine relations~\eqref{RhRR},~\eqref{RKRK-2} and~\eqref{RRrR} to derive reflection equation for the monodromy operator \eqref{Q-def-2}
\begin{align}
	\widetilde{\mathcal{R}}_{ab}(u - v) \, \MMO_{nb}(u) \, \widehat{\mathcal{R}}_{ab} (u + v) \, \MMO_{na}(v) = \MMO_{na}(v) \, \widehat{\mathcal{R}}_{ab} (u + v)  \, \MMO_{nb}(u) \, \widetilde{\mathcal{R}}_{ab}(u - v).
\end{align}
At last, acting on a function independent of $x_a, x_b$ and taking the limits $x_a, x_b \to \infty$ one arrives (after some manipulations with integrals) at the commutativity of Baxter operators~\eqref{Q-def-2}.

\section{Relation to XXX spin chain} 
\label{sec:YB-ref}

In this section we systematically itemize various
$s\ell_2$-invariant solutions of the Yang--Baxter and reflection equations.

The simplest solution of the Yang--Baxter equation is the
Yang's $R$-matrix. On the next level one obtains a more complicated
solution --- the $\mathsf{L}$-operator associated with the XXX spin chain.
The Yang--Baxter equation on this level is the Yangian algebra
relation involving the Yang's $R$-matrix and $\mathsf{L}$-operators.
Using various reductions of the $\mathsf{L}$-operator we
obtain $L^{\pm}$-operators, which are other solutions of
the Yangian algebra relation involving the Yang's $R$-matrix.
We then show that the $M$-operator \eqref{DST-M}
essentially coincides with the $L^{+}$-operator written in a different representation.
Finally, using an appropriate reduction of the $M$-operator we
obtain the Toda $L$-operator \eqref{L-toda} and show that it
is a solution of the same Yangian algebra relation
involving the Yang's $R$-matrix.
Thus we show that various solutions of the Yangian algebra relation
involving the Yang's $R$-matrix can be obtained from the $\mathsf{L}$-operator.

On the next level one obtains the Yang--Baxter relation involving
$\mathsf{L}$-operators and the general $\R$-operator that appears in the study of the XXX spin chain.
The $\R$-operator is realized as an integral operator acting on
functions of two variables.
Then, in full analogy with the previous level, we show that
the operators $\widetilde{\mathcal{R}}$, $\widehat{\mathcal{R}}$
from the previous section, needed for the proof of the commutativity
of the $Q$-operators, and the operators $\mathcal{R}$, $\mathcal{R}^*$
needed for the construction of the eigenfunctions can be obtained
by appropriate reductions of the $\R$-operator.

Next we turn to the reflection equation.
The general $s\ell_2$-invariant solution of the reflection equation ---
the integral $\K$-operator for the XXX spin chain --- is obtained in \cite{FGK, ABDK}.
Using appropriate reductions we derive from the integral $\K$-operator
the representation \eqref{K-0} for the reflection operator $\mathcal{K}$.
This provides a second independent derivation and an additional cross-check
of the representation \eqref{K-0}.

It will be interesting to see whether the considered reflection operators can be obtained from some universal $K$-operator, see~\cite{AV} and references therein.

\begin{remark}
	As opposed to the previous sections, many arguments in this part are made on the physical level of rigour. It is an open problem to rigorously analyse the most general $\R$- and $\K$-operators (associated with the XXX spin chain), see~\cite{N} for some recent progress in this direction.
\end{remark}

\subsection{Yang--Baxter equation}

The $R$-operator is the general solution of the Yang--Baxter equation with symmetry algebra $s\ell_2$
\begin{align} \label{YBgen}
	\R_{12}(u-v)\,\R_{13}(u)\,\R_{23}(v) = 
	\R_{23}(v)\,\R_{13}(u)\,\R_{12}(u-v) \,.
\end{align}
In \eqref{YBgen} we have the operator relation in the tensor product of three representations $V_1\otimes V_2\otimes V_3$, each operator $\R_{ij}$ acts nontrivially in spaces $V_i$ and $V_j$ and as the identity operator in the third space.

In the simplest case of two-dimensional representations $V_1=V_2=\mathbb{C}^2$ the operator $\R_{12}(u)$ degenerates into the finite-dimensional solution of Yang--Baxter equation~--- the Yang's $R$-matrix acting 
in the tensor product $\mathbb{C}^2\otimes \mathbb{C}^2$
\begin{equation} \nonumber
	\R_{12}(u) = u  + P \,, \qquad P\,a \otimes b = b \otimes a \,.
\end{equation}

\subsubsection{Yangian algebra}

The more complicated solution of Yang--Baxter 
relation \eqref{YBgen} --- the $\mathsf{L}$-operator for the XXX spin chain --- appears in the case when $V_1 = V_2 = \mathbb{C}^2$ and $V_3$ is the space of arbitrary representation of the algebra  
$s\ell_2$ with generators $S^{+}\,,S^{-}$ and $S$.
In this case $\R_{12}(u-v)$ is given by the Yang's $R$-matrix 
and the general $R$-operators $\R_{13}(u)$ and $\R_{23}(v)$ are reduced 
up to the shift of spectral parameters to the $\mathsf{L}$-operators
\begin{align*}
	\textstyle \left(u-v  + P\right)\, \bigl(\mathsf{L}(u+\frac{1}{2})\otimes \bm{1}\bigr)\, 
	\bigl(\bm{1} \otimes \mathsf{L}(v+\frac{1}{2})\bigr)
	= \bigl(\bm{1} \otimes \mathsf{L}(v+\frac{1}{2})\bigr)\,
	\bigl(\mathsf{L}(u+\frac{1}{2})\otimes \bm{1}\bigr)\, \left(u-v  + P\right) \,,
\end{align*}
where the explicit expression for the $\mathsf{L}$-operator reads
\begin{align} \label{L}
	\mathsf{L}(u) = u \left( \begin{array}{cc}
		1 & 0 \\
		0 & 1 \end{array} \right) + \left( \begin{array}{cc}
		S & S_{-} \\
		S_{+} & - S \end{array} \right) = 
	\left( \begin{array}{cc}
		u + S & S_{-} \\
		S_{+} & u - S \end{array} \right) .
\end{align}
The $\mathsf{L}$-operator has linear dependence on spectral parameter $u$ 
and the coefficient in front of $u$ is unit matrix.  
Note that after the shifts $u\to u-\frac{1}{2}$ and $v\to v-\frac{1}{2}$
one obtains the previous relation in a standard form 
\begin{align}
	\label{FCR}
	\textstyle \left(u-v  + P\right)\, \bigl(\mathsf{L}(u)\otimes \bm{1}\bigr)\, 
	\bigl(\bm{1} \otimes \mathsf{L}(v)\bigr)
	= \bigl(\bm{1} \otimes \mathsf{L}(v)\bigr)\,
	\bigl(\mathsf{L}(u)\otimes \bm{1}\bigr)\, \left(u-v  + P\right) \,.
\end{align}
The $s\ell_2$ Lie algebra generators appear in
the evaluation representations of the Yangian algebra, which is
generated by the matrix elements of the $\mathsf{L}$-operator with
the fundamental relation \eqref{FCR} being the
algebra relations  
$[S_{+}\,,S_{-}] = 2 S$, $[S\,,S_{\pm}] = \pm S_{\pm}$. 
Apart from the standard representations considered above, there
are other Yangian representations, described by
$L^{\pm}$-operators obeying \eqref{FCR} but different from
\eqref{L}. These $L^{\pm}$-operators can be obtained from the ordinary
$\mathsf{L}$-operators in the appropriate limits.
The $L^{\pm}$-operators have linear dependence on spectral parameter $u$ 
but the coefficients in front of $u$ are one-dimensional projectors.  

We present the necessary  calculations for completing  the picture. 
The operators $L^{\pm}$ appear in the dimer self-trapping (DST)
chain model~\cite{KSS,Skl2}. 
Let us realize $s\ell_2$ generators in a standard way 
\begin{align}
	\label{generators}
	S = z\partial_z + s \,, \qquad S_- = -\partial_z \,, \qquad 
	S_+ = z^2\partial_z+2sz \,.
\end{align}
In this representation the $\mathsf{L}$-operator can be factorized in two ways
\begin{multline}
	\mathsf{L}(u) = \mathsf{L}(u_1,u_2) = 
	\left(\begin{array}{cc} u+s+z \partial & -\partial\\
		z^2 \partial + 2s z & u-s -z\partial
	\end{array}\right) \\ 
	= \left(\begin{array}{cc}
		1 & 0 \\
		z& u_2\end{array}\right)L^{+}(u_1) =
	L^{-}(u_2)\left(\begin{array}{cc}
		u_1 & 0 \\
		-z& 1\end{array}\right)
	\label{Lfactor'},
\end{multline}
where $u_1 = u+s-1 \,, u_2=u-s$ and 
\begin{align}\label{Lp}
	&L^+(u) = u \left(\begin{array}{cc}
		1 & 0 \\
		0 & 0\end{array}\right) + \left(\begin{array}{cc}
		\dd z & -\dd \\
		-z& 1\end{array}\right) = \left(\begin{array}{cc}
		u + \dd z & -\dd \\
		-z& 1\end{array}\right) \ ; 
	\\ \label{Lm}
	&L^-(u) = u \left(\begin{array}{cc}
		0 & 0 \\
		0 & 1\end{array}\right) + \left(\begin{array}{cc}
		1 & -\dd \\
		z& -z\dd\end{array}\right) =  \left(\begin{array}{cc}
		1 & -\dd \\
		z& u-z\dd\end{array}\right).
\end{align}
Note relation 
\begin{align}\label{Lpm}
	L^+(u)L^-(v) = \left(\begin{array}{cc}
		u + \dd z & -\dd \\
		-z& 1\end{array}\right)\left(\begin{array}{cc}
		1 & -\dd \\
		z& v-z\dd\end{array}\right) = \left(\begin{array}{cc}
		u  & -(u+v)\dd \\
		0& v\end{array}\right)  
\end{align} 
and as consequence
\begin{align}\label{Lpm-2}
	L^+(u)L^-(-u) = u \sigma_3 \ \ \ ;\ \ \ 
	\sigma_3 = \left(\begin{array}{cc} 
		1 & 0 \\
		0 & -1\end{array}\right)\,.
\end{align}
Using factorization it is easy to derive the leading asymptotics of 
two-parametric $\mathsf{L}$-operator in the limit when one 
parameter is fixed and second goes to infinity:  
\begin{equation} \label{asymptL}
	\mathsf{L}(u_1,u_2) \xrightarrow{u_{1}\to\infty}
	L^{-}(u_2)\,\Lambda_{-}(u_1) \ \ ;\ \ \mathsf{L}(u_1,u_2)
	\xrightarrow{u_{2}\to\infty} 
	\Lambda_{+}(u_2)\, L^{+}(u_1),
\end{equation}
where 
\begin{align}
	\Lambda_{-}(u) = \left(\begin{array}{cc}
		u & 0 \\
		0 & 1\end{array}\right)  \ \ ; \ \ 
	\Lambda_{+}(u) = \left(\begin{array}{cc}
		1 & 0 \\
		0 & u\end{array}\right).
\end{align}
To prove that the operators $L^{\pm}(u)$ satisfy the Yangian relation
(\ref{FCR}) we rewrite the previous formulae using the shift 
$u \to u+s$ in the case of $L^{-}$ and $u \to u-s$ in the 
case of $L^{+}$
\begin{equation} \label{asymptL1}
	\mathsf{L}(u+s) \xrightarrow{s\to\infty}
	L^{-}(u)\,\Lambda_{-}(2s)\ \ ;\ \ \mathsf{L}(u-s)
	\xrightarrow{s\to\infty} 
	\Lambda_{+}(2s)\,\sigma_3\, L^{+}(u-1).
\end{equation}
Using the shifts $u \to u+s$ and $v\to v+s$ in Yangian relation 
\eqref{FCR} we obtain 
\begin{align*}
	\textstyle \left(u-v  + P\right)\, \bigl(\mathsf{L}(u+s)\otimes \bm{1}\bigr)\, 
	\bigl(\bm{1} \otimes \mathsf{L}(v+s)\bigr)
	= \bigl(\bm{1} \otimes \mathsf{L}(v+s)\bigr)\,
	\bigl(\mathsf{L}(u+s)\otimes \bm{1}\bigr)\, \left(u-v  + P\right).
\end{align*}
Taking the asymptotics $s \to \infty$ gives
\begin{align*}
	\textstyle \left(u-v  + P\right)\, \bigl(L^{-}(u)\otimes \bm{1}\bigr)\, 
	\bigl(\bm{1} \otimes L^{-}(v)\bigr)\, 
	\Lambda_{-}(2s)\otimes\Lambda_{-}(2s)
	= \\[6pt] 
	\textstyle \bigl(\bm{1} \otimes L^{-}(v)\bigr)\,
	\bigl(L^{-}(u)\otimes \bm{1}\bigr)\,
	\Lambda_{-}(2s)\otimes\Lambda_{-}(2s)\,\left(u-v  + P\right).
\end{align*}
Furthermore, using $s\ell_2$ invariance of the Yang's $R$-matrix 
\begin{align}
	\Lambda_{-}(2s)\otimes\Lambda_{-}(2s)\,\left(u-v  + P\right) = 
	\left(u-v  + P\right)\,\Lambda_{-}(2s)\otimes\Lambda_{-}(2s)
\end{align}
we derive 
\begin{align*}
	\textstyle \left(u-v  + P\right)\, \bigl(L^{-}(u)\otimes \bm{1}\bigr)\, 
	\bigl(\bm{1} \otimes L^{-}(v)\bigr)
	=  
	\bigl(\bm{1} \otimes L^{-}(v)\bigr)\,
	\bigl(L^{-}(u)\otimes \bm{1}\bigr)\, \left(u-v  + P\right)
\end{align*}
The derivation in the case of $L^{+}$ is very similar. 

\subsubsection{Reduction to the DST $M$-operators}

The $M$-operator \eqref{DST-M} and the 
$L^{+}$-operator \eqref{Lp} are equivalent modulo changing the space of functions, and below we construct 
the explicit map between them  
\begin{align*}
	L^+(u) = \left(\begin{array}{cc}
		u + \dd_z z & -\dd_z \\
		-z& 1\end{array}\right)  \longrightarrow  M(u) = 
	\begin{pmatrix}
		u + \imath \partial_{x} & e^{-x} \\[4pt]
		-e^{x} \partial_{x} & \imath
	\end{pmatrix}\,.
\end{align*}
The first step is the Fourier transformation 
\begin{align}\label{Fourier} 
	\left[ F\Psi\right](p) = \hat{\Psi}(p) =  
	\int\limits_{-\infty}^{+\infty} d z\, e^{-\imath pz}\, \Psi(z).
\end{align}
In the context of the XXX spin chains we consider the function $\Psi(z)$ to be analytic in the upper half-plane $\Im(z)>0$, see~\cite{ABDK}.
For real $z$ it is defined as a limit 
$\Psi(z) = \lim_{\varepsilon\to 0} \Psi(z+\imath\varepsilon)$. 
Due to analyticity of the function $\Psi(z)$ in upper half-plane $\Im(z)>0$ 
the function $\hat{\Psi}(p)$ vanishes for $p < 0$ and 
the formula for the inverse transformation has the form   
\begin{align} 
	\left[F^{-1}\hat{\Psi}\right](z) = \Psi(z) =  
	\int\limits_{0}^{+\infty} \frac{d p}{2\pi}\, e^{\imath p z}\, \hat{\Psi}(p).
\end{align}
Note that $\hat{\Psi}(p)$ vanishes for $p = 0$ as well.
For the point $p=0$ we have  
$$
\hat{\Psi}(0) = \int\limits_{-\infty}^{+\infty} d z\,\Psi(z) 
$$
The Fourier transformation is defined for the functions from 
$L_1(\mathbb{R})$ so that this integral is absolutely convergent.
Then $|\Psi(z)| = o(|z|^{-1})$ for $|z|\to \infty$, and it is possible to
close the contour of integration in the upper half-plane $\Im(z)>0$, where the integral vanishes due to analyticity of $\Psi(z)$ in the upper half-plane.
We have
\begin{align*}
	&\partial_z\,\Psi(z) =  
	\int\limits_{0}^{+\infty} \frac{d p}{2\pi}\, e^{\imath p z}\,\imath p\,\hat{\Psi}(p)\,, \\ 
	&z\,\Psi(z) =  
	\int\limits_{0}^{+\infty} \frac{d p}{2\pi}\,\hat{\Psi}(p)\,
	(-\imath)\partial_p e^{\imath p z} = -\imath
	\left. e^{\imath p z}\,\hat{\Psi}(p)
	\right|_{0}^{+\infty}
	+\int\limits_{0}^{+\infty} \frac{d p}{2\pi}\,e^{\imath p z}\,
	(\imath \partial_p)\,\hat{\Psi}(p).
\end{align*}
The additional contributions vanish: 
for $p \to+\infty$ we have $\hat{\Psi}(p) \to 0$ due 
to Riemann-Lebesgue lemma for integral \eqref{Fourier} with 
initial function $\Psi(z)$ from $L_1(\mathbb{R})$ and 
for the point $p=0$ we have $\hat{\Psi}(0) = 0$. 
In a more formal way these formulae can be represented
in the following operator form
\begin{align}\label{canon} 
	F z F^{-1} =  \imath\partial_z \ \ \ ;\ \ \ F \partial_z F^{-1} = \imath z.
\end{align}
The needed transformation of $L^+$-operator is  
\begin{align*}
	&L^+(u)\,\Psi(z) = \left(\begin{array}{cc}
		u + \dd_z z & -\dd_z \\
		-z& 1\end{array}\right)\,\Psi(z) = 
	\int\limits_{0}^{+\infty} \frac{d p}{2\pi}\, e^{\imath pz}\, 
	\left(\begin{array}{cc}
		u - p\partial_p & -\imath p  \\
		-\imath\partial_p & 1\end{array}\right)\,\hat{\Psi}(p) = \\
	&\int\limits_{-\infty}^{+\infty} \frac{d x}{2\pi}\, 
	\exp(\imath ze^{-x} - x)\, 
	(-\imath)\left(\begin{array}{cc}
		\imath u + \imath\partial_x & e^{-x}  \\
		-e^{x}\partial_x & \imath\end{array}\right)\,\hat{\Psi}(e^{-x}) = 
	\int\limits_{-\infty}^{+\infty} \frac{d x}{2\pi}\, 
	e^{\imath ze^{-x} - x}\, 
	(-\imath) M(\imath u) \,\hat{\Psi}(e^{-x})
\end{align*}
and consists of two steps: the first step is the Fourier transformation and then 
change of variables --- the variable $p$ is positive so 
that it is possible to perform the change of variables $p = e^{-x}$.  
In more formal way the first step is the Fourier transformation \eqref{canon}
\begin{align*}
	F\,L^+(u)\,F^{-1} = F\,\left(\begin{array}{cc}
		u + \dd_z z & -\dd_z \\
		-z& 1\end{array}\right)\,F^{-1}  = 
	\left(\begin{array}{cc}
		u - z\partial_z & -\imath z  \\
		-\imath\partial_z & 1\end{array}\right) = \hat{L}^+(u)
\end{align*}
and the second step is the change of variables 
\begin{align*} 
	\hat{L}^+(u)\,\hat{\Psi}(z)  = \left(\begin{array}{cc}
		u - z\partial_z & -\imath z  \\
		-\imath\partial_z & 1\end{array}\right)\,\hat{\Psi}(z) 
	\longrightarrow  
	M(\imath u)\,\hat{\Psi}(e^{-x}) = \left(\begin{array}{cc}
		\imath u + \imath\partial_x & e^{-x}  \\
		-e^{x}\partial_x & \imath\end{array}\right)\,\hat{\Psi}(e^{-x}).
\end{align*}
The relation for $L^{+}$-operators
\begin{align*}
	\textstyle \left(u-v  + P\right)\, \bigl(L^{+}(u)\otimes \bm{1}\bigr)\, 
	\bigl(\bm{1} \otimes L^{+}(v)\bigr)
	=  
	\bigl(\bm{1} \otimes L^{+}(v)\bigr)\,
	\bigl(L^{+}(u)\otimes \bm{1}\bigr)\, \left(u-v  + P\right)
\end{align*}
transforms to the relation for $M$-operators 
\begin{align*}
	\textstyle \left(u-v  + \imath P\right)\, \bigl(M(u)\otimes \bm{1}\bigr)\, 
	\bigl(\bm{1} \otimes M(v)\bigr)
	=  
	\bigl(\bm{1} \otimes M(v)\bigr)\,
	\bigl(M(u)\otimes \bm{1}\bigr)\, \left(u-v  + \imath P\right).
\end{align*}
after the transformation $F \cdots F^{-1}$ and the necessary
change of variables and spectral parameters $u \to -\imath u$ and $v \to -\imath v$.

\subsubsection{Reduction to the Toda $L$-operators}

In this section we consider reduction 
from the $M$-operator for the DST-chain to 
the $L$-operator for the Toda chain
\begin{align*}
	M(u) = 
	\begin{pmatrix}
		u + \imath \partial_{x} & e^{-x} \\[4pt]
		-e^{x} \partial_{x} & \imath
	\end{pmatrix} \longrightarrow 
	L(u) = 
	\begin{pmatrix}
		u + \imath \partial_{x} & e^{-x} \\[4pt]
		-e^{x} & 0
	\end{pmatrix}
\end{align*}
or explicitly 
\begin{equation} \label{asymptML}
	e^{-\lambda x}\,M(u-\imath\lambda)\,e^{\lambda x}  
	\xrightarrow{\lambda\to\infty} 
	\Lambda_{+}(\lambda)\, L(u).
\end{equation}
Indeed we have 
\begin{align*}
	e^{-\lambda x}\,M(u-\imath\lambda)\,e^{\lambda x}  
	= \begin{pmatrix}
		1 & 0 \\[4pt]
		0 & \lambda
	\end{pmatrix} 
	\begin{pmatrix}
		u + \imath \partial_{x} & e^{-x} \\[4pt]
		-e^{x}-\lambda^{-1}\,e^{x} \partial_{x} & \imath \lambda^{-1}
	\end{pmatrix} \xrightarrow{\lambda \to \infty} 
	\begin{pmatrix}
		1 & 0 \\[4pt]
		0 & \lambda
	\end{pmatrix}  
	\begin{pmatrix}
		u + \imath \partial_{x} & e^{-x} \\[4pt]
		-e^{x} & 0
	\end{pmatrix}.
\end{align*} 
The relation for $M$-operators 
\begin{align*}
	\textstyle \left(u-v  + \imath P\right)\, \bigl(M(u)\otimes \bm{1}\bigr)\, 
	\bigl(\bm{1} \otimes M(v)\bigr)
	=  
	\bigl(\bm{1} \otimes M(v)\bigr)\,
	\bigl(M(u)\otimes \bm{1}\bigr)\, \left(u-v  + \imath P\right)
\end{align*}
after transformation $e^{-\lambda x}\,\cdots\,e^{-\lambda x}$ and 
change of the spectral parameters $u \to u -\imath\lambda$, $v \to v -\imath\lambda$
in the limit $\lambda \to \infty$ gives
\begin{align*}
	\textstyle \left(u-v  + \imath P\right)\,
	\Lambda_{+}(\lambda)\otimes\Lambda_{+}(\lambda)\, 
	\bigl(L(u)\otimes \bm{1}\bigr)\, 
	\bigl(\bm{1} \otimes L(v)\bigr)\, 
	= \\[6pt] 
	\textstyle \Lambda_{+}(\lambda)\otimes\Lambda_{+}(\lambda)\,
	\bigl(\bm{1} \otimes L(v)\bigr)\,
	\bigl(L(u)\otimes \bm{1}\bigr)\,\left(u-v  +\imath P\right).
\end{align*}
Using identity  
\begin{align*}
	\Lambda_{+}(\lambda)\otimes\Lambda_{+}(\lambda)\,\,\left(u-v  +\imath P\right) = 
	\left(u-v  +\imath P\right)\,\Lambda_{+}(\lambda)\otimes\Lambda_{+}(\lambda)
\end{align*}
we derive 
\begin{align*}
	\textstyle \left(u-v  +\imath P\right)\, \bigl(L(u)\otimes \bm{1}\bigr)\, 
	\bigl(\bm{1} \otimes L(v)\bigr)
	=  
	\bigl(\bm{1} \otimes L(v)\bigr)\,
	\bigl(L(u)\otimes \bm{1}\bigr)\, \left(u-v  +\imath P\right).
\end{align*}

\subsection{$R$-operators and $RLL$-relations}

In the case when $V_1$ and $V_2$ are spaces of two 
arbitrary representations of $s\ell_2$ and $V_3 = \mathbb{C}^2$ 
the Yang--Baxter relation is reduced to the defining relation for 
the general $R$-operator
\begin{align}\label{RLL-0}
	\R_{12}(u-v)\,\mathsf{L}_1(u)\,\mathsf{L}_2(v) = 
	\mathsf{L}_2(v)\,\mathsf{L}_1(u)\,\R_{12}(u-v) \,,
\end{align}
where by $\mathsf{L}_1(u)$ and $\mathsf{L}_2(v)$ we denote the $\mathsf{L}$-operators corresponding to different representations 
\begin{align*}
	\mathsf{L}_1(u) = 
	\left(\begin{array}{cc} u+s_1+z_1 \partial_1 & -\partial_1\\
		z_1^2 \partial_1 + 2s_1 z_1 & u-s_1 -z_1
		\partial_1
	\end{array}\right) , \quad
	\mathsf{L}_2(v) = 
	\left(\begin{array}{cc} v+s_2+z_2 \partial_2 & -\partial_2\\
		z_2^2 \partial_2 + 2s_2 z_2 & v-s_2 -z_2
		\partial_1
	\end{array}\right) .
\end{align*}
The equivalent form of this relation is 
\begin{equation}\R_{12}(u_{1},u_2|v_{1},v_2)\,
	\mathsf{L}_{1}(u_1,u_2)\,\mathsf{L}_{2}(v_1,v_2)=
	\mathsf{L}_{1}(v_1,v_2)\,\mathsf{L}_{2}(u_1,u_2)\,
	\R_{12}(u_{1},u_2|v_{1},v_2)
\end{equation}
where 
\begin{align}\label{param} 
	u_1 = u+s_1-1 ,\ u_2 = u-s_1 ;\ v_1
	= v+s_2-1 ,\ v_2 = v-s_2 . 
\end{align}
Operator $\R_{12}(u_{1},u_2|v_{1},v_2)$ interchanges pairs
of parameters $u_1\,,u_2$ and $v_1\,,v_2$ in the product 
of $L$-operators. It can be factorized in a product of  
simpler operators~\cite{D,DM}
\begin{align}
	\R_{12}(u_{1},u_2|v_{1},v_2) =
	P_{12}\,\R_{12}(v_1,u_2|v_2)\,\R_{12}(u_1|v_1,v_{2}),
	\label{Rfact}
\end{align}
where $P_{12}$ is the operator of permutation 
\begin{align*}
	P_{12}\,\Psi(z_1\,,z_2) = \Psi(z_2\,,z_1) \,,
\end{align*}
The operators $\R_{12}(u_1|v_1,v_2)$ and
$\R_{12}(u_1,u_2|v_2)$ obey the following relations
\begin{equation}
	\R_{12}(u_1|v_1,v_2)\,
	\mathsf{L}_{1}(u_1,u_2)\,\mathsf{L}_{2}(v_1,v_2)=
	\mathsf{L}_{1}(v_1,u_2)\,\mathsf{L}_{2}(u_1,v_2)\,
	\R_{12}(u_1|v_1,v_2)\,, \label{R1}
\end{equation}
\begin{equation}
	\R_{12}(u_1,u_2|v_2)\,
	\mathsf{L}_{1}(u_1,u_2)\,\mathsf{L}_{2}(v_1,v_2)=
	\mathsf{L}_{1}(u_1,v_2)\,\mathsf{L}_{2}(v_1,u_2)\,
	\R_{12}(u_1,u_2|v_2)\,. \label{R2}
\end{equation}
Operator $\R_{12}(u_1|v_1,v_2)$ interchanges 
parameters $u_1$ and $v_1$ in the product of $L$-operators and 
operator $\R_{12}(u_1,u_2|v_2)$ interchanges 
parameters $u_2$ and $v_2$.
These operators are not independent. Indeed, using relation 
\begin{align}
	\mathsf{L}^{-1}(u_1,u_2) = -(u_1 u_2)^{-1}\,\mathsf{L}(-u_2,-u_1) 
\end{align} 
it is easy to derive the following connection between two basic operators   
\begin{align}
	\R_{12}(u_1,u_2|v_2) = P_{12}\,\R_{12}(-v_2|-u_2,-u_1)\,P_{12}\,.
\end{align}
Below we will use the operator $\R_{12}(u_1|v_1,v_2)$ as the basic one. 
In generic situation the solution of the defining equation~(\ref{R1}) 
can be represented in two equivalent forms, which
can be used depending on the context.
The first representation is simple and formal      
\begin{align}
	\R_{12}(u_1|v_1,v_2) =
	\frac{\Gamma(z_{21}\dd_2+u_1-v_2+1)}{\Gamma(z_{21}\dd_2+v_1-v_2+1)}\,,
\end{align}
while the second representation decodes the first one as an integral operator
\begin{multline}\label{intR12}
	\left[\R_{12}(u_1|v_1,v_2)\Psi\right](z_1,z_2) = \\
	\frac{1}{\Gamma(v_1-u_1)}\,\int_{0}^1 d\alpha\; \alpha^{v_1-u_1-1} \,
	(1-\alpha)^{u_1-v_2}\,\Psi(z_1,(1-\alpha)z_2+\alpha z_1)\,.
\end{multline}

\subsection{Intertwining operators for the products of $L^{\pm}$-operators}

The goal of this section is the derivation of the 
intertwining relations for the products of $L^{\pm}$-operators 
\begin{align}\label{--}
	&(1-z_2 \dd_{1})^{u-v} \,
	\mathrm{L}^{-}_{1}(u) \, \mathrm{L}^{-}_{2}(v) =
	\mathrm{L}^{-}_{1}(v)\, \mathrm{L}^{-}_{2}(u)\,
	(1-z_2 \dd_{1})^{u-v}\ \,, \\ \label{++}
	&(1+z_{1} \dd_{2})^{u-v} \,
	\mathrm{L}^{+}_{1}(u) \, \mathrm{L}^{+}_{2}(v) =
	\mathrm{L}^{+}_{1}(v)\, \mathrm{L}^{+}_{2}(u) \,
	(1+z_{1} \dd_{2})^{u-v}\ \,,
\end{align}
using appropriate reductions. 
These formal representations for intertwining operators 
clearly show some of its properties. 
First of all at the point $u=v$ intertwining 
operators reduce to the identity operator. 
It is natural because for $u=v$ there is 
nothing to change in the defining relations. 
In the case of second relation the equivalent representation 
of intertwining operator as an integral operator is  
\begin{align} 
	(1+z_{1} \dd_{2})^{u-v}
	\,\Psi(z_1,z_2) = 
	\frac{1}{\Gamma(v-u)}\,
	\int_{0}^{\infty} d\alpha\,\alpha^{v-u-1} 
	\,e^{-\alpha}\,\Psi(z_1,z_2-\alpha z_1)\,.
\end{align}
We derive the second relation starting from the 
defining relation for the operator $R_{12}(u_1|v_1,u_2)$
\begin{align*}
	\R_{12}(u_1|v_1,u_2)\,
	\mathsf{L}_{1}(u_1,u_2)\,\mathsf{L}_{2}(v_1,u_2)=
	\mathsf{L}_{1}(v_1,u_2)\,\mathsf{L}_{2}(u_1,u_2)\,
	\R_{12}(u_1|v_1,u_2)\,.
\end{align*}
In a first step we extract the leading asymptotics as $u_2 \to \infty$ using \eqref{asymptL}
\begin{align*}
	\R_{12}(u_1|v_1,u_2)\,\Lambda_{+}(u_2)
	L_1^{+}(u_1)\,\Lambda_{+}(u_2)L_2^{+}(v_1) = 
	\Lambda_{+}(u_2)L_1^{+}(v_1)\,\Lambda_{+}(u_2)L^{+}(u_1)\,
	\R_{12}(u_1|v_1,u_2)\,, 
\end{align*}
and then transform everything to the needed form 
\begin{align}\label{needed}
	u_2^{-z_2\partial_2}\R_{12}(u_1|v_1,u_2)u_2^{z_2\partial_2}\,
	L_1^{+}(u_1)\,L_2^{+}(v_1)  =
	L_1^{+}(v_1)\,L_2^{+}(u_1)\, 
	u_2^{-z_2\partial_2} 
	\R_{12}(u_1|v_1,u_2)u_2^{z_2\partial_2}\,
\end{align}
using relation 
\begin{align}\label{similar}
	\Lambda_+(u) L^{+}(v) \Lambda^{-1}_+(u) = 
	u^{z\partial} L^{+}(v) u^{-z\partial}\,.
\end{align}
This relation shows that for $L^{+}$-operator the 
matrix similarity transformation by the matrix $\Lambda_+(u)$ 
is equivalent to the of operator similarity transformation 
by the dilatation operator $u^{z\partial}$
\begin{align*}
	u^{z\partial} \Psi(z) = \Psi(u z)\,.
\end{align*}
In details and step by step we have  
\begin{align*}
	\R_{12}\,\Lambda_{+}(u_2)
	L_1^{+}(u_1)\,\Lambda_{+}(u_2)L_2^{+}(v_1) &= 
	\Lambda_{+}(u_2)L_1^{+}(v_1)\,\Lambda_{+}(u_2)L^{+}(u_1)\,
	\R_{12} \\ 
	&\downarrow  \\ 
	\R_{12}\,
	L_1^{+}(u_1)\,\Lambda_+(u_2) L_2^{+}(v_1)\Lambda^{-1}_+(u_2) &=
	L_1^{+}(v_1)\,
	\Lambda_+(u_2) L_2^{+}(u_1)\Lambda^{-1}_+(u_2)\,
	\R_{12} \\ 
	&\downarrow \\ 
	\R_{12}
	L_1^{+}(u_1)\, u_2^{z_2\partial_2} L_2^{+}(v_1) u_2^{-z_2\partial_2} &=
	L_1^{+}(v_1)\, u_2^{z_2\partial_2} L_2^{+}(u_1) u_2^{-z_2\partial_2}\, 
	\R_{12} \\ 
	&\downarrow \\  u_2^{-z_2\partial_2} \R_{12} u_2^{z_2\partial_2}\,
	L_1^{+}(u_1) L_2^{+}(v_1)  &=
	L_1^{+}(v_1) L_2^{+}(u_1)\, 
	u_2^{-z_2\partial_2} 
	\R_{12} u_2^{z_2\partial_2}\,.
\end{align*}
It remains to calculate the leading asymptotic of the operator 
$u_2^{-z_2\partial_2}\, 
\R_{12}(u_1|v_1,u_2)\,u_2^{z_2\partial_2}$ when $u_2 \to -\infty$.
The most straightforward way is to use the integral representation 
\begin{multline*}
	\left[u_2^{-z_2\partial_2} 
	\R_{12}(u_1|v_1,u_2)u_2^{z_2\partial_2}\Psi\right](z_1,z_2) \\[4pt] 
	=
	\frac{1}{\Gamma(v_1-u_1)}\,\int_{0}^1 d\alpha\,\alpha^{v_1-u_1-1} \,
	(1-\alpha)^{u_1-u_2}\,\Psi(z_1,(1-\alpha)z_2+u_2\alpha z_1)\\[4pt]  
	=
	\frac{(-u_2)^{u_1-v_1}}{\Gamma(v_1-u_1)}\,\int_{0}^{-u_2} d\alpha\,\alpha^{v_1-u_1-1} \,
	(1+\alpha u^{-1}_2)^{u_1-u_2}\,
	\Psi(z_1,(1+\alpha u^{-1}_2) z_2 - \alpha z_1)\\[4pt]  
	\xrightarrow{u_2\to-\infty}
	\frac{(-u_2)^{u_1-v_1}}{\Gamma(v_1-u_1)}\,
	\int_{0}^{\infty} d\alpha\,\alpha^{v_1-u_1-1} \,e^{-\alpha}
	\,\Psi(z_1,z_2 - \alpha z_1) .
\end{multline*}
Note that the appearing scalar factor $(-u_2)^{u_1-v_1}$ 
is inessential because it can be cancelled in the relation \eqref{needed}. 
Next we transform the integral representation 
to the operator form using formulae 
\begin{align*}
	\int_{0}^{\infty} d\alpha\,\alpha^{\lambda-1} 
	\,e^{-\alpha A} = \Gamma(\lambda)\,A^{-\lambda} \ \ ;\ \ e^{-\alpha z_1\partial_2}
	\,\Psi(z_1,z_2) = \Psi(z_1,z_2 - \alpha z_1),
\end{align*}
so that one obtains
\begin{multline*}
	\int_{0}^{\infty} d\alpha\,\alpha^{v_1-u_1-1} \,e^{-\alpha}
	\,\Psi(z_1,z_2 - \alpha z_1) =
	\int_{0}^{\infty} d\alpha\,\alpha^{v_1-u_1-1} 
	\,e^{-\alpha}\,e^{-\alpha z_1\partial_2}
	\,\Psi(z_1,z_2) \\ 
	= \int_{0}^{\infty} d\alpha\,\alpha^{v_1-u_1-1} 
	\,e^{-\alpha(1+z_1\partial_2)}\,\Psi(z_1,z_2) = 
	\Gamma(v_1-u_1)\,(1+z_{1} \dd_{2})^{u_1-v_1}\,\Psi(z_1,z_2)\,.
\end{multline*}
The formula \eqref{--} for $L^{-}$-operators 
can be derived in a similar way. 

\subsubsection{Reduction to the DST intertwiners}

Let us rewrite the relation  
\begin{align}
	(1+z_{1} \dd_{2})^{u-v} \,
	L^{+}_{1}(u) \, L^{+}_{2}(v) =
	L^{+}_{1}(v)\, L^{+}_{2}(u) \,
	(1+z_{1} \dd_{2})^{u-v}
\end{align} 
in a new representation as relation for $M$-operators
\begin{align}
	(1+e^{x_1-x_{2}}\dd_{x_1})^{\imath v-\imath u}\,M_{1}(u)\,M_{2}(v) = M_{1}(v)\,M_{2}(u) 
	\,(1+e^{x_1-x_{2}}\dd_{x_1})^{\imath v-\imath u}\,.
\end{align}
The Fourier transformation $F_1 F_2 \ldots F^{-1}_1 F^{-1}_2$ 
gives 
\begin{align*}
	(1+z_{1} \dd_{2})^{u-v} \,
	L^{+}_{1}(u) \, L^{+}_{2}(v) &=
	L^{+}_{1}(v)\, L^{+}_{2}(u) \,
	(1+z_{1} \dd_{2})^{u-v}  \\ 
	&\downarrow\\  (1-z_{2} \dd_{1})^{u-v} \,
	\hat{L}^{+}_{1}(u) \, \hat{L}^{+}_{2}(v) &=
	\hat{L}^{+}_{1}(v)\, \hat{L}^{+}_{2}(u) \,
	(1-z_{2} \dd_{1})^{u-v} \\ 
	&\downarrow \\ (1+e^{x_1-x_{2}}\dd_{x_1})^{u-v} \,
	M_{1}(\imath u) \, M_{2}(\imath v) &=
	M_{1}(\imath v)\, M_{2}(\imath u) \,
	(1+e^{x_1-x_{2}}\dd_{x_1})^{u-v} \\ 
	&\downarrow \\ 
	(1+e^{x_1-x_{2}}\dd_{x_1})^{\imath v-\imath u} \,
	M_{1}(u) \, M_{2}(v) &=
	M_{1}(v)\, M_{2}(u) \,
	(1+e^{x_1-x_{2}}\dd_{x_1})^{\imath v-\imath u}
\end{align*}
On the first step we use 
$$
F_1 F_2\,(1+z_{1} \dd_{2})^{u-v}\, F^{-1}_1 F^{-1}_2 = 
(1-z_{2} \dd_{1})^{u-v}
$$
and then perform the needed change of variables and spectral parameters.
Let us perform the change of variables in a more visual way to reconstruct the obtained integral operator. We have  
\begin{align*}
	&\Gamma(\imath u-\imath v)\,(1-z_{2} \dd_{1})^{\imath v-\imath u}\,\hat{\Psi}(z_1 , z_2) = 
	\int_{0}^{\infty} d\alpha\,\alpha^{\imath u-\imath v-1} 
	\,e^{-\alpha}\,e^{\alpha z_2\partial_1}
	\,\hat{\Psi}(z_1 , z_2) \\ 
	& =
	\int_{0}^{\infty} d\alpha\,\alpha^{\imath u-\imath v-1} 
	\,e^{-\alpha}\,\hat{\Psi}(z_1 +\alpha z_2, z_2) = 
	\int_{0}^{\infty} d\alpha\,\alpha^{\imath u-\imath v-1} 
	\,e^{-\alpha}\,\hat{\Psi}(z_1 +\alpha z_2, z_2) \\ 
	& = z_2^{\imath v-\imath u}\,
	\int_{z_1}^{\infty} d z\,(z-z_1)^{\imath u-\imath v-1} 
	\,e^{-\frac{z-z_1}{z_2}}\,\hat{\Psi}(z, z_2).
\end{align*}
Next, change the variables $z_k = e^{-x_k}\,,z=e^{-x}$ and correspondingly denote $\hat{\Psi}(e^{-x_1}, e^{-x_2}) = \phi(x_1, x_2)$, $\hat{\Psi}(e^{-x}, e^{-x_2}) = \phi(x, x_2)$,
so that the above formulas transform into
\begin{align*}
	&\Gamma(\imath u-\imath v)\,(1+e^{x_1-x_2} \dd_{x_1})^{\imath v-\imath u}\,\phi(x_1 , x_2) = 
	\\[6pt] 
	& = 
	\int_{-\infty}^{x_1} d x \, 
	\,\exp\left(\imath(u-v)(x_2-x)+e^{x_2-x_1}-e^{x_2-x}\right)\,
	\left(1-e^{x-x_1}\right)^{\imath u-\imath v-1}\,\phi(x, z_2).
\end{align*}
Finally, for operator $\widetilde{\mathcal{R}}_{12}(v)$ defined by equation
\begin{align}\label{RMM-2}
	\widetilde{\mathcal{R}}_{12}(v) \, M_1(u) \, M_2(u - v) = M_2(u - v) \, M_1(u) \, \widetilde{\mathcal{R}}_{12}(v)
\end{align}
one obtains the following explicit formula for the 
action on the function $\phi(x_1, x_2)$
\begin{multline}\label{Rh-2}
	\bigl[ \widetilde{\mathcal{R}}_{12} (v)\, \phi \bigr](x_1, x_2) = 
	\bigl[P_{12}\,(1+e^{x_1-x_2} \dd_{x_1})^{-\imath v}\,\phi \bigr](x_1 , x_2) \\[6pt]
	= \frac{1}{\Gamma(\imath v)}\,
	\int_{-\infty}^{x_2} dy \;\exp \bigl( \imath v(x_1 - y) - e^{x_1 - y} + e^{x_1 - x_2} \bigr) \, \bigl(1 - e^{y- x_2} \bigr)^{\imath v - 1} \,  \phi(y, x_1).
\end{multline}
Modulo normalization this is the integral operator~\eqref{Rh} introduced in the Appendix~\ref{sec:YB-refl-op}.

In Appendix~\ref{sec:YB-refl-op} we also use another $\mathcal{R}$-operator~\eqref{Rhat} that intertwines $M$-operator and transposed $M$-operator  
\begin{align}\label{RMtM-2}
	\widehat{\mathcal{R}}_{12}(v) \, M_2^t(-u) \, M_1(u - v) = M_1(u - v) \, M_2^t(-u) \, \widehat{\mathcal{R}}_{12}(v).
\end{align}
Let us derive the explicit expression for this operator 
starting from the same relation  
\begin{align*}
	(1+z_{1} \dd_{2})^{u-v} \,
	L^{+}_{1}(u) \, L^{+}_{2}(v) =
	L^{+}_{1}(v)\, L^{+}_{2}(u) \,
	(1+z_{1} \dd_{2})^{u-v}\,.
\end{align*} 
First of all, transposed $L^{+}$-operator can be obtained from 
$L^{+}$-operator using Fourier transformation \eqref{canon} 
\begin{align*}
	&\left[L^+(u)\right]^t = \left(\begin{array}{cc}
		u + \dd z & -z \\
		-\dd & 1\end{array}\right) = F \left(\begin{array}{cc}
		u - z\dd & \imath \dd \\
		\imath z & 1\end{array}\right) F^{-1} \\ 
	&= F \left(\begin{array}{cc}
		-\imath & 0 \\
		0 & 1\end{array}\right) 
	\left(\begin{array}{cc}
		-u-1 + \dd z & -\dd \\
		- z & 1\end{array}\right) 
	\left(\begin{array}{cc}
		-\imath & 0 \\
		0 & 1 \end{array}\right) F^{-1}  
	= F\,\Lambda^{-1}\, 
	L^+(-u-1)\,\Lambda^{-1}\, F^{-1}
\end{align*}
so that 
\begin{align}
	L^+(u)\, = \Lambda\,F^{-1}\,
	\left[L^+(-u-1)\right]^t\,F\,
	\Lambda \ \ ;\ \  \Lambda = \left(\begin{array}{cc}
		\imath & 0 \\
		0 & 1 \end{array}\right).
\end{align}
By analogy with \eqref{similar} there is relation 
$$ 
\Lambda\,L^{+}(u)\,\Lambda^{-1} = \imath^{-z\partial}\,L^{+}(u)\,\imath^{z\partial}
$$
which allows to reformulate the matrix similarity transformation with the matrix $\Lambda$ 
as the operator similarity transformation with the dilatation operator 
$\imath^{z\partial}$, where
\begin{align*}
	\imath^{z\partial} \Psi(z) = \Psi(\imath z) \ \ ;\ \ 
	\imath^{-z\partial} \Psi(z) = \Psi( \imath^{-1} z) = \Psi(- \imath z) \,.
\end{align*}
We have (for brevity, denote $r_{12} = (1+z_{1} \dd_{2})^{u-v}$)
\begin{align*} 
	r_{12} \,
	L^{+}_{1}(u) \, L^{+}_{2}(v) &=
	L^{+}_{1}(v)\, L^{+}_{2}(u) \, r_{12} \\ 
	&\downarrow \\
	r_{12} \,
	\Lambda\,F^{-1}_1\,\left[L_1^+(-u-1)\right]^t
	\,F_1\,\Lambda\,\mathrm{L}^{+}_{2}(v) &= \mathrm{L}^{+}_{1}(v)
	\,\Lambda\,F^{-1}_2\,
	\left[L_2^+(-u-1)\right]^t\,F_2\,\Lambda\,r_{12} \\ 
	&\downarrow \\
	F_2\,r_{12}\, F^{-1}_1\,\left[L_1^+(-u-1)\right]^t
	\,\Lambda\, \mathrm{L}^{+}_{2}(v)
	\,\Lambda^{-1} &= \Lambda^{-1}\,\mathrm{L}^{+}_{1}(v)\,\Lambda
	\,\left[L_2^+(-u-1)\right]^t\,F_2\,r_{12}\,F^{-1}_1 \\ 
	&\downarrow \\ 
	F_2\,r_{12}\,F^{-1}_1\,\left[L_1^+(-u-1)\right]^t
	\,\imath^{-z_2\partial_2}\,\mathrm{L}^{+}_{2}(v)\,\imath^{z_2\partial_2}
	&= \imath^{z_1\partial_1}\,\mathrm{L}^{+}_{1}(v)\,
	\imath^{-z_1\partial_1}\,\left[L_2^+(-u-1)\right]^t
	\,F_2\,r_{12}\,F^{-1}_1 \\ 
	&\downarrow \\ 
	\imath^{-z_1\partial_1}\,F_2\,r_{12}\,F^{-1}_1\,
	\imath^{-z_2\partial_2}\,
	\left[L_1^+(-u-1)\right]^t\,L^{+}_{2}(v) &=
	L^{+}_{1}(v)\,\left[L_2^+(-u-1)\right]^t
	\,\imath^{-z_1\partial_1}\,F_2\,r_{12}\,F^{-1}_1\,\imath^{-z_2\partial_2}
\end{align*}
so that after shift $u \to u-1$ one obtains intertwining relation 
\begin{align}
	\hat{r}_{12}\,\left[L_1^+(-u)\right]^t\,L^{+}_{2}(v)
	= L^{+}_{1}(v)\,\left[L_2^+(-u)\right]^t\,\hat{r}_{12} 
\end{align}
where intertwining operator $\hat{r}_{12}$ has the form
\begin{align}
	\hat{r}_{12} = \imath^{-z_1\partial_1}\,F_2 (1+z_{1} \dd_{2})^{u-v-1} 
	F^{-1}_1\,\imath^{-z_2\partial_2}\,.
\end{align}
It remains to translate everything to our representation 
and the first step is the Fourier transformation 
$F_1 F_2 \ldots F^{-1}_1 F^{-1}_2$  
\begin{align}
	F_1 F_2\,\hat{r}_{12}\,F_1^{-1} F_2^{-1}\,
	\left[\hat{L}_1^+(-u)\right]^t\,\hat{L}^{+}_{2}(v)
	= \hat{L}^{+}_{1}(v)\,\left[\hat{L}_2^+(-u)\right]^t\,
	F_1 F_2\,\hat{r}_{12}\,F_1^{-1} F_2^{-1} \,, 
\end{align}
where
\begin{align*}
	&F_1 F_2\,\hat{r}_{12}\,F_1^{-1} F_2^{-1} = 
	F_1\,\imath^{-z_1\partial_1}\,F_2^2\,(1+z_{1} \dd_{2})^{u-v-1} 
	\,\imath^{-z_2\partial_2}\,F_1^{-2}\,F_2^{-1} = \\[6pt]
	&F_1\,\imath^{-z_1\partial_1}\,(-)^{z_2\partial_2}\,(1+z_{1} \dd_{2})^{u-v-1} 
	\,\imath^{-z_2\partial_2}\,(-)^{z_1\partial_1}\,F_2^{-1}\,.
\end{align*}
In the last line we used 
$$
\left[F^2\,\Psi\right](z) = \Psi(-z) = (-)^{z\partial}\,\Psi(z).
$$
To avoid misunderstanding we should note that considered operator 
acts on the functions $\hat{\Psi}(z_1,z_2)$ where $z_1 \geq 0$ and $z_1 \geq 0$. 
Let us decode obtained representation for the intertwining operator 
\begin{align*}
	&\hat{\Psi}(z_1,z_2) \xrightarrow{F_2^{-1}} 
	\int_{0}^{\infty} \frac{d p}{2\pi}\,e^{\imath p z_2}\,\hat{\Psi}(z_1,p) \xrightarrow{\imath^{-z_2\partial_2}(-)^{z_1\partial_1}} 
	\int_{0}^{\infty} \frac{d p}{2\pi}\,e^{p z_2}\,\hat{\Psi}(-z_1,p)\\
	&\xrightarrow{(1+z_{1} \dd_{2})^{u-v-1}} 
	\int_{0}^{\infty} \frac{d p}{2\pi}\,e^{p z_2}\,
	(1+z_{1} p)^{u-v-1}\hat{\Psi}(-z_1,p)
	\xrightarrow{\imath^{-z_1\partial_1}\,(-)^{z_2\partial_2}}\\
	&\int_{0}^{\infty} \frac{d p}{2\pi}\,e^{-p z_2}\,
	(1-\imath z_{1} p)^{u-v-1}\hat{\Psi}(\imath z_1,p)  
	\xrightarrow{F_1} \int_{-\infty}^{+\infty} d k\, e^{-\imath k z_1 }\int_{0}^{\infty} \frac{d p}{2\pi}\,e^{-p z_2}\,
	(1-\imath k p)^{u-v-1}\hat{\Psi}(\imath k,p) .
\end{align*}
Next step is the change of integration variable $\imath k \to k $ so that 
we obtain the intertwiner as the following integral operator  
\begin{align*}
	\left[F_1 F_2\,\hat{r}_{12}\,F_1^{-1} F_2^{-1}\,\hat{\Psi}\right](z_1,z_2) 
	= (-\imath)\int_{-\imath\infty}^{+\imath\infty} d k\, e^{-k z_1 }\int_{0}^{\infty} \frac{d p}{2\pi}\,e^{-p z_2}\,(1-k p)^{u-v-1}\hat{\Psi}(k,p) .
\end{align*}
The contour of integration over $k$ is imaginary axis and by condition $z_1 \geq 0$ it is possible to deform it to the contour of Hankel type in right half-plane 
along to the branch cut from the point $k = \frac{1}{p} \geq 0$ to the $k = +\infty$. 
For the values above and below real axis we have 
$$
(1-(k\pm \imath0) p)^{u-v-1} = |1-k p|^{u-v-1}\,e^{\mp \imath\pi(u-v-1)},
$$  
so that  
$$
(1-(k+\imath0) p)^{u-v-1} - (1-(k-\imath0) p)^{u-v-1} =  
2\imath\,\sin(\pi(u-v))\,|1-k p|^{u-v-1}.
$$
After all one obtains 
\begin{align*}
	\left[F_1 F_2\,\hat{r}_{12}\,F_1^{-1} F_2^{-1}\,\hat{\Psi}\right](z_1,z_2) 
	= \frac{\sin(\pi(u-v))}{\pi}\,
	\int_{0}^{\infty} d p\,e^{-p z_2}\,
	\int_{\frac{1}{p}}^{+\infty} d k\,e^{-k z_1 }\,|1-k p|^{u-v-1}\,\hat{\Psi}(k,p) .
\end{align*}
In order to obtain intertwiner 
\begin{align}
	P_{12}\,\widehat{\mathcal{R}}_{12}(u-v)\,
	\left[M_1(-u)\right]^t\,M_{2}(v)
	= M_{1}(v)\,\left[M_2(-u)\right]^t\,
	P_{12}\,\widehat{\mathcal{R}}_{12}(u-v) \,, 
\end{align}
it remains to change spectral parameters $u \to -\imath u$ and $v \to -\imath v$, pass to the exponential variables $z_1 = e^{-x_1}$, $z_2 = e^{-x_2}$ and
$k = e^{-y_1}$, $p = e^{-y_2}$, and switch to the function 
$\phi(x_1\,,x_2) = \hat{\Psi}\left(e^{-x_1}\,,e^{-x_2}\right)$ 
\begin{align*}
	&\left[P_{12}\,\widehat{\mathcal{R}}_{12}(u-v)\phi\right](x_1,x_2) 
	= \frac{\sin(\pi(\imath v- \imath u))}{\pi}\,\\[6pt]
	&\int_{-\infty}^{+\infty} d y_2\, e^{-y_2}\,e^{-e^{-y_2-x_2}}\,
	\int_{-y_2}^{-\infty} d y_1\ e^{-y_1}\,e^{-e^{-y_1-x_1}}\,
	|1-e^{-y_1-y_2}|^{\imath v-\imath u-1}\,\phi(y_1,y_2) .
\end{align*}
At last, to obtain intertwiner 
\begin{align}
	\widehat{\mathcal{R}}_{12}(v)\,
	\left[M_1(-u)\right]^t\,M_{2}(u-v)
	= M_{1}(u-v)\,\left[M_2(-u)\right]^t\,
	\widehat{\mathcal{R}}_{12}(v) \,, 
\end{align}
we have to multiply by permutation $P_{12}$ from the left (i.e. interchange $x_1 \leftrightarrows x_2$), replace $v \to u-v$ and slightly rewrite  
the whole integral representation
\begin{multline*}
	\left[\widehat{\mathcal{R}}_{12}(v)\phi\right](x_1,x_2) 
	= \frac{\sin(\imath \pi v)}{\pi}\,\int_\mathbb{R} dy_2 \, \int_{y_2}^\infty dy_1 \, \exp \bigl( \imath v (y_2 - y_1) - e^{-x_1 - y_2} - e^{y_1- x_2 } \bigr) \\[6pt]
	\times (1 - e^{y_2 - y_1})^{-\imath v - 1} \, \phi(-y_1, y_2).
\end{multline*}
Up to coefficient behind the integral this is exactly the formula~\eqref{Rhat}.

\subsubsection{Reduction to the Toda intertwiners}

Let us derive the integral operator $\mathcal{R}_{12}(v)$ intertwining 
Toda and DST Lax matrices
\begin{align}
	\mathcal{R}_{12}(v) \, L_1(u) \, M_2(u - v) = M_2(u - v) \, L_1(u) \, \mathcal{R}_{12}(v)
\end{align}
from the integral operator $\widetilde{\mathcal{R}}_{12}(v)$ that 
intertwines two DST Lax matrices
\begin{align}
	\widetilde{\mathcal{R}}_{12}(v) \, M_1(u) \, M_2(u - v) = M_2(u - v) \, M_1(u) \, \widetilde{\mathcal{R}}_{12}(v).
\end{align}
We perform the shifts $v \to v-\imath\lambda$ and 
$u \to u-\imath\lambda$ in the last relation and rewrite 
everything in a suitable form 
\begin{multline*}  
	e^{-\lambda x_1}\widetilde{\mathcal{R}}_{12}(v-\imath\lambda)e^{\lambda x_1} 
	\,e^{-\lambda x_1}M_1(u-\imath\lambda)e^{\lambda x_1}\, 
	\, M_2(u - v) \\ = M_2(u - v) \, 
	e^{-\lambda x_1}M_1(u-\imath\lambda)e^{\lambda x_1} \, 
	e^{-\lambda x_1}\widetilde{\mathcal{R}}_{12}(v-\imath\lambda)\,e^{\lambda x_1}.
\end{multline*}
Taking the asymptotics as $\lambda \to \infty$ we obtain
\begin{gather*}
	e^{-\lambda x_1}\widetilde{\mathcal{R}}_{12}(v-\imath\lambda)e^{\lambda x_1} 
	\,\Lambda_{+}(\lambda)L_1(u)\, 
	\, M_2(u - v) = M_2(u - v)\,\Lambda_{+}(\lambda) L_1(u) \, 
	e^{-\lambda x_1}\widetilde{\mathcal{R}}_{12}(v-\imath\lambda)\,e^{\lambda x_1} \\ 
	\downarrow \\ 
	e^{-\lambda x_1}\widetilde{\mathcal{R}}_{12}(v-\imath\lambda)e^{\lambda x_1} 
	\,L_1(u)\,M_2(u - v) = \Lambda^{-1}_{+}(\lambda)M_2(u - v)\Lambda_{+}(\lambda) \, 
	L_1(u) \, 
	e^{-\lambda x_1}\widetilde{\mathcal{R}}_{12}(v-\imath\lambda)\,e^{\lambda x_1} .
\end{gather*}
The identity
\begin{align}\label{rel}
	\Lambda^{-1}_{+}(\lambda)\,M(u)\,\Lambda_{+}(\lambda) = 
	e^{-(\ln\lambda)\partial_{x}}\,M(u)\,e^{+(\ln\lambda)\partial_{x}}
\end{align}
allows to rewrite the obtained relation in a needed form
\begin{align*}
	e^{(\ln\lambda)\partial_{x_2}}e^{-\lambda x_1}\widetilde{\mathcal{R}}_{12}(v-\imath\lambda)e^{\lambda x_1} 
	\,L_1(u)\, 
	\, M_2(u - v)  = M_2(u - v)\, L_1(u) \, 
	e^{(\ln\lambda)\partial_{x_2}}e^{-\lambda x_1}\widetilde{\mathcal{R}}_{12}(v-\imath\lambda)\,e^{\lambda x_1}.
\end{align*}
The action of the obtained operator on the function is defined 
by the following explicit formula 
\begin{multline*}
	e^{(\ln\lambda)\partial_{x_2}}e^{-\lambda x_1}\widetilde{\mathcal{R}}_{12}(v-\imath\lambda)\,e^{\lambda x_1} \, 
	\phi(x_1, x_2) \sim  \\[4pt]
	e^{(\ln\lambda)\partial_{x_2}}e^{-\lambda x_1}\,
	\int_{-\infty}^{x_2} dy \;\exp \bigl( \imath (v-\imath\lambda)(x_1 - y) - e^{x_1 - y} + e^{x_1 - x_2} \bigr) \, \bigl(1 - e^{y- x_2} \bigr)^{\imath (v-\imath\lambda) - 1} \, 
	e^{\lambda y}\, \phi(y, x_1) = \\[4pt] 
	\int_{-\infty}^{x_2+\ln\lambda} dy \;\exp \bigl( \imath v(x_1 - y) - e^{x_1 - y} + \lambda^{-1}\,e^{x_1 - x_2} \bigr) \, 
	\bigl(1 - \lambda^{-1}\,e^{y- x_2} \bigr)^{\imath v - 1+\lambda} \, 
	\phi(y, x_1) \\[4pt]  
	\xrightarrow{\lambda\to\infty}
	\int_{-\infty}^{+\infty} dy \;\exp \bigl( \imath v(x_1 - y) - e^{x_1 - y} - 
	e^{y - x_2} \bigr) \,\phi(y, x_1) = 
	\bigl[\mathcal{R}_{12}(v) \, \phi \bigr] (x_1, x_2).
\end{multline*}

In a similar way we derive the integral operator 
$\mathcal{R}^*_{12}(v)$ that permutes Toda matrix with transposed DST matrix
\begin{align}
	\mathcal{R}^*_{12}(v) \, M^t_2(-u-v) \, L_1(u) = L_1(u ) \, M^t_2(-u-v) \, \mathcal{R}^*_{12}(v) 
\end{align}
from the $\widehat{\mathcal{R}}_{12}(v)$-operator that intertwines 
DST and transposed DST Lax matrices 
\begin{align*}
	\widehat{\mathcal{R}}_{12}(v) \, M_2^t(-u-v) \, M_1(u) = 
	M_1(u) \, M_2^t(-u-v) \, \widehat{\mathcal{R}}_{12}(v).
\end{align*}
First, we rewrite the last relation in appropriate equivalent form  
\begin{multline*}
	e^{-\lambda x_1}\widehat{\mathcal{R}}_{12}(v+\imath\lambda)e^{\lambda x_1} \, 
	M_2^t(-u-v) \, e^{-\lambda x_1}M_1(u-\imath\lambda)e^{\lambda x_1} \\ 
	= 
	e^{-\lambda x_1}M_1(u-\imath\lambda)e^{\lambda x_1}  \, M_2^t(-u-v) \, 
	e^{-\lambda x_1}\widehat{\mathcal{R}}_{12}(v+\imath\lambda)e^{\lambda x_1} 
\end{multline*}
and then consider the asymptotics as $\lambda \to \infty$
\begin{gather*}
	e^{-\lambda x_1}\widehat{\mathcal{R}}_{12}(v+\imath\lambda)e^{\lambda x_1} \, 
	M_2^t(-u-v) \, \Lambda_+(\lambda)L_1(u) = 
	\Lambda_+(\lambda)L_1(u)\, M_2^t(-u-v) \, 
	e^{-\lambda x_1}\widehat{\mathcal{R}}_{12}(v+\imath\lambda)e^{\lambda x_1} \\ 
	\downarrow \\
	e^{-\lambda x_1}\widehat{\mathcal{R}}_{12}(v+\imath\lambda)e^{\lambda x_1} \, 
	\Lambda^{-1}_+(\lambda)M_2^t(-u-v) \, \Lambda_+(\lambda)L_1(u) = 
	L_1(u)\, M_2^t(-u-v) \, 
	e^{-\lambda x_1}\widehat{\mathcal{R}}_{12}(v+\imath\lambda)e^{\lambda x_1}
\end{gather*}
After using the same relation \eqref{rel} one obtains 
\begin{multline*} 
	e^{-\lambda x_1}\widehat{\mathcal{R}}_{12}(v+\imath\lambda)e^{\lambda x_1} \, 
	e^{(\ln\lambda)\partial_{x_2}}M_2^t(-u-v)\,L_1(u)  \\ = 
	L_1(u)\, M_2^t(-u-v) \, 
	e^{-\lambda x_1}\widehat{\mathcal{R}}_{12}(v+\imath\lambda)e^{\lambda x_1}e^{(\ln\lambda)\partial_{x_2}}.
\end{multline*}
The action of the obtained operator on the function is defined 
by the following explicit formula
\begin{multline*}
	e^{-\lambda x_1}\widehat{\mathcal{R}}_{12}(v+\imath\lambda)e^{\lambda x_1} \, 
	e^{(\ln\lambda)\partial_{x_2}}\phi (x_1, x_2) = \\
	e^{-\lambda x_1}
	\int_\mathbb{R} dy_2 \, \int_{y_2}^\infty dy_1 \, \exp \bigl( \imath (v+\imath\lambda) (y_2 - y_1) - e^{-x_1 - y_2} - e^{y_1- x_2 } \bigr) \\[6pt]
	\times (1 - e^{y_2 - y_1})^{-\imath(v+\imath\lambda) - 1} \, 
	e^{-\lambda y_1}\phi(-y_1, y_2+\ln\lambda) = \\ 
	\lambda^{\lambda-iv}\,
	\int_\mathbb{R} dy_2 \, \exp -\lambda\bigl(y_2+x_1+ e^{-x_1 - y_2}\bigl)   
	\int_{y_2-\ln\lambda}^\infty dy_1 \, \exp \bigl( \imath v (y_2 - y_1)  - e^{y_1- x_2 } \bigr) \\[6pt]
	\times (1 - \lambda^{-1}e^{y_2 - y_1})^{-\imath v- 1 +\lambda} \, \phi(-y_1, y_2).
\end{multline*}
The leading asymptotic contribution as $\lambda \to \infty$ is obtained by
the application of the standard Laplace method
\begin{align*}
	\int_\mathbb{R} dy_2\, f(y_2)\,e^{\lambda\,S(y_2)} \to
	f(\bar{y}_2)\,e^{\lambda\,S(\bar{y}_2)}\,\sqrt{-\frac{2\pi}{\lambda S^{\prime\prime}(\bar{y}_2)}}
\end{align*}
where $\bar{y}_2$ is obtained from equation $S^{\prime}(\bar{y}_2) = 0$.
In our example
$S(y_2) = -\bigl(y_2+x_1+ e^{-x_1 - y_2}\bigl)$ so that $\bar{y}_2 = -x_1$ and 
\begin{multline*}
	e^{-\lambda x_1}\widehat{\mathcal{R}}_{12}(v+\imath\lambda)e^{\lambda x_1} \, 
	e^{(\ln\lambda)\partial_{x_2}}\phi (x_1, x_2) \xrightarrow{\lambda \to \infty} \\ 
	\lambda^{\lambda-\imath v}\,\sqrt{\frac{2\pi}{\lambda}}\,   
	\int_{\mathbb{R}} dy_1 \, \exp \bigl( \imath v (-x_1 - y_1)  - e^{y_1- x_2 } -e^{-x_1 - y_1}\bigr)\, \phi(-y_1, -x_1).
\end{multline*}
This expression coincides up to normalization factor and change 
of variables $y_1 \to -y$ with the formula~\eqref{Rr-expl} 
\begin{align*}
	\bigl[ \mathcal{R}^*_{12} (v)\, \phi \bigr](x_1, x_2) = \int_{\mathbb{R}} dy \; \exp \bigl( \imath v(y - x_1) - e^{y - x_1} - e^{-y - x_2} \bigr)\, \phi(y, -x_1).
\end{align*}

\subsection{Reflection equation} \label{sect:ReflEq}

The general reflection equation associated with the XXX spin chain has the following form \cite{Skl}
\begin{align}\label{Refl}
	\R_{12}(u-v)\,\K_1(u)\,\R_{12}(u+v)\,\K_2(v) = 
	\K_2(v)\,\R_{12}(u+v)\,\K_1(u)\,\R_{12}(u-v) \,.
\end{align}
Operator $\R_{12}(u)$ acts in the tensor product $V_1\otimes V_2$ 
of two representations of $s\ell_2$ with spins $s_1$ and $s_2$.
It is the function of spectral parameter $u$ and the spins which determine the representations. 
Reflection operator $\K_1(u)$ is defined in the space $V_1$ of representation with spin $s_1$, and the similar operator $\K_2(v)$ acts in the space $V_2$ corresponding to spin $s_2$.

In the full analogy with Yang--Baxter equation there are different variants 
of the general reflection equation depending on the choice of 
representations $V_1$ and $V_2$.

In the simplest case of two-dimensional representations $V_1=V_2=\mathbb{C}^2$ the operator $\R_{12}(u)$ degenerates into the finite-dimensional solution of Yang--Baxter equation~--- the Yang's $R$-matrix acting 
in the tensor product $\mathbb{C}^2\otimes \mathbb{C}^2$ 
\begin{equation} \nonumber
	R_{12}(u) = R(u) = u  + P \,, \qquad P\,a \otimes b = b \otimes a \,.
\end{equation}
The reflection relation \eqref{Refl} is reduced in this case to the form
\begin{multline} \label{ReflFundam}
	R(u-v)\, \bigl(\widehat{K}(u)\otimes \bm{1}\bigr)\,R(u+v)\, 
	\bigl(\bm{1} \otimes \widehat{K}(v)\bigr) \\[4pt]
	= \bigl(\bm{1} \otimes \widehat{K}(v)\bigr)\,R(u+v)\,\bigl(\widehat{K}(u)\otimes \bm{1}\bigr)\, R(u-v) \,,
\end{multline}
where $\bm{1}$ is a $2\times 2$ identity matrix. 
It is equation for the $2\times 2$ $\widehat{K}$-matrix and 
the general solution~\cite{Skl,IS} of this relation has the following form 
\begin{align}\label{K0}
	\widehat{K}(u) =  \begin{pmatrix}
		\imath \alpha & u \\[6pt]
		- \beta^2\, u  & \imath \alpha
	\end{pmatrix}.
\end{align}

In the case $V_1 = \mathbb{C}^2$ and $V_2$ 
is the space of arbitrary representation of $s\ell_2$,
one obtains the defining relation for the general 
reflection operator $\K_2(v) = \K(v\,,s)$ 
\begin{align} \nonumber
	\textstyle \mathsf{L}(u-v+\frac{1}{2})\,\widehat{K}(u)\,
	\mathsf{L}(u+v+\frac{1}{2})\,\K(v\,,s) = 
	\K(v\,,s)\,\mathsf{L}(u+v+\frac{1}{2})\,\widehat{K}(u)\,
	\mathsf{L}(u-v+\frac{1}{2}) \,.
\end{align}
After the shift of spectral parameter $u \to u - \frac{1}{2}$ the defining relation takes a more convenient form
\begin{align} \label{ReflK-0}
	\mathsf{L}(u-v)\,\hat{K}(u)\,\mathsf{L}(u+v)\,\K(v\,,s) = 
	\K(v\,,s)\,\mathsf{L}(u+v)\,\hat{K}(u)\,\mathsf{L}(u-v) \,,
\end{align}
where we use a notation for the reflection $K$-matrix which differs from the canonical one \eqref{K0} by the shift of 
argument $u\to u-\frac{1}{2}$
\begin{align}\label{KX}
	\hat{K}(u) = \textstyle \widehat{K}(u-\frac{1}{2}) =  
	\begin{pmatrix}
		\imath \alpha & u - \frac{1}{2} \\[6pt]
		- \beta^2 \bigl(u - \frac{1}{2} \bigr) & \imath\alpha
	\end{pmatrix}.
\end{align}
The operator $\K(v\,,s)$ is constructed explicitly as an integral operator in \cite{ABDK} 
and is given by the following formula 
\begin{multline}\label{Kvs}
	\left[\K(v\,,s)\,\Phi\right](z) = \frac{(2i\beta)^{-2v}}{\Gamma(-2v)}\,(z+i\beta)^{g+v-s}\\
	\times \int_0^1 d t\;(1-t)^{-2v-1}\,t^{g+v+s-1}\,
	\left(t(z-\imath \beta)+2\imath\beta\right)^{v+s-g}\,\Phi(t(z-\imath\beta)+i\beta) 
\end{multline}
where $g = \frac{1}{2}+\frac{\alpha}{\beta}$.
Our next goal is the derivation of the following relation  
\begin{multline}\label{hatK}
	L^{+}(u-v-1)\,\hat{K}(u)\,L^{-}(u+v)\,\hat{\mathcal{K}}(v) \\
	= \hat{\mathcal{K}}(v)\,\sigma_2\,L^{-}(u+v)\,\sigma_3\,\sigma_2\,
	\hat{K}(u)\,\sigma_2\,L^{+}(u-v-1)\,\sigma_2\,\sigma_3 \,,
\end{multline}
where the operator $\hat{\mathcal{K}}(v)$ is obtained by the appropriate 
reduction from the operator $\K(v\,,s)$
\begin{align}\label{Kv}
	\K(v+s\,,s)\,P\,(2s)^{z\partial}  
	\xrightarrow{s \to +\infty}  \hat{\mathcal{K}}(v) = 
	(z+\imath\beta)^{g+v}\,(z-\imath \beta)^{v-g+1}\,
	\imath^{z\partial}\,F  
\end{align}
and $\sigma_2\,,\sigma_3$ are standard Pauli matrices
\begin{align}\label{sigma} 
	\sigma_2 = \left(\begin{array}{cc} 
		0 & -\imath \\
		\imath & 0\end{array}\right)\ \ \ ;\ \ \ 
	\sigma_3 = \left(\begin{array}{cc} 
		1 & 0 \\
		0 & -1\end{array}\right)\,.
\end{align}
In the formula \eqref{Kv} $P$ is the operator of inversion  
\begin{align}
	\left[P\,\Psi\right](z) = z^{-2s}\,\Psi\left(\frac{1}{z}\right) 
	\ \ ; \ \ P^2 = 1 \,,
\end{align}
and $F$ is the Fourier transformation.
As we will show on the next step the 
relation \eqref{KMKM} 
\begin{align*}
	\mathcal{K}(v) \, M^t(-u-v) \, K(u) \, \sigma_2 M(u - v) \sigma_2
	= M(u - v) \, K(u) \, \sigma_2 M^t(-u -v) \sigma_2 \; \mathcal{K}(v).
\end{align*}
is exactly the relation \eqref{hatK} rewritten in an equivalent 
representation. In this way one obtains the independent 
derivation of the operator $\mathcal{K}(v)$.

Let us derive \eqref{hatK} starting from the relation
\begin{align} \label{ReflK}
	\mathsf{L}(u-v)\,\hat{K}(u)\,\mathsf{L}(u+v)\,\K(v\,,s) = 
	\K(v\,,s)\,\mathsf{L}(u+v)\,\hat{K}(u)\,\mathsf{L}(u-v).
\end{align}
The first step is inversion $P$ which induces 
the following transformation of generators and the whole $L$-operator 
\begin{align}\label{PLP} 
	P\,S\,P = - S \ \ ;\ \  P\,S^{\pm}\,P = S^{\mp} \ \ ;\ \ 
	P\,\mathsf{L}(u)\,P  = \sigma_2\,\sigma_3\,\mathsf{L}(u)\,\sigma_3\,\sigma_2.
\end{align}
Indeed we have 
\begin{multline*}
	P\,\mathsf{L}(u)\,P  = 
	P\,\left(\begin{array}{cc}
		u + S & S_{-} \\
		S_{+} & u - S \end{array} \right)\,P = 
	\left(\begin{array}{cc}
		u - S & S_{+} \\
		S_{-} & u + S \end{array} \right) \\
	= \sigma_2\,\left(\begin{array}{cc}
		u + S & -S_{+} \\
		-S_{-} & u - S \end{array} \right)\,\sigma_2 = 
	\sigma_2\,\sigma_3\,
	\left(\begin{array}{cc}
		u + S & S_{+} \\
		S_{-} & u - S \end{array} \right)\,
	\sigma_3\,\sigma_2.
\end{multline*}
Let us multiply \eqref{ReflK} by operator of inversion from the right 
\begin{align*}
	\mathsf{L}(u-v)\,\hat{K}(u)\,\mathsf{L}(u+v)\,\K(v\,,s)\,P = 
	\K(v\,,s)\,P\,P\,\mathsf{L}(u+v)\,P\,\hat{K}(u)\,P\,\mathsf{L}(u-v)\,P ,
\end{align*}
then use \eqref{PLP} and perform the shift of the spectral 
parameter $v \to v+s$ 
\begin{multline*}
	\mathsf{L}(u-v-s)\,\hat{K}(u)\,\mathsf{L}(u+v+s)\,\K(v+s\,,s)\,P = \\
	\K(v+s\,,s)\,P\,\sigma_2\,\sigma_3\,\mathsf{L}(u+v+s)\,\sigma_3\,\sigma_2\,
	\hat{K}(u)\,\sigma_2\,\sigma_3\,\mathsf{L}(u-v-s)\,\sigma_3\,\sigma_2 
\end{multline*}
Now we extract the leading asymptotic contribution as $s\to\infty$
\begin{multline}\label{s}
	\Lambda_{+}(2s)\,\sigma_3\, L^{+}(u-v-1)\,\hat{K}(u)\,
	L^{-}(u+v)\,\Lambda_{-}(2s)\,\K(v+s\,,s)\,P = \\[4pt]
	\K(v+s\,,s)\,P\,\sigma_2\,\sigma_3\,L^{-}(u+v)\,\Lambda_{-}(2s)\,\sigma_3\,\sigma_2\,
	\hat{K}(u)\,\sigma_2\,\sigma_3\,\Lambda_{+}(2s)\,\sigma_3\,
	L^{+}(u-v-1)\,\sigma_3\,\sigma_2 
\end{multline}
using formulae 
\eqref{asymptL1}
\begin{align*}
	\mathsf{L}(u+s) \xrightarrow{s\to\infty}
	L^{-}(u)\,\Lambda_{-}(2s)\ \ ;\ \ \mathsf{L}(u-s)
	\xrightarrow{s\to\infty} 
	\Lambda_{+}(2s)\,\sigma_3\, L^{+}(u-1)\,,
\end{align*}
where 
\begin{align*}
	\Lambda_{-}(u) = \left(\begin{array}{cc}
		u & 0 \\
		0 & 1\end{array}\right)  \ \ ; \ \ 
	\Lambda_{+}(u) = \left(\begin{array}{cc}
		1 & 0 \\
		0 & u\end{array}\right)\,.
\end{align*}
Next, we multiply relation \eqref{s} by the matrix $\sigma_3\,\Lambda^{-1}_{+}(2s)$ from the left and by the matrix $\Lambda^{-1}_{-}(2s)$ from the right
\begin{multline*} 
	L^{+}(u-v-1)\,\hat{K}(u)\,L^{-}(u+v)\,\K(v+s\,,s)\,P = 
	\K(v+s\,,s)\,P\,\sigma_3\,\Lambda^{-1}_{+}(2s)\sigma_2\,\sigma_3\,
	L^{-}(u+v)\,\Lambda_{-}(2s)\,\\ 
	\times\sigma_3\,\sigma_2\,
	\hat{K}(u)\,\sigma_2\,\sigma_3\,\Lambda_{+}(2s)\,\sigma_3\,
	L^{+}(u-v-1)\,\sigma_3\,\sigma_2\,\Lambda^{-1}_{-}(2s) .
\end{multline*}
Then transform the right hand side using
\begin{align}
	\sigma_3\,\Lambda_{\pm}(2s) = \Lambda_{\pm}(2s)\,\sigma_3 \ \ ;\ \ 
	\sigma_2\,\Lambda_{\pm}(u) = \Lambda_{\mp}(u)\,\sigma_2
\end{align}
and standard formulae for Pauli matrices 
$\sigma_2^2=\sigma_3^2 = \bm{1}$, $\sigma_2\sigma_3 = -\sigma_3\sigma_2$, which gives
\begin{multline*} 
	L^{+}(u-v-1)\,\hat{K}(u)\,L^{-}(u+v)\,\K(v+s\,,s)\,P = 
	\K(v+s\,,s)\,P\,\sigma_2\,\Lambda^{-1}_{-}(2s)\,
	L^{-}(u+v)\,\Lambda_{-}(2s)\,\\ 
	\times\sigma_3\,\sigma_2\,
	\hat{K}(u)\,\sigma_2\,\Lambda_{+}(2s)\,
	L^{+}(u-v-1)\,\Lambda^{-1}_{+}(2s)\,\sigma_2\,\sigma_3 .
\end{multline*}
The last step is the use of the relations 
\begin{align*}
	& \Lambda^{-1}_{-}(2s)\,
	L^{-}(u+v)\,\Lambda_{-}(2s) = (2s)^{z\partial}\,L^{-}(u+v)\,(2s)^{-z\partial} \\[4pt] 
	& \Lambda_{+}(2s)\,
	L^{+}(u-v-1)\,\Lambda^{-1}_{+}(2s) = (2s)^{z\partial}\,L^{+}(u-v-1)\,(2s)^{-z\partial}
\end{align*}
so that 
\begin{multline*} 
	L^{+}(u-v-1)\,\hat{K}(u)\,L^{-}(u+v)\,\K(v+s\,,s)\,P = 
	\K(v+s\,,s)\,P\,\sigma_2\,(2s)^{z\partial}\,
	L^{-}(u+v)\,(2s)^{-z\partial}\,\,\\ 
	\times\sigma_3\,\sigma_2\,
	\hat{K}(u)\,\sigma_2\,(2s)^{z\partial}\,
	L^{+}(u-v-1)\,(2s)^{-z\partial}\,\,\sigma_2\,\sigma_3 ,
\end{multline*}
or equivalently 
\begin{multline*} 
	L^{+}(u-v-1)\,\hat{K}(u)\,L^{-}(u+v)\,\K(v+s\,,s)\,P\,(2s)^{z\partial} = \\
	\K(v+s\,,s)\,P\,(2s)^{z\partial}\,
	\sigma_2\,L^{-}(u+v)\,\sigma_3\,\sigma_2\,
	\hat{K}(u)\,\sigma_2\,
	L^{+}(u-v-1)\,\sigma_2\,\sigma_3 .
\end{multline*}
After all one obtains relation \eqref{hatK} where 
\begin{align*}
	\K(v+s\,,s)\,P\,(2s)^{z\partial} 
	\xrightarrow{s\to\infty} \hat{\mathcal{K}}(v).
\end{align*}
It remains to calculate explicitly the leading asymptotics for the 
intertwining operator  
\begin{align*}
	&\left[\K(v+s\,,s)\,P\,(2s)^{z\partial}\,\Psi\right](z) = \frac{(2\imath\beta)^{-2v-2s}}{\Gamma(-2v-2s)}\,(z+\imath \beta)^{g+v}\\
	&\int_0^1 d t\,(1-t)^{-2v-1}\,t^{g+v-1}\,
	\left(t(z-\imath\beta)+2\imath\beta\right)^{v-g}\,
	\left(\frac{t}{1-t}\frac{t(z-\imath\beta)+2\imath\beta}{t(z-\imath\beta)+\imath\beta}\right)^{2s}\,
	\Psi\left(\frac{2s}{t(z-\imath\beta)+\imath\beta}\right)  \\[4pt]
	& = \frac{(-1)^{2s-2v}(2\imath\beta)^{-2v-2s}}{2s \Gamma(-2v-2s)}\,
	(z+\imath\beta)^{g+v}\,(z-\imath\beta)^{v-g+1}\\
	&\int_{-\imath\frac{2s}{\beta}}^{\frac{2s}{z}} d x\,
	\left(1-\frac{z x}{2s}\right)^{-2v-1}\,
	\left(1-\frac{\imath\beta x}{2s}\right)^{g+v-1}\,
	\left(1+\frac{\imath\beta x}{2s}\right)^{v-g}\,
	\left(\frac{\left(1+\frac{\imath\beta x}{2s}\right)
		\left(1-\frac{\imath\beta x}{2s}\right)}{1-\frac{z x}{2s}}\right)^{2s}\,
	\Psi(x) ,
\end{align*}
so that up to overall inessential normalization coefficient one obtains
\begin{align*}
	\left[\K(v+s\,,s)\,P\,(2s)^{z\partial}\,\Psi\right](z) 
	\xrightarrow{s \to \infty}  
	(z+\imath\beta)^{g+v}\,(z-\imath\beta)^{v-g+1}\,
	\int_{-\infty}^{+\infty} d x\,e^{zx}\,
	\Psi(x) .
\end{align*}

The last step is to rewrite 
everything in appropriate representation. 
The procedure is very similar to the calculations 
starting from the formula \eqref{canon}.  
We have
\begin{align*}
	&F\,L^+(u)\,F^{-1} = F\,\left(\begin{array}{cc}
		u + \dd_z z & -\dd_z \\
		-z& 1\end{array}\right)\,F^{-1}  = 
	\left(\begin{array}{cc}
		u - z\partial_z & -\imath z  \\
		-\imath\partial_z & 1\end{array}\right) = \hat{L}^+(u) \,, \\
	&F\,L^-(u)\,F^{-1} = F\,\left(\begin{array}{cc}
		1 & -\dd_z \\
		z& u - z\dd_z \end{array}\right)\,F^{-1}  = 
	\left(\begin{array}{cc}
		1 & -\imath z  \\
		\imath \partial_z & u + \dd_z z\end{array}\right) = \hat{L}^-(u)\,.
\end{align*}
The second step is the change of variables 
\begin{multline*} 
	\hat{L}^+(u)\,\hat{\Psi}(z)  = \left(\begin{array}{cc}
		u - z\partial_z & -\imath z  \\
		-\imath \partial_z & 1\end{array}\right)\,\hat{\Psi}(z) \\
	\longrightarrow (-\imath)\left(\begin{array}{cc}
		\imath u + \imath\partial_x & e^{-x}  \\
		-e^{x}\partial_x & i\end{array}\right)\,\hat{\Psi}(e^{-x}) = 
	(-\imath)\,M(\imath u)\,\hat{\Psi}(e^{-x}) 
\end{multline*}
and using the general formula 
\begin{align*}
	\sigma_2 \left(\begin{array}{cc}
		a & b  \\
		c & d \end{array}\right) \sigma_2 = 
	\left(\begin{array}{cc}
		d & -c  \\
		-b & a \end{array}\right)
\end{align*}
we obtain 
\begin{align*}
	&\sigma_2\,\hat{L}^-(u)\,\sigma_2\,\hat{\Psi}(z)  = \left(\begin{array}{cc}
		u + \dd_z z & -\imath\partial_z \\
		\imath z & 1 \end{array}\right)\,\hat{\Psi}(z)  
	\longrightarrow  (-\imath)\left(\begin{array}{cc}
		\imath u+\imath-\imath\partial_x &  -e^{x}\partial_x \\
		-e^{-x} & \imath \end{array}\right)\,\hat{\Psi}(e^{-x}) \\ 
	&= \imath \left(\begin{array}{cc}
		-\imath u-\imath+\imath\partial_x &  -e^{x}\partial_x \\
		e^{-x} & \imath \end{array}\right)\,\sigma_3\,\hat{\Psi}(e^{-x}) = 
	\imath\,M^t(-\imath u-\imath)\,\sigma_3\,\hat{\Psi}(e^{-x}).
\end{align*}
Note that after the needed change of the spectral parameter $u \to -\imath u $
in $K$-matrix \eqref{KX} one obtains exactly $K$-matrix \eqref{KToda}
\begin{align}
	\hat{K}(-\imath u) = \begin{pmatrix}
		\imath\alpha & -\imath u - \frac{1}{2} \\[6pt]
		- \beta^2 \bigl(-\imath u - \frac{1}{2} \bigr) & \imath\alpha
	\end{pmatrix} = (-\imath)  
	\begin{pmatrix}
		-\alpha & u - \frac{\imath}{2} \\[6pt]
		- \beta^2 \bigl(u - \frac{\imath}{2} \bigr) & -\alpha
	\end{pmatrix} = (-\imath)\,K(u).
\end{align}
Using obtained formulae we rewrite the relation \eqref{hatK} (with shift $v \to v-1$) and operator $\hat{\mathcal{K}}(v)$ in appropriate representation 
\begin{gather*}
	L^{+}(u-v)\,\hat{K}(u)\,L^{-}(u+v-1)\,\hat{\mathcal{K}}(v-1) = \\
	\hat{\mathcal{K}}(v-1)\,\sigma_2\,L^{-}(u+v-1)\,\sigma_3\,\sigma_2\,
	\hat{K}(u)\,\sigma_2\,L^{+}(u-v)\,\sigma_2\,\sigma_3 \\ 
	\downarrow \\    
	(-\imath)\,M(\imath(u-v))\,\hat{K}(u)\,\sigma_2\,\imath\,M^t(-\imath(u+v-1)-\imath)\,\sigma_3\,\sigma_2\,
	\mathcal{K}(\imath v) = \\
	\mathcal{K}(\imath v)\,\sigma_2\,\sigma_2\,\imath\,M^t(-\imath(u+v-1)-\imath)\,
	\sigma_3\,\sigma_2\,\sigma_3\,\sigma_2\,
	\hat{K}(u)\,\sigma_2\,(-\imath)\,M(\imath(u-v))\,\sigma_2\,\sigma_3 \\ 
	\downarrow \\    
	M(\imath(u-v))\,\hat{K}(u)\,\sigma_2\,M^t(-\imath(u+v))\,\sigma_2\,
	\mathcal{K}(\imath v) = 
	\mathcal{K}(\imath v)\,M^t(-\imath(u+v))\,\hat{K}(u)\,\sigma_2\,M(\imath(u-v))\,\sigma_2 \\ 
	\downarrow \\    
	M(u-v)\,\hat{K}(-\imath u)\,\sigma_2\,M^t(-u-v)\,\sigma_2\,
	\mathcal{K}(v) = 
	\mathcal{K}(v)\,M^t(-u-v)\,
	\hat{K}(-\imath u)\,\sigma_2\,M(u-v)\,\sigma_2 \\ 
	\downarrow \\    
	M(u-v)\,K(u)\,\sigma_2\,M^t(-u-v)\,\sigma_2\,
	\mathcal{K}(v) = 
	\mathcal{K}(v)\,M^t(-u-v)\,
	K(u)\,\sigma_2\,M(u-v)\,\sigma_2
\end{gather*}
where the operator $\mathcal{K}(v)$ is obtained in a two steps 
\begin{align*}
	&\K(v-1+s\,,s)\,P\,(2s)^{z\partial}  
	\xrightarrow{s \to \infty}  \hat{\mathcal{K}}(v-1) = 
	(z+\imath\beta)^{g+v-1}\,(z-\imath\beta)^{v-g}\,
	\imath^{z\partial}\,F,  \\[4pt] 
	&\mathcal{K}(v) =   
	F\,(z+\imath\beta)^{g-\imath v-1}\,(z-\imath\beta)^{-\imath v-g}\,
	\imath^{-z\partial}\,F\,F^{-1} = F\,(z+\imath\beta)^{g-\imath v-1}\,(z-\imath\beta)^{-\imath v-g}\,
	\imath^{z\partial}\,.
\end{align*}
Let us represent the operator $\mathcal{K}(v)$ explicitly as an integral operator  
\begin{align*}
	&\bigl[\mathcal{K}(v)\,\hat{\Psi}b\bigr](z) 
	= \int_{-\infty}^{+\infty} d p\, e^{-\imath p z}\,
	(p+\imath \beta)^{g-\imath v-1}\,(p-\imath \beta)^{-\imath v-g}\,
	\imath^{p\partial_p}\hat{\Psi}(p)  \\ 
	& = \int_{-\infty}^{+\infty} d p\, e^{-\imath p z}\,
	(p+\imath \beta)^{g-\imath v-1}\,(p-\imath\beta)^{-\imath v-g}\,\hat{\Psi}(\imath p) \\ 
	& = \imath^{2\imath v}\,\int_{-\imath\infty}^{+\imath\infty} d k\, e^{-k z}\,
	(k-\beta)^{g-\imath v-1}\,(k+\beta)^{-\imath v-g}\,\hat{\Psi}(k).
\end{align*}
Remember that $z \geq 0$. The contour of integration over $k$ is imaginary axis and by condition $z \geq 0$ it is possible to close this contour in right half-plane. Due to the function $(k-\beta)^{g-\imath v-1}$ we have 
branch cut along the positive real axis from   
$k = \beta \geq 0$ to  $k = +\infty$. 
For the values above and below real axis we have 
$$
(k\pm \imath 0-\beta)^{g-\imath v-1} = |k-\beta|^{g-\imath v-1}\,e^{\mp \imath \pi(g-\imath v-1)}
$$ 
so that discontinuity is 
$$
(k+\imath 0-\beta)^{g-\imath v-1} - (k-\imath 0-\beta)^{g-\imath v-1} =  
2\imath\,\sin(\pi(g-\imath v))\,|k-\beta|^{g-\imath v-1}.
$$
After all one obtains 
\begin{align*}
	\bigl[\mathcal{K}(v)\,\hat{\Psi}\bigr](z) 
	= 2\imath \,\sin(\pi(g-iv))\,\imath^{2iv}\,
	\int_{\beta}^{+\infty} d k\, e^{-k z}\,
	(k-\beta)^{g-\imath v-1}\,(k+\beta)^{-\imath v-g}\,\hat{\Psi}(k).
\end{align*}
It remains to change variables $z = e^{-x}$, 
$k = e^{-y}$ and switch to the function 
$\phi(x) = \hat{\Psi}\left(e^{-x}\right)$ 
\begin{align*}
	&\bigl[\mathcal{K}(v)\,\phi\bigr](x) 
	= -2\imath\,\sin(\pi(g-\imath v))\,\imath^{2\imath v}\,
	\int_{-\ln\beta}^{-\infty} d y\, e^{-y}\, e^{-e^{-x-y}}\,
	(e^{-y}-\beta)^{g-\imath v-1}\,(e^{-y}+\beta)^{-\imath v-g}\,\phi(y) \\
	& = -2\imath\,\sin(\pi(g-\imath v))\,\imath^{2\imath v}\,
	\int_{-\ln\beta}^{-\infty} d y\, \exp\left(2\imath v y - e^{-x-y}\right)\,
	(1-\beta e^{y})^{g-\imath v-1}\,(1+\beta e^{y})^{-\imath v-g}\,\phi(y) \\ 
	& = 2\imath\,\sin(\pi(g-\imath v))\,\imath^{2\imath v}\,
	\int_{\ln\beta}^{\infty} d y\, \exp\left(-2\imath v y - e^{y-x}\right)\,
	(1-\beta e^{-y})^{g-\imath v-1}\,(1+\beta e^{-y})^{-\imath v-g}\,\phi(-y)
\end{align*}
so that up to overall normalization one obtains 
the reflection operator \eqref{K-0}.


\begin{thebibliography}{99}

	\bibitem[ABDK]{ABDK} P. Antonenko, N. Belousov, S. Derkachov, S. Khoroshkin,  
	\textit{Reflection operator and hypergeometry I: $SL(2, \mathbb{R})$ spin chain}, \href{https://www.mathnet.ru/php/archive.phtml?wshow=paper&jrnid=znsl&paperid=7450&option_lang=eng}{Zap. Nauchn. Sem. POMI} \textbf{532} (2024) 5--46, \href{https://doi.org/10.48550/arXiv.2406.19862}{\tt [2406.19862]}.
	
	\bibitem[ADV1]{ADV} P. Antonenko, S. Derkachov, P. Valinevich, \textit{A-type open $SL(2, \mathbb{C})$ spin chain}, \href{https://doi.org/10.3842/SIGMA.2025.107}{SIGMA} \textbf{21} (2025) 107, \href{https://doi.org/10.48550/arXiv.2507.09568}{\tt [2507.09568]}.
	
	\bibitem[ADV2]{ADV2} P. Antonenko, S. Derkachov, P. Valinevich, \textit{BC-type open $SL(2, \mathbb{C})$ spin chain}, \href{https://doi.org/10.1007/s00023-025-01653-0}{Annales Henri Poincar\'e} (2026), \href{https://doi.org/10.48550/arXiv.2508.04972}{\tt [2508.04972]}.
	
	\bibitem[AV]{AV} A. Appel, B. Vlaar, \textit{Boundary transfer matrices arising from quantum symmetric pairs}, \href{https://doi.org/10.1016/j.indag.2025.05.008}{Indagationes Mathematicae}, \textbf{36}:6 (2025) 1830--1862, \href{https://doi.org/10.48550/arXiv.2410.21654}{\tt [2410.21654]}.
	
	\bibitem[BDK]{BDK} N. Belousov, S. Derkachov, S. Khoroshkin, \textit{$BC$ Toda chain II: symmetries. Dual picture}, to appear.
	
	\bibitem[BC]{BK} A. Borodin, I. Corwin, \textit{Macdonald processes}, \href{https://doi.org/10.1007/s00440-013-0482-3}{Probability Theory and Related Fields} \textbf{158} (2014) 225--400, \href{https://doi.org/10.48550/arXiv.1111.4408}{\tt [1111.4408]}.

	\bibitem[D]{D} S.~E.~Derkachov, \textit{Factorization of the R-matrix. I}, \href{https://doi.org/10.1007/s10958-007-0164-8}{Journal of Mathematical Sciences} \textbf{143} (2007) 2773--2790, \href{https://doi.org/10.48550/arXiv.math/0503396}{\tt [math/0503396]}.

	\bibitem[DE]{DE} J. F. van Diejen, E. Emsiz, \textit{Bispectral dual difference equations for the quantum Toda chain with boundary perturbations}, \href{https://doi.org/10.1093/imrn/rnx219}{International Mathematics Research Notices} \textbf{2019}:12 (2019) 3740--3767, \href{https://doi.org/10.48550/arXiv.1903.01827}{\tt [1903.01827]}.
	
	\bibitem[DKM]{DKM} S. \'E. Derkachov, K. K. Kozlowski, A. N. Manashov, \textit{Completeness of SoV representation for $SL (2, \mathbb{R})$ spin chains}, \href{https://doi.org/10.3842/SIGMA.2021.063}{SIGMA} \textbf{17} (2021) 063, \href{https://doi.org/10.48550/arXiv.2102.13570}{\tt [2102.13570]}.
	
	\bibitem[DL]{DL} J. Derezi\'nski, J. Lee, \textit{Exactly solvable Schrödinger operators related to the confluent equation}, \href{https://doi.org/10.1007/s00032-025-00414-2}{Milan Journal of Mathematics} \textbf{93} (2025) 109--147, \href{https://doi.org/10.48550/arXiv.2409.14994}{\tt [2409.14994]}.

	\bibitem[DLMF]{DLMF} \textit{NIST Digital Library of Mathematical Functions}. \href{https://dlmf.nist.gov/}{https://dlmf.nist.gov/}, Release 1.2.4 of 2025-03-15. F. W. J. Olver, A. B. Olde Daalhuis, D. W. Lozier, B. I. Schneider, R. F. Boisvert, C. W. Clark, B. R. Miller, B. V. Saunders, H. S. Cohl, and M. A. McClain, eds.

	\bibitem[DM]{DM} S. E. Derkachov, A. N. Manashov, \textit{General solution of the Yang–Baxter equation with symmetry group $\mathrm{SL}(n,\mathbb{C})$}, \href{https://www.mathnet.ru/rus/aa1145}{Algebra i Analiz} \textbf{21}:4 (2009) 1--94, \href{https://doi.org/10.1090/S1061-0022-2010-01106-3}{St. Petersburg Mathematical Journal} \textbf{21}:4 (2010) 513--577.
	
	\bibitem[F]{F} L. D. Faddeev, \textit{Quantum completely integrable models in field theory}, Mathematical Physics Reviews \textbf{1} (1980) 107--155, reprinted in \textit{40 years in mathematical physics}, World Scientific (1995).
	
	\bibitem[FGK]{FGK} R. Frassek, C. Giardin\`a, J. Kurchan, \textit{Non-compact quantum spin chains as integrable stochastic particle processes}, \href{https://doi.org/10.1007/s10955-019-02375-4}{Journal of Statistical Physics} \textbf{180} (2020) 135--171, \href{https://doi.org/10.48550/arXiv.1904.01048}{\tt [1904.01048]}.
	
	\bibitem[Gaud]{Gaud} M. Gaudin, \textit{La fonction d'onde de Bethe} (Masson, Paris, 1983); English translation: \textit{The Bethe Wavefunction}, trans. Jean-Sébastien Caux (Cambridge University Press, Cambridge, 2014).
	
	\bibitem[Giv]{G} A. Givental, \textit{Stationary phase integrals, quantum Toda lattices, flag manifolds and the mirror conjecture}, in: \textit{Topics in singularity theory}, \href{https://doi.org/10.1090/trans2/180}{Amer. Math. Soc. Transl. Ser. 2} \textbf{180} (1997) 103--115, \href{https://doi.org/10.48550/arXiv.alg-geom/9612001}{\tt [alg-geom/9612001]}.
	
	\bibitem[GKLO]{GKLO} A. Gerasimov, S. Kharchev, D. Lebedev, S. Oblezin, \textit{On a Gauss-Givental representation of quantum Toda chain wave function}, \href{https://doi.org/10.1155/IMRN/2006/96489}{International Mathematics Research Notices} \textbf{2006} (2006), \href{https://doi.org/10.48550/arXiv.math/0505310}{\tt [math/0505310]}.
	
	\bibitem[GLO1]{GLO0} A. Gerasimov, D. Lebedev, S. Oblezin, \textit{Baxter operator and Archimedean Hecke algebra}, \href{https://doi.org/10.1007/s00220-008-0547-9}{Communications in Mathematical Physics} \textbf{284} (2008) 867--896, \href{https://doi.org/10.48550/arXiv.0706.3476}{\tt [0706.3476]}.
	
	\bibitem[GLO2]{GLO1} A. A. Gerasimov, D. R. Lebedev, S. V. Oblezin, \textit{New integral representations of Whittaker functions for classical Lie groups}, \href{https://doi.org/10.1070/RM2012v067n01ABEH004776}{Russian Mathematical Surveys} \textbf{67}:1 (2012), \href{https://doi.org/10.48550/arXiv.0705.2886}{\tt [0705.2886]}.
	
	\bibitem[GLO3]{GLO2} A. Gerasimov, D. Lebedev, S. Oblezin, \textit{Quantum Toda chains intertwined}, \href{https://doi.org/10.1090/S1061-0022-2011-01149-5}{St. Petersburg Mathematical Journal} \textbf{22} (2011) 411--435, \href{https://doi.org/10.48550/arXiv.0907.0299}{\tt [0907.0299]}.
	
	\bibitem[I]{I} V. I. Inozemtsev, \textit{The finite Toda lattices}, \href{https://doi.org/10.1007/BF01218159}{Communications in Mathematical Physics} \textbf{121} (1989) 629--638.
	
	\bibitem[IS]{IS} N. Z. Iorgov, V. N. Shadura, \textit{Wave functions of the Toda chain with boundary interaction}, \href{https://doi.org/10.1007/s11232-005-0075-0}{Theoretical and Mathematical Physics} \textbf{142} (2005) 289--305, \href{https://doi.org/10.48550/arXiv.nlin/0411002}{\tt [nlin/0411002]}.
	
	\bibitem[Kost]{Kost} B. Kostant, \textit{Quantization and representation theory}, in: \textit{Representation Theory of Lie groups}, \href{https://doi.org/10.1017/CBO9780511662683}{London Mathematical Society Lecture Note Series} \textbf{34} (1979) 287--316.
	
	\bibitem[K]{Kozl} K. K. Kozlowski, \textit{Unitarity of the SoV transform for the Toda chain}. \href{https://doi.org/10.1007/s00220-014-2134-6}{Communications in Mathematical Physics} \textbf{334} (2015) 223--273, \href{https://doi.org/10.48550/arXiv.1306.4967}{\tt [1306.4967]}. 
	
	\bibitem[KL]{KL} S. Kharchev, D. Lebedev, \textit{Eigenfunctions of $GL(N,\mathbb{R})$ Toda chain: Mellin-Barnes representation}, \href{https://doi.org/10.1134/1.568323}{Journal of Experimental and Theoretical Physics Letters} \textbf{71} (2000) 235--238, \href{https://doi.org/10.48550/arXiv.hep-th/0004065}{\tt [hep-th/0004065]}.
	
	\bibitem[KS]{KS} V. B. Kuznetsov, E. K. Sklyanin, \textit{On B\"acklund transformations for many-body systems}, \href{https://doi.org/10.1088/0305-4470/31/9/012}{Journal of Physics A: Mathematical and General} \textbf{31} (1998) 2241, \href{https://doi.org/10.48550/arXiv.solv-int/9711010}{\tt [solv-int/9711010]}.
	
	\bibitem[KSS]{KSS} V. B. Kuznetsov, M. Salerno, E. K. Sklyanin, \textit{Quantum B\"acklund transformation for the integrable DST model}, \href{https://doi.org/10.1088/0305-4470/33/1/311}{Journal of Physics A: Mathematical and General} \textbf{33} (2000) 171, \href{https://doi.org/10.48550/arXiv.solv-int/9908002}{\tt [solv-int/9908002]}.
	
	\bibitem[N]{N} Y. A. Neretin, \textit{On Derkachov--Manashov R-matrices for the principal series of unitary representations}, \href{https://doi.org/10.1063/5.0175714}{Journal of Mathematical Physics} \textbf{65} (2024), \href{https://doi.org/10.48550/arXiv.2306.08389}{[2306.08389]}.
	
	\bibitem[Sil]{Sil} A. V. Silantyev, \textit{Transition function for the Toda chain}, \href{https://doi.org/10.1007/s11232-007-0024-1}{Theoretical and Mathematical Physics} \textbf{150} (2007) 315--331, \href{https://doi.org/10.48550/arXiv.nlin/0603017}{\tt [nlin/0603017]}.
	
	\bibitem[Skl1]{Skl0} E. K. Sklyanin, \textit{The quantum Toda chain}, In: \href{https://doi.org/10.1007/3-540-15213-X_80}{Non-Linear Equations in Classical and Quantum Field Theory. Lecture Notes in Physics} \textbf{226} (1985) 196--233.
	
	\bibitem[Skl2]{Skl} E. K. Sklyanin, \textit{Boundary conditions for integrable quantum systems}, \href{https://doi.org/10.1088/0305-4470/21/10/015}{Journal of Physics~A} \textbf{21} (1988) 2375--2389.
	
	\bibitem[Skl3]{Skl2} E. K. Sklyanin, \textit{B\"acklund transformations and Baxter’s Q-operator}, in: \textit{Integrable systems: from classical to quantum}, CRM Proceedings \& Lecture Notes \textbf{26}, \href{https://doi.org/10.48550/arXiv.nlin/0009009}{\tt [nlin/0009009]}.
	
	\bibitem[Sl]{Sl} N. A. Slavnov, \textit{Algebraic Bethe ansatz}, arXiv preprint \href{https://doi.org/10.48550/arXiv.1804.07350}{\tt [1804.07350]}.
	
	\bibitem[W]{W} N. R. Wallach, \textit{The Whittaker Plancherel theorem}, \href{https://doi.org/10.1007/s11537-023-2230-5}{Japanese Journal of Mathematics} \textbf{19} (2024) 1-65, \href{https://doi.org/10.48550/arXiv.1705.06787}{\tt [1705.06787]}.
	
\end{thebibliography}
\end{document}